\documentclass[11pt,journal, onecolumn]{IEEEtran}
\usepackage[margin=1in]{geometry}
\usepackage{amsmath}
\usepackage{amssymb} 
\usepackage{hyperref}
\usepackage{wrapfig}
\usepackage{tabularx}
\usepackage{graphicx}
\usepackage{epstopdf}
\usepackage{subcaption}
\usepackage{algorithm}
\usepackage{algorithmic}
\usepackage{xcolor,url}
\usepackage[numbers]{natbib} 
\usepackage{bbm}
\usepackage{enumitem}

\usepackage{stfloats}
\usepackage{tikz}

\usepackage{nicematrix}
\usetikzlibrary{fit}

\usepackage{amsthm}
\renewenvironment{proof}{{\sffamily\bfseries Proof. }}{\qed}
\newtheorem*{theoremnonum}{Theorem}
\newtheorem{theo}{Theorem}
\newtheorem{lem}{Lemma}

\theoremstyle{definition}
\newtheorem{de}{Definition}

\newtheorem{remark}{Remark}

\newcommand{\GW}{Galton-Watson }

\usepackage{thmtools}
\usepackage{thm-restate}
\usepackage{enumitem}
\graphicspath{ {./images/} }

\def\erdos{Erd\H os}
\def\renyi{R\'enyi }
\def\E{{\mathsf E}}
\def\P{{\mathsf P}}
\def\cg{{\mathcal{G}}}

\def\dist{{\mathrm{dist}}}
\def\ca{{\mathcal{A}}}
\def\cb{{\mathcal{B}}}
\def\cc{{\mathcal{C}}}
\def\cd{{\mathcal{D}}}
\def\ce{{\mathcal{E}}}
\def\cf{{\mathcal{F}}}
\def\ci{{\mathcal{I}}}

\def\ch{{\mathcal{H}}}
\def\cl{{\mathcal{L}}}
\def\cn{{\mathcal{N}}}
\def\cs{{\mathcal{S}}}
\def\cw{{\mathcal{W}}}
\def\ct{{\mathcal{T}}}
\def\cx{{\mathcal{X}}}

\def\cz{{\mathcal{Z}}}

\def\cm{{\mathcal{M}}}
\def\cv{{\mathcal{V}}}
\def\cp{{\mathcal{P}}}
\def\bp{{\mathbb{P}}}
\def\bq{{\mathbb{Q}}}
\def\deg{{\mathrm{deg}}}

\def\be{{\mathsf{E}}}
\def\gw{{\mathrm{GW}}}
\def\indi{{\mathbbm{1}}}
\def\poi{{\mathrm{Poi}}}
\def\sto{{\stackrel{\mathrm{sto.}}{\le}}}

\def\bin{{\mathrm{Binom}}}

\def\dtv{{d_\mathrm{TV}}}
\def\rmtb{{\mathrm{TB}}}
\def\rmrb{{\mathrm{RB}}}

\def\sfch{{\mathsf{Ch}}}

\def\dist{\mathrm{dist}}

\def\excl{{\mathsf{EXC}}}
\def\com{{\mathsf{COM}}}

\def\tdct{\tilde{\mathcal{T}}}

\newcommand{\td}[1]{\tilde{#1}}
\newcommand{\wdtd}[1]{\widetilde{#1}}
\newcommand{\cond}{\mathchoice{\,\vert\,}{\mspace{2mu}\vert\mspace{2mu}}{\vert}{\vert}}

\title{\fontsize{18}{24}\selectfont Achieving Almost Exact Recovery in Almost Quadratic Time: Rank-Based Graph Matching via Local Tree Correlation Tests}

\author{Jiale Cheng$^*$, Ziao Wang$^*$ and Lei Ying
\thanks{$*$ Jiale Cheng and Ziao Wang contributed equally to this work.}
\thanks{Jiale Cheng is with the Department of Electrical and Computer Engineering, University of Michigan, Ann Arbor, MI 48109, USA (email:jlcheng@umich.edu).}
\thanks{Ziao Wang is with the Department of Electrical and Computer Engineering, University of Michigan, Ann Arbor, MI 48109, USA (email: ziaow@umich.edu).}
\thanks{Lei Ying is with the Department of Electrical and Computer Engineering, University of Michigan, Ann Arbor, MI 48109, USA (email: leiying@umich.edu).}
}

\begin{document}
\maketitle
\begin{abstract}
    This paper studies graph matching under the correlated \erdos--\renyi (ER) graph pair model. This model first samples an $\mathrm{ER}(n,\frac{\lambda}{ns})$ base graph, whose edges are then independently subsampled twice with probability $s$ to produce two correlated $\mathrm{ER}(n,\frac{\lambda}{n})$ graphs. We propose a graph matching algorithm that has $n^{2+o(1)}$ time complexity and achieves almost exact recovery with high probability under the assumptions $\lambda=(\log n)^{\alpha+o(1)}$ for some $\alpha\in(0,1)$ and $s\in(\sqrt{C_{\mathrm{Otter}}},1]$, where $C_{\mathrm{Otter}}\approx 0.338$ is Otter's tree-counting constant. This is the first algorithm with almost quadratic time complexity in this regime of $\lambda$, while the best known result in this regime is the chandelier-counting algorithm with time complexity $O(n^{c(s)})$, where $c(s)\rightarrow \infty$ as $s$ approaches $\sqrt{C_\mathrm{Otter}}$ from above. The proposed algorithm is based on local tree correlation tests. It uses a rank-based algorithm to match the vertex pairs instead of threshold-based rules in the literature.  
    This avoids the need of computing an explicit threshold, which is computationally difficult to obtain. To prove the almost exact recovery result, we establish a new analysis of tree correlation tests in the diverging-degree regime, where both the mean degree and the tree depth grow with $n$. Based on this new result, we establish the {\em existence} of a threshold for a threshold-based graph matching algorithm via local tree correlation tests. Finally, we couple the performance of the rank-based algorithm with the threshold-based algorithm to show almost exact recovery. 
\end{abstract}

\section{Introduction}
Graph matching, also known as network alignment or noisy graph isomorphism, is the problem of recovering the latent correspondence between the vertices of two correlated graphs. A classical motivation for the problem comes from social network deanonymization~\cite{Nar-Shm-deanonymizing2009,Nitish14Reconciliation}, where one attempts to identify the user correspondence between an anonymized social network (e.g. Twitter) and a reference social network (e.g. LinkedIn) for which the user identities are publicly available. Similar correspondence-recovery applications appear in various other fields including bioinformatics~\cite{singh2007pairwise}, computer vision~\cite{Minsu12Pattern}, and natural language processing~\cite{Hag-Ng-robust2005}.
In general, the standard optimization formulation of graph matching can be viewed as a special case of the quadratic assignment problem (QAP)~\cite{Burkard1998}, which is known to be NP-hard to solve or even approximate~\cite{QAPNPApproxi}. This worst-case computational barrier has motivated a line of work studying graph matching  assuming some random graph generative models such as the correlated \erdos--\renyi graph pair model~\cite{Pedram11GA}. Under the correlated \erdos--\renyi graph pair model, denoted by $\mathrm{CER}(n,\lambda,s),$  a base \erdos--\renyi graph with $n$ vertices  is first generated where edges between vertex pairs are generated independently with probability $\frac{\lambda}{ns}.$ Two \erdos--\renyi  graphs, both with marginal distribution $\mathrm{ER}(n,\lambda/n)$, are then subsequently generated by independently subsampling each edge in the base graph with probability $s,$ the edge correlation parameter. The goal is to find the latent vertex correspondence between the two graphs without observing their true identities. 
\begin{figure}
    \centering
    \includegraphics[width=0.6\linewidth]{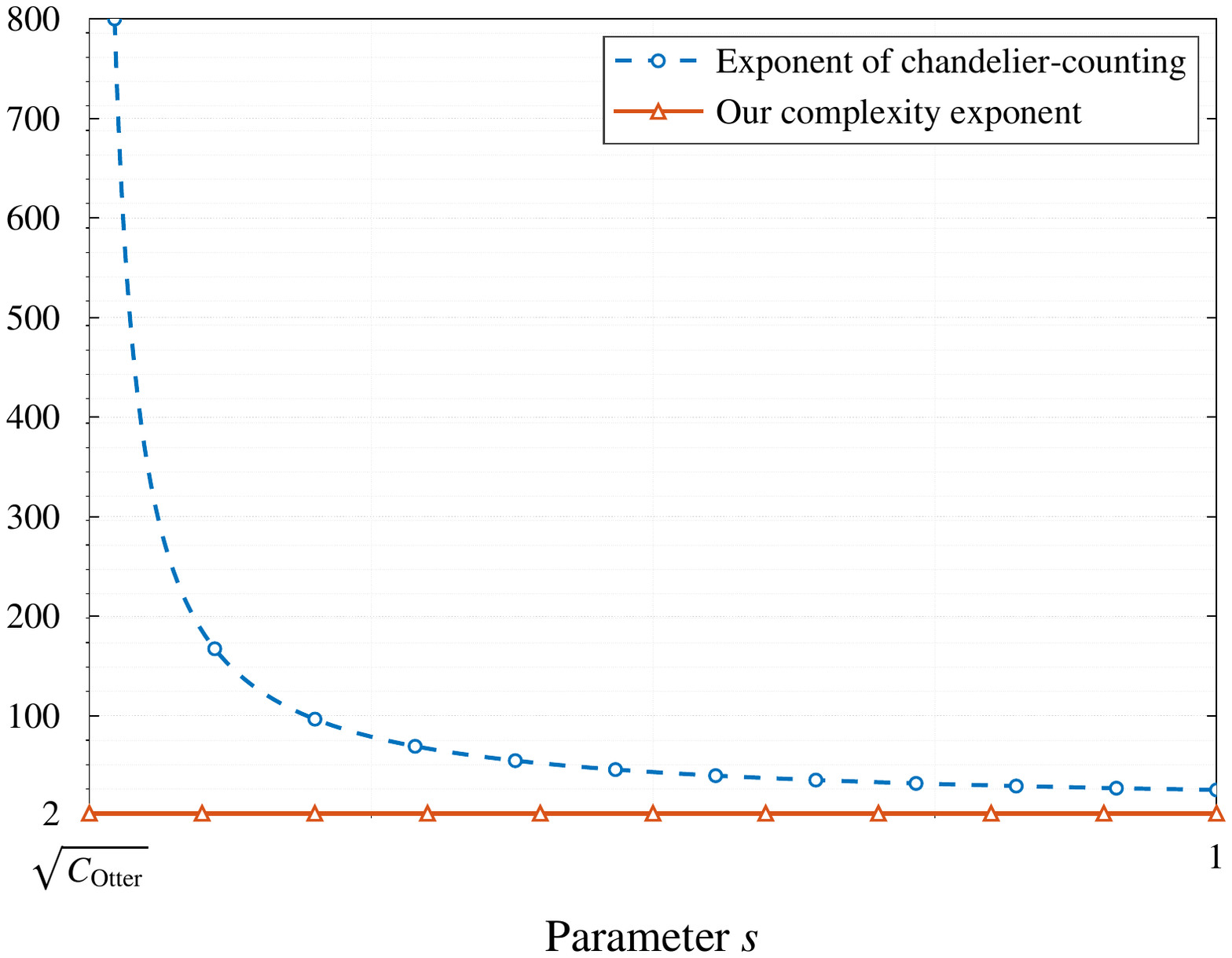}
    \caption{Comparison of the time-complexity exponents between our algorithm and the chandelier-counting algorithm in~\cite{Cheng26Chandelier}.}
    \label{fig:exponent}
\end{figure}

This paper studies graph matching of $\mathrm{CER}(n,\lambda,s)$ with $\lambda =(\log n)^{\alpha+o(1)}$ for $\alpha\in(0,1)$. This belongs to the {\em almost exact recovery regime,} where it is information-theoretically possible to recover the correspondence for all but a vanishing fraction of the vertices, while exact recovery of the full latent correspondence is infeasible due to the existence of isolated vertices~\cite{Yihong22Threshold}. The main contribution of this paper is an algorithm with $n^{2+o(1)}$ time complexity. The algorithm is  built upon \emph{tree correlation testing}~\cite{Luca24LRT}, and matches a vertex pair by examining the local trees rooted at their children sequentially according to the correlation rank. The proposed algorithm achieves almost exact recovery with high probability when $s\in (\sqrt{C_\mathrm{Otter}},1],$ where $C_\mathrm{Otter}\approx 0.338$ is the Otter's tree-counting constant~\cite{Richard48TreeCount}.

To the best of our knowledge, this is the {\em first} algorithm with $n^{2+o(1)},$ {\em almost quadratic}, time complexity in this almost exact recovery regime.  The only existing polynomial-time algorithm that achieves almost exact recovery in this regime 
is the chandelier-counting algorithm proposed in~\cite{Cheng26Chandelier}. 
However, the time complexity of their algorithm is $O(n^{c(s)})$, where the exponent $c(s)\rightarrow \infty$ as $s\rightarrow \sqrt{C_\mathrm{Otter}}$ as shown in Figure \ref{fig:exponent}, and satisfies $c(s)>9$ for all $ s\in (\sqrt{C_\mathrm{Otter}},1]$ based on the analysis in \cite{Cheng26Chandelier}\footnote{In the work~\cite{Cheng26Chandelier}, the correlated \erdos--\renyi graph pair model is parameterized by the edge correlation coefficient $\rho$ between the two graphs instead of the subsampling probability $s$, and their assumption is stated as $\rho\in (\sqrt{C_\mathrm{Otter}},1]$. In the sparse regime we consider, one has $s=\rho(1-\lambda/n)+\lambda/n=\rho(1+o(1))$. Thus their assumption is asymptotically equivalent to $s\in(\sqrt{C_{\mathrm{Otter}}},1]$. For this reason, we also refer to $s$ as the correlation parameter of the model.}.

\section{Related Work}
\subsection[Information-theoretic limits and polynomial-time algorithms for CER(n,lambda, s)]{Information-theoretic limits and polynomial-time algorithms for $\mathrm{CER}(n,\lambda,s)$}
Prior work on the correlated \erdos--\renyi graph pair model can be broadly divided into two lines: one studying the information-theoretic limits and the other developing computationally efficient algorithms. The first line of work has considered several recovery criteria. Besides almost exact recovery, as defined in the previous section, one may also consider exact recovery, which requires correctly recovering the entire latent permutation, and partial recovery, which refers to correctly matching a non-vanishing fraction of vertices. The information-theoretic thresholds have been identified in a series of works studying exact recovery~\cite{Daniel16GA,Daniel18GA,Yihong22Threshold}, almost exact recovery~\cite{Daniel19PartialAlign,Yihong22Threshold} and partial recovery~\cite{Luca21Partial,Hall23Partial,Yihong22Threshold,Jian23Partial,du2025optimal}.

The second line of work develops various polynomial-time matching algorithms with theoretical guarantees~\cite{Osman19CanonicalLabel,Jian21DegProfile,Fan23Spectral1,Fan23Spectral2,Luca24LRT,Luca20Tree,maier2025,Cheng23ConstantCorrelation,mao21a-pmlr,Cheng26Chandelier,ding2026polynomial}. 
In the sparse regime, where $\lambda=n^{o(1)}$, the feasibility of efficient algorithms is closely related to the Otter's constant $C_\mathrm{Otter}$. Under the assumption $s>\sqrt{C_\mathrm{Otter}}$, efficient algorithms are known to achieve different recovery criteria in their different degree regimes:
\begin{itemize}
    \item \emph{\underline{Partial recovery regime $\lambda=\Theta(1)$:}} In this regime, the chandelier-counting algorithm in~\cite{Cheng26Chandelier} and the tree-correlation testing algorithm in~\cite{Luca24LRT,Luca24Stat} achieve partial recovery.
    \item 
    \emph{\underline{Almost exact recovery regime $\lambda=\omega(1)$, $\lambda s\le (1-\Omega(1))\log n$:}} Prior to this work, the only efficient algorithm known to achieve almost exact recovery in this regime is the chandelier-counting algorithm in~\cite{Cheng26Chandelier}. The setting studied in this paper lies in this regime.
    \item
    \emph{\underline{Exact recovery regime $\lambda s\ge (1+\Omega(1))\log n$:}} In this regime, the chandelier-counting algorithm in~\cite{Cheng26Chandelier} is known to achieve exact recovery. An algorithm with $n^{2+o(1)}$ complexity to achieve exact recovery is proposed in~\cite{Cheng23ConstantCorrelation}, yet it 
    instead requires a more stringent condition $s>s_0$, where $s_0$ is an unspecified constant that is understood to be close to one.
\end{itemize}

On the converse side, when $s<\sqrt{C_\mathrm{Otter}}$, evidence for the non-existence of such efficient algorithms has been provided through low-degree hardness~\cite{li2025algorithmic}.

\subsection{Tree correlation testing} Our proposed algorithm is based on \emph{tree correlation testing}, which was introduced and developed in the line of work~\cite{Luca20Tree,Luca24LRT,Luca24Stat,maier2025}. This method reduces the task of matching a pair of vertices in the two correlated graphs $G$ and $\td G$ to the hypothesis testing of whether their local tree-like neighborhoods are independent or correlated Poisson Galton--Watson trees. We provide a formal definition of this testing problem in Section~\ref{sec:tree-test}. 

Previous works~\cite{Luca24LRT,Luca24Stat,maier2025} studied the likelihood-ratio test problem in the constant-degree regime, where the mean degree of the trees remains fixed while the depth tends to infinity. 
They then applied tree correlation tests in the graph matching setting with constant $\lambda$ to achieve partial recovery.
However, before the present work, it was unknown whether almost exact recovery could be achieved with this method in the diverging-degree regime. A key technical contribution is to analyze the tree correlation testing problem in a regime where both the mean degree and tree depth diverge and grow as functions of $n$. This analysis is the key ingredient that enables our algorithm to achieve almost exact recovery, and may also be of independent interest for tree correlation test in the diverging-degree regime.

Our graph matching algorithm also differs from the previous ones in how the tree correlation tests are used. Previous works use a \emph{threshold-based} rule: a candidate pair is accepted if its likelihood ratio exceeds a predefined threshold. In contrast, our algorithm is \emph{rank-based}: it ranks the likelihood ratios of all candidate tree pairs and constructs the matching by sequentially examining the local tree pairs according to the ranking. This avoids the use of an explicit threshold, which is computationally difficult to obtain with diverging degrees.

\subsection{Graph matching under other random graph models}
Beyond the correlated \erdos--\renyi graph pair model, graph matching has also been studied in more general models, including correlated stochastic block models~\cite{racz2021correlated,Shuwen24SBMChandellier,Joonhyuk23SBM}, correlated geometric graph models~\cite{Haoyu22Geo}, and attributed graph models with publicly available side information~\cite{Ning24AGA,Ziao24AGA,Ziao25AGA2,huang2026attributed}. Recent studies have also extended to the settings of simultaneously matching multiple graphs~\cite{ameen2025aligning,ameen2024aligning} and matching partially observed graphs~\cite{wang26d,shiu2025information}. These extensions incorporate richer structural information, side information, or observation models, and therefore lead to statistical and algorithmic phenomena that are different from those in the homogeneous \erdos--\renyi setting. 

While this paper focuses on the canonical correlated \erdos--\renyi graph pair model, we expect that the tree correlation test analysis developed here may also be useful for designing low-complexity algorithms in some of these broader graph matching models.

\section{Model and Main Results}
\subsection{Notation}
Given a simple graph $H$, we use $\cv(H)$ to denote its vertex set and $\ce(H)$ to denote its edge set. The distance between two vertices $u,v\in \mathcal{V}(H)$ is the length of the shortest path between them. A tree is a connected, acyclic simple graph, and a rooted tree is a tree together with a distinguished vertex, designated as the root. For a vertex in a rooted tree, its depth is its distance from the root. The depth of a rooted tree is the maximum depth among all its vertices. For a correlated \erdos--\renyi graph pair $(G,\td G)\sim \mathrm{CER}(n,\lambda,s)$, we use short-hand notations $\cv:=\cv(G)$, $\ce:=\ce(G)$, $\td\cv:=\cv(\td G)$ and $\td\ce:=\ce(\td G)$. Recall that $\cv=\td\cv=[n]$. To distinguish vertices in the two graphs, we use variables without tildes, such as $u,v$, for vertices in $G$, and variables with tildes, such as $\td u,\td v$, for vertices in $\td G$. For $u\in\cv$, let $\cn(u)$ denote the set of neighbors of $u$ in $G$. Similarly, for $\td u\in\td\cv$, $\td\cn(\td u)$ denotes the set of neighbors of $\td u$ in $\td G$.

\begin{figure}
    \centering 
    \includegraphics[width=0.6\linewidth]{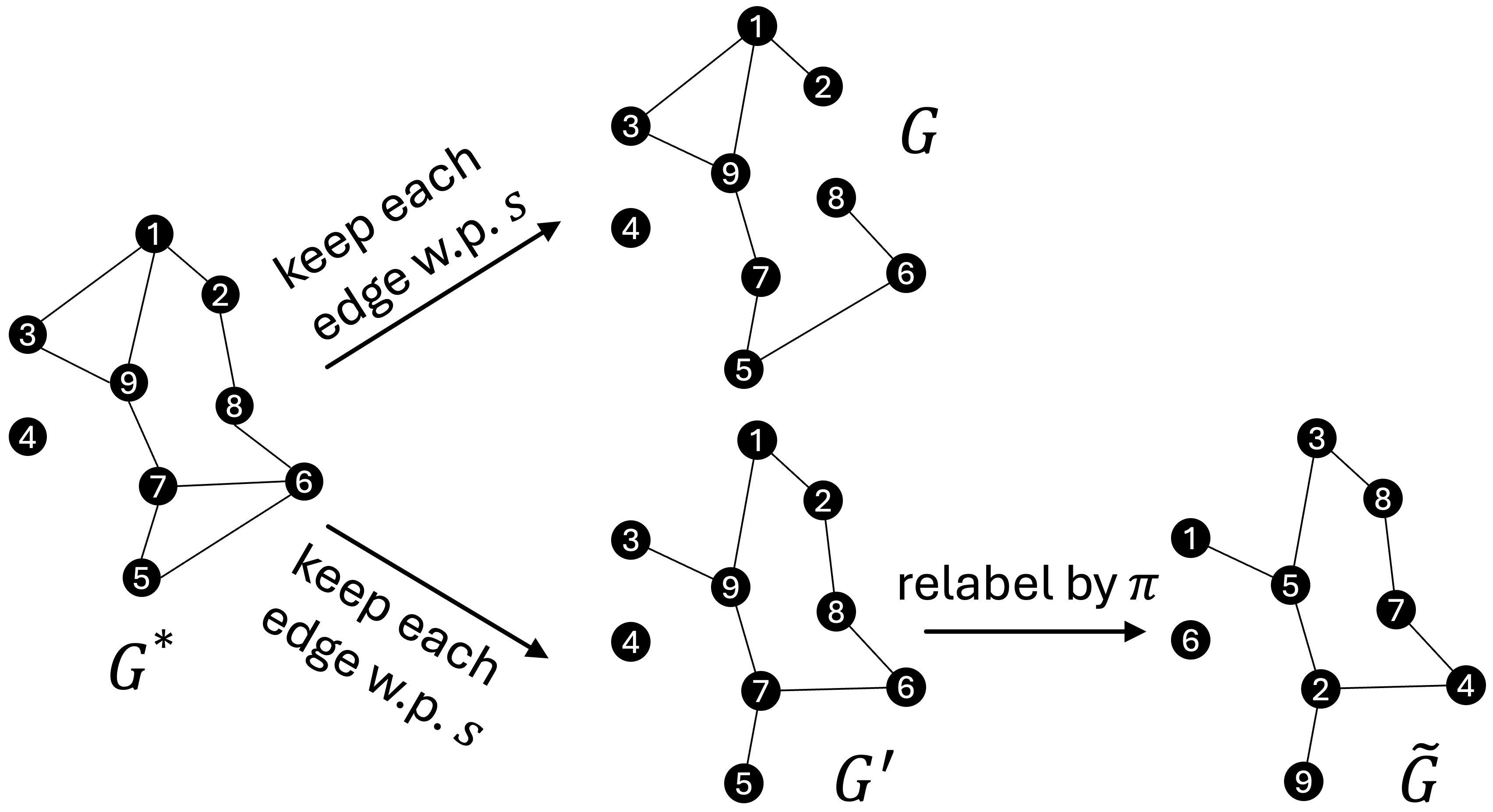}
        \caption{An example of $\mathrm{CER}(n,\lambda,s)$ model with $n=9$.} 
        \label{fig:CER_model}
\end{figure}

We use $\P(\cdot)$ to denote the probability of an event and use $\E[\cdot]$ to denote the expectation with respect to the underlying probability distribution. We sometimes explicitly include the probability distribution as the subscript of the expectation to make it clear. We follow the standard asymptotic notation: $f(n) = O(g(n))$ if $\limsup_{n \to \infty} \frac{|f(n)|}{g(n)} < \infty$; $f(n) = \Omega(g(n))$ if $\liminf_{n \to \infty}\frac{|f(n)|}{g(n)} >0$; $f(n) = \Theta(g(n))$ if $f(n) = O(g(n))$ and $f(n) = \Omega(g(n))$; $f(n) = o(g(n))$ if $\lim_{n \to \infty}\frac{f(n)}{g(n)} = 0$; $f(n) = \omega(g(n))$ if $\lim_{n \to \infty}\frac{|f(n)|}{|g(n)|} = \infty$; and $f(n) \sim g(n)$ if $\lim_{n \to \infty}\frac{f(n)}{g(n)} = 1$.  In addition, we use $\td O(\cdot)$ notation when logarithmic factors are omitted.

\subsection{Correlated \erdos--\renyi graph pair model}
Given a positive integer $n$, real numbers $s\in (0,1]$ and $\lambda \in (0,ns]$, the correlated \erdos--\renyi graph pair model $\mathrm{CER}(n,\lambda,s)$ generates a pair of graphs $(G,\td G)$ through the following steps, as illustrated in Figure \ref{fig:CER_model}: 
\begin{itemize}
    \item[(i)] We first sample a base \erdos--\renyi graph $G^*\sim\mathrm{ER}(n,\frac{\lambda}{ns})$, where $G^*$ has vertex set $[n]:=\{1,\ldots,n\}$, and edges are independently generated between vertex pairs with probability $\frac{\lambda}{ns}$; \item[(ii)] Given $G^*$, graph $G$ is generated by keeping each edge in $G^*$ independently with probability $s$, and the same subsampling process is independently repeated to generate graph $G'$; 
    \item[(iii)] A permutation $\pi:[n]\rightarrow[n]$ is sampled uniformly at random, and used to relabel $G'$ to obtain graph $\td G$. In other words, vertices $u$ and $v$ are connected in $G'$ if and only if $\pi(u)$ and $\pi(v)$ are connected in $\td G$;
    \item[(iv)] Finally, the model outputs the two graphs $G$ and $\td G$ both with marginal distribution $\mathrm{ER}(n,\frac{\lambda}{n})$. We denote $(G,\td G)\sim\mathrm{CER}(n,\lambda,s)$. 
\end{itemize}
After observing $(G,\td G)$, the goal of graph matching is to find a matching $\hat \pi: \cv(G)\rightarrow \cv(\td G),$ which matches every vertex in $G$ with a distinct vertex in $\td G.$ Since $G$ retains the true labels and $\hat \pi$ is also a  permutation $[n]\rightarrow[n]$,  $\hat \pi$ is an estimator of the latent permutation $\pi.$ We say that the matching $\hat \pi$  achieves \emph{almost exact recovery} if
\[
|\{u\in [n]:\hat\pi(u)= \pi(u)\}|=n-o(n),
\]
i.e., the estimate recovers the true labels of all but a vanishing fraction
of the vertices of $\td G$.  This paper studies efficient algorithms that achieve almost exact recovery. 
The following informal theorem summarizes our main results.

\begin{theoremnonum}[informal statements of Theorem 1 and 2]
Assume $s\in (\sqrt{C_\mathrm{Otter}},1]$ and $\alpha\in (0,1)$ are two constants independent of $n.$ 
Given a pair of graphs $(G,\td G)\sim \mathrm{CER}(n,\lambda,s)$ with $\lambda=(\log n)^{\alpha+o(1)}$, the rank-based alignment algorithm, named \texttt{RBAlign} in Section~\ref{sec:rbalign},  achieves almost exact recovery with high probability and has $n^{2+o(1)}$ time complexity.
\end{theoremnonum}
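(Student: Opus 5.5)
Proof plan. The statement has two parts: almost exact recovery and the $n^{2+o(1)}$ running time. I would first reduce matching to tree correlation testing. Take depth $d=\Theta(\log n/\log\lambda)$ slightly below the coupling radius, say $d=c\log n/\log\lambda$ with small $c$. With high probability, for all but $o(n)$ vertices $u$ the depth-$d$ neighborhoods of $u$ in $G$ and of $\pi(u)$ in $\td G$ are trees. Their joint law is within $o(1)$ total variation of a pair of correlated Poisson Galton--Watson trees with mean $\lambda$ and correlation $s$. For a fake pair $(u,\td v)$ with $\td v\neq\pi(u)$, the two neighborhoods are close to independent trees. These are standard exploration-process couplings, and the error is $\lambda^{O(d)}/n=n^{-\Omega(1)}$.

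The core step is a testing bound for the likelihood ratio $L_d$ in the diverging-degree regime. I would aim for two estimates. Under the null (independent trees), $\P(L_d\ge \tau)\le n^{-1-\epsilon}$. Under the planted model (correlated trees), $\P(L_d<\tau)=o(1)$. To get these, write $L_d$ recursively over children via the standard matching-sum recursion. Then bound the second moment $\E_0[L_d^2]$ through counting automorphism-weighted common subtrees. Growth of the moment is controlled by $s^2$ times the tree-counting rate, and this is exactly where $s^2>C_{\mathrm{Otter}}$ enters: the planted signal grows like $(s^2/C_{\mathrm{Otter}})^{\text{size}}$ while null fluctuations stay bounded. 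To reach the needed $n^{-1-\epsilon}$ error, I would use signal size $\lambda^{d}$ with $\lambda=(\log n)^{\alpha}$ and $\alpha<1$. This should give a Chernoff-type bound with exponent growing like $\log n\cdot\omega(1)$, possibly after truncating to typical trees. A union bound over the $n^2$ pairs then yields a threshold $\tau$ that separates true from fake pairs except on $o(n)$ vertices. This proves the existence of a successful threshold-based algorithm.

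Next, I would couple \texttt{RBAlign} to the threshold rule. The algorithm processes candidate tree pairs in decreasing order of likelihood ratio and accepts greedily. Every fake pair scores below $\tau$ and all but $o(n)$ true pairs score above $\tau$. Hence when the ranking reaches $\tau$, each accepted pair is true, except for conflicts involving the $o(n)$ bad vertices. Each bad vertex can block at most $O(1)$ true pairs. Any correctly matching child pairs are combined into a vertex decision as in the threshold analysis. The result is $n-o(n)$ correct matches without ever computing $\tau$.

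For the complexity, $L_d$ for all pairs can be computed by dynamic programming over depth. Each level needs, for each of the $n^2$ vertex pairs, a matching-type sum over children. This sum costs $\lambda^{O(1)}$, or $e^{O(\lambda)}$ if the dependence on $\lambda$ must be handled carefully, but it stays $n^{o(1)}$ since $\lambda=(\log n)^{\alpha}$ with $\alpha<1$, so $e^{\lambda}=n^{o(1)}$. Over $d=O(\log n)$ levels, plus sorting $n^2\lambda^2$ scores, the total is $n^{2+o(1)}$. The main obstacle is the null tail bound $n^{-1-\epsilon}$ in the diverging-degree regime. I would attack it first through a truncated second-moment or Chernoff-type recursion on subtree counts.
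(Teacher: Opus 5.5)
\textbf{Verdict: there is a genuine gap.} Your skeleton matches the paper's: tree tests, then a threshold algorithm, then coupling the rank-based algorithm to it, then a complexity count. But the step you call the core is not supplied, and the route you sketch for it cannot work.

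\textbf{The testing bound.} Since $\E_{\bq}[L_d]=1$, the null bound $\bq(L_d\ge\tau)\le n^{-1-\epsilon}$ comes for free from Markov at $\tau=n^{1+\epsilon}$. The whole difficulty is power: you need $\bp(L_d<n^{1+\epsilon})=o(1)$, i.e.\ $\log L_d$ typically above $(1+\epsilon)\log n$ under $\bp$.

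The second moment says nothing about this. We have $\E_{\bq}[L_d^2]=\E_{\bp}[L_d]=\sum_{t\in\cz_d}s^{2|\cv(t)|-2}$, which does not even depend on $\lambda$. Its divergence when $s^2>C_{\mathrm{Otter}}$ is exactly the growth of the null variance, not evidence that null fluctuations stay bounded. No Chernoff bound with exponent $\omega(\log n)$ follows from it.

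The paper never analyzes $L_d$ directly. It builds an auxiliary, infeasible test in three stages:
\begin{enumerate}
\item Threshold $L_{\bar d}\ge C_1$ at a constant depth $\bar d$. The type-I error is at most $C_1^{-1}$ by Markov. Power is at least $0.4$, from $\liminf_{\mu\to\infty}\E_{\bp}[\log L_k]\ge\chi_k$ plus Otter's count making $\chi_{\bar d}$ large. This is where $\lambda\to\infty$ and $s^2>C_{\mathrm{Otter}}$ enter.
\item Iterative boosting with centered subtree-count statistics, using a fourth-moment bound under $\bq$ and Chebyshev under $\bp$. This reaches type-I error $O(e^{-(d-l)})$ and power $1-O(\lambda^{-1})$ at depth $d-l$.
\item A maximum bipartite matching between the roughly $(\lambda s)^l$ generation-$l$ subtrees, with $l=\beta\log\log n$. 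This amplifies the type-I error to $n^{-\log\log n}$, and it is where $d=(\log n)^\gamma$ with $\gamma>1-\alpha$ is needed.
\end{enumerate}
Neyman--Pearson then transfers these guarantees to a likelihood-ratio test. That is also why $\theta^*$ is only shown to exist.

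\textbf{Nearby fake pairs.} Fake pairs are not close to independent when $\td v$ is near $\pi(u)$. If $w\in\cn(u)$ and $\td v$ is a neighbor of $\td w$ in $\td G$, then $\ct_{w\setminus u}$ and $\td\ct_{\td w\setminus\td v}$ form an essentially correlated pair and will typically pass. Each vertex has about $\lambda^2$ such candidates, and $\lambda^{O(d)}$ with overlapping neighborhoods. So your per-pair coupling error $\lambda^{O(d)}/n$ does not confine them to $o(n)$ vertices. The paper uses the 3-dangling-tree rule:
\begin{itemize}
\item The type-I bound is applied only to vertex-disjoint tree pairs, via a pmf comparison with $\bq$. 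The union bound runs over $n^3$ triples $(w,\td v,\td x)$ per vertex, hence the much stronger rate $n^{-\log\log n}$.
\item Three overlapping passing witnesses for some $\td v\neq\td u$ would force a cycle in the $2(d+1)$-neighborhood of $\bar u$ in the union graph.
\end{itemize}

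\textbf{Stopping of \texttt{RBAlign}.} The algorithm never knows $\theta^*$, so you must show it stops before its running threshold falls below $\theta^*$. The paper does this in two steps. First, the stopping target $n-\lceil n/\log n\rceil$ is below the count $n-3ne^{-\frac12(\log n)^{\alpha/2}}$ that \texttt{TBAlign} guarantees at $\theta^*$. Second, \texttt{GreedyMatching} is monotone in the threshold. Together these give that every \texttt{RBAlign} match is a \texttt{TBAlign} match.

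\textbf{Complexity.} The children-matching sum costs time exponential in the actual degrees, which reach order $\log n/\log\log n$, not in $\lambda$. With $d=\Theta(\log n/\log\lambda)$ the trees also have polynomial size. The paper instead takes $d=(\log n)^\gamma$ and degree cap $\tau=(\log n)^t$, so each entry of $\mathbf L$ costs $d\tau^{2d+2}(\tau!)^2=n^{o(1)}$. It then shows the cap affects only $|\cb|+|\td\cb|=o(n)$ vertices.
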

 
\texttt{RBAlign} is a matching algorithm based on local tree correlation tests, including three key steps:
\begin{itemize}
   \item \emph{Extract local trees:} For each vertex $u,$ \texttt{RBAlign} considers the local trees rooted at $u$'s children, say vertex $w,$ and with depth $d$, which, in most cases, is the depth-$d$ neighborhood of $w$ after removing the vertex $u$ from $G.$

    \item \emph{Compute local tree-pair likelihood ratios:} For each pair of local trees, one from $G$ and one from $\td G$, the algorithm computes the likelihood ratio of the correlated Galton--Watson tree-pair distribution and the independent Galton--Watson tree-pair distribution, which measures the correlation strength of the tree pair. 

    \item \emph{Rank and match:} \texttt{RBAlign} ranks all local tree pairs in a decreasing order of their likelihood ratios. It then processes this ranked list progressively: at each step, the next highest-ranked local tree pair is classified as correlated. Two vertices $u\in G$ and $\td v\in \td G$ are declared to be matched once  three local tree pairs rooted at their children are classified as correlated.
    This matching criteria is known as the \emph{3-dangling-tree} test in the literature~\cite{Luca24LRT}.
    The procedure terminates once $n-o(n)$ vertex pairs have been matched.
\end{itemize}

In the next section, we present a detailed description of \texttt{RBAlign} and its analysis. We will begin by introducing tree correlation tests, which serve as the main building block of our algorithm. We then present a threshold-based alignment algorithm \texttt{TBAlign}. \texttt{TBAlign} relies on a threshold for the local tree correlation tests,  but the threshold is computationally infeasible to obtain. However, \texttt{TBAlign} provides an important stepping stone for the design and analysis of the rank-based algorithm. In fact, to prove the existence of the threshold in \texttt{TBAlign}, we have to develop a different type of local tree correlation tests, which again is computationally infeasible, and then use it to prove the existence of the threshold using the Neyman-Pearson lemma.  Finally, we will formally present \texttt{RBAlign}, analyze its time complexity, and establish its recovery guarantee by coupling it with \texttt{TBAlign}. The relationship between these main results of the paper are summarized in Figure \ref{fig:overview}.
\begin{figure}[htbp]
    \centering
    \includegraphics[width=1\linewidth]{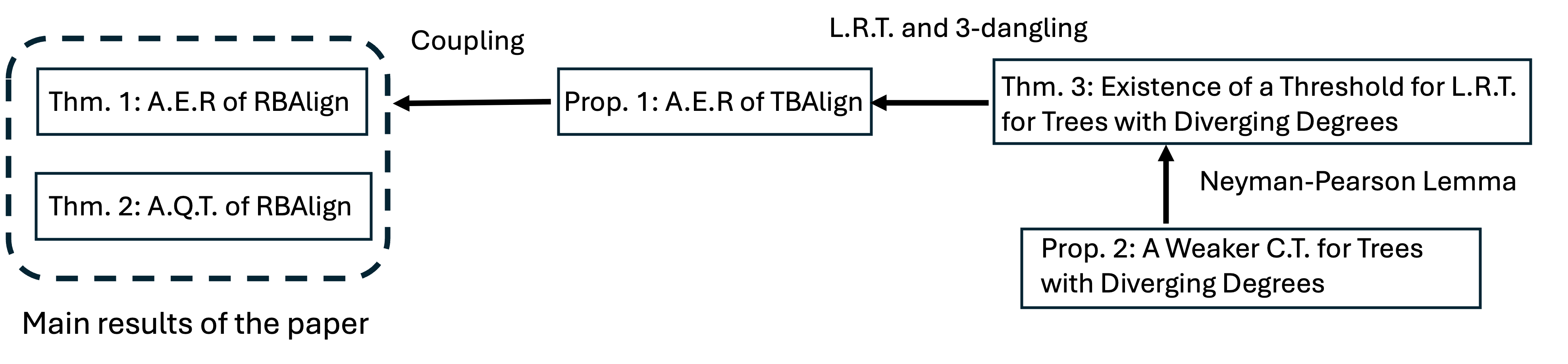}
    \caption{Relationships among the main theoretical results in the paper. A.E.R. - Almost Exact Recovery; A.Q.T. - Almost Quadratic Time; L.R.T. - Likelihood Ratio Test; C.T. - Correlation Test}
    \label{fig:overview}
\end{figure}

\section{Algorithms and Analysis}
\subsection{ Tree Correlation Tests}
\label{sec:tree-test}
\textbf{\underline{Poisson Galton--Watson trees}.}
Let $\lambda>0$ be a real number. We define the Poisson Galton--Watson tree with offspring mean $\lambda$ as the random rooted tree $\ch$ generated by the following branching process.

The process starts from a single root, whose label is $(1)$. For each vertex $v$ that appears in the tree, independently sample $C_v\sim \poi(\lambda)$. The vertex $v$ generates $C_v$ children sequentially, called its first child, second child, and so on up to its $C_v$-th child. This sequential generation provides an ordering of the children of $v$. If $v$ has label $(a_0,\ldots,a_k)$, then its $i$-th child is assigned the label $(a_0,\ldots,a_k,i)$. The process continues recursively for all newly generated vertices. It stops if there are no vertices at some depth; otherwise, it continues indefinitely. We denote the support of the random rooted tree $\ch$ by $\cx$, and the set of elements in $\cx$ of depth less than or equal to $k$ by $\cx_k$.

For an integer $k\ge 0$, let $\ch_k$ denote the truncation of $\ch$ up to depth $k$. More precisely, $\ch_k$ is the subgraph of $\ch$ induced by the set of vertices with depth less than or equal to $k$, with the same root and the inherited order labels. Then for $h\in\cx_k$, we have
\[
\gw(\ch_k=h)=\prod_{i=0}^{k-1}
\prod_{v\in\cv(h,i)}
\P(\poi(\lambda)=c_v),
\]
where $\cv(h,i)$ denotes the set of vertices at depth $i$ in $h$, and $c_v$ is the number of children of vertex $v$ in $h$. By convention, the inner product over $v\in\cv(h,i)$ equals to one when $\cv(h,i)=\emptyset$.

\textbf{\underline{Unlabeled rooted trees and isomorphism classes}.}
For a labeled rooted tree $\ch$, we write $[\ch]$ for the unlabeled rooted tree obtained from $\ch$ by removing all vertex labels while preserving the identity of the root. In other words, $[\ch]$ records only the rooted tree structure of $\ch$.

This notation naturally induces an equivalence relation on labeled rooted trees. We say that two labeled rooted trees $\ch$ and $\ch'$ are isomorphic,
and write
$\ch\cong \ch',$
if their corresponding unlabeled rooted trees are identical, i.e.,
$[\ch]=[\ch'].$
Equivalently, $\ch\cong\ch'$ if there exists a bijection between their vertex sets that preserves the identity of the root and all the edges.

We define $\cz:=\{[\ch]:\ch\in\cx\}$, and $\cz_k:=\{[\ch]:\ch\in\cx_k\}$ for each integer $k\ge 0$. Thus, $\cz$ (resp. $\cz_k$) is the collection of all unlabeled rooted trees that can be obtained from removing the labels from rooted trees in $\cx$ (resp. $\cx_k$).

\textbf{\underline{Independent and correlated tree distributions}.}
Let $\lambda>0$ and $s\in(0,1]$. We now define two probability distributions on $\cx\times\cx$: the independent tree-pair distribution $\bq(\ch,\td\ch)$ and the correlated tree-pair distribution $\bp(\ch,\td\ch)$.

\begin{itemize}
    \item  Independent tree-pair distribution $\bq(\ch,\td\ch)$: Under this distribution, the two trees $\ch$ and $\td{\ch}$ are two independently sampled Poisson Galton--Watson trees with offspring mean $\lambda$.

    \item Correlated tree-pair distribution $\bp(\ch,\td\ch):$
    To generate $\ch$ and $\td\ch$, we first sample an \emph{intersection tree} $\ch^*$, which is a Poisson Galton--Watson tree with offspring mean $\lambda s$. Starting from $\ch^*$, we generate the first tree $\ch$ as follows. For each vertex $v$ of $\ch^*$, independently sample $X_v^+ \sim \poi(\lambda(1-s)).$ Let $c_v^*$ denote the number of children of $v$ in $\ch^*$. We attach $X_v^+$ new children to $v$, ordered after the existing children of $v$ in $\ch^*$. That is, these newly added children are assigned orders
    \[
    c_v^*+1,\ c_v^*+2,\ \ldots,\ c_v^*+X_v^+,
    \]
    and their labels are defined accordingly by concatenating the label of $v$ with their respective orders. For each newly added child $u$ of $v$, which lies at depth $k+1$, we independently sample $\ch_u$, which is a Poisson Galton--Watson tree with offspring mean $\lambda$. We then attach $\ch_u$ at $u$ by identifying the root of $\ch_u$ with $u$. Equivalently, if the label of $u$ is $a$, then the root label $(1)$ of $\ch_u$ is replaced by $a$, and every descendant label $(1,i_1,\ldots,i_m)$ in $\ch_u$ is replaced by $(a,i_1,\ldots,i_m).$ After performing this augmentation for every vertex of $\ch^*$, we obtain the rooted tree $\ch$. The second tree $\td\ch$ is generated by repeating the same augmentation procedure independently, starting from the same intersection tree $\ch^*$.

\end{itemize}

For an integer $k\ge 0$, let $\ch_k$ and $\td\ch_k$ denote the truncations of $\ch$ and $\td\ch$ up to depth $k$, respectively. 
For two rooted trees $h,\td h\in \cx_k$, $\bq(\ch_k=h,\td\ch_k=\td h)$ is the probability, under the independent tree-pair distribution $\bq$, that the depth-$k$ truncations of $\ch$ and $\td\ch$ are equal to $h$ and $\td h$, respectively. Similarly, $\bp(\ch_k=h,\td\ch_k=\td h)$ is the probability of the same event under the correlated tree-pair distribution $\bp$.
For two unlabeled rooted trees $t,\td t\in\cz_k$, $\bq([\ch_k]=t,[\td\ch_k]=\td t)$ is the probability, under the independent tree-pair distribution $\bq$, that the unlabeled versions of the depth-$k$ truncations of $\ch$ and $\td\ch$ are equal to $t$ and $\td t$, respectively. Similarly, $\bp([\ch_k]=t,[\td\ch_k]=\td t)$ is the probability of the same event under the correlated tree-pair distribution $\bp$. The \emph{likelihood ratio} $L_k$ is defined as
\begin{equation}
\label{eq:LHR}
L_k(t,\td t):=\frac{\bp([\ch_k]=t,[\td\ch_k]=\td t)}{\bq([\ch_k]=t,[\td\ch_k]=\td t)}.
\end{equation}

\subsection{Threshold-based Alignment Algorithm: \texttt{TBAlign}}

\begin{de}[Local tree $\ct_{u\setminus v,d}^{\tau}$]
\label{de:local_tree}
    Let $\tau\ge 1$, and $d$ be a positive integer.
    For two adjacent vertices $u,v\in \cv$ such that $(u,v)\in \ce$, we define $\ct^{\tau}_{u\setminus v,d}$ to be the local tree rooted at vertex $u$ in the graph $G\setminus v$ such that $\ct^{\tau}_{u\setminus v,d}$ is the $d$-hop-neighborhood of $u$ in $G\setminus v$ if it is a tree and every node in this neighborhood has degree at most $\tau$; otherwise, $\ct^{\tau}_{u\setminus v,d}$ is an empty tree without any vertices.
\end{de}
We use $\cc_{d}^{\tau}$ to denote the collection of all of such local trees in $G$.  Notice that for each edge $(u,v)\in \ce$, we have two local trees $\ct_{u\setminus v,d}^{\tau}$ and $\ct_{v\setminus u,d}^{\tau}$ according to the definition above. Therefore, we have $|\cc_{d}^{\tau}|=2|\ce|$. 
$\tilde{\ct}^{\tau}_{\td u\setminus\td v,d}$ and $\tilde{\cc}^{\tau}_{d}$ are similarly defined on $\tilde G.$

Recall the definition in~\eqref{eq:LHR}. For a pair of local trees $\ct^{\tau}_{w\setminus u,d}$ and ${\td \ct}^{\tau}_{\td x\setminus\td v,d}$, $L_d([\ct^{\tau}_{w\setminus u,d}],[{ \td\ct}^{\tau}_{\td x\setminus\td v,d}])$ is the likelihood ratio evaluated at such a tree pair. If at least one of these two local trees is an empty tree, we define $L_d([\ct^{\tau}_{w\setminus u,d}],[{\td\ct}^{\tau}_{\td x\setminus\td v,d}]):=0$. In the rest of this section, we omit the subscript $d$ and superscript $\tau$ from notations $\ct^{\tau}_{w\setminus u,d},\td\ct^{\tau}_{\td x\setminus\td v,d},\cc^{\tau}_d$ and $\td\cc^{\tau}_d$
because $\tau$ and $d$ are fixed parameters.

The algorithm consists of the following key steps

\begin{itemize}
    \item {\bf Step 1:} The algorithm computes a $2|\ce|\times 2|\td\ce|$  real-valued matrix ${\bf L}$ that records all the likelihood ratios of local tree pairs. In particular, each row of ${\bf L}$ represents a local tree $\ct_{w\setminus u}\in \cc$, and each column represents a local tree $\td\ct_{\td x\setminus\td v}\in \td\cc$. The entry ${\bf L}(\ct_{w\setminus u},\td\ct_{\td x\setminus\td v})$ is the likelihood ratio $L_d([\ct_{w\setminus u}],[\td\ct_{\td x\setminus\td v}])$.

\item {\bf Step 2:} Initialize an $n\times n$ matrix ${\bf M}_\mathrm{TB}.$ For each pair of vertices $(u,\td v)$, define a {\em witness} graph $\cw(u, \td v)$, which is a weighted bipartite graph between  vertex sets $\cn(u)$ and $\td\cn(\td v)$ with the weight of edge $(w, \td x)$ being 
 ${\bf L}(\ct_{w\setminus u},\td\ct_{\td x\setminus\td v}).$ 
 
 \item {\bf Step 3:} Given threshold $\theta^*,$ we run a greedy matching algorithm to determine ${\bf M}_\mathrm{TB}:$
 \[
{\bf M}_\mathrm{TB}(u, \td v)=\texttt{GreedyMatching}(\cw(u,\td v), \theta^*, 3)\in\{0,1\}.
\] 
\texttt{GreedyMatching} selects edges greedily according to their weights and stops once it finds a matching of size $3$ (output 1) or once all the remaining edges either have weights smaller than $\theta^*$, or have weights equal to $\theta^*$ but are rejected  by tie-breaking rule $f$ (output 0). See Figure~\ref{fig:greedy} for an example. Here a matching is a set of edges in which no two edges share a common vertex, and the tie-breaking rule $f:\cz_d\times\cz_d\rightarrow [0,1]$ is an injective mapping that induces a canonical ordering of all possible tree pairs.
Our results hold for any tie-breaking rules, an explicit example of the tie-breaking rule can be found in Appendix~\ref{appd:treecode}. 

\end{itemize} 
\begin{figure}
    \centering 
        \includegraphics[width=1\textwidth]{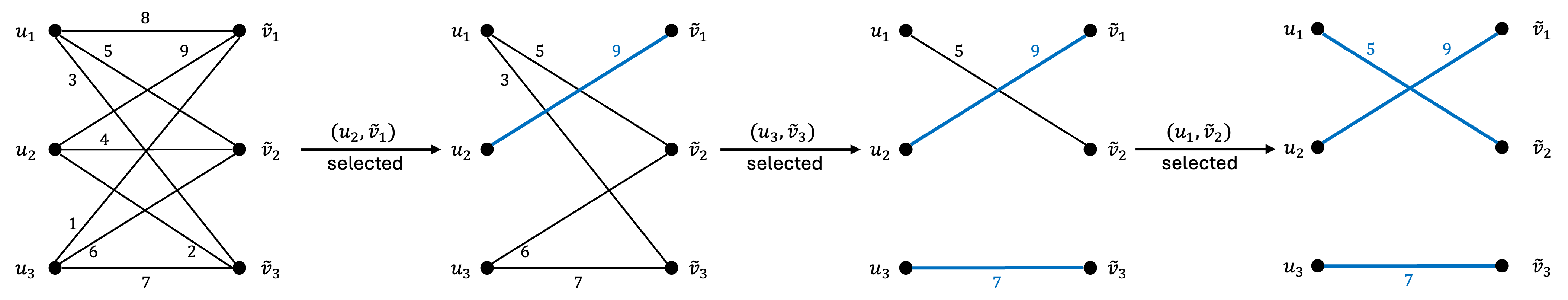}
        \caption{An example of \texttt{GreedyMatching} with $\theta^*=4$. The number next to each edge represents its weight. In this example, we obtain a matching of size $3$, and hence the subroutine outputs $1$. If $\theta^*$ is instead chosen to be $6$, then the edge $(u_1,\td v_{2})$ cannot be included, and hence the subroutine outputs $0$.} 
        \label{fig:greedy}
\end{figure}

\begin{algorithm}[H]
\caption{\texttt{TBAlign}: Threshold-based Alignment Algorithm }
\begin{algorithmic}[1]\label{algo:TBAlign}
\REQUIRE Two graphs $G$ and $\td G$, average degree $\lambda$, correlation $s$, threshold $\theta^*$, parameters $d$ and $\tau$.
\ENSURE Matrix ${\bf M}_\mathrm{TB}\in \{0,1\}^{n\times n}$.
\STATE Initialize all-zero matrices ${\bf M}_\mathrm{TB} \gets{\bf 0}_{n\times n}$ and ${\bf L}\gets{\bf 0}_{2|\ce|\times2|\td \ce|}$.
\FOR{each $(\ct_{w\setminus u},\td \ct_{\td x\setminus \td v})\in
\cc\times \td \cc$}
\STATE Compute the likelihood ratio ${\bf L}(\ct_{w\setminus u},\td\ct_{\td x\setminus \td v})$.
\ENDFOR

\FOR{each $(u,\td v)\in \cv\times \td {\cv}$}
\STATE Construct the witness graph $\mathcal{W}(u,\td v)$.
            \STATE ${\bf M}_\mathrm{TB}(u, \td v)\gets\texttt{GreedyMatching}(\cw(u,\td v), \theta^*, 3)\in\{0,1\}.$
    \ENDFOR
\RETURN ${\bf M}_\mathrm{TB}$
\end{algorithmic}
\end{algorithm}

The following proposition provides the theoretical guarantee for \texttt{TBAlign}. In stating this proposition and throughout the rest of this paper, we assume without loss of generality that the latent permutation $\pi$ is the identity permutation, which implies that $u$ and $\td u$ are two corresponding vertices in $G$ and $\td G$. Of course, none of our algorithms knows or uses this information. The proof of this proposition is deferred to Section \ref{sec:prop1}. 
\begin{restatable}{prop}{TBAlign}
	\label{th:TBAlign}
Consider $(G,\td G)\sim \mathrm{CER}(n,\lambda,s)$ with $\lambda=(\log n)^{\alpha+o(1)}$ for some $\alpha\in (0,1)$ and $s\in (\sqrt{C_\mathrm{Otter}},1].$  
Choose  
$d=(\log n)^{\gamma}$ for some $\gamma\in (1-\alpha,1)$, and $\tau=(\log n)^{t}$ for some $t\in(\max\{\alpha,\gamma\},1)$ when constructing the local trees\footnote{To lighten the notation, we do not explicitly round $d$ to an integer as one can always choose $\gamma\in (1-\alpha,1)$ so that $(\log n)^\gamma$ is an integer, given that $n$ is suffciently large.}. Assume $\alpha,$ $s,$ $\gamma,$ and $t$ are all constants independent of $n.$
Then there exists a threshold $\theta^*,$ depending on $n$, $\lambda$, $d$ and $s,$ such that the output matrix ${\bf M}_\mathrm{TB}$ of \texttt{TBAlign} satisfies
\begin{align}
   \sum_{u\in \cv} \mathbbm{1}\{{\bf M}_\mathrm{TB}(u,\td u) =1 \text{ and } \|{{\bf M}_\mathrm{TB}(u,:)}\|_1=1\}\ge n-3n\exp\left(-\frac12(\log n)^{\alpha/2}\right)\label{eq:TBAlign}
\end{align}
with probability $1-O(\exp(-\frac12(\log n)^{\alpha/2}))$.
\end{restatable}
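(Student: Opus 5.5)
Our plan is to reduce the graph statement to a statement about a single tree correlation test on Galton--Watson tree pairs, and then use a union/Markov-type argument over vertices. The key intermediate object is a test on $\cz_d\times\cz_d$ with small type I error $\bq(\text{accept})\le \epsilon_1$ and small type II error $\bp(\text{reject})\le \epsilon_2$. We will aim for $\epsilon_1 = o(n^{-1}\lambda^{-2}\cdot n^{-\delta})$ (roughly, much smaller than $1/n$) and $\epsilon_2 \le \exp(-(\log n)^{\alpha/2})$.

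\textbf{Step 1: a tree test with the right error profile.} In the diverging-degree regime $\lambda=(\log n)^{\alpha+o(1)}$, $d=(\log n)^\gamma$, we first construct an auxiliary, possibly computationally infeasible, correlation test. It would count pairs of matched subtrees, in the spirit of chandelier/Otter counting, using the condition $s^2>C_{\mathrm{Otter}}$ to get a signal-to-noise ratio that grows geometrically in depth. Since $\gamma>1-\alpha$, we have $d\log\lambda\gg \log n$, so the growth compounds to give $\bq$-error $n^{-\omega(1)}$ while keeping $\bp$-error $\exp(-\Omega((\log n)^{\alpha/2}))$. By the Neyman--Pearson lemma, the likelihood ratio test $L_d\ge\theta^*$, with tie-breaking $f$ to randomize exactly at the boundary, is at least as powerful at the same $\bq$-level. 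This yields the existence of $\theta^*$ with the same two error bounds.

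\textbf{Step 2: coupling local trees with GW pairs.} For a true pair $(u,\td u)$ and a common neighbor $w$ (an edge of $G^*$ kept in both graphs), we couple $(\ct_{w\setminus u},\td\ct_{\td w\setminus \td u})$ with $\bp$ up to depth $d$. Failure of the coupling comes from cycles, degrees exceeding $\tau$, and binomial-vs-Poisson discrepancies. Because $\lambda^{d}=n^{o(1)}$ (we need $\gamma+\alpha<\dots$; in fact $\lambda^d = \exp((\log n)^\gamma\log\lambda)$), this step needs care. The truncation $\tau$ with $t>\max\{\alpha,\gamma\}$ is what keeps neighborhoods small: nodes of degree above $\tau$ are Poisson-tail rare, and the empty-tree convention gives $L=0$. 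We bound coupling failure per vertex by $\exp(-(\log n)^{\alpha/2})$. For pairs $(w,\td x)$ with $\td x\ne\td w$, we instead need near-independence: the two depth-$d$ explorations overlap with small probability, and on non-overlap the pair is dominated by $\bq$.

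\textbf{Step 3: counting.} \emph{True pairs.} $u$ has $\mathrm{Binom}(n,\lambda s/n)$ common neighbors, at least $3$ except with probability $e^{-\Omega(\lambda)}\le e^{-(\log n)^{\alpha/2}}$. Each common neighbor passes with probability $1-\epsilon_2$, and greedy matching then finds a size-$3$ matching, since it scans all edges above threshold.

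\emph{Wrong pairs.} ${\bf M}_\mathrm{TB}(u,\td v)=1$ requires three vertex-disjoint accepted edges in $\cw(u,\td v)$. At most one of these can involve a true pair $(w,\td w)$ across the witness graph? Not quite, so instead we require that at least two of the three edges are false pairs. Each false pair is accepted with probability about $\epsilon_1$, and the events are approximately independent. The expected number of bad $(u,\td v)$ is then about $n^2\lambda^{6}\epsilon_1^2$, which needs $\epsilon_1=o(n^{-1}\lambda^{-3})$ — this is why the test needs $\bq$-error polynomially small. Markov's inequality bounds the number of bad rows, and a Markov bound on the true-pair failures gives the $3n\exp(-\tfrac12(\log n)^{\alpha/2})$ count with the stated probability.

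\textbf{Main obstacle.} The hard part is Step 1: proving super-polynomially small type I error together with subpolynomial type II error in the regime where both $\lambda$ and $d$ diverge. I would attempt it through a second-moment analysis of a weighted automorphism-free subtree count, controlling the variance under $\bp$ via Otter's asymptotics.
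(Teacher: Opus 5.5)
There is a genuine gap in each of the two steps that carry the proof. In Step~1 you need a \emph{deterministic} likelihood-ratio threshold whose type~I error is far below $n^{-3}$, since it gets union-bounded over triples $(w,\td v,\td x)$. The mechanism you give cannot produce this. Under the hypotheses, $d\log\lambda=(\alpha+o(1))(\log n)^{\gamma}\log\log n=o(\log n)$; this is exactly $\lambda^d=n^{o(1)}$, which you invoke in Step~2. So ``$d\log\lambda\gg\log n$'' is false, and a signal improving by any factor $\lambda^{O(1)}$ per level compounds over $d$ levels only to $n^{-o(1)}$. Indeed the paper's depth-wise boosting reaches only type~I error $C_3e^{-(d-l)}=n^{-o(1)}$, with power $1-O(\lambda^{-1})$. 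A second-moment bound gives type~I error of the order of the inverse SNR, so you would need a statistic with SNR $n^{\omega(1)}$, and the sketch does not supply one.

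The missing idea is amplification across breadth. With $l=\beta\log\log n$, accept iff a certain bipartite graph has a matching of size at least $(\lambda s)^l/(2C_2)$. Its two sides are the roughly $(\lambda s)^l$ generation-$l$ vertices of the two trees, with an edge whenever the depth-$(d-l)$ subtree pair lies in $\cf_{d-l}$. Under $\bq$ these subtrees are independent, so Chernoff plus a union bound over injections gives $\exp(-\Omega((\lambda s)^l d))\le n^{-\log\log n}$. This is where $\gamma>1-\alpha$ enters: one needs both $(C_2/s)^l\ll d$ and $\lambda C_2^l\gg\log n\log\log n$, which is possible because $\lambda d=(\log n)^{\alpha+\gamma+o(1)}\gg\log n$.

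Conversely, your per-pair type~II target $\exp(-(\log n)^{\alpha/2})$ is not what Neyman--Pearson delivers here, and it is not needed. \texttt{TBAlign} cannot randomize at the Neyman--Pearson boundary point. Dropping that point costs $\bp$-mass that the paper can only bound by $\sup\bp(\text{point})=O(\lambda^{-1/2})$, so it has power only $1-O((\log n)^{-\alpha/4})$. The per-vertex bound comes instead from Poisson thinning. After coupling, the number of common children of $u$ whose subtree pair passes is $\poi(\lambda s\,\bp(\Psi=1))$. Five such edges force \texttt{GreedyMatching} to output $1$, because two greedy edges can block at most four, and this fails with probability $\exp(-(\log n)^{\alpha+o(1)})$. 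Your ``at least $3$'' does not suffice for greedy unless you also exclude accepted false edges.

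In Step~3, the dichotomy ``true pair $(w,\td w)$, or false pair accepted with probability $\approx\epsilon_1$ nearly independently'' misses overlapping pairs that are not true pairs. If $\td v\in\td\cn(\td u)$, the edge $(w,\td u)$ of $\cw(u,\td v)$ pairs $\ct_{w\setminus u}$ with $\td\ct_{\td u\setminus\td v}$, which contains $\td w$ and its subtree. This is a shifted, strongly dependent pair whose acceptance probability has nothing to do with $\bq$; the same holds for $(v,\td x)$ when $v\in\cn(u)$. A size-$3$ matching can use two such edges, leaving a single $\bq$-like edge, so your $n^2\lambda^6\epsilon_1^2$ estimate does not cover these $\td v$.

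The paper avoids counting wrong pairs altogether. It shows that, with high probability, no $\bar G$-vertex-disjoint pair $(\ct^\infty_{w\setminus u},\td\ct^\infty_{\td x\setminus\td v})$ passes $\Psi$. This is event $\ca_{1,u}$, obtained by a union bound over $n^3$ triples, comparing the law of disjoint local trees to $\bq$ via Lemma~\ref{lem:tree_pmf}. It also shows that the $2(d+1)$-ball of $\bar u$ in $\bar G$ is a tree (event $\ca_{3,u}$). Lemma~\ref{lem:sufficient} then argues deterministically. Every accepted edge of $\cw(u,\td v)$ with $\td v\neq\td u$ is an overlapping pair, and it yields a $\bar u$--$\bar v$ path of length at most $2(d+1)$ whose first step is $\bar u_i$ or whose last step is $\bar v_i$. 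Three such paths include two distinct ones, and hence a cycle. So only a single disjoint false acceptance ever has to be ruled out. That is why the requirement is type~I error $n^{-\omega(1)}$ against an $n^3$ union bound, rather than an $\epsilon_1^2$ estimate.
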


\begin{remark}\label{rem:TB_almost_exact}
We note that Proposition~\ref{th:TBAlign} implies that \texttt{TBAlign} achieves almost exact recovery with high probability. To see this, we construct a permutation $\hat{\pi}:[n]\to[n]$ from $\bf M_\mathrm{TB}$ by the following procedure. First, for each vertex $u\in\cv$, set the corresponding row ${\bf M}_\mathrm{TB}(u,:)$ to all zeros if $\|{\bf M}_\mathrm{TB}(u,:)\|_1\geq 2$. Second, for each vertex $\td v\in\td\cv$, set the corresponding column ${\bf M}_\mathrm{TB}(:,\td v)$ to all zeros if $\|{\bf M}_\mathrm{TB}(:,\td v)\|_1\geq 2$. After these two steps, the resulting matrix has at most one entry equal to $1$ in each row and each column. Finally,  for each $u\in\cv$ with $\|{\bf M}_\mathrm{TB}(u,:)\|_1=1$, we set $\hat{\pi}(u)$ to be the unique $\td v\in \td\cv$ such that ${\bf M}_\mathrm{TB}(u,\td v)=1$. 
    This procedure gives a partial permutation on $[n]$. We then arbitrarily match the rest of the vertices to obtain a complete permutation $\hat \pi$.

    By~\eqref{eq:TBAlign}, at least $n-3n\exp\left(-\frac12(\log n)^{\alpha/2}\right)$ vertices $u$ has a unique correct entry ${\bf M}_\mathrm{TB}(u,\td u)=1$, each of them will be correctly matched by $\hat \pi$ unless the second step erases column $\td u$. Since there are at most $3n\exp\left(-\frac12(\log n)^{\alpha/2}\right)$ conflicting vertices that could erase correct entry, $\hat{\pi}$ will agree with $\pi$ on at least $n-6n\exp\left(-\frac12(\log n)^{\alpha/2}\right)$ vertices. Therefore, almost exact recovery is achieved.
\end{remark}

\subsection{Rank-based Alignment Algorithm: \texttt{RBAlign}}
\label{sec:rbalign}
\texttt{TBAlign} is not directly implementable in practice because $\theta^*$ is hard to compute. Inspired by the threshold-based algorithm, we introduce a rank-based alignment algorithm, namely \texttt{RBAlign}. The idea is to consider the pairs of local trees in a decreasing order  according to their likelihood ratios until most of the vertices are matched. It is equivalent to gradually reducing threshold $\theta$ in the \texttt{TBAlign} until it matches a target number of matched pairs. Given the existence of threshold $\theta^*$ for correctly aligning at least $n-3n\exp\left(-\frac12(\log n)^{\alpha/2}\right)$ vertex pairs, \texttt{RBAlign} can correctly align at least $n-n/\log n$ vertex pairs with a high probability when $\theta$ approaches $\theta^*.$ The algorithm however does not need to know $\theta^*$ and only sets the target number of matched pairs to be smaller than that under  \texttt{TBAlign}.  
More specifically, \texttt{RBAlign} ranks all local tree pairs based on the likelihood ratios, breaking ties according to the rule $f$ as in \texttt{TBAlign}, such that the pair with the  largest likelihood ratio has rank 1.  Given rank $k\in [4|\ce|\,|\td\ce|],$ we define $r^{-1}(k)=(\ct_{w\setminus u},\td \ct_{\td x\setminus \td v}),$ which is the pair of local trees ranked as $k,$  and $\theta_k={\bf L}(r^{-1}(k)),$ which is the corresponding likelihood ratio.

 \texttt{RBAlign} includes the following key steps:  
\begin{itemize}
    \item {\bf Step 1 and Step 2:} The same as in \texttt{TBAlign}, except that ${\bf M}_\mathrm{TB}$ being replaced by ${\bf M}_\mathrm{RB}$. 
    \item {\bf Step 3:}    Loop over rank $k$ in the increasing order, assuming $r^{-1}(k)=(\ct_{w\setminus u},\td\ct_{\td x\setminus \td v}),$ to do the following: 
    \begin{itemize}
        \item Skip if ${\bf M}_\mathrm{RB}(u, \td v)=1;$ and otherwise
        \item Set 
 \[
{\bf M}_\mathrm{RB}(u, \td v)=\texttt{GreedyMatching}(\cw(u,\td v), \theta_k, 3).
\]  
Here, \texttt{GreedyMatching} only considers edges in $\cw(u,\td v)$ with rank $\leq k.$
        \item Terminate the algorithm when the number of rows with at least one ``1'' in ${\bf M}_\mathrm{RB}$ reaches $n-\lceil n/\log n\rceil.$
    \end{itemize}
    
\end{itemize}

\begin{remark} 
    \texttt{RBAlign}  and \texttt{TBAlign} use the same greedy matching but different thresholds. Therefore, when \texttt{RBAlign} terminates at rank $k_e,$ and $\theta_{k_e}={\bf L}(r^{-1}(k_e))\geq \theta^*,$ i.e., the likelihood ratio is at least $\theta^*,$ $u$ and $\td v$ are matched only if they are also matched under \texttt{TBAlign}. Therefore, \texttt{RBAlign} preserves the matched pairs under \texttt{TBAlign} but stops early after finding enough number of matched pairs.  
\end{remark}

\begin{algorithm}[htbp]
\caption{\texttt{RBAlign}: Rank-based alignment algorithm}
\begin{algorithmic}[1]\label{algo:RBAlign}
\REQUIRE Two graphs $G$ and $\td G$, average degree $\lambda$, correlation $s$, parameters $d$ and $\tau$.
\ENSURE Matrix ${\bf M}_\mathrm{RB}\in \{0,1\}^{n\times n}$.
\STATE Initialize all-zero matrices ${\bf M}_\mathrm{RB} \gets{\bf 0}_{n\times n}$ and ${\bf L}\gets{\bf 0}_{2|\ce|\times2|\td \ce|}$.
\STATE Initialize $\Delta\gets 0$ that counts the number of rows in ${\bf M}_{\mathrm{RB}}$ that has at least one ``$1$''.
\FOR{each $(\ct_{w\setminus u},\td \ct_{\td x\setminus \td v})\in
\cc\times \td \cc$}
\STATE Compute the likelihood ratio ${\bf L}(\ct_{w\setminus u},\td\ct_{\td x\setminus \td v})$.
\ENDFOR
\FOR{each $(u,\td v)\in \cv\times \td {\cv}$}
\STATE Construct the witness graph $\mathcal{W}(u,\td v)$.
    \ENDFOR
\STATE Rank all the entries in $L_d$ in a non-increasing order, breaking ties according to tie-breaking rule $f$, to obtain the bijection $r:\cc\times \td \cc\rightarrow[4|\ce|\,|\td \ce|]$.
\FOR{$k=1:4|\ce|\,|\td\ce|$}
\STATE Find vertices $u,w,\td v,\td x$ such that $r^{-1}(k)=(\ct_{w\setminus u},\td\ct_{\td x\setminus \td v})$.
\STATE $\theta_k\gets {\bf L}(\ct_{w\setminus u},\td\ct_{\td x\setminus \td v})$.
\IF{${\bf M}_\mathrm{RB}(u,\td v)=0$}
\STATE ${\bf M}_\mathrm{RB}(u,\td v)\gets \texttt{GreedyMatching}(\mathcal{W}(u,\td v),\theta_k,3)$.
\IF{${\bf M}_\mathrm{RB}(u,\td v)=1$ and $\|{\bf M}_\mathrm{RB}(u,:)\|_1=1$}
\STATE $\Delta\gets \Delta+1$.
\ENDIF
\ENDIF
\IF{$\Delta = n-\lceil n/\log n\rceil$}
\RETURN ${\bf M}_\mathrm{RB}$
\ENDIF
\ENDFOR
\RETURN ${\bf M}_\mathrm{RB}$
\end{algorithmic}
\end{algorithm}

The following two theorems are the main results of this paper, which provide the performance guarantee and the run-time guarantee of \texttt{RBAlign}.
\begin{restatable}{theo}{RBAlign}
	\label{th:RBAlign}
Under the same assumptions as in Proposition \ref{th:TBAlign}, the output matrix $M_\mathrm{RB}$ of \texttt{RBAlign} satisfies
\begin{align}
    \sum_{u\in V} \mathbbm{1}\{{\bf M}_\mathrm{RB}(u,\td u) =1 \text{ and } \|{\bf M}_\rmrb(u,:)\|_1=1 \}= n-O(n/\log n)\label{eq:RBAlign}
\end{align}
with probability $1-O(\exp(-\frac12(\log n)^{\alpha/2}))$. In particular, \eqref{eq:RBAlign} implies that \texttt{RBAlign} achieves almost exact recovery by an analogous argument as in Remark~\ref{rem:TB_almost_exact}.
 \end{restatable}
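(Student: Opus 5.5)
We couple \texttt{RBAlign} with \texttt{TBAlign} on the same realization $(G,\td G)$ and work on the event $\ce_{\mathrm{TB}}$ of Proposition~\ref{th:TBAlign}. This event holds with probability $1-O(\exp(-\frac12(\log n)^{\alpha/2}))$, and on it the set $\cs$ of vertices $u$ with ${\bf M}_\mathrm{TB}(u,\td u)=1$ and $\|{\bf M}_\mathrm{TB}(u,:)\|_1=1$ satisfies $|\cs|\ge n-3n\epsilon_n$, where $\epsilon_n:=\exp(-\frac12(\log n)^{\alpha/2})=o(1/\log n)$.

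The first step is a monotonicity/coupling lemma. Let ${\bf M}^{(k)}_\mathrm{RB}$ denote the state of \texttt{RBAlign} after processing rank $k$. We claim ${\bf M}^{(k)}_\mathrm{RB}(u,\td v)=1$ if and only if the greedy matching on $\cw(u,\td v)$, restricted to edges of rank $\le k$, finds a matching of size $3$. Here "edges of rank $\le k$" means edges whose weight exceeds $\theta_k$, or equals $\theta_k$ and is preferred by $f$. The "only if" direction holds because entries are set only by such a call and are never reset. The "if" direction holds because the first rank $k'\le k$ at which the restricted witness graph admits a greedy $3$-matching is necessarily the rank of an edge in $\cw(u,\td v)$, so that entry is evaluated at step $k'$. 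Consequently ${\bf M}^{(k)}_\mathrm{RB}$ is entrywise nondecreasing in $k$.

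Let $k^*$ be the largest rank whose pair is ranked no lower than $\theta^*$ under $f$. Since \texttt{TBAlign} uses the same ordering, ${\bf M}^{(k^*)}_\mathrm{RB}={\bf M}_\mathrm{TB}$ exactly. One subtlety needs checking: \texttt{GreedyMatching} is greedy, not maximum, so the greedy $3$-matching property must be monotone in the edge set considered in rank order. This holds because adding lower-ranked edges only appends to the greedy sequence.

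The second step locates the termination rank $k_e$. At $k^*$, the number of rows with at least one $1$ is at least $|\cs|\ge n-3n\epsilon_n > n-\lceil n/\log n\rceil$ for large $n$. Since $\Delta$ increases by at most one per step, the algorithm terminates at some $k_e\le k^*$, provided we verify that $\Delta$ counts exactly the nonzero rows; this follows from the increment rule, since a row becomes nonzero exactly when its first $1$ appears. Hence ${\bf M}_\mathrm{RB}={\bf M}^{(k_e)}_\mathrm{RB}\le {\bf M}_\mathrm{TB}$ entrywise, and ${\bf M}_\mathrm{RB}$ has exactly $n-\lceil n/\log n\rceil$ nonzero rows.

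The final step is counting. For $u\in\cs$, row $u$ of ${\bf M}_\mathrm{TB}$ is exactly $e_{\td u}$. By domination, row $u$ of ${\bf M}_\mathrm{RB}$ is therefore either $0$ or $e_{\td u}$, and in the latter case $u$ contributes to \eqref{eq:RBAlign}. The number of $u\in\cs$ with a zero row in ${\bf M}_\mathrm{RB}$ is at most the number of zero rows, namely $\lceil n/\log n\rceil$. Hence the left side of \eqref{eq:RBAlign} is at least $n-3n\epsilon_n-\lceil n/\log n\rceil=n-O(n/\log n)$, on an event of the stated probability.

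The main obstacle is the first step: showing that \texttt{RBAlign}'s lazy evaluation, which updates $(u,\td v)$ only when an edge of $\cw(u,\td v)$ is processed, reproduces the threshold-$\theta_k$ greedy outcome exactly, including ties handled by $f$. I would prove it by induction on $k$, using the fact that the greedy matching run on the rank-$\le k$ edge prefix is a prefix-extension of the run on the rank-$\le k-1$ prefix.
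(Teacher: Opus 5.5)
Your proposal is correct and follows essentially the same route as the paper. The paper's Lemma~\ref{lem:TBRB} likewise couples \texttt{RBAlign} with \texttt{TBAlign}, uses monotonicity of \texttt{GreedyMatching} in the threshold to get ${\bf M}_\rmrb\le{\bf M}_\rmtb$ entrywise and exactly $n-\lceil n/\log n\rceil$ nonzero rows, and then counts. Your version differs only in presentation: you obtain termination by rank $k^*$ directly (since $\Delta$ increases by at most one per step), where the paper argues by contradiction. You also spell out the greedy-prefix and lazy-evaluation justifications that the paper asserts without detail.
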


\begin{theo}\label{prop:Runtime}
Under the same assumptions as in Proposition \ref{th:TBAlign}, 
\texttt{RBAlign} has a time complexity of $|\ce|\,|\td\ce| \,n^{o(1)}$ deterministically and has a time complexity of $n^{2+o(1)}$ with probability $1-o(e^{-n})$.
\end{theo}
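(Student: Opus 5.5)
The plan is to split the running time of \texttt{RBAlign} into three parts: computing ${\bf L}$, ranking, and the main loop. First I show each part costs at most $|\ce|\,|\td\ce|\,n^{o(1)}$ deterministically. Then I show $|\ce|,|\td\ce|\le n^{1+o(1)}$ with probability $1-o(e^{-n})$.

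\textbf{Step 1: computing the likelihood ratios.} Each local tree $\ct_{w\setminus u}$ is either empty or a tree of depth $d=(\log n)^{\gamma}$ with all degrees at most $\tau=(\log n)^t$. So it has at most $\tau^{d+1}=\exp((\log n)^{\gamma}\,t\log\log n)=n^{o(1)}$ vertices, because $\gamma<1$. Extracting it by BFS from $w$ in $G\setminus u$ takes $n^{o(1)}$ time: we abort as soon as a degree exceeds $\tau$ or a cycle is found. The main obstacle is evaluating $L_d(t,\td t)$ in $n^{o(1)}$ time. The denominator $\bq$ factorizes as a product of two Galton--Watson probabilities. Each of these is a product of Poisson terms, up to the automorphism-count factor coming from the unlabeled class, and it can be computed recursively over the tree.

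For the numerator $\bp$, I would use the standard recursion of \cite{Luca24LRT}. Condition on the root's children and sum over partial matchings between the children subtrees of $t$ and those of $\td t$. Matched children are recursively correlated pairs; unmatched children are independent GW subtrees. I would precompute $\bp$ on all pairs of subtrees (one from $t$, one from $\td t$) bottom-up by depth. A node pair with $c,\td c\le\tau$ children requires summing over at most $(\tau+1)!^2\,2^{2\tau}$ partial matchings. Since $\tau=(\log n)^t$ with $t<1$, this is $\exp(O(\tau\log\tau))=n^{o(1)}$. There are at most $n^{o(1)}\cdot n^{o(1)}$ subtree pairs, so each entry of ${\bf L}$ costs $n^{o(1)}$. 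The same bound covers the isomorphism-class bookkeeping, for example computing canonical codes and tie-breaking values $f$ (Appendix~\ref{appd:treecode}). Hence computing ${\bf L}$ costs $4|\ce|\,|\td\ce|\,n^{o(1)}$.

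\textbf{Step 2: witness graphs and ranking.} Building all witness graphs costs $\sum_{u,\td v}\deg(u)\deg(\td v)=4|\ce|\,|\td\ce|$. Sorting $N=4|\ce|\,|\td\ce|$ entries costs $O(N\log N)$, and $\log N=O(\log n)=n^{o(1)}$.

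\textbf{Step 3: the main loop.} The loop has at most $N$ iterations. In iteration $k$, \texttt{GreedyMatching} runs on $\cw(u,\td v)$, which has $\deg(u)\deg(\td v)$ edges, so a naive cost is not $n^{o(1)}$ per iteration. I would amortize instead. For each $(u,\td v)$, maintain incrementally the set of edges of $\cw(u,\td v)$ with rank at most $k$. Since edges arrive in weight order, rerunning the greedy procedure amounts to adding the new edge to the greedy matching if it is vertex-disjoint from the current one; this costs $O(1)$ given that the matching has size at most $3$. The equality ${\bf M}_\rmrb(u,\td v)=\texttt{GreedyMatching}(\cw(u,\td v),\theta_k,3)$ then follows from the order-consistency of greedy selection under $f$-tie-breaking. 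Checking $\|{\bf M}_\rmrb(u,:)\|_1$ is $O(1)$ if row counts are kept. So the loop costs $O(N)$, and the deterministic bound $|\ce|\,|\td\ce|\,n^{o(1)}$ follows.

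\textbf{Step 4: the probabilistic bound.} We have $|\ce|\sim\bin(\binom n2,\lambda/n)$ with mean at most $n\lambda/2$. A Chernoff bound gives $\P(|\ce|>n\lambda)\le \exp(-\Omega(n\lambda))=o(e^{-n})$, since $\lambda\to\infty$. The same holds for $\td\ce$. Then $|\ce|\,|\td\ce|\le n^2\lambda^2=n^{2+o(1)}$ with probability $1-o(e^{-n})$, which gives the claim.
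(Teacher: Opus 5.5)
Your proposal is correct and follows essentially the same route as the paper. Both bound each entry of ${\bf L}$ by $\tau^{O(d)}\cdot\exp(O(\tau\log\tau))=n^{o(1)}$ using the \cite{Luca24LRT}-style recursion over injections, then add $O(N\log N)$ sorting, an amortized $O(1)$-per-edge incremental greedy matching, and a Chernoff bound $|\ce|,|\td\ce|\le n\lambda$; the only minor difference is that the paper applies the recursion~\eqref{eq:rec} directly to the ratio $L$, which avoids the automorphism/multiplicity factors you would have to track when computing $\bp$ and $\bq$ separately on unlabeled classes.
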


In the following, we prove Theorem~\ref{th:RBAlign} based on Proposition~\ref{th:TBAlign} in Section~\ref{sec:proof_rb}, and prove Theorem~\ref{prop:Runtime} in Section~\ref{sec:proof_rb_runtime}.

\subsection{Proof of Theorem~\ref{th:RBAlign}}
\label{sec:proof_rb}
To prove Theorem~\ref{th:RBAlign}, we first couple the outputs of \texttt{RBAlign} and \texttt{TBAlign} in the following lemma.

\begin{lem}
\label{lem:TBRB}
    Given the $\theta^*$ defined in Proposition \ref{th:TBAlign} and any realization $(g,\td g)$ of $(G,\td G)\sim \mathrm{CER}(n,\lambda,s)$ such that result \eqref{eq:TBAlign} holds under \texttt{TBAlign}, we have 
\begin{itemize}
    \item[(i)] ${\bf M}_\mathrm{RB}(u, \td v)=1$ only if 
${\bf M}_\mathrm{TB}(u, \td v)=1,$ and
    \item[(ii)] $\Delta = n-\lceil n/\log n\rceil$ when \texttt{RBAlign} terminates. 
\end{itemize} 
\end{lem}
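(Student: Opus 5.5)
Both parts follow from a monotonicity property of \texttt{GreedyMatching} in the threshold, together with the ordering that the tie-breaking rule $f$ imposes on all local tree pairs. The key observation is the following. Combine the ranking $r$, which sorts by likelihood ratio and breaks ties by $f$, with the threshold rule of \texttt{TBAlign}, which accepts edges of weight $>\theta^*$ and accepts edges of weight $=\theta^*$ only when $f$ allows it. Then the set of edges eligible in \texttt{TBAlign} is exactly the set of tree pairs of rank at most some cutoff $k^*$. Hence the edges \texttt{TBAlign} may use form a prefix $\{1,\dots,k^*\}$ of the ranking. The greedy procedure examines edges in decreasing weight, so running it on $\cw(u,\td v)$ restricted to ranks $\le k$ is the same as running the greedy on the prefix. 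Whether a matching of size $3$ is found therefore depends only on the prefix, and it is monotone in $k$: if the greedy finds a size-3 matching using ranks $\le k$, it also finds one using ranks $\le k'$ for every $k'\ge k$. Indeed, the greedy on the longer prefix first processes exactly the same edges in the same order, so by the time it has passed rank $k$ it has already reached size $3$.

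For (i), I will show that \texttt{RBAlign} terminates at some step $k_e\le k^*$. Suppose instead that the loop reaches $k=k^*$. At each step $k$ with $r^{-1}(k)=(\ct_{w\setminus u},\td\ct_{\td x\setminus\td v})$, the entry ${\bf M}_\mathrm{RB}(u,\td v)$ is set to $1$ exactly when the rank-$\le k$ greedy on $\cw(u,\td v)$ succeeds. By monotonicity, a pair that is ever set to $1$ at a step $k\le k^*$ has ${\bf M}_\mathrm{TB}(u,\td v)=1$. Conversely, every pair with ${\bf M}_\mathrm{TB}(u,\td v)=1$ is set by the time the loop reaches the largest rank among the three edges of its greedy matching, which is $\le k^*$. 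Since the entries of ${\bf M}_\mathrm{RB}$ never revert, after step $k^*$ we have ${\bf M}_\mathrm{RB}={\bf M}_\mathrm{TB}$ on the entries processed so far.

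It remains to check $\Delta$ at step $k^*$. The counter $\Delta$ is incremented precisely when a row receives its first $1$, so $\Delta$ equals the number of nonzero rows. By \eqref{eq:TBAlign}, ${\bf M}_\mathrm{TB}$ has at least $n-3n\exp(-\frac12(\log n)^{\alpha/2})>n-\lceil n/\log n\rceil$ nonzero rows for large $n$. The counter increases by at most one per step, so it must hit $n-\lceil n/\log n\rceil$ at some step $k_e\le k^*$, and the algorithm returns there. Every $1$ in the output was therefore created at a step $\le k^*$, which gives (i).

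For (ii), the same argument shows that $\Delta$ increases in unit steps and reaches the target value before the loop ends, so the termination happens at the \texttt{RETURN} inside the loop, where $\Delta=n-\lceil n/\log n\rceil$. It does not happen at the final \texttt{RETURN}.

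The main obstacle is making precise the claim that the ``weight $=\theta^*$ with $f$-tie-breaking'' acceptance rule of \texttt{TBAlign} corresponds exactly to a rank prefix. This requires that \texttt{TBAlign}'s tie-break uses the same global ordering $f$, with a cutoff tree pair at level $\theta^*$. It also requires that empty-tree entries (weight $0$) are handled consistently, and that the witness-graph greedy on ranks $\le k$ is the same as the greedy on edges above a threshold. I would first fix this convention explicitly, defining $k^*$ as the largest rank whose pair \texttt{TBAlign} deems admissible. After that, I would verify the monotonicity of the greedy by the prefix-simulation argument above.
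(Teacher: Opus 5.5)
Your proposal is correct and follows the same route as the paper: monotonicity of \texttt{GreedyMatching} in the threshold, then a comparison of nonzero-row counts with ${\bf M}_\mathrm{TB}$ to show that \texttt{RBAlign} stops at or above the \texttt{TBAlign} cutoff, and the $\theta=0$ comparison for (ii). Your rank-prefix formulation, the prefix-simulation justification of monotonicity, and the explicit remark that $\Delta$ grows by at most one per step make precise several points the paper's contradiction argument leaves implicit, including the requirement that the ranking break ties by $f$ in the same direction as the test $\Psi$.
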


\begin{proof} We first note that $\texttt{GreedyMatching}(\cw(u,\td v), \theta, 3)\in\{0, 1\}$ is a non-increasing function in $\theta$ because the set of edges that can be included in a matching shrinks when the threshold $\theta$ increases.

    Let $k_e$ be the rank index when \texttt{RBAlign} terminates. 
    Let $\theta_e:={\bf L}(r^{-1}(k_e))$.
    Suppose that $\theta_e<\theta^*$ or $\theta_e=\theta^*$ but its corresponding tree pair will be rejected by the breaking rule $f$ under  \texttt{TBAlign}. Then for any $(u,\td v)$ such that    
\[
{\bf M}_\mathrm{TB}(u, \td v)=\texttt{GreedyMatching}(\cw(u,\td v), \theta^*, 3)=1,
\]  we also have
\[
{\bf M}_\mathrm{RB}(u, \td v)=\texttt{GreedyMatching}(\cw(u,\td v), \theta_e, 3)=1
\]  because  $\texttt{GreedyMatching}(\cw(u,\td v), \theta, 3)$ is a non-increasing function in $\theta.$
Therefore, when \texttt{RBAlign} terminates, the number of rows with at least one ``1''s in ${\bf M}_\mathrm{RB}$ is lower bounded by that of ${\bf M}_\mathrm{TB}.$ This implies that when  \eqref{eq:TBAlign} holds under \texttt{TBAlign},   $$\Delta\geq n-3n\exp\left(-\frac12(\log n)^{\alpha/2}\right)>  n-\lceil n/\log n\rceil,\text{ for all large enough }n,$$ which contradicts the termination condition of \texttt{RBAlign}. Therefore, we can conclude that statement (i) holds. 

We further note that looping over all local tree pairs under  \texttt{RBAlign} is equivalent to  \texttt{TBAlign} with threshold $\theta=0$.  With result  \eqref{eq:TBAlign}, \texttt{TBAlign} with $\theta=0$ will output an ${\bf M}_\mathrm{TB}$ with at least $n-3n\exp\left(-\frac12(\log n)^{\alpha/2}\right)$ rows with at least one ``1''s because $\theta=0\leq \theta^*.$ Therefore,  with result  \eqref{eq:TBAlign}, \texttt{RBAlign} will terminate before looping all local tree pairs, which means it terminates because $\Delta$ reaches the target value, i.e., result (ii) holds.  
\end{proof}

With this lemma, we now move on to prove Theorem~\ref{th:RBAlign}.
Consider a realization $(g,\td g)$ of $(G,\td G)\sim \mathrm{CER}(n,\lambda,s)$ such that \eqref{eq:TBAlign} holds under \texttt{TBAlign}. 
We first note that 
\begin{align}
    &\sum_{u\in \cv}\mathbbm{1}\{{\bf M}_\mathrm{RB}(u,\td u) =1 \text{ and } \|{\bf M}_\rmrb(u,:)\|_1=1\}\nonumber\\
    ={}& \sum_{u\in \cv} \mathbbm{1}\{\|{\bf M}_\rmrb(u,:)\|_1\ge 1\}-\sum_{u\in \cv} \mathbbm{1}\{\exists\td v\neq \td u:{\bf M}_\rmrb(u,\td v)=1\}.\label{eq:1}
\end{align}

According to conclusion (ii) in Lemma \ref{lem:TBRB}, ${\bf M}_\rmrb$ must have exactly $n-\lceil n/\log n\rceil$ rows that are not all zero, i.e., 
\begin{equation}
        \label{eq:num_row}
        \sum_{u\in \cv} \mathbbm{1}\{\|{\bf M}_\rmrb(u,:)\|_1\ge 1\}= n-\lceil n/\log n\rceil.
    \end{equation}
Notice that event $\{{\bf M}_\rmtb(u,\td u) =1 \text{ and } \|{{\bf M}_\rmtb(u,:)}\|_1=1\}$ implies event $\{{\bf M}_\rmtb(u,\td v)=0,\forall\td v\neq \td u\}$. By property~\eqref{eq:TBAlign}, we have
\begin{equation*}
\sum_{u\in \cv}\mathbbm{1}\{\exists\td v\neq\td u:{\bf M}_\rmtb(u,\td v)=1\}\le 3n\exp\left(-\frac12(\log n)^{\alpha/2}\right),
\end{equation*}
and it follows by conclusion (i) in Lemma~\ref{lem:TBRB} that
\begin{equation}
\label{eq:num_bad_row}
\sum_{u\in \cv}\mathbbm{1}\{\exists\td v\neq\td u:{\bf M}_\rmrb(u,\td v)=1\}\le 3n\exp\left(-\frac12(\log n)^{\alpha/2}\right).
\end{equation}
Finally, substituting equations~\eqref{eq:num_row} and~\eqref{eq:num_bad_row} into equation~\eqref{eq:1} yields
\begin{align*}
\sum_{u\in \cv}\mathbbm{1}\{{\bf M}_\mathrm{RB}(u,\td u) =1 \text{ and } \|{\bf M}_\rmrb(u,:)\|_1=1\}
    &\ge n-\left\lceil n/\log n\right\rceil - 3n\exp\left(-\frac12(\log n)^{\alpha/2}\right)\\
    &=n-O\left(n/\log n\right),
\end{align*}
and we conclude the theorem by the fact that~\eqref{eq:TBAlign} holds with probability $1-O(\exp(-\frac12(\log n)^{\alpha/2}))$ according to Proposition \ref{th:TBAlign}.

\subsection{Proof of Theorem~\ref{prop:Runtime}}
\label{sec:proof_rb_runtime}
To prove this theorem, we first show that the total time required to compute all entries of the likelihood-ratio matrix $\mathbf L$ is 
    \[
    |\mathcal E|\,|\td{\mathcal E}|\,n^{\frac{(2t+o(1))\log\log n}{(\log n)^{1-t}}}.
    \]
    We then show that the time complexity of remaining steps of Algorithm~\ref{algo:RBAlign} is of lower order. 

    \underline{\textbf{Complexity of computing $\mathbf L$:}}
    We first focus on bounding the time complexity of computing a single likelihood
    ratio ${\bf L}(\ct_{w\setminus u},\td\ct_{\td x\setminus \td v})$.  
    Note that the vertex degrees of $\ct_{w\setminus u}$ and $\td\ct_{\td x\setminus\td v}$  are upper bounded by $\tau$ due to our construction. 
    
To compute
    ${\bf L}(\ct_{w\setminus u},\td\ct_{\td x\setminus \td v})$, we use the recursive
    formula \eqref{eq:rec} from Lemma~2.1 of~\cite{Luca24LRT}, with the following notations: 
    \begin{itemize}
        \item  Let $\ct$ and $\td\ct$ be two rooted
    trees with depth $\leq d$ and degree $\leq \tau$.
    \item For $l\in\{0,\ldots,d\}$, let
    $\mathcal V_l$ and $\td{\mathcal V}_l$ denote the sets of vertices at depth $l$
    in $\ct$ and $\td\ct$, respectively. 

    \item Given $y\in\mathcal V_l$ and
    $\td z\in\td{\mathcal V}_l$, let $\ct_y$ and $\td\ct_{\td z}$ denote the
    subtrees rooted at $y$ and $\tilde z,$ respectively.
    
    \item Given $y\in\mathcal V_l$ and
    $\td z\in\td{\mathcal V}_l$, let $\cc_y$ and $\td\cc_{\td z}$ denote the
    sets of children of $y$ and $\tilde z,$ respectively.
    \item        For finite sets $\ca$ and $\cb$, let $\mathrm{Inj}({\ca},{\cb})$ denote the set of injective mappings from $\ca$ to $\cb$.
    \end{itemize}
 Then the likelihood ratio  satisfies: 
    \begin{equation}
    \label{eq:rec}
    L(\ct_y,\td\ct_{\td z})
    =
    \sum_{k=0}^{\min(|\mathcal C_y|,|\td{\mathcal C}_{\td z}|)}
    \psi(k,|\mathcal C_y|,|\td{\mathcal C}_{\td z}|)
    \sum_{\substack{\sigma\in \mathrm{Inj}([k], \mathcal C_y)\\
    \td\sigma\in \mathrm{Inj}([k],\td{\mathcal C}_{\td z})}}
    \prod_{j=1}^k
    L (\ct_{\sigma(j)},\td\ct_{\td\sigma(j)}),
    \end{equation}
    where
     $\psi(k,|\mathcal C_y|,|\td{\mathcal C}_{\td z}|)
    :=
    \frac{e^{\lambda s}s^k(1-s)^{|\mathcal C_y|+|\td{\mathcal C}_{\td z}|-2k}}
    {\lambda^k k!},$
    and the inner sum is defined to be one for $k=0$. This recursion starts from the leaves and is initialized by
    $L(\ct_y,\td\ct_{\td z})=1$ for any $y\in {\cv}_d$ and $\td z\in \td{\cv}_d.$

    Now given $ L(\ct_a,\td\ct_{\td b})$ for all $a\in{\cv}_{l+1}$ and $\tilde{b}\in\td{\cv}_{l+1},$ we bound the time complexity of a single recursion using~\eqref{eq:rec} to compute $ L(\ct_y,\td\ct_{\td z})$ for $y\in{\cv}_l$ and $\td z\in \td{\cv}_l$, which is at most
    \begin{align}
    \label{eq:tc-ub}
    \sum_{k=0}^{\min(|\mathcal C_y|,|\td{\mathcal C}_{\td z}|)}\underbrace{O(\lambda s+k+|\cc_y|+|\td\cc_{\td z}|)}_{\#\text{product operations for computing } \psi}+\underbrace{|\mathrm{Inj}([k],\cc_y)|\,|\mathrm{Inj}([k],\td\cc_{\td z})|}_{\#(\sigma,\td\sigma)\text{ pairs}} \times\underbrace{k}_{\#\text{terms in the product}}.
    \end{align}
    Using
    \[
    |\mathrm{Inj}([k],{\mathcal C}_y)|\le |{\mathcal C}_y|!\le \tau!,
    \qquad
    |\mathrm{Inj}([k],\td{\mathcal C}_{\td z})|\le |\td\cc_{\td z}|!\le \tau!,
    \qquad
    k\le \tau\quad \text{and} \quad \lambda s<\tau
    \]
    we obtain that the computation complexity of a single iteration of ~\eqref{eq:rec} can be upper bounded by  $$\eqref{eq:tc-ub}=O\bigl(\tau^2(\tau!)^2\bigr).$$
    Therefore, the total time complexity of computing the likelihood of all pairs of vertices at depth $l$ is $O\bigl(\tau^{2l+2}(\tau!)^2\bigr)$ because    
    $|\mathcal V_l|\,|\td{\mathcal V}_l|\le \tau^{2l}.$
    Summing over $l=0,\ldots,d-1$ gives $O\bigl(d\tau^{2d+2}(\tau!)^2\bigr)$
    as an upper bound on the time required to compute
    ${\bf L}(\ct_{w\setminus u},\td\ct_{\td x\setminus\td v})$.
    The matrix $\bf L$ has in total $4|\ce|\,|\td\ce|$ entries, and hence, the complexity of computing $\bf L$ is
    $O(|\ce|\,|\td\ce|\,d \tau^{2d+2}(\tau!)^2).$ Since $\tau=(\log n)^t$ and $d=(\log n)^{\gamma}$, we have
    \begin{align*}
    d\,\tau^{2d+2}(\tau!)^2
    &\le
    \exp\Big(
    (2d+2)\log\tau
    +2\tau\log\tau
    +\log d
    \Big)\\
    &=\exp\Big((2t(\log n)^{\gamma}+2t)\log\log n+2t(\log n)^t\log\log n+\gamma\log\log n\Big)\\
    &=
    \exp\Big(
    (2t+o(1))(\log n)^{\gamma}\log\log n
    +
    2t(\log n)^t\log\log n
    \Big).
    \end{align*}
    By the assumption that $t\in (\max\{\alpha,\gamma\},1)$,  the second term in the exponent dominates, and hence
    \[
    d\,\tau^{2d+2}(\tau!)^2
    =
    \exp\Big((2t+o(1))(\log n)^t\log\log n\Big)
    =
    n^{\frac{(2t+o(1))\log\log n}{(\log n)^{1-t}}},
    \]
    so the time complexity of computing $\bf L$ is $|\mathcal E|\,|\td{\mathcal E}|\,n^{\frac{(2t+o(1))\log\log n}{(\log n)^{1-t}}}.$

    \underline{\textbf{Complexity of constructing the local trees:}}
    We next bound the time complexity of constructing all local trees in $\cc$ and $\td\cc$. We first consider the construction of a single local tree $\ct_{w\setminus u}$. The construction proceeds in two steps. First, we perform a breadth-first search starting from $w$ in the graph $G\setminus u$, up to depth $d$. Recall that the local tree is declared to be empty if any vertex explored during this procedure has more than $\tau$ children. Hence, whenever the resulting local tree is nonempty, the number of vertices explored by the breadth-first search is at most $\sum_{l=0}^d \tau^l = O(\tau^d).$
    Thus, the time complexity of this search is $O(\tau^d)$. 
    
    After the breadth-first search, we verify that the explored neighborhood is indeed a tree. Let $\cv_l$ denote the set of vertices discovered at depth $l$. For each depth $l\in[d]$, we check whether there are any edges among vertices in $\cv_l$. This requires at most $O(|\cv_l|^2)$ time complexity. We also check whether every vertex in $\cv_l$ has exactly one
    parent in $\cv_{l-1}$, which requires at most $O(|\cv_{l-1}|\,|\cv_l|)$ time complexity. Since $|\cv_l|\le \tau^l$, the total cost of these checks is
    bounded by
    \[
    \sum_{l=1}^dO\bigl(|\cv_l|^2+|\cv_{l-1}|\,|\cv_l|\bigr) \le \sum_{l=1}^d O\bigl(\tau^{2l}\bigr) =
    O(\tau^{2d}).
\]
    Therefore, the time complexity of constructing one local tree $\ct_{w\setminus u}$ is $O(\tau^{2d})$.

    Since $|\cc|=2|\ce|$, constructing all local trees in $\cc$ requires time $O(2|\ce|\tau^{2d})$. Similarly, constructing all local trees in $\td\cc$ requires time $O(2|\td\ce|\tau^{2d}).$ Therefore, the cost of constructing all these local trees is given by $$O(2|\ce|\tau^{2d}+2|\td\ce|\tau^{2d})=o(|\ce|\,|\td\ce|\,d\,\tau^{2d+2}(\tau!)^2)=o\left(|\mathcal E|\,|\td{\mathcal E}|\,n^{\frac{(2t+o(1))\log\log n}{(\log n)^{1-t}}}\right).$$

    \underline{\textbf{Complexity of ranking all the entries in $\bf L$:}} Notice the for a list of $K$ real numbers, their total ranking can be found in $O(K\log K)$ time. Therefore, the time complexity of ranking all the entries of $\bf L$ is given by 
    \[
    O(|\ce|\,|\td\ce|\,\log(|\ce|\,|\td\ce|))=O(|\ce|\,|\td\ce|\,\log n)=o\left(|\mathcal E|\,|\td{\mathcal E}|\,n^{\frac{(2t+o(1))\log\log n}{(\log n)^{1-t}}}\right),
    \]
    where the last equality follows because $\log n=n^{\frac{\log\log n}{\log n}}$ and $1-t\in(0,1)$.

    \underline{\textbf{Complexity of looping over ranks $k=1,\ldots,4|\ce|\,|\td\ce|$}:} Fix a rank $k$. Let $r^{-1}(k)=(\ct_{w\setminus u},\ct_{\td x\setminus \td v})$. Recall that $\theta_k={\bf L}(\ct_{w\setminus u},\ct_{\td x\setminus \td v})$. If $\mathbf{M}(u,\td v)=1$, we perform no operations and continue to the next rank. If $\mathbf{M}(u,\td v)=0$, then for each edge $(y,\td z)$ with weight $\theta_k$ in the witness graph $\mathcal{W}(u,\td v)$, we check whether the edge $(y,\td z)$ shares an endpoint with any edge in the matching found previously. If no endpoint is shared, $(y,\td z)$ is added to the matching because it has the highest edge ``weight'' among the remaining edges. Since $\mathbf{M}(u,\td v)=0$ implies the size of previous matching is at most $2$, we only need to check whether $(y,\td z)$ has any common endpoint with at most $2$ other edges. This
    requires $O(1)$ time per edge.

    Notice that the total number of edges in all the witness graphs is given by $4|\ce|\,|\td\ce|$, and each of these edges is processed at once as $k$ ranges from
    $1$ to $4|\ce|\,|\td\ce|$. Therefore, the total complexity of this for loop is 
    \[
    O(|\ce|\,|\td\ce|)=o\left(|\mathcal E|\,|\td{\mathcal E}|\,n^{\frac{(2t+o(1))\log\log n}{(\log n)^{1-t}}}\right).
    \]

    \underline{\textbf{Total complexity of Algorithm~\ref{algo:RBAlign}:}} Summing over the complexity of these four steps yields that the time complexity of Algorithm~\ref{algo:RBAlign} is 
    \begin{align}
        |\mathcal E|\,|\td{\mathcal E}|\,n^{\frac{(2t+o(1))\log\log n}{(\log n)^{1-t}}}=|\mathcal E|\,|\td{\mathcal E}|\,n^{o(1)},\label{eq:det_complexity}
    \end{align}
    because we have $\frac{(2t+o(1))\log\log n}{(\log n)^{1-t}}=o(1)$ under the assumption that $t\in (\max\{\alpha,\gamma\},1)$.
    
    We can further conclude that the time complexity Algorithm~\ref{algo:RBAlign} is $n^{2+o(1)}$ with probability $1-o(e^{-n})$. To see this, by the Chernoff bound, we have $\P(|\ce|\ge n\lambda) = o(e^{-n})$ and  $\P(|\td\ce|\ge n\lambda) = o(e^{-n})$.
    Therefore, with probability at least $1-o(e^{-n})$, both $|\ce|$ and $|\td\ce|$ are bounded by $n\lambda$. On this event, \eqref{eq:det_complexity} implies that the running time of Algorithm~\ref{algo:RBAlign} is $n^2\lambda^2n^{o(1)}=n^{2+o(1)}$.

    \begin{remark}
   We note that the high-probability time complexity guarantee can be made deterministic by adding an initial termination condition: if $|\ce|\ge n\lambda$ or $|\td\ce|\ge n\lambda$, the algorithm terminates immediately. By the Chernoff bound, this modification increases the failure probability by $o(e^{-n})$.    
\end{remark}

\section{Proof of Proposition~\ref{th:TBAlign}}
\label{sec:prop1}

Proposition \ref{th:TBAlign} shows that \texttt{TBAlign} achieves almost exact recovery.  \texttt{TBAlign} is a variation of the alignment algorithm based on tree correlation tests in \cite{Luca24LRT}, which achieves partial recovery when $\lambda=
\Theta(1).$ The result in \cite{Luca24LRT} does not apply to diverging degrees $\lambda=(\log n)^{\alpha+o(1)}$ for two reasons: 
\begin{itemize}
    \item In~\cite{Luca24LRT}, tree correlation tests are analyzed in the constant-degree regime, where the mean offspring number, $\lambda,$ is a fixed constant, while the testing depth tends to infinity.
    This is different from the diverging-degree regime considered here, where both the mean offspring number $\lambda$ and the testing depth  grow jointly with $n$. Therefore, the asymptotic guarantees in~\cite{Luca24LRT} do not directly provide the required error bounds for the tests applied in \texttt{TBAlign}.
    \item The alignment algorithm in~\cite{Luca24LRT} uses local trees of depth $\Theta(\log n)$. This is specific to the constant-degree regime, where such neighborhoods can remain acyclic up to logarithmic depth. When $\lambda=(\log n)^{\alpha+o(1)}$, local neighborhoods are acyclic only up to depth of order $\log n/\log\log n$. In this work, we further choose the testing depth to be $d=(\log n)^\gamma$ for some $\gamma\in(1-\alpha,1)$, which is well below this scale. This choice keeps the tested trees of size $n^{o(1)}$, and is crucial for obtaining the $n^{o(1)}$ run-time bound for each likelihood ratio computation.
\end{itemize}

These two differences require a new analysis of tree correlation tests in the diverging-degree regime. We take an \emph{indirect route} in this analysis. Instead of analyzing the likelihood-ratio test directly, we first construct an auxiliary test that is weaker than the likelihood-ratio test but more statistically tractable. This auxiliary test is \emph{also computationally infeasible} and is used only as an analytical device. The weaker test consists of three steps: (i) baseline tree correlation test, (ii) iterative boosting, and (iii) maximum matching for the hypothesis testing. We note that the analysis of the baseline and iterative boosting follow the results established in \cite{Luca24Stat}. Based on this weaker correlation test, we prove the existence of the threshold of the likelihood ratio test based on the Neyman-Pearson lemma, summarized in the following theorem. Note that the result itself is a new result for tree correlation tests as existing results only apply to trees with constant degrees.

\begin{restatable}{theo}{RankLRT}
\label{prop:LRT}
Let $\alpha\in(0,1)$, $\gamma\in(1-\alpha,1)$, and $s\in(\sqrt{C_{\mathrm{Otter}}},1]$ be constants independent of $n$.
Assume that
\begin{align}
    \lambda=(\log n)^{\alpha+o(1)}
    \quad \text{and} \quad
    d=(\log n)^\gamma.
    \nonumber
\end{align}
Fix an arbitrary injective mapping $f:\cz_d\times\cz_d\rightarrow[0,1]$. 
Then there exist a likelihood-ratio threshold $\theta^*\in\mathbb{R}$, and a tie-breaking threshold $\kappa^*\in[0,1]$ such that the test $\Psi:\cz_d\times \cz_d\to\{0,1\}$ defined by
\begin{equation}
    \Psi(t,\td t)
    =
    \begin{cases}
    1, & \text{if } L_d( t,\td t)>\theta^*, \\
    1, & \text{if } L_d( t,\td t)=\theta^* \hbox{ and }f(t,\td t)>\kappa^*, \\
    0, & \text{otherwise},
    \end{cases}\label{eq:rank-LRT}
\end{equation}
satisfies
\begin{align}
    {\bq}\bigl(\Psi([\ch_d],[\td\ch_d])=1\bigr)
    &=
    O\left(n^{-\log\log n}\right),
    \label{eq:rank-LRT-typeI}
    \\
    {\bp}\bigl(\Psi([\ch_d],[\td\ch_d])=1\bigr)
    &=
    1-O\bigl((\log n)^{-\alpha/4}\bigr).\label{eq:rank-LRT-power}
\end{align}
\end{restatable}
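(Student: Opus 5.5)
Following the route announced before the theorem, I would first construct an auxiliary test $\Phi:\cz_d\times\cz_d\to\{0,1\}$, which need not be computable. It is built in three stages in the spirit of~\cite{Luca24Stat}.

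\emph{Baseline test.} On the bottom $d_0$ levels, with $d_0$ a slowly growing depth, use a tree-counting statistic. It counts the common unlabeled rooted subtrees of a fixed shape family, weighted so that the signal-to-noise ratio grows like $(s^2\cdot \text{(number of trees of size } m))^{m}$. Otter's asymptotic $\#\{\text{trees of size } m\}\approx C_{\mathrm{Otter}}^{-m}$ together with $s^2>C_{\mathrm{Otter}}$ makes this baseline test have type I and type II errors bounded by a small constant.

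\emph{Iterative boosting.} Going up one level, declare a pair $(y,\td z)$ correlated if, among the children pairs, there are at least $m$ disjoint pairs declared correlated at the previous level. Under $\bp$ a vertex has about $\lambda s$ children in the intersection tree. Because $\lambda=(\log n)^{\alpha+o(1)}\to\infty$, Chernoff bounds turn the error recursion into $\epsilon_{\ell+1}\le \exp(-c\lambda)$ for type II, and roughly $\epsilon_{\ell+1}^{\mathrm{I}}\le (\lambda^2\epsilon^{\mathrm{I}}_\ell)^m$ for type I. Iterating this through the top $d-d_0=(\log n)^\gamma(1-o(1))$ levels squares the exponent repeatedly. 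I only need type I error $n^{-\log\log n}=\exp(-\log n\log\log n)$, and since $\gamma+\alpha>1$, the product of depth and per-level gain (on the order of $(\log n)^{\gamma}\cdot\lambda$ in the exponent, or better through the doubling) exceeds $\log n\log\log n$. The condition $\gamma>1-\alpha$ is used exactly here.

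\emph{Final matching stage.} At the root, the test requires a matching of size $m$ between the children of the two roots, as in the boosting stage, using the bipartite maximum matching. The type II error is dominated by the root having too few intersection children or by boosting failures. A Poisson tail with mean $\lambda s$ and a union bound give $1-O((\log n)^{-\alpha/4})$; the exponent $\alpha/4$ gives slack for the $\sqrt{\lambda}$-type fluctuation bounds.

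Next comes the Neyman--Pearson step. The auxiliary test $\Phi$ is a deterministic function of $([\ch_d],[\td\ch_d])$, since every stage depends only on unlabeled structure. Let $\beta:=\bq(\Phi=1)=O(n^{-\log\log n})$. Order the finite set $\cz_d\times\cz_d$ by $(L_d,f)$ lexicographically; $f$ is injective, so this is a strict total order. Choose $(\theta^*,\kappa^*)$ as the smallest point in this order such that the upper set $\{(t,\td t)\succ(\theta^*,\kappa^*)\}$ has $\bq$-mass at most $\beta$. Then $\Psi$ has type I error $\le\beta$, which gives \eqref{eq:rank-LRT-typeI}. Because $\Psi$ accepts pairs in decreasing order of likelihood ratio, the standard exchange argument of the Neyman--Pearson lemma applies. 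The mass of $\Psi$'s acceptance set is at most $\beta$, but it can fall short of $\beta$ by at most the $\bq$-mass of a single atom, since there is no randomization. So I would instead define the acceptance set as the largest upper set with mass $\le\beta$, and compare it to $\Phi$'s set $A$ with $\bq(A)\le\beta$. Every point outside the upper set $U$ has likelihood ratio $\le\theta^*$, and every point in $U$ has ratio $\ge\theta^*$. Hence
\[
\bp(A\setminus U)\le\theta^*\bq(A\setminus U),\qquad \bp(U\setminus A)\ge\theta^*\bq(U\setminus A).
\]
Deterministic tests leave a gap of one atom $a$. To handle it, run the argument against $U\cup\{a\}$, which has ratio $\ge\theta^*$ on $a$, to get $\bp(U\cup\{a\})\ge\bp(A)$. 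Separately bound $\bp(a)$: either $\bp(a)=L(a)\bq(a)$ is small, or I take $\beta'=2\beta$ for $\Phi$'s budget so that the atom fits. The second option is cleaner: run the construction with $\beta$ replaced by the actual mass, or use $U\cup\{a\}$ whose mass is $\le\beta+\bq(a)\le 2\beta$ whenever $\bq(a)\le\beta$. If $\bq(a)>\beta$, then $a\notin A$ is forced, and the exchange works directly. Either way, $\bp(\Psi=1)\ge\bp(\Phi=1)=1-O((\log n)^{-\alpha/4})$ with type I error $O(n^{-\log\log n})$.

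The main obstacle is the boosting analysis in the diverging-degree regime. The recursion must keep type I error decaying doubly exponentially while the per-level type II error stays $\le e^{-c\lambda}$ uniformly over $(\log n)^\gamma$ levels, and the dependence among sibling subtrees under $\bp$ has to be controlled. I would rely on the conditional independence of subtrees given the intersection tree, and verify that the lemmas of~\cite{Luca24Stat} hold with explicit dependence on $\lambda$.
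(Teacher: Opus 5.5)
\textbf{Gap: your boosting stage cannot start in the diverging-degree regime.}

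Your own type-I recursion $\epsilon^{\mathrm I}_{\ell+1}\le(\lambda^2\epsilon^{\mathrm I}_\ell)^m$ only contracts once $\epsilon^{\mathrm I}_\ell<\lambda^{-2}$. Your baseline delivers only a small constant $\epsilon_0$, and this is not slack in the bound. Under $\bq$ the roughly $\lambda$ ``diagonal'' child pairs $(1,\td 1),(2,\td 2),\ldots$ involve disjoint independent subtrees, and each passes with probability $\epsilon_0$. So a matching of size $m$ (indeed of size about $\epsilon_0\lambda$) exists with probability tending to one, and the type-I error of your first boosted test tends to $1$.

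Nor can a constant-depth baseline be sharpened enough. For $s<1$, $\mathrm{KL}(\bp\|\bq)=\E_{\bp}[\log L_{d_0}]\le\log\E_{\bp}[L_{d_0}]=2\chi_{d_0}$. Hence any depth-$d_0$ test with power $\pi$ and type-I error $a$ obeys $\pi\log(1/a)\le 2\chi_{d_0}+\log 2$. Type-I error $\lambda^{-2}$ at constant power therefore forces $\chi_{d_0}\gtrsim\log\lambda$, i.e.\ a depth growing with $n$. At such depths the fixed-depth, $\mu\to\infty$ bound of \cite{Luca24Stat} is not available. A related symptom: if doubly exponential decay were available, depth $O(\log\log n)$ would suffice and $\gamma>1-\alpha$ would never be used.

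The paper instead boosts with the centered statistic $\sum_{(h,\td h)\in\cf_k}(N_h(t)-\lambda p_h)(N_{\td h}(\td t)-\lambda p_{\td h})>\sigma_k$, which has $\bq$-mean zero. Its fourth-moment bound $36\mu^4q_k^2+13\mu^3q_k$ gives $q_{k+1}\le q_k/e$ starting from a constant $q_{\bar d}\le C_1^{-1}$, regardless of $\lambda$. Chebyshev gives power $1-O(1/\lambda)$.

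Because this decay is only geometric, the type-I error at depth $d-l$ is merely $C_3e^{-(d-l)}$. The bound $n^{-\log\log n}$ then comes from a final maximum-matching stage between the generation-$l$ vertex sets, with $l=\beta\log\log n$ and threshold $(\lambda s)^l/(2C_2)$. There, the union bound over $((C_2\lambda)^l)!$ injections must be beaten by $\exp(-(\lambda s)^l d/(4C_2))$, and $\bq(|\cs_l|>(C_2\lambda)^l)$ must itself be at most $n^{-\log\log n}$. Choosing $\beta,C_2$ with $\beta\log(C_2/s)<\gamma$ and $\alpha+\beta\log C_2>1$ is exactly where $\gamma>1-\alpha$ is used.

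\textbf{Your Neyman--Pearson step is missing an ingredient.}

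In your case $\bq(a)>\beta$ you have $\bq(U)\le\beta=\bq(A)$. The exchange argument then gives only $\bp(U)\ge\bp(A)-\theta^*\bigl(\bq(A)-\bq(U)\bigr)\ge\bp(A)-\bp(a)$. So $\bp(U)<\bp(A)$ can genuinely occur, and the exchange does not ``work directly''.

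What closes this is a uniform bound on atoms, as in the paper: $\bp([\ch_d]=t,[\td\ch_d]=\td t)\le\max_c\mathsf{P}(\poi(\lambda)=c)=O(\lambda^{-1/2})$. This atom loss, not fluctuations inside $\Phi$, is why the power is stated as $1-O((\log n)^{-\alpha/4})$ rather than $1-O((\log n)^{-\alpha/2})$. Your other case, $\bq(a)\le\beta$ with acceptance set $U\cup\{a\}$ at type-I cost $2\beta$, is fine. Also, $\cz_d\times\cz_d$ is countably infinite, not finite, so the ``smallest point'' must be obtained as an infimum in the lexicographic order.
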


The proof of this theorem is deferred to Appendix~\ref{appd:pf_thm_3}. 
We emphasize that Theorem~\ref{prop:LRT} is an \emph{existence result}. It establishes the existence of thresholds $\theta^*$ and $\kappa^*$ with the desired type-I error and power guarantees, but it does not provide explicit analytical formulas for these thresholds. Computing them by brute force is also infeasible, as it would require enumerating the space $\cz_d\times\cz_d$ of rooted tree pairs. Consequently, we view \texttt{TBAlign} as a stepping stone toward the analysis of \texttt{RBAlign}, rather than as an implementable algorithm.

With this theorem, we move on to prove Proposition~\ref{th:TBAlign}.
This proof is separated into two steps. In the first step, we analyze \texttt{TBAlign} with $\tau=\infty$, i.e., no degree restrictions are imposed on the local trees, and show that such an algorithm achieves almost exact recovery. In the second step, we show through degree concentration bounds in \erdos--\renyi graphs that adding the degree constraint $\tau=(\log n)^t$ only affects the matching of a vanishing fraction of the vertices, and hence almost exact recovery is still achieved.

\subsection[Step 1. Analysis with tau = infty]{Step 1. Analysis with $\tau=\infty$}
Let ${\bf M}_\mathrm{TB}^\infty$ denote the matching matrix under \texttt{TBAlign} with $\tau=\infty$,
and $\cw^\infty(u,\td v)$ denote the witness graph for the vertex pair $(u,\td v)$ under $\tau=\infty$.
Let $\bar{G}$ denote the union graph of $G$ and $\td G$. We use the bar notation to distinguish the notations associated with $\bar G$. More specifically, $\bar\cv$ and $\bar\ce$ are the vertex set and the edge set of $\bar G$ respectively. Moreover,  $u,$ $\tilde u$, and $\bar u$ represent the {\em same} node but in different graphs $G,$ $\tilde G,$ and $\bar G,$ respectively. 

For a vertex $u\in{\cv}$, we define the following two events: 
    \begin{align}
\ca_{1,u}:=\Bigl\{&
\nexists({\ct}^{\infty}_{w\setminus u},
{\td\ct}^{\infty}_{\td x\setminus \td v})
\in{\cc}^{\infty}_d\times {\td\cc}_d^{\infty}\text{ such that }
\Psi([{\ct}^{\infty}_{w\setminus u}],
[{\td\ct}^{\infty}_{\td x\setminus \td v}])=1
\text{ and } \notag\\
&\quad \bar{\cv}({\ct}^{\infty}_{w\setminus u})
\cap \bar{\cv}({\td\ct}^{\infty}_{\td x\setminus \td v})=\emptyset \Bigr\},\label{eq:E2}
\end{align}
and 
\begin{align}
    \ca_{2,u}&:=\left\{\texttt{GreedyMatching}({\cw}^{\infty}(u,\td u),\theta^*,3)=1\right\},
    \label{eq:E3u}
\end{align}

The event $\ca_{1,u}$ controls false positive matches with vertex $u$, as it ensures that no pair of disjoint local trees can pass the tree-correlation test. In contrast, the event $\ca_{2,u}$ ensures that the algorithm matches $u$ with its counterpart $\td u$ in $\td G$. The following two lemmas show that these two events occur with high probability. Their proofs are based on the type-I error bound~\eqref{eq:rank-LRT-typeI} and the power bound~\eqref{eq:rank-LRT-power} in Theorem~\ref{prop:LRT}, respectively, and are deferred to the end of this section.

 \begin{restatable}{lem}{TBAlignIdealEtwo}
\label{lem:TBAlignIdeal-E2}
Under the assumptions in Proposition~\ref{th:TBAlign}, we have
\begin{align}
    \P(\ca^c_{1,u})
    =
    \td O\bigl(n^{-1/3}\bigr).
    \nonumber
\end{align}
\end{restatable}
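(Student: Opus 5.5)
We bound $\P(\ca^c_{1,u})$ by a union bound over all candidate pairs $(w,\td x,\td v)$ with $w\in\cn(u)$ and $(\td x,\td v)\in\td\ce$ oriented. For each such pair we control the probability that the two local trees ${\ct}^{\infty}_{w\setminus u}$ and ${\td\ct}^{\infty}_{\td x\setminus\td v}$ are nonempty, have disjoint vertex sets in the union graph $\bar G$, and satisfy $\Psi=1$.

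\textbf{Step 1 (reduction to $\bq$).} The key step is to show that, conditioned on the event that the two depth-$d$ explorations in $\bar G$ (from $\bar w$ avoiding $\bar u$, and from $\bar x$ avoiding $\bar v$) do not intersect, the pair of unlabeled trees is close to independent Poisson Galton--Watson trees with mean $\lambda$. We would run a joint breadth-first exploration that reveals only edges of $\bar G$ incident to the current frontier. Disjointness means the two explorations touch disjoint vertex sets, so the $G$-edges revealed for the first tree and the $\td G$-edges revealed for the second come from distinct vertex pairs of the base graph and are therefore independent. This is the main obstacle. The subtlety is that disjointness in $\bar G$ is itself a random event decided during the exploration, so we cannot simply condition on it. We would instead couple the exploration with a pair of independent trees and treat any intersection as a stopping event, which avoids conditioning. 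In the coupling, the offspring counts are $\bin(n-O(\text{explored}),\lambda/n)$ rather than $\poi(\lambda)$. We then bound the total-variation or likelihood distortion on the event that both trees have size at most $n^{o(1)}$. With $d=(\log n)^\gamma$, the typical size is $\lambda^d=\exp((\alpha+o(1))(\log n)^{\gamma}\log\log n)=n^{o(1)}$, and the per-vertex binomial--Poisson distortion is $O(\lambda^2/n)$. Alternatively, because Theorem~\ref{prop:LRT} gives type-I error $n^{-\log\log n}$, far smaller than needed, a crude change of measure would suffice. The Radon--Nikodym factor between the binomial exploration and $\bq$ on trees of size $N$ is at most $e^{O(N\lambda^2/n)}\cdot e^{O(N\log\lambda)}$-type factors, and this can absorb $n^{o(1)}$ losses. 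The outcome of this step is $\P(\Psi=1,\text{disjoint},\text{size}\le n^{o(1)})\le n^{o(1)}\,\bq(\Psi=1)+\text{tail}=O(n^{-\log\log n+o(1)})$.

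\textbf{Step 2 (size tail).} The event that either tree exceeds size $n^{\epsilon}$ is handled with standard Chernoff bounds on Poisson/binomial branching. Generation sizes exceed $(C\lambda)^l$ only with probability $e^{-\Omega(\lambda)}$ per generation, and we would choose a cap such as $(2\lambda)^{d}$ to make this tail small. One could instead note that the degree-capped version is not needed here since $\tau=\infty$. Either way, the cost must be summed into the union bound, so we need the tail to be $o(n^{-2}\cdot n^{-1/3})$. If plain Chernoff at mean $\lambda=(\log n)^{\alpha}$ gives only $e^{-\Omega(\lambda)}$, we instead use a cap of $n^{\epsilon}$ on the total size. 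Tail bounds for the total progeny of a supercritical process at depth $d$ via Markov on the moment generating function give $n^{-\omega(1)}$ for a cap slightly above $\lambda^d$.

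\textbf{Step 3 (union bound).} The number of candidate pairs is $|\cn(u)|\cdot 2|\td\ce|$. With probability $1-O(n^{-1})$ this is at most $\tau'\cdot 2n\lambda$ for some $\tau'=O(\log n)$, by Chernoff on the degree of $u$ and on $|\td\ce|$; the failure case contributes $\td O(n^{-1/3})$ trivially. Multiplying by the per-pair bound $n^{-\log\log n+o(1)}$ gives $n^{2}\cdot n^{-\log\log n+o(1)}=o(n^{-1/3})$. The stated $\td O(n^{-1/3})$ is then dominated by the exploration and coupling error terms; a birthday-type estimate would give a bound of order $(\text{tree size})^2/n$. We would track those terms and conclude $\P(\ca^c_{1,u})=\td O(n^{-1/3})$.
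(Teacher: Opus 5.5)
Your outline matches the paper's: compare a disjoint local-tree pair with $\bq$ on small trees, then let the type-I error $O(n^{-\log\log n})$ of Theorem~\ref{prop:LRT} absorb a polynomial union bound. However, the comparison step as you describe it does not give what the union bound needs.

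First, no \emph{additive} per-pair error is affordable. A total-variation coupling of the explorations with Galton--Watson trees, or the birthday-type error $\lambda(\text{tree size})^2/n=n^{-1+o(1)}$ you invoke at the end, gets multiplied by the number of pairs. That number is of order $n\lambda^2$ even for your random candidate set, and $n^3$ for fixed triples, so the total is not even $o(1)$. Such terms therefore cannot be the source of $\td O(n^{-1/3})$. What is needed is a pointwise bound, uniform over unlabeled trees with at most $N:=8e^2\lambda^d\log n$ vertices,
\[
\P\bigl([\ct]=t,\,[\td\ct]=\td t,\,\bar\cv(\ct)\cap\bar\cv(\td\ct)=\emptyset\bigr)\le C\,\bq\bigl([\ch_d]=t,[\td\ch_d]=\td t\bigr),
\]
which is Lemma~\ref{lem:tree_pmf}.

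Second, the factor you propose, $e^{O(N\lambda^2/n)}e^{O(N\log\lambda)}$, would not give this. Since $\lambda^d=(\log n)^{(\alpha+o(1))(\log n)^{\gamma}}\gg\log n\log\log n$, any factor $e^{\Omega(N)}$ is $e^{n^{o(1)}}$ rather than $n^{o(1)}$, and it destroys $n^{-\log\log n}$. The true factor has no per-vertex cost. In a breadth-first exploration each offspring count is $\bin(n-1-x,\lambda/n)$, and $\P(\bin(n-1-x,\lambda/n)=c)\le e^{\lambda(1+x+c)/n}\P(\poi(\lambda)=c)$ by Lemma~\ref{lem:binom-poi-gap}. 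The product is therefore $\exp(O(\lambda N^2/n))=O(1)$.

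The disjointness subtlety you raise is handled by never conditioning on it. Explore $\ct$ first and condition on $\bar\cv(\ct)=\bar\cs$, which only involves $G$-edges with an endpoint in $\bar\cs$. Then explore $\td\ct$, requiring each step to avoid $\bar\cs$. The $\td G$-edges counted in this second exploration lie on fresh vertex pairs, so the same binomial--Poisson ratio applies.

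Two smaller repairs are also needed.
\begin{itemize}
\item Take the union bound over all $n^3$ deterministic triples $(w,\td v,\td x)$, defining the local trees also for non-adjacent pairs so that $\ca^c_{1,u}$ is contained in the union. A high-probability bound on the random number of candidate pairs does not by itself let you multiply by a fixed-pair probability, because the adjacency events are not independent of the tree events. Using $\P(\text{adjacent and bad})\le\P(\text{bad})$ for each fixed triple avoids this.
\item Do the size control once, globally. The paper intersects with $\cd:=\{|\cn(i,d)|<N,\ |\td\cn(\td j,d)|<N\ \forall i,\td j\}$, whose complement has probability $\td O(n^{-1/3})$ by Lemma~\ref{lem:neighbor_size} ($\td O(n^{-4/3})$ per vertex, union over $n$ vertices). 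This, not any coupling error, is where the stated rate comes from; the union-bound term is only $O(n^{3-\log\log n})$.
\end{itemize}
A per-pair size tail also works, but only if it is superpolynomially small. You can get that by running the same layer-wise Bennett argument with $\log n$ replaced by $(\log n)^2$. The $e^{-\Omega(\lambda)}$ generation bound is, as you note, far too weak.
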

\begin{restatable}{lem}{TBAlignIdealEthree}
\label{lem:TBAlignIdeal-E3}
Under the assumptions in Proposition~\ref{th:TBAlign}, we have
\begin{align}
    \P(\ca^c_{2,u})
    =
    O\left(\exp\left(-(\log n)^{\alpha/2}\right)\right)
    \nonumber
\end{align}
\end{restatable}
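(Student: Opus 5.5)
The plan is to show that, with the required probability, the witness graph $\cw^{\infty}(u,\td u)$ contains at least six vertex-disjoint ``true'' edges that pass the test $\Psi$ of Theorem~\ref{prop:LRT}. Greedy selection in decreasing weight order (with ties resolved by $f$, exactly as in $\Psi$) produces a maximal matching among the edges accepted at level $(\theta^*,\kappa^*)$. A maximal matching has size at least half of a maximum matching, so six disjoint accepted edges force \texttt{GreedyMatching} to reach size $3$ and output $1$.

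\textbf{Step 1: candidate edges and the tree-like event.} Let $K$ be the number of common neighbours of $u$, i.e.\ vertices $w$ with $(u,w)\in\ce$ and $(\td u,\td w)\in\td\ce$. For each such $w$, the edge $(w,\td w)$ lies in $\cw^\infty(u,\td u)$, and these edges are pairwise vertex-disjoint. Marginally $K\sim\bin(n-1,\lambda s/n)$. A Chernoff bound gives
\[
\P(K<\lambda s/2)\le \exp(-c\lambda)=\exp\bigl(-(\log n)^{\alpha+o(1)}\bigr),
\]
which is $o(\exp(-(\log n)^{\alpha/2}))$.

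Next, let $\cb_u$ be the event that the depth-$(d+1)$ neighbourhood of $\bar u$ in the union graph $\bar G$ is a tree with at most $(C\lambda)^{d+1}$ vertices. The expected number of vertices in this neighbourhood is $(2\lambda)^{d+1}=n^{o(1)}$, because $d\log\lambda=(\log n)^{\gamma}\,\alpha\log\log n=o(\log n)$. A union bound over potential cycle-closing edges therefore gives $\P(\cb_u^c)=n^{-1+o(1)}$.

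\textbf{Step 2: coupling with the correlated Galton--Watson pair.} Explore $\bar G$ by breadth-first search from $\bar u$, labelling each edge as belonging to $G$ only, $G'$ only, or both. At each explored vertex the three counts are multinomial with parameters $\approx(\lambda(1-s),\lambda(1-s),\lambda s)$ over the unexplored vertices. On $\cb_u$, couple these counts with independent Poissons. The total variation cost per vertex is $O(\lambda^2/n+\lambda\,n^{o(1)}/n)$, so over $n^{o(1)}$ vertices the total cost is $n^{-1+o(1)}$.

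Under this coupling, conditioned on $K$, the pairs $([\ct^\infty_{w\setminus u}],[\td\ct^\infty_{\td w\setminus\td u}])$ for the $K$ common children $w$ are i.i.d.\ with law $\bp([\ch_d]\in\cdot,[\td\ch_d]\in\cdot)$. The point to check is that each subtree hanging from a common child is exactly the correlated construction: the intersection tree has offspring mean $\lambda s$, each tree adds $\poi(\lambda(1-s))$ private children, and every private child carries an independent $\gw(\lambda)$ subtree. Disjoint subtrees are independent because the neighbourhood is tree-like.

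\textbf{Step 3: counting failures.} By \eqref{eq:rank-LRT-power}, each true pair fails the test independently with probability $\varepsilon=O((\log n)^{-\alpha/4})$. On $\{K\ge\lambda s/2\}$, having fewer than six passes requires at least $K-5\ge K/2$ failures. This has probability at most
\[
2^{K}\varepsilon^{K/2}=\exp\bigl(-\Omega(\lambda\log\log n)\bigr),
\]
which is negligible.

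Summing the three error terms gives $\P(\ca_{2,u}^c)\le \exp(-(\log n)^{\alpha+o(1)})+n^{-1+o(1)}=O(\exp(-(\log n)^{\alpha/2}))$.

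The main obstacle is Step~2, namely a careful coupling that identifies the joint law of the depth-$d$ local tree pairs at all common children with i.i.d.\ copies of $\bp$. This requires handling the depth-$d$ truncation, the fact that $G'$ is relabelled by the identity $\pi$, and verifying that the structure of non-shared children matches the augmentation in the definition of $\bp$. I would first carry out the exploration in the base graph $G^*$ together with the two independent subsamplings, and only then pass to $(G,\td G)$, since the correlated tree-pair law is defined by exactly that mechanism.
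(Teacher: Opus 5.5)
Your proposal is correct and follows essentially the same route as the paper. Both arguments couple the $(d+1)$-hop neighbourhoods of $u$ and $\td u$ with a correlated Galton--Watson pair (the paper's Lemma~\ref{lem:tree_tv} carries out exactly the common/exclusive-children exploration you describe), then show that enough intersection children of the root pass $\Psi$ to force \texttt{GreedyMatching} to a matching of size $3$. The differences are cosmetic: the paper uses Poisson thinning, $X\sim\poi(\lambda s\,\bp(\Psi([\ch_d],[\td\ch_d])=1))$, with a direct blocking argument showing $X\ge 5$ suffices, whereas you use a Chernoff bound on $K$, a binomial tail for the failures, and the maximal-versus-maximum matching bound with six edges.
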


For a vertex $u$, we also define the event
\begin{align}
        \ca_{3, u}
    &:=
    \left\{
        \text{the }2(d+1)\text{-hop neighborhood of }\bar u
        \text{ in }\bar G
        \text{ is a tree}
    \right\}.
    \label{eq:E1u}
    \end{align}
We show in Lemma~\ref{lem:TBAlignIdeal-E1} in Appendix~\ref{appd:rare} that $\ca_{3,u}$ occurs with high probability. We note that the 3-dangling-tree test applied in \texttt{TBAlign} was first proposed in~\cite{Luca24LRT}, where it was shown that, the event $\ca_{1,u}\cap\ca_{2,u}\cap\ca_{3,u}$ implies ${\bf M}^\infty_\mathrm{TB}(u,\td u) =1 \text{ and } {{\bf M}^\infty_\mathrm{TB}(u,\td v)}=0$, for all $\td v\neq \td u$. For completeness, we include this argument as Lemma~\ref{lem:sufficient} in Appendix~\ref{appd:rare}. Therefore, for any vertex $u\in \cv$, 
\begin{align}
    \P\Bigl(
        {\bf M}_\mathrm{TB}^\infty(u,\td u)=1
        \text{ and }
        \|{\bf M}_\mathrm{TB}^\infty(u,:)\|_1=1
    \Bigr)
    &\ge
    \P\bigl(
        \ca_{1,u}\cap \ca_{2,u}\cap \ca_{3,u}
    \bigr)\nonumber\\
    &\ge
    1-\P(\ca_{1,u}^c)-\P(\ca_{2,u}^c)-\P(\ca_{3,u}^c)
    \nonumber\\
    &=
    1-O\left(\exp\left(-(\log n)^{\alpha/2}\right)\right),
    \nonumber
\end{align}
where the last step uses Lemmas \ref{lem:TBAlignIdeal-E2}, \ref{lem:TBAlignIdeal-E3}, and \ref{lem:TBAlignIdeal-E1}.
By the linearity of expectation,
\begin{align}
    &\E\left[n- \sum_{u\in{\cv}} \mathbbm{1}\Bigl\{
            {\bf M}_\mathrm{TB}^\infty(u,\td u)=1 \text{ and } \|{\bf M}_\mathrm{TB}^\infty(u,:)\|_1=1 \Bigr\} \right]\\
    ={}&n-
    \sum_{u\in{\cv}}
    \P\Bigl(
        {\bf M}_\mathrm{TB}^\infty(u,\td u)= 1
        \text{ and }
        \|{\bf M}_\mathrm{TB}^\infty(u,:)\|_1 =1
    \Bigr)\\
    ={}&
    O\left(n\exp\left(-(\log n)^{\alpha/2}\right)\right).
    \nonumber
\end{align}
Hence, by Markov's inequality,
\begin{align}
    &\P\left( \sum_{u\in{\cv}} \mathbbm{1}\Bigl\{
            {\bf M}_\mathrm{TB}^\infty(u,\td u)=1 \text{ and } \|{\bf M}_\mathrm{TB}^\infty(u,:)\|_1=1 \Bigr\}< n-n\exp\left(-\frac12(\log n)^{\alpha/2}\right) \right)\\
    ={}&\P\left(  n-\sum_{u\in{\cv}}
        \mathbbm{1}\Bigl\{ {\bf M}_\mathrm{TB}^\infty(u,\td u)=1 \text{ and } \|{\bf M}_\mathrm{TB}^\infty(u,:)\|_1=1 \Bigr\}> n\exp\left(-\frac12(\log n)^{\alpha/2}\right) \right)\\
    \le{} & \frac{\E\left[n- \sum_{u\in{\cv}}
            \mathbbm{1}\Bigl\{ {\bf M}_\mathrm{TB}^\infty(u,\td u)=1 \text{ and } \|{\bf M}_\mathrm{TB}^\infty(u,:)\|_1=1
            \Bigr\} \right] }{n\exp\left(-\frac12(\log n)^{\alpha/2}\right)} \nonumber\\
    ={}& O\left(\exp\left(-\frac12(\log n)^{\alpha/2}\right)\right).
    \label{eq:ideal-count}
\end{align}

\subsection[Step 2. Analysis with tau=(log n) t]{Step 2. Analysis with $\tau=(\log n)^t$}
We now reduce $\tau$ to the actual value used under \texttt{TBAlign,} i.e., $\tau=(\log n)^{t}$. Recall that ${\bf M}_\mathrm{TB}(u,\tilde{v})$ is the matching matrix with $\tau=(\log n)^t.$ We define $\cb$ (resp. $\td\cb$) to be the set of vertices in $G$ (resp. $\td G$) whose $(d+1)$-hop neighborhood contains a vertex with degree larger than $(\log n)^t$, i.e.,
\begin{align}
    \cb
    &:=
    \left\{
    u\in {\cv}:
    \exists v\in {\cv}
    \text{ such that }
    \dist(u,v)\le d+1
    \text{ and }
    \deg(v)> (\log n)^t
    \right\},
    \label{eq:Bset}
    \\
    \td\cb
    &:=
    \left\{
    \td u\in \td{\cv}:
    \exists \td v\in \td{\cv}
    \text{ such that }
    \dist(\td u,\td v)\le d+1
    \text{ and }
    \deg(\td v)> (\log n)^t
    \right\}.
    \label{eq:tBset}
\end{align}
We note that
\begin{itemize}
    \item Given that $u\notin {\cb},$ a local tree ${\ct}_{w\setminus u},$ where $w\in{\cn}_u$, remains the same for $\tau=\infty$ and $\tau=(\log n)^t$ because (i) no vertex in the $(d+1)$-hop neighborhood of $u$ has degree larger than $(\log n)^t,$ and (ii) any vertex in the tree ${\ct}_{w\setminus u}$ is at most $(d+1)$ hops away from $u$. The same conclusion holds for a vertex $\td{v}\not\in \td \cb.$
Therefore, we have 
\begin{align}
    {\bf M}_\mathrm{TB}(u,\td v)= {\bf M}_\mathrm{TB}^\infty(u,\td v),&
    \qquad
    \forall (u,\td v)\in {\cb}^c\times \td{\cb}^c.
\end{align}

\item On the other hand, if $u\in {\cb},$ ${\ct}_{w\setminus u}$ may become an empty tree if it includes a vertex with degree $\geq(\log n)^t$, which would result in ${\bf L}({\ct}_{w\setminus u}, \td{\ct}_{\tilde{x}\setminus \tilde{v}})=0$ and reduce the number of edges with non-zero weights in the witness graph ${\cw}(u,\td v).$ When this happens, we may have ${\bf M}_\mathrm{TB}(u, \tilde{v})=0$ while ${\bf M}_\mathrm{TB}^\infty(u, \tilde{v})=1;$ also, ${\bf M}_\mathrm{TB}(u, \tilde{v})=0$ if ${\bf M}_\mathrm{TB}^\infty(u, \tilde{v})=0.$ In other words,  
\begin{align}
    {\bf M}_\mathrm{TB}(u,\td v)\leq {\bf M}_\mathrm{TB}^\infty(u,\td v),&
    \qquad
    \hbox{if } u\in {\cb} \hbox{ or } \tilde{v}\in\td{\cb}.
\end{align}
\end{itemize}
The results above imply that for $(u,\td u)\in {\cb}^c\times \td{\cb}^c,$ 
\begin{align*}
    \mathbbm{1}\Bigl\{
        {\bf M}_\mathrm{TB}(u,\td u)=1
        \text{ and }
        \|{\bf M}_\mathrm{TB}(u,:)\|_1=1
    \Bigr\} =1 \quad\hbox{if}\quad    \mathbbm{1}\Bigl\{
        {\bf M}_\mathrm{TB}^\infty(u,\td u)=1
        \text{ and }
        \|{\bf M}_\mathrm{TB}^\infty(u,:)\|_1=1
    \Bigr\} =1
\end{align*} so
\begin{align}
&\sum_{u\in{\cv}}
    \mathbbm{1}\Bigl\{
        {\bf M}_\mathrm{TB}(u,\td u)=1
        \text{ and }
        \|{\bf M}_\mathrm{TB}(u,:)\|_1=1
    \Bigr\}\nonumber\\
    {}\ge &
    \sum_{u\in{\cv}}
    \mathbbm{1}\Bigl\{
        {\bf M}_\mathrm{TB}^\infty(u,\td u)=1
        \text{ and }
        \|{\bf M}_\mathrm{TB}^\infty(u,:)\|_1=1  \hbox{ and } (u, \td u)\in{\cb}^c\times\td{\cb}^c
    \Bigr\}
    \nonumber\\
    {}\ge&
    \sum_{u\in{\cv}}
    \mathbbm{1}\Bigl\{
        {\bf M}_\mathrm{TB}^\infty(u,\td u)=1
        \text{ and }
        \|{\bf M}_\mathrm{TB}^\infty(u,:)\|_1=1
    \Bigr\}
    -
    \sum_{u\in{\cv}}
    \mathbbm{1}\{u\in\cb \text{ or } \td u\in\td\cb\}
    \nonumber\\
    {}\ge&
    \sum_{u\in{\cv}}
    \mathbbm{1}\Bigl\{
        {\bf M}_\mathrm{TB}^\infty(u,\td u)=1
        \text{ and }
        \|{\bf M}_\mathrm{TB}^\infty(u,:)\|_1=1
    \Bigr\}
    -
    |\cb|-|\td\cb|.
    \label{eq:reduction-count}
\end{align}
By \eqref{eq:ideal-count} and Lemma \ref{lem:TBAlignBadNodes} in Appendix~\ref{appd:bad-nodes}, in which we establish high-probability bounds on $|\cb|+|\td\cb|$, we obtain
\begin{align}
    &\sum_{u\in{\cv}}
    \mathbbm{1}\Bigl\{
        {\bf M}_\mathrm{TB}(u,\td u)=1
        \text{ and }
        \|{\bf M}_\mathrm{TB}(u,:)\|_1=1
    \Bigr\}\nonumber\\
    \ge{} & n-n\exp\left(-\frac12(\log n)^{\alpha/2}\right)-2n\exp\left(-\frac12(\log n)^{\alpha/2}\right)\nonumber\\
    ={}&
    n-3n\exp\left(-\frac12(\log n)^{\alpha/2}\right)
    \label{eq:TBAlign-final-count}
\end{align}
with probability $1-O(\exp(-\frac12(\log n)^{\alpha/2}))$. This completes the proof of Proposition~\ref{th:TBAlign}.

\subsection{Proof of Lemma~\ref{lem:TBAlignIdeal-E2}}
    Recall the definition of local tree $\ct_{w\setminus u}^\infty$ for a pair of adjacent vertices $w$ and $u$ in $G$.  In the following proof, we slightly extend this definition. 
    For a vertex $w\in\cv$, not necessarily adjacent to $u$ in $G$, we define $\ct^\infty_{w\setminus u}$ in the same way as before: it is the $d$-hop neighborhood of $w$ in the graph $G\setminus u$ if this neighborhood is a tree; otherwise, $\ct^\infty_{w\setminus u}$ is defined to be the empty tree. We extend the definition of $\td\ct^\infty_{\td x\setminus\td v}$ to the case of non-adjacent $(\td x,\td v)$ in the similar way. 
    
    With this extended definition, we define $\ca_{1,u}'$ as the event that there exist $w\in \cv$ and $(\td v,\td x)\in \td \cv\times \td\cv$ such that
    \begin{enumerate}
        \item $\Psi([{\ct}^{\infty}_{w\setminus u}],[{\td\ct}^{\infty}_{\td x\setminus \td v}])=1$;

        \item $\bar\cv(\ct^\infty_{w\setminus u})\cap \bar\cv(\td\ct^\infty_{\td x\setminus \td v})=\emptyset$. 
    \end{enumerate}
    By definition, we have $\ca_{1,u}^c\subseteq \ca_{1,u}'$, so to prove the lemma, it suffices to prove
    $\P(\ca_{1,u}')=\td O(n^{-1/3}).$

    We also define an auxiliary event 
    \[
    \cd:=\{|\cn(i,d)|< 8e^2\lambda^d\log n,\  \forall i\in{\cv}\}\bigcap\{|\td\cn(\td j,d)|< 8e^2\lambda^d\log n,\ \forall \tilde j \in \td{\cv}\},
    \]
    where $\cn(i,d)$ (resp. $\td\cn(\td j,d)$) denotes the set of vertices within distance $d$ from $i$ in $G$ (resp. $\td j$ in $\td G$). It follows by Lemma~\ref{lem:neighbor_size} in Appendix~\ref{appd:neighbor-size} and the union bound that
    \begin{equation}
    \label{eq:cd}
    \P(\cd^c)=\td O(n^{-1/3}).
    \end{equation}
    Then we have
    \begin{align}
        \P(\ca_{1,u}')&\le \P(\ca_{1,u}'\cap \cd)+\P(\cd^c)\nonumber\\
        &\le \P(\cd^c)+\sum_{\substack{
    w\in\cv\\
    (\td v,\td x)\in \td\cv\times\td\cv
    }}\P(\{\Psi([{\ct}^{\infty}_{w\setminus u}],[{\td\ct}^{\infty}_{\td x\setminus \td v}])=1\}\cap \{\bar\cv(\ct^\infty_{w\setminus u})\cap \bar\cv(\td\ct^\infty_{\td x\setminus \td v})=\emptyset\}\cap\cd)\label{eq:sum_pairs}
    \end{align}

    Now, we fix vertex pairs $w\in\cv$ and $(\td v,\td x)\in \td\cv\times\td\cv$. We omit the subscript and superscript from notations $\ct^\infty_{w\setminus u}$ and $\td\ct^\infty_{\td x\setminus \td v}$, and simply write $\ct$ and $\td\ct$. Recall that $\cz_d$ denote the set of all rooted unlabeled trees with depth up to $d$. We then have
    \begin{align}
        &\P(\{\Psi([{\ct}],[{\td\ct}])=1\}\cap \{\bar\cv(\ct)\cap \bar\cv(\td\ct)=\emptyset\}\cap\cd)\nonumber\\
        ={}& \sum_{\substack{
        (t,\td t)\in \cz_d\times\cz_d
        }} \P\big(\{[{\ct}]=t\}\cap\{[{\td\ct}]=\td t\}\cap \{\bar\cv(\ct)\cap \bar\cv(\td\ct)=\emptyset\}\cap\cd\big)\mathbbm{1}\{\Psi(t,\td t)=1\}\nonumber\\
        \stackrel{(a)}{\le} {} & \sum_{\substack{
    (t,\td t)\in \cz_d\times\cz_d:\\
    |\cv(t)|\le 8e^2\lambda^d\log n\\
    |\cv(\td t)|\le 8e^2\lambda^d\log n 
    }} \P\big(\{[{\ct}]=t\}\cap\{[{\td\ct}]=\td t\}\cap \{\bar\cv(\ct)\cap \bar\cv(\td\ct)=\emptyset\}\big)\mathbbm{1}\{\Psi(t,\td t)=1\}\label{eq:expend_t}
    \end{align}
    where (a) follows because both $\ct$ and $\td\ct$ have at most $8e^2\lambda^d\log n$ vertices under event $\cd$.

    Recall that $\bq(\ch,\td\ch)$ is the independent tree-pair distribution.
    By Lemma~\ref{lem:tree_pmf} in Appendix~\ref{appd:tree-pmf}, we know that for any pair of unlabeled trees $(t,\td t)\in \cz_d\times\cz_d$ satisfying $|\cv(t)|\le 8e^2\lambda^d\log n$ and  $|\cv(\td t)|\le 8e^2\lambda^d\log n$, 
    \begin{align}
        &\frac{\P\big(\{[{\ct}]=t\}\cap\{[{\td\ct}]=\td t\}\cap \{\bar\cv(\ct)\cap \bar\cv(\td\ct)=\emptyset\}\big)}{\bq([\ch_d]=t,[\td\ch_d]=\td t)}\nonumber\\
        \le{} & \exp\left(\frac{\lambda}{n}\left((8e^2\lambda^d\log n+1)8e^2\lambda^d\log n+8e^2\lambda^d\log n(1+8e^2\lambda^d\log n+8e^2\lambda^d\log n)\right)\right)\nonumber\\
        \le{} & \exp\left(\frac{\lambda}{n}320e^4\lambda^{2d}(\log n)^2\right)=O(1),\label{eq:const_ratio}
    \end{align}
    where the last equality follows because $\lambda^{2d}=n^{o(1)}$. Then~\eqref{eq:expend_t} and~\eqref{eq:const_ratio} together imply that
    \begin{align}
        &\P\big(\{[{\ct}]=t\}\cap\{[{\td\ct}]=\td t\}\cap \{\bar\cv(\ct)\cap \bar\cv(\td\ct)=\emptyset\}\big)\nonumber\\
        \le{}&  \exp\left(\frac{\lambda}{n}320e^4\lambda^{2d}(\log n)^2\right)\sum_{\substack{
    (t,\td t)\in \cz_d\times\cz_d\\
    |\cv(t)|\le 8e^2\lambda^d\log n\\
    |\cv(\td t)|\le 8e^2\lambda^d\log n 
    }}\bq([\ch_d]=t,[\td\ch_d]=\td t)\mathbbm{1}\{\Psi(t,\td t)=1\}\nonumber\\
    \le{}&  \exp\left(\frac{\lambda}{n}320e^4\lambda^{2d}(\log n)^2\right)\bq(\Psi([\ch_d],[\td\ch_d])=1)=O(n^{-\log\log n}),\label{eq:fix_uwvx}
    \end{align}
    where the last equality follows by~\eqref{eq:rank-LRT-typeI} in Theorem~\ref{prop:LRT}. Notice that~\eqref{eq:fix_uwvx} holds for any $w$ and $(\td v,\td x)$. Substituting~\eqref{eq:fix_uwvx} and~\eqref{eq:cd} into~\eqref{eq:sum_pairs} yields
    \begin{equation}
        \P(\ca_{1,u}')=\td O(n^{-1/3})+O(n^{3-\log\log n})=\td O(n^{-1/3}),
    \end{equation}
    which completes the proof.

\subsection{Proof of Lemma~\ref{lem:TBAlignIdeal-E3}}
    Recall that we defined 
    \[
    \ca_{2,u}
    :=
    \left\{
        \texttt{GreedyMatching}({\cw}^{\infty}(u,\td u),\theta^*,3)=1
    \right\}.
    \]
    Define $\ct$ as the local tree rooted at $u$ in $G$ such that $\ct$ is the $(d+1)$-hop-neighborhood of $u$ in $G$ if it is a tree; otherwise, $\ct$ is an empty tree. We similarly define $\td\ct$ as the local tree rooted at $\td u$ in $\td G$. By Lemma~\ref{lem:tree_tv} in Appendix~\ref{appd:tree-couple}, we can couple $(\ct,\tdct)$ and a pair of correlated Poisson Galton--Watson trees $(\ch,\td\ch)\sim\bp$ so that
    \begin{equation}
        \label{eq:tree_couple}
        \P([\ct]=[\ch_{d+1}],[\td\ct]=[\td\ch_{d+1}])= 1-O(n^{-1/2}),
    \end{equation}
    where $\ch_{d+1}$ is the truncation of $\ch$ up to depth $d+1$, and $\td\ch_{d+1}$ is the truncation of $\td\ch$ up to depth $d+1$.

    We now switch to study the output of the subroutine \texttt{GreedyMatching} when applied to
    the correlated tree pair $\ch_{d+1}$ and $\td\ch_{d+1}$
    More precisely, let $r$ and $\td r$ denote the root node of $\ch_{d+1}$ and $\td\ch_{d+1}$ respectively. Define $\sfch_r$ as the set of children of $r$ in $\ch_{d+1}$, and for each $i\in \sfch_r$, define $\ch_{d+1,i}$ to be the subtree rooted at $i$ in $\ch_{d+1}$. We also define $\wdtd\sfch_{\td r}$ and $\td\ch_{d+1,\td j}$ for each $\td j\in\wdtd\sfch_{\td r}$ similarly. The witness graph $\cw^\infty(r,\td r)$ is defined as the bipartite graph between vertex sets $\sfch_r$  and $\wdtd\sfch_{\td r}$ with the weight of edge $(i,\td j)$ being $L_{d}([\ch_{d+1,i}],[\td\ch_{d+1,\td j}])$. Notice that under the event $[\ct]=[\ch_{d+1}] $ and $ [\td\ct]=[\td\ch_{d+1}]$, the two witness graphs $\cw^\infty(u,\td u)$ and $\cw^\infty(r,\td r)$ are identical under a vertex relabeling. Therefore, we have 
    \[
    \{\texttt{GreedyMatching}({\cw}^{\infty}(u,\td u),\theta^*,3)=1\}\Leftrightarrow \{\texttt{GreedyMatching}({\cw}^{\infty}(r,\td r),\theta^*,3)=1\}.
    \]

    We define the random variable
    \[
        X
        :=
        \sum_{i\in \sfch_r^*}
        \Psi\bigl([\ch_{d+1,i}],[\td\ch_{d+1,\td i}]\bigr),
    \]
    where $\sfch_r^*$ denotes the set of children of the root in the intersection tree $\ch^*$. Equivalently, $X$ is the number of children $i$ of the root in $\ch^*$ for which either
    \[
        L_d([\ch_{d+1,i}],[\td\ch_{d+1,\td i}])>\theta^*
    \]
    or 
    \[
        L_d([\ch_{d+1,i}],[\td\ch_{d+1,\td i}])=\theta^*
        \quad\text{and}\quad
        f([\ch_{d+1,i}],[\td\ch_{d+1,\td i}])>\kappa^*.
    \]
    We claim that the event $X\ge 5$ implies $\texttt{GreedyMatching}({\cw}^{\infty}(r,\td r),\theta^*,3)=1$. Under the event $X\ge 5$, there exists five non-overlapping vertex pairs $(i_1,\td i_1),\ldots,(i_5,\td i_5)\in \sfch_r\times\wdtd\sfch_{\td r}$ such that for each $k\in [5]$, either 
    \[
        L_d([\ch_{d+1,i_k}],[\td\ch_{d+1,\td i_k}])>\theta^*
    \]
    or 
    \[
        L_d([\ch_{d+1,i_k}],[\td\ch_{d+1,\td i_k}])=\theta^*
        \quad\text{and}\quad
        f([\ch_{d+1,i_k}],[\td\ch_{d+1,\td i_k}])>\kappa^*.
    \]
    Towards a contradiction, suppose $\texttt{GreedyMatching}({\cw}^{\infty}(r,\td r),\theta^*,3)=0$. Then, the matching found by subroutine before it terminates has size less than or equal to $2$, which includes at most two vertices in $\sfch_r$  and at most two vertices in $\wdtd\sfch_{\td r}$.  This implies that there must exist some $k^*\in [5]$ such that the edge $(i_{k^*},\td i_{k^*})$ does not conflict with any edge in the matching. Therefore, the subroutine must not terminate at this point, which leads to a contradiction.

    Notice that the number of children in the set $\sfch^*_r$ follows distribution $\poi(\lambda s)$. Moreover, for each $i\in\sfch^*_r$, the event  
    $\Psi\bigl([\ch_{d+1,i}],[\td\ch_{d+1,\td i}]\bigr)=1$ happens with probability
    $\bp(\Psi([\ch_d],[\td\ch_d])=1)$, and these events are mutually independent among different $i$'s.
    It then follows by Poisson thinning that 
    $X\sim\poi\left(\lambda s\bp(\Psi([\ch_d],[\td\ch_d])=1)\right).$
    Therefore, we have
    \begin{align}
        \P(\texttt{GreedyMatching}({\cw}^{\infty}(r,\td r),\theta^*,3)=0)&\le \P(X\le 4)\nonumber\\
        &= \P\left(\poi\left(\lambda s\bp(\Psi([\ch_d],[\td\ch_d])=1)\right)\le 4\right)\nonumber.
    \end{align}
    Notice that for any $\mu\ge 1$, we have
    \[
    \P(\poi(\mu)\le 4)
    =e^{-\mu}\sum_{j=0}^4\frac{\mu^j}{j!}
    =O(\mu^4\exp(-\mu)).
    \]
    Then by~\eqref{eq:rank-LRT-power} in Theorem~\ref{prop:LRT} and the fact that $\lambda=(\log n)^{\alpha+o(1)}$, we have
    \begin{align}
        \P(\texttt{GreedyMatching}({\cw}^{\infty}(r,\td r),\theta^*,3)=0)=\exp\left(-(\log n)^{\alpha+o(1)}\right)=O\left(\exp\left(-(\log n)^{\alpha/2}\right)\right).\label{eq:tree_pass}
    \end{align}

    Finally, by equations~\eqref{eq:tree_couple} and~\eqref{eq:tree_pass}, we have
    \begin{align*}
    \P\left(\left\{
        \texttt{GreedyMatching}({\cw}^{\infty}(u,\td u),\theta^*,3)=1
    \right\}\right)&=1-O(n^{-1/2})-O\left(\exp\left(-(\log n)^{\alpha/2}\right)\right)\\
    &=1-O\left(\exp\left(-(\log n)^{\alpha/2}\right)\right),
    \end{align*}
    which completes the proof.
    
\section{Conclusion}
This paper studies graph matching under the correlated \erdos--\renyi graph pair model in the regime $\lambda=(\log n)^{\alpha+o(1)}$ for some $\alpha\in(0,1)$. We developed an algorithm with $n^{2+o(1)}$ time complexity and achieves almost exact recovery when the edge correlation parameter $s\in(\sqrt{C_{\mathrm{Otter}}},1]$. This improves over the previously known chandelier-counting algorithm of~\cite{Cheng26Chandelier}, whose running-time exponent depends on $s$ and goes to $\infty$ as $s$ approaches $\sqrt{C_{\mathrm{Otter}}}$ from above.

Algorithmically, our method is built upon local tree likelihood-ratio tests. 
Different from threshold-based likelihood-ratio tests in the literature, the proposed algorithm ranks the likelihood ratios of all observed local tree pairs and matches the vertex pairs by considering the local tree pairs sequentially according to their ranks, which avoids the calculation of an explicit threshold. 

\section*{Acknowledgments}
This work was supported in part by U.S. National Science Foundation (NSF) under grants 2207548 and 2324769; and AFOSR grant FA9550-24-1-0002. 

\newpage 
\bibliographystyle{IEEEtranN}
\bibliography{main3.bib}

\newpage 
\appendices
 The appendix is organized as follows: Appendix~\ref{appd:pf_thm_3} proves Theorem \ref{prop:LRT}, our guarantee for tree correlation testing in diverging degree regime. Appendix~\ref{appd:gw-lemmas} collects the technical lemmas for \GW trees, and Appendix~\ref{appd:rare} includes the additional lemmas needed for the proof of Proposition \ref{th:TBAlign}. Appendix~\ref{appd:er-properties} includes properties for \erdos--\renyi graphs, and Appendix~\ref{appd:concentration} includes the Poisson approximation results and concentration inequalities used in the proof. Finally, Appendix~\ref{appd:treecode} gives an explicit construction of the tie-breaking map~$f$.

\section{Proof of Theorem \ref{prop:LRT}: Tree Correlation Tests in the Diverging-Degree Regime}\label{appd:pf_thm_3}
Recall that $\cz_d$ denotes the set of rooted unlabeled trees of depth at most $d$. We consider the problem of deciding whether an observed tree pair $(t,\td t)\in\cz_d\times\cz_d$ is drawn from the correlated Galton--Watson distribution $\bp$ or from the independent Galton--Watson distribution $\bq$. Fix an arbitrary injective map $f:\cz_d\times\cz_d\to[0,1]$, which can be viewed as a tie-breaking function and gives a canonical ordering on the tree pairs in $\cz_d\times\cz_d$. The main goal of this section is to establish the guarantee for the likelihood-ratio test, which is restated as follows.

\RankLRT*

We prove Theorem~\ref{prop:LRT} via constructing an auxiliary correlation test $\Phi$. The following proposition characterizes the type-I error and power of $\Phi$, from which, Theorem~\ref{prop:LRT} follows by the Neyman--Pearson lemma.

\begin{restatable}{prop}{CorTest}
\label{prop:CorTest}
Assume the same setting as in Theorem \ref{prop:LRT}.
Then there exists a test $\Phi:\cz_d\times \cz_d\to\{0,1\}$
that satisfies
\begin{align}
    {\bq}\bigl(\Phi([\ch_d],[\td\ch_d])=1\bigr)
    &\le
    n^{-\log\log n},
    \label{eq:CorTest-typeI}
    \\
    {\bp}\bigl(\Phi([\ch_d],[\td\ch_d])=1\bigr)
    &=
    1-O\bigl((\log n)^{-\alpha/2}\bigr).
    \label{eq:CorTest-power}
\end{align}
for all sufficiently large $n$.
\end{restatable}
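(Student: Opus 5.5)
The test $\Phi$ is built in the three stages announced before Theorem~\ref{prop:LRT}. The first stage is a baseline tree correlation test, the second is iterative boosting, and the third is a maximum-matching aggregation at the root. Throughout, we work in the diverging-degree regime $\lambda=(\log n)^{\alpha+o(1)}$, $d=(\log n)^\gamma$, and track how the constants of \cite{Luca24Stat} depend on $\lambda$.

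\textbf{Baseline test.} For a depth $d_0$ (a small constant or slowly growing), the baseline test compares automorphism-free signatures of the two trees. Following \cite{Luca24Stat}, it declares a pair correlated if the trees share a common subtree pattern from a suitable family counted via Otter's constant. The condition $s^2>C_{\mathrm{Otter}}$ guarantees that under $\bp$ the expected number of matched patterns grows like $(s^2/C_{\mathrm{Otter}})^{\ell}$, while under $\bq$ it is small. The goal of this stage is a test with type-I error $\varepsilon_0$ and power $1-\delta_0$, where $\varepsilon_0$ and $\delta_0$ are small constants.

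\textbf{Boosting.} We boost recursively over depth. A pair of vertices at depth $\ell$ is declared correlated if, among their children, there exist at least $m$ disjoint child pairs declared correlated at the previous level. Under $\bq$, the number of child pairs is at most $\deg\cdot\widetilde{\deg}\approx\lambda^2$. The false-positive probability therefore evolves like $\varepsilon_{k+1}\le(\lambda^2\varepsilon_k)^m/m!$, which is doubly exponentially decreasing once $\lambda^2\varepsilon_k$ is small. Under $\bp$, the number of correlated child pairs is $\poi(\lambda s(1-\delta_k))$, which exceeds $m$ except with probability about $e^{-\lambda s/2}$, so $\delta_{k+1}\le e^{-c\lambda}$.

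Taking $m$ constant and iterating over about $\log\log n\cdot O(1)$ levels drives $\varepsilon$ below $n^{-\log\log n}$. This is affordable since $d=(\log n)^\gamma$ is far larger, and the depth budget is used only here. In the final stage, $\Phi$ at the root requires a matching of size about $\lambda s/2$ among the root's children. This gives power $1-O(e^{-c\lambda})$, which is $1-O((\log n)^{-\alpha/2})$ with room to spare; alternatively, the baseline $\delta_0$ is the bottleneck, hence the polynomial-in-$\log$ power.

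\textbf{Main obstacle.} The hardest step is the baseline with a $\lambda$-dependent analysis. One must show that $\varepsilon_0\lambda^2\ll1$, i.e.\ that the baseline type-I error is below $\lambda^{-2-\Omega(1)}$, while the power remains bounded away from $0$ uniformly as $\lambda\to\infty$. My first attempt is to choose the baseline depth $d_0$ growing like $C\log\lambda$. The Otter-type second-moment ratio $(s^2/C_{\mathrm{Otter}})^{d_0}$ then beats the $\lambda^2$ union factor. The variance bound from \cite{Luca24Stat} must be redone with explicit $\lambda$-dependence, and the required depth $d_0+O(\log\log n)$ must still stay below $d$. Both $\Phi$'s events depend only on $[\ch_d],[\td\ch_d]$, so $\Phi$ is a valid map on $\cz_d\times\cz_d$.
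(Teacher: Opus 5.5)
There is a genuine gap, and it is exactly the step you flag as the main obstacle. Your boosting rule (accept if at least $m$ disjoint child pairs passed at the previous level) is controlled by a first-moment union bound over the $\lambda^{2m}/m!$ expected choices of $m$ disjoint child pairs. This gives $\varepsilon_{k+1}\le\lambda^{2m}\varepsilon_k^m/m!$, which contracts only once $\varepsilon_k\lesssim\lambda^{-2m/(m-1)}$. From a constant type-I error, which is all the constant-depth likelihood-ratio initialization of Lemma~\ref{prop:initialization} provides, the bound is vacuous.

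Everything therefore hinges on a baseline test with type-I error $\lambda^{-2-\Omega(1)}$ and power bounded below uniformly as $\lambda\to\infty$. The proposal does not supply one. The ``common subtree pattern'' test is never specified, and the depth-$C\log\lambda$ idea is only a plan. This is not routine bookkeeping. Since $\E_\bq[L_k^2]=\E_\bp[L_k]=e^{2\chi_k}$, Cauchy--Schwarz under $\bq$ shows that any depth-$k$ test with $\bq$-error at most $\lambda^{-3}$ has $\bp$-power at most $e^{\chi_k}\lambda^{-3/2}$. So the baseline must live at depth $\gtrsim\log\lambda$. There you would need something like $\E_\bp[\log L_k]\ge(1-o(1))\chi_k$ uniformly in $k\le C\log\lambda$ as $\lambda\to\infty$. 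What \cite{Luca24Stat} gives is only $\liminf_{\mu\to\infty}\E_\bp[\log L_k]\ge\chi_k$ for each fixed $k$, with no rate. The argument as written therefore does not go through.

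The paper sidesteps this by never needing a small baseline. Its boosting uses the centered statistic $X_{k+1}(t,\td t)=\sum_{(h,\td h)\in\cf_k}(N_h(t)-\lambda p_h)(N_{\td h}(\td t)-\lambda p_{\td h})$, which averages over child pairs instead of union-bounding over them.
\begin{itemize}
\item The fourth-moment bound $\E_\bq[X_{k+1}^4]\le 36\lambda^4\varepsilon_k^2+13\lambda^3\varepsilon_k$ gives $\varepsilon_{k+1}\le\varepsilon_k/e$ starting from the constant $C_1^{-1}$.
\item Chebyshev keeps the power at least $0.4$, and eventually $1-O(\lambda^{-1})$ (Lemma~\ref{prop:iteration}).
\item After $d-l$ levels the type-I error is $C_3e^{-(d-l)}$. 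A maximum matching over the roughly $(\lambda s)^l$ vertices at level $l=\beta\log\log n$, with a union bound over injections, brings it to $n^{-\log\log n}$. The $O(\lambda^{-1})$ power loss comes from the concentration of $|\cs^*_l|$.
\end{itemize}

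Your route could be repaired by inserting this step. Run $O(\log\lambda)$ levels of the centered-count boosting to reach type-I error at most $\lambda^{-3}$ with power at least $0.4$. After that, your $m$-disjoint-pairs recursion with $m\ge 3$ does contract doubly exponentially. It reaches $n^{-\log\log n}$ within $O(\log\log n)$ further levels, with power $1-e^{-c\lambda}$. As submitted, however, the missing baseline is the whole difficulty.
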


The remainder of this section is organized as follows. In Appendix~\ref{appd:construction}, we construct the test $\Phi$ which consists of three steps: initialization, iterative boosting, and bipartite maximum matching. The proof of Proposition~\ref{prop:CorTest} is presented in Appendix~\ref{appd:pf-cortest}, with supporting results in Appendices \ref{appd:fd-l}, \ref{appd:init}, and \ref{appd:boost}. Finally, in Appendix~\ref{appd:pf-lrt}, we prove Theorem~\ref{prop:LRT} using Proposition~\ref{prop:CorTest} and the Neyman--Pearson lemma.

\subsection[Construction of the test Phi]{Construction of the test $\Phi$}
\label{appd:construction}
We construct the test $\Phi$ in the following three steps.

\underline{\textbf{Step 1: Initialization}.} We initialize the construction at the base depth $\bar d$. Define the initial set $\cf_{\bar d}\subseteq\cz_{\bar d}\times\cz_{\bar d}$ by
\[
    \cf_{\bar d}:=\bigl\{(t, \td t) \in \cz_{\bar d} \times \cz_{\bar d} : L_{\bar d}(t, \td t) \ge C_1\bigr\},
\]
where $C_1$ and $\bar d$ are constants independent of $n$, which are specified later in Appendix~\ref{appd:init}.

\underline{\textbf{Step 2: Iterative boosting}.} Given $\cf_k$, we construct $\cf_{k+1}$ for each $k=\bar d,\ldots,d-l-1$, where $l=\beta\log\log n$ and $\beta>0$ is a constant independent of $n$, whose value will be specified in Appendix~\ref{appd:pf-cortest}.

For $t\in\cz_{k+1}$ and $h\in\cz_k$, let $N_h(t)\in\mathbb N$ denote the number of children $v$ of the root of $t$ such that the subtree of $t$ rooted at $v$ is isomorphic to $h$. We also define $N_{\td h}(\td t)$ analogously for $\td t\in\cz_{k+1}$ and $\td h\in\cz_k$. Let
\[
    p_h := \gw([\ch_k] = h),
\]
where $\ch$ is a \GW tree with $\mathrm{Poi}(\lambda)$ offsprings.
We then construct the set 
\begin{equation}
\cf_{k+1}:=\left\{(t, \td t) \in \cz_{k+1} \times \cz_{k+1} :\sum_{(h, \td h) \in \cf_k}\bigl(N_h(t) - \lambda p_h\bigr)\bigl(N_{\td h}(\td t) - \lambda p_{\td h}\bigr)> \sigma_k\right\},\label{eq:def_iter}
\end{equation}
where the threshold $\sigma_k$ is chosen as
\begin{align}
    \sigma_k:=\max\left\{\frac1{10}\lambda e^{\frac{1}{4}(\bar{d}-k)}, 3\lambda^{3/4}\right\}.\label{eq:s2th}
\end{align}
Iterating this procedure for $k=\bar d,\ldots,d-l-1$ yields the set $\cf_{d-l}$.

\underline{\textbf{Step 3: Bipartite matching}.} Finally, we use $\cf_{d-l}$ to define the test $\Phi$ via the maximum matching over a bipartite graph $\cw$ defined below. Given a pair of rooted unlabeled trees
$(t,\td t)\in\cz_d\times\cz_d$, we first assign arbitrary labels to the
vertices in the $l$-th generations of $t$ and $\td t$, and denote the resulting
sets by $\cs_l(t)$ and $\td\cs_l(\td t)$, respectively.
We then construct a bipartite graph $\cw$ between $\cs_{l}(t)$ and $\td\cs_{l}(\td t)$ such that an edge exists between $u\in \cs_l(t)$ and $\td v \in \td\cs_l(\td t)$ if $(t_u,\td t_{\td v})\in \cf_{d-l}, $ where $t_u$ and $\td t_{\td v}$ are the
unlabeled subtrees of $t$ and $\td t$ rooted at $u$ and $\td v$, respectively. An illustration of this construction is provided in Figure~\ref{fig:gw-cortest-construction}.

\begin{figure}[htbp]
    \centering 
        \includegraphics[width=0.85\textwidth]{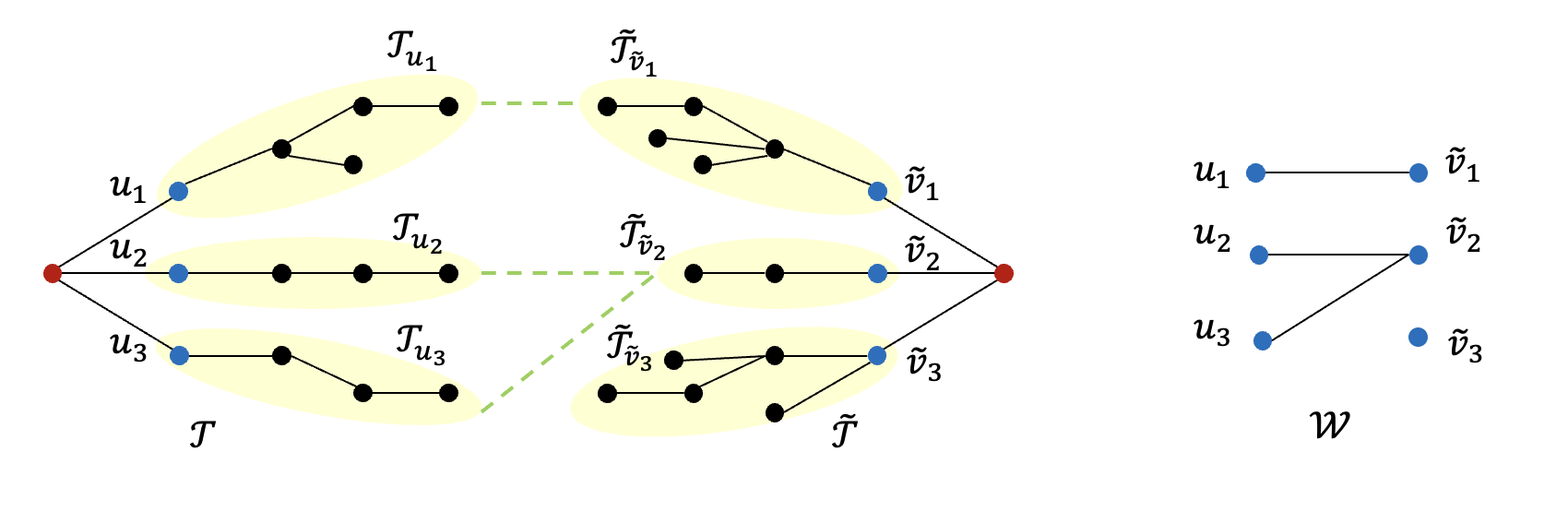}
        
        \caption{An example of constructing bipartite graph $\cw$ from trees $(t,\td t)\in\cz_4\times\cz_4$ when $l=1$. The red nodes in both tree $t$ and $\td t$ are roots, while the blue nodes in $t$ and $\td t$ denote the two parts in $\cw$. The green dashed line between two subtrees denotes that they appear in set $\cf_{3}$, which corresponds to the edges in the bipartite graph $\cw$.}
        \label{fig:gw-cortest-construction}
\end{figure}

We now define the test $\Phi$ such that it declares that $(t,\td t)$ is drawn from the
correlated model $\bp$ when the maximum matching size of $\cw$ is at least
$\frac{1}{2C_2}(\lambda s)^l$. That is,
    \begin{align}
    \Phi(t,\td t)
    =
    \mathbbm{1}
    \left\{
        \texttt{MaxMatching}\bigl(\cw\bigr)
        \ge
        \frac{1}{2C_2}(\lambda s)^{l}
    \right\},
    \label{eq:Phi_def}
\end{align}
where $\texttt{MaxMatching}(\cw)$ denotes the maximum matching size of $\cw$ and $C_2>1$ is a constant independent of $n$ that will be specified later in Appendix~\ref{appd:pf-cortest}.

\subsection{Proof of Proposition~\ref{prop:CorTest}}
\label{appd:pf-cortest}

Step~3 of the test $\Phi$ relies on set $\cf_{d-l}$ constructed in Steps~1 and~2. The key properties of $\cf_{d-l}$ are summarized in the following proposition. Its proof is deferred to Appendix~\ref{appd:fd-l} and is based on the analysis of
Steps~1 and~2 in Appendices~\ref{appd:init}
and~\ref{appd:boost}, respectively.

\begin{restatable}{prop}{lemmatching}
\label{lem:matching}
Let $\alpha\in(0,1)$, $\beta>0$, $\gamma\in(1-\alpha,1)$, and $s\in(\sqrt{C_{\mathrm{Otter}}},1]$ be constants. Suppose that
\begin{align}
    \lambda=(\log n)^{\alpha+o(1)},
    \qquad
    d=(\log n)^\gamma, \qquad l=\beta\log\log n.
    \nonumber
\end{align}
Then there exist constants $C_3>0$ and $C_4>0$ independent of $n$, such that the set $\cf_{d-l}\subseteq \cz_{d-l}\times\cz_{d-l}$ constructed in Appendix~\ref{appd:construction} satisfies
\begin{align}
    \bq&\bigl(([\ch_{d-l}],[\td\ch_{d-l}])\in\cf_{d-l}\bigr)\leq C_3e^{-d+l}, \label{eq:L2_Type1}\\
    \bp&\bigl(([\ch_{d-l}],[\td\ch_{d-l}])\in\cf_{d-l}\bigr)\geq  1-C_4 \lambda^{-1}, \label{eq:L2_Power}
\end{align}
for all large enough $n$.
\end{restatable}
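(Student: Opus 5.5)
Proposition~\ref{lem:matching} is a statement about the set $\cf_{d-l}$ produced by Steps~1 and~2, so the plan is to track two sequences along the iteration $k=\bar d,\ldots,d-l$:
\[
q_k:=\bq\bigl(([\ch_k],[\td\ch_k])\in\cf_k\bigr),\qquad 1-p_k:=\bp\bigl(([\ch_k],[\td\ch_k])\notin\cf_k\bigr).
\]
We prove, by induction on $k$, bounds of the form $q_k\le C_3 e^{-k}$ and $1-p_k\le \varepsilon_k$, where $\varepsilon_k$ decreases until it reaches the floor $C_4\lambda^{-1}$.

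\textbf{Base case (Step~1).} We pick constants $\bar d$ and $C_1$ so that the likelihood-ratio set $\cf_{\bar d}=\{L_{\bar d}\ge C_1\}$ has type-I error $q_{\bar d}\le C_1^{-1}$ by Markov's inequality under $\bq$. We also need power $p_{\bar d}\ge 1-\varepsilon_0$ with $\varepsilon_0$ small.

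The difficulty is that $\lambda\to\infty$. Since $\lambda s\to\infty$, a depth-$\bar d$ correlated pair shares $\poi(\lambda s)$ root children. It therefore seems reasonable to choose $\bar d=1$ or $2$ and argue directly: the degree counts give a likelihood ratio concentrating on a large value. This is the analogue of the initialization in~\cite{Luca24Stat}.

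\textbf{Induction step (Step~2).} Write the statistic in~\eqref{eq:def_iter} as
\[
S=\sum_{(h,\td h)\in\cf_k}(N_h-\lambda p_h)(N_{\td h}-\lambda p_{\td h}).
\]
Under $\bq$, the counts $N_h(t)$ are independent $\poi(\lambda p_h)$ across $h$ (Poisson thinning), and the tree $t$ is independent of $\td t$. Hence $\E S=0$ and
\[
\mathrm{Var}(S)=\lambda^2\sum_{(h,\td h)\in\cf_k}p_hp_{\td h}=\lambda^2 q_k.
\]
Higher moments can be bounded similarly, for example through a fourth-moment or Bernstein-type bound. The goal is to obtain
\[
q_{k+1}\le \bq(S>\sigma_k)\lesssim \lambda^2 q_k/\sigma_k^2,
\]
improved to a bound decaying by a factor $e^{-1}$ per step using the fact that $\sigma_k$ is at least $3\lambda^{3/4}$.

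Under $\bp$, decompose each count as $N_h=N_h^*+N_h^+$, where $N_h^*$ counts children coming from the intersection tree. Then
\[
\E_{\bp}S=\lambda s\,\bp\bigl(([\ch_k],[\td\ch_k])\in\cf_k\bigr)\approx\lambda s\,p_k,
\]
which is much larger than $\sigma_k$. A Chebyshev bound then gives
\[
1-p_{k+1}\lesssim \frac{\mathrm{Var}_{\bp}(S)}{(\lambda s p_k-\sigma_k)^2}.
\]
The variance here involves $\sum_{h,\td h} \bp(\cdot)$-type terms, and controlling them is where the Otter condition $s^2>C_{\mathrm{Otter}}$ enters. As in~\cite{Luca24Stat}, cross terms are bounded by $\sum_t \bp(t,\cdot)^2/\bq$-type sums, which converge precisely when $s^2 C_{\mathrm{Otter}}^{-1}>1$ in the appropriate normalization.

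\textbf{Main obstacle and output.} The hardest step is making these variance bounds uniform in $\lambda\to\infty$ while tracking the $\lambda$-dependence explicitly. The schedule $\sigma_k=\max\{\frac1{10}\lambda e^{(\bar d-k)/4},3\lambda^{3/4}\}$ suggests the following mechanism. While $q_k$ is large, the first term of $\sigma_k$ drives $q_k$ down geometrically. Once $q_k\lesssim\lambda^{-1/2}$, the floor $3\lambda^{3/4}$ gives
\[
q_{k+1}\lesssim \frac{\lambda^2q_k}{\lambda^{3/2}},
\]
which is problematic, so the plan must instead use Poisson tail (Chernoff-type) bounds on $S$ rather than Chebyshev in that regime.

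For the power, the recursion I expect is $1-p_{k+1}\lesssim \lambda^{-1}+(1-p_k)^2$, which stabilizes at $C_4\lambda^{-1}$. Iterating the type-I and power recursions over the $d-l-\bar d$ steps yields~\eqref{eq:L2_Type1} and~\eqref{eq:L2_Power}. I would first attempt the type-I recursion via exponential moments of the product of two independent centered Poisson sums.
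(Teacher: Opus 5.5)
\textbf{The gap is in the base case.} Your induction skeleton is the paper's: Markov under $\bq$ at depth $\bar d$, moment bounds on the boosting statistic, $\E_{\bp}S=\lambda s\,\bp(\cf_k)$, and Chebyshev for the power. However, the proposed initialization at $\bar d\in\{1,2\}$ cannot work for $s<1$, because the depth-$k$ likelihood ratio does not become large as $\lambda\to\infty$. Indeed $\E_{\bq}[L_k^2]=\E_{\bp}[L_k]=e^{2\chi_k}$, where $\chi_k=\frac12\log\sum_{t\in\cz_k}s^{2|\cv(t)|-2}$ does not depend on $\lambda$. For example, $e^{2\chi_1}=1/(1-s^2)$; as $\lambda\to\infty$ the pair of root degrees is essentially one Gaussian pair with correlation $s$.

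Two consequences follow. First, $\bp(L_k\ge C_1)\le e^{2\chi_k}/C_1$. Second, by Cauchy--Schwarz every set $\cf$ satisfies $\bp(\cf)^2\le e^{2\chi_k}\,\bq(\cf)$. The Chebyshev power bound contains the term $\frac{4(1+s^2)}{s^2}\,\bq(\cf_k)/\bp(\cf_k)^2$, so boosting can only start at a depth where $e^{2\chi_{\bar d}}$ exceeds a constant of order $1/s^2$. At depth $1$ or $2$ this fails for $s$ near $\sqrt{C_{\mathrm{Otter}}}$, and power $1-\varepsilon_0$ is out of reach.

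The missing idea is to take $\bar d$ to be a large constant, depending on $s$ and $C_1$, so that $\chi_{\bar d}$ is large. This is possible exactly because $s^2>C_{\mathrm{Otter}}$: Otter's count $|\{t\in\cz:|\cv(t)|=k\}|\gtrsim k^{-3/2}C_{\mathrm{Otter}}^{-k}$ forces $\chi_k\to\infty$. Then combine three facts:
\begin{itemize}
\item $\E_{\bp}L_{\bar d}=e^{2\chi_{\bar d}}$;
\item $\liminf_{\lambda\to\infty}\E_{\bp}\log L_{\bar d}\ge\chi_{\bar d}$;
\item the elementary bound $\bp(L\ge C_1)\ge(\E_{\bp}\log L-C_1^{-1}-\log C_1)/\log\E_{\bp}L$.
\end{itemize}
Together these give power at least $0.4$ (not $1-\varepsilon_0$) for $\lambda\ge\bar\lambda$. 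Your depth-one intuition is correct only for $s=1$, where $L_{\bar d}(t,t)=1/\bp([\ch_{\bar d}]=t)\ge\sqrt{\pi\lambda}$.

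\textbf{Where the Otter condition does not enter, and how the recursions close.} Contrary to your plan, the Otter condition plays no role in the boosting variance. For every $s$,
\[
\mathrm{Var}_{\bp}(S)\le\lambda s\,\bp(\cf_k)+\lambda^2(1+s^2)\bq(\cf_k).
\]
The reason is that the cross term $\sum\bp(h,\td h')\bp(h',\td h)$ over $(h,\td h),(h',\td h')\in\cf_k$ is at most $\bq(\cf_k)$: drop the constraint on $(h',\td h')$ and use the marginal identities $\sum_{\td h'}\bp(h,\td h')=p_h$ and $\sum_{h'}\bp(h',\td h)=p_{\td h}$.

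With $p_k\ge 0.4$, Chebyshev then gives $1-p_{k+1}\le\frac{10}{\lambda s}+\frac{50}{s^2}q_k$. So the power is driven by the previous step's type-I error, not by $(1-p_k)^2$, and constant power suffices throughout. The floor $C_4\lambda^{-1}$ is reached because $q_{d-l-1}\le C_1^{-1}e^{\bar d+1-(d-l)}$ and $e^{-(d-l)}\le\lambda^{-1}$.

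For the type-I recursion, you correctly saw that Chebyshev breaks down at the floor $3\lambda^{3/4}$. Exponential moments are not needed, though: the fourth moment you set aside is exactly what this threshold is designed for. One has $\E_{\bq}S^4\le 36\lambda^4q_k^2+13\lambda^3q_k$ and $\sigma_k^4\ge\max\{10^{-4}\lambda^4e^{\bar d-k},81\lambda^3\}$. Markov's inequality on $S^4$, with $q_k\le C_1^{-1}e^{\bar d-k}$ and $C_1=2\times 10^6$, yields $q_{k+1}\le q_k/e$.
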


We next prove Proposition~\ref{prop:CorTest} based on Proposition~\ref{lem:matching}.

\begin{proof}
Recall the test $\Phi$ constructed in Appendix~\ref{appd:construction} from the set $\cf_{d-l}$ of Proposition~\ref{lem:matching}. We prove that $\Phi$ satisfies the type-I error bound~\eqref{eq:CorTest-typeI} and the power bound~\eqref{eq:CorTest-power} in the following two parts.

\underline{\textbf{Part 1: Proof of \eqref{eq:CorTest-typeI}}.} Suppose $(\ch,\td\ch)\sim\bq$. Throughout the proof, let $\cs_l = \cs_l(\ch)$ and $\td\cs_l = \cs_l(\td\ch)$ be the level-$l$ vertex sets of $\ch$ and $\td\ch$, respectively. For each $u\in\cs_l$ and $\td v\in\td\cs_l$, let
$\ch_{d,u}$ and $\td\ch_{d,\td v}$ denote the subtrees of
$\ch_d$ and $\td\ch_d$ rooted at $u$ and $\td v$, respectively.
We write $\cs_l=\{1, 2, \ldots, m\}$, $\td\cs_l=\{1, 2, \ldots, \tilde m\}$, and assume without loss of generality that $m\ge \td m$. Given the corresponding bipartite graph $\cw$, for each injective mapping $\td \pi:\td\cs_l\rightarrow \cs_l$,
we can define an associated matching: $${\cm}_{\td\pi}({\cw})=\{(u, \td v): u=\td\pi(\td v)\hbox{ and } (u, \td v)\in{\cw}\}.$$
It is easy to see that the MaxMatching must be the one of them because any matching can be viewed as a partial injection from $\td \cs_l$ to $\cs_l$.

Now given $\cs_l=\{1, 2, \cdots, m\}$, $\td\cs_l=\{1, 2, \cdots, \tilde m\},$ and a fixed injection $\td\pi$, we regard the bipartite graph $\cw$ as random, and bound
$\bq\left(
        |{\cm}_{\td\pi}(\cw)|\ge \frac{1}{2C_2}(\lambda s)^l\right).$
Note that events $\{(u, \td\pi^{-1}(\td u))\in {\cw}\}$ and $\{(v, \td\pi^{-1}(\td v))\in {\cw}\}$ are i.i.d. given $u\not=v$ and $\td\pi$ is an injection, where $\tilde \pi^{-1}$ is the inverse of $\td \pi$. Furthermore, $\{(u, \td\pi^{-1}(\td u))\in {\cw}\}$ if and only if $([\ch_{d,u}],[\td\ch_{d, \td\pi^{-1}(\td u)}])\in\cf_{d-l},$ and, by \eqref{eq:L2_Type1} in Proposition~\ref{lem:matching}, we have $\bq\bigl(([\ch_{d,u}],[\td\ch_{d, \td\pi^{-1}(\td u)}])\in\cf_{d-l}\bigr)\leq C_3 e^{-d+l}$ for all sufficiently large $n$, where $C_3$ is a constant independent with $n$. Therefore,
\begin{align}
    |{\cm}_{\td\pi}({\cw})|\stackrel{\text{sto.}}{\leq}\mathrm{Bin}\bigl(\td m,C_3e^{-d+l}\bigr).
    \nonumber
\end{align}
By Chernoff bound \eqref{eq:CB_Upper} in Lemma \ref{lem:CB},
\begin{align}
    &\bq\left(
        |{\cm}_{\td\pi}({\cw})|\ge \frac{1}{2C_2}(\lambda s)^l \,\middle|\,  \max\{|{\cs}_l|,|{\td\cs}_l|\} \leq (C_2\lambda)^l\right)\nonumber \\ 
        \le{}& \mathsf P\left( \mathrm{Bin}\bigl((C_2\lambda)^l,C_3e^{-d+l}\bigr)\ge \frac{1}{2C_2}(\lambda s)^l \right) \nonumber\\
    \le {}&  \exp\left( \frac{1}{2C_2}(\lambda s)^l \Bigl[ -d+l(1+\log C_2-\log s)+\log(2eC_2C_3) \Bigr] \right) \nonumber\\
    \le{} & \exp\left(  -\frac{1}{4C_2}(\lambda s)^ld \right),
    \label{eq:typeI-fixed-cn}
\end{align}
where the last inequality holds since
\begin{align}
    -\frac{d}{2} \ge -d+l(1+\log C_2-\log s)+\log(2eC_2C_3) \nonumber
\end{align}
for all sufficiently large $n$.
Applying the union bound on all possible injections $\td\pi$, we have 
\begin{align}
    &\bq\left(
        \texttt{MaxMatching}(\cw)\ge \frac{1}{2C_2}(\lambda s)^l \,\middle|\, \max\{|{\cs}_l|,|{\td\cs}_l|\} \leq (C_2\lambda)^l \right)\nonumber\\
    \le{}& \bigl((C_2\lambda)^l\bigr)! \cdot
     \exp\left( -\frac{1}{4C_2}(\lambda s)^ld \right) \nonumber\\
    \le{}& \exp\left( (\lambda s)^l \left[ 2l\left(\frac{C_2}{s}\right)^l\log \lambda  -\frac{d}{4C_2}\right]\right).
    \label{eq:s3curub}
\end{align}
where \eqref{eq:s3curub} holds because $((C_2\lambda)^l)!\leq \exp{(l(C_2\lambda)^l\log(C_2\lambda))}$ and $\log(C_2\lambda)\le 2\log \lambda$ for large enough $n$.
Now define $\ca_5$ to be the event
\begin{align}
    \ca_5 := \left\{ \max\{\lvert\cs_l\rvert,\lvert\td\cs_l\rvert\}\le (C_2\lambda)^l
    \right\}. \label{eq:def-A5}
\end{align}
By \eqref{eq:gen_LB} in Lemma \ref{lem:s3bos},
\begin{align}
    \bq(\ca_5^c)
    &\le  \P\left(\lvert\cs_l\rvert\ge (C_2\lambda)^l\right) +\P\left(\lvert\td\cs_l\rvert\ge (C_2\lambda)^l\right) \nonumber\\
    &\le 2\exp\left( -\tfrac13\lambda C_2^l+\lambda(e^{2/3}-1) \right).\label{eq:ca5-prob}
\end{align}
Combining~\eqref{eq:ca5-prob} and~\eqref{eq:s3curub}, we conclude
\begin{align}
&    {\bq}\bigl(\Phi([\ch_d],[\td\ch_d])=1\bigr)=\bq\left(
        \texttt{MaxMatching}(\cw)\ge \frac{1}{2C_2}(\lambda s)^l\right)\nonumber\\
    \le& \bq\left(
        \texttt{MaxMatching}(\cw)\ge \frac{1}{2C_2}(\lambda s)^l
        \,\middle|\,\ca_5\right)
    + \bq(\ca_5^c)\nonumber\\
    \le & \exp\left( (\lambda s)^l \left[ 2l\left(\frac{C_2}{s}\right)^l\log \lambda  -\frac{d}{4C_2}\right]\right)+2\exp\left( -\tfrac13\lambda C_2^l+\lambda(e^{2/3}-1) \right).\label{eq:two_terms}
\end{align}

To further bound~\eqref{eq:two_terms}, we choose constants $\beta$ and $C_2$ as follows. First, let $\eta=\frac12(\gamma+\alpha-1)$. By our assumptions that $\alpha\in (0,1)$ and $\gamma\in (1-\alpha,1)$, we have $\eta\in (0,\gamma+\alpha-1)$. We then choose 
\[
\beta=\begin{cases}
    -\frac{\eta}{\log s}\,\quad &\text{  when }\,s<1\\
    \eta\,&\text{  when }\, s=1
\end{cases},
\]
and
\[
C_2=\exp\left(\frac{\gamma-\alpha-\eta+1}{2\beta}\right).
\]
The choice of $\beta$ and $C_2$ satisfies the following three properties that are crucial for establishing the desired bound:
\begin{itemize}
    \item \textbf{Property 1: $C_2>1$.} 
    
    To see this, notice that $\beta>0$ by its definition, and we have
    \[
    \gamma-\alpha-\eta+1=\frac{\gamma-3\alpha+3}{2}>0,
    \]
    where the last inequality follows because $\alpha<1$.
    \item \textbf{Property 2: $\beta\log\frac{C_2}{s}<\gamma$.} 
    
    To see this, we separately consider the cases of $s<1$ and $s=1$. First, assume $s<1$. Then we have
    \[
    \beta\log\frac{C_2}{s}=\frac{\gamma-\alpha-\eta+1}{2}+\eta=\frac{\gamma-\alpha+\eta+1}{2}<\gamma,
    \]
    where the last equality follows because $\eta<\gamma+\alpha-1$. Now assume $s=1$. Then we have
    \[
    \beta\log\frac{C_2}{s}=\beta\log C_2=\frac{\gamma-\alpha-\eta+1}{2}<\gamma.
    \]
    \item \textbf{Property 3: $\alpha+\beta\log C_2>1$.} 
    
    This property follows because
    \[
    \alpha+\beta\log C_2=\alpha+\frac{\gamma-\alpha-\eta+1}{2}=\frac{\gamma+\alpha-\eta+1}{2}>1,
    \]
    where the last inequality follows because $\eta<\gamma+\alpha-1$.
\end{itemize}
We now use these properties to bound the two terms in~\eqref{eq:two_terms}. First, we have
\[
2l\left(\frac{C_2}{s}\right)^l\log \lambda-\frac{d}{4C_2}=\exp{\Bigl((1+o(1))\beta\log\frac{C_2}{s}\log\log n\Bigr)}-\frac{1}{4C_2}\exp{(\gamma \log\log n)}.
\]
Property 2 implies that the negative term on the right-hand side of the above equation dominates. Then we know that, for all large enough $n$,
\begin{align}
    \exp\left( (\lambda s)^l \left[ 2l\left(\frac{C_2}{s}\right)^l\log \lambda  -\frac{d}{4C_2}\right]\right)&\le \exp\left(-\frac{(\lambda s)^ld}{8C_2}\right)=n^{-\frac{(\lambda s)^ld}{8C_2\log n}}\le \frac12 n^{-\log\log n},
    \label{eq:first-term}
\end{align}
where the last inequality follows because $(\lambda s)^l=\exp((\alpha\beta+o(1))(\log\log n)^2)=\omega(\log n\log\log n)$.

Second, we can write
\begin{align*}
    &2\exp\left( -\tfrac13\lambda C_2^l+\lambda(e^{2/3}-1) \right)\nonumber\\
    ={}&2\exp\left( -\tfrac13 (\log n)^{\alpha+\beta \log C_2+o(1)}+(e^{2/3}-1)(\log n)^{\alpha +o(1)} \right)\nonumber\\
    ={}&2\exp\left(-\log n\left(\tfrac13 (\log n)^{\alpha+\beta \log C_2-1+o(1)}-(e^{2/3}-1)(\log n)^{\alpha -1+o(1)}\right)\right)\nonumber.
\end{align*}
Property 3 implies that
\[
\tfrac13 (\log n)^{\alpha+\beta \log C_2-1+o(1)}-(e^{2/3}-1)(\log n)^{\alpha-1 +o(1)}=\omega(\log\log n). 
\]
It then follows that 
\begin{equation}
2\exp\left( -\tfrac13\lambda C_2^l+\lambda(e^{2/3}-1) \right)\le \frac12n^{-\log\log n},
\label{eq:2nd-term}
\end{equation}
for all large enough $n$.

Substituting~\eqref{eq:first-term} and~\eqref{eq:2nd-term} into~\eqref{eq:two_terms} yields that for all large enough $n$, 
\[
{\bq}\bigl(\Phi([\ch_d],[\td\ch_d])=1\bigr)\le n^{-\log\log n},
\]
which completes the proof of~\eqref{eq:CorTest-typeI}.

\underline{\textbf{Part 2: Proof of \eqref{eq:CorTest-power}}.} 
Suppose $(\ch,\td\ch)\sim\bp$, and let $\ch^*$ be the intersection tree in the construction of the correlated trees. We write $\cs_l^*$ for the set of vertices in the $l$-th generation of $\ch^*$. We now show that for the same constant $\beta>0$ and $C_2>1$ specified in the proof of \eqref{eq:CorTest-typeI}, the power of the test $\Phi$ defined in \eqref{eq:Phi_def} satisfies \eqref{eq:CorTest-power}.

For each $u^*\in\cs_l^*$, let $u$ and $\td u$ denote the corresponding nodes in $\ch_d$ and $\td\ch_d$ at level $l$, respectively. If $([\ch_{d,u}],[\td\ch_{d,\td u}])\in\cf_{d-l}$, then $(u,\td u)\in \ce(\cw)$ by the definition of $\cw$. Therefore, the set of pairs $(u,\td u)$ such that $([\ch_{d,u}],[\td\ch_{d,\td u}])\in\cf_{d-l}$ forms a matching of $\cw$, and we have
\[
\texttt{MaxMatching}(\cw)\geq \sum_{u^*\in \cs^*_l}\mathbbm{1}\{([\ch_{d,u}],[\td\ch_{d,\td u}])\in\cf_{d-l}\}.
\]
Therefore,
\begin{align}
    \bp\left(
\texttt{MaxMatching}(\cw)
\le \frac{1}{2C_2}(\lambda s)^{l}
\right)
\le 
\bp\left(
\sum_{u^*\in \cs^*_l}\mathbbm{1}\{([\ch_{d,u}],[\td\ch_{d,\td u}])\in\cf_{d-l}\} \le \frac{1}{2C_2}(\lambda s)^{l}
\right).\label{eq:s3_bd1}
\end{align}
Next, we define a typical event
\begin{align*}
    \mathcal{A}_4 
&:= \Bigl\{ |\cs_l^*| \ge C_2^{-1}(\lambda s)^{l} \Bigr\}.
\end{align*}
The probability $\bp(\ca_4)$ can be bounded by Lemma \ref{lem:s3bos}. First, since $\ch^*$ is a Galton--Watson tree with offspring distribution $\mathrm{Poi}(\lambda s)$, we have
\begin{align}
    l=\beta\log\log n\le \frac{s}{2}(\log n)^{\alpha+o(1)}=\frac{\lambda s}{2}
    \nonumber
\end{align}
for all sufficiently large $n$, which satisfies the condition of Lemma \ref{lem:s3bos}. Therefore by \eqref{eq:gen_UB},
\begin{align}
    \mathsf{P}(\ca_4)\geq  1-\frac{C_2^2}{(C_2-1)^2}\cdot\frac{1}{\lambda s-1}. \label{eq:ca4-prob}
\end{align}
Next, since conditioned on $|\cs^*_l|$, the subtree pairs $\{([\ch_{d,u}],[\td\ch_{d,\td u}]):u^*\in\cs_l^*\}$ are independent and identically distributed with the correlated model $\bp$. Therefore,
\begin{align}
    \sum_{u^*\in \cs^*_l}\mathbbm{1}\{([\ch_{d,u}],[\td\ch_{d,\td u}])\in\cf_{d-l}\}\sim \mathrm{Bin}\left(
|\cs^*_l|,\, \bp\bigl(([\ch_{d-l}],[\td\ch_{d-l}])\in\cf_{d-l}\bigr)
\right). \nonumber
\end{align}

Conditioned on event $\ca_4$, we have $\lvert\cs^*_l\rvert\ge C_2^{-1}(\lambda s)^{l}$. Moreover, by~\eqref{eq:L2_Power} in Proposition~\ref{lem:matching}, we have $\bp\bigl(([\ch_{d-l}],[\td\ch_{d-l}])\in\cf_{d-l}\bigr)\geq  1-C_4 \lambda^{-1}$ for all sufficiently large $n$, where $C_4$ is a constant independent with $n$. Therefore,
\begin{align}
    \sum_{u^*\in \cs^*_l}\mathbbm{1}\{([\ch_{d,u}],[\td\ch_{d,\td u}])\in\cf_{d-l}\}\stackrel{\text{sto.}}{\geq}\mathrm{Bin}\left(\frac{1}{C_2}(\lambda s)^{l},\, 1-C_4\lambda^{-1}\right).\label{eq:s3_power_bd2}
\end{align}
By Chernoff bound \eqref{eq:CB_Lower} in Lemma \ref{lem:CB}, we have
\begin{align} 
&\bp\left(
\sum_{u^*\in \cs^*_l}\mathbbm{1}\{([\ch_{d,u}],[\td\ch_{d,\td u}])\in\cf_{d-l}\} \le \frac{1}{2C_2}(\lambda s)^{l}\,\middle|\, \ca_4
\right)\nonumber\\
\leq{} & \P\left( \textrm{Bin}\left(
\frac{1}{C_2}(\lambda s)^{l},\, 1-C_4\lambda^{-1}
\right)\le \frac{1}{2C_2}(\lambda s)^{l}
\right)\nonumber\\
\leq{} &
\exp\left(
-\frac{1}{8C_2}(\lambda s)^{l}
\cdot
\frac{(1-2C_4\lambda^{-1})^2}{1-C_4\lambda^{-1}}
\right)\nonumber\\
={} & O\left(n^{-\log\log n}\right),\label{s3_bd2}
\end{align}
where the last equality follows because $(\lambda s)^l=\omega(\log n\log\log n)$.
Therefore, combining \eqref{eq:ca4-prob} and \eqref{s3_bd2} gives
\begin{align*}
    &\bp\left(
\texttt{MaxMatching}(\cw)
\geq \frac{1}{2C_2}(\lambda s)^{l}
\right) \\\geq{}& 1- \bp\left(
|\{u^*\in \cs^*_l: ([\ch_{d,u}],[\td\ch_{d,\td u}])\in\cf_{d-l}\}| \le \frac{1}{2C_2}(\lambda s)^{l}\,\middle|\, \ca_4
\right)- \bp(\ca_4^c)\\
\geq{} & 1- O\left(n^{-\log\log n}\right)-\frac{C_2^2}{(C_2-1)^2}\cdot\frac{1}{\lambda s-1}\\
={}& 1-O\left(\frac{1}{(\log n)^{\alpha/2}}\right),
\end{align*}
which completes the proof of \eqref{eq:CorTest-power}.
\end{proof}

\subsection{Proof of Proposition \ref{lem:matching}}
\label{appd:fd-l}
Recall the set $\cf_{d-l}$ and the constants $C_1,\bar d,\bar\lambda$ from the construction in Appendix~\ref{appd:construction}. We prove this proposition by induction that for each $k\in \{\bar d,\ldots,d-l-1\}$, the set $\cf_{k+1}$ constructed in Appendix~\ref{appd:construction} satisfies
\begin{equation*}
\bq\bigl(([\ch_{k+1}], [\td\ch_{k+1}]) \in \cf_{k+1}\bigr)\le e^{\bar d-k-1}C_1^{-1},
\end{equation*}
and
\[
\bp\bigl(([\ch_{k+1}], [\td\ch_{k+1}]) \in \cf_{k+1}\bigr)
    \ge \max\left\{ 1 - \frac{10}{\lambda s} - \frac{50}{s^2} \bq\bigl(([\ch_k], [\td\ch_k]) \in \cf_k\bigr), 0.4 \right\}.
\]
First consider the base case with $k=\bar d$. Recall that $\lambda=(\log n)^{\alpha+o(1)}$. Suppose $n$ is large enough so that $\lambda\ge \max\left\{\bar\lambda,\frac{15^4}{C_{\mathrm{Otter}}^2}\right\}$, then it follows by Lemma~\ref{prop:initialization} in Appendix~\ref{appd:init} that
\[
\bq\bigl(([\ch_{\bar d}], [\td\ch_{\bar d}]) \in \cf_{\bar d}\bigr) \le C_1^{-1},
\]
and
\[
\bp\bigl(([\ch_{\bar d}], [\td\ch_{\bar d}]) \in \cf_{\bar d}\bigr) \ge 0.4.
\]
This shows the initial set $\cf_{\bar d}$ satisfies condition~\eqref{eq:typei-k} and~\eqref{eq:power-k} of Lemma~\ref{prop:iteration} in Appendix~\ref{appd:boost}, so we can apply the lemma to obtain 
\begin{equation*}
\bq\bigl(([\ch_{\bar d+1}], [\td\ch_{\bar d+1}]) \in \cf_{\bar d+1}\bigr)
    \le e^{-1}C_1^{-1},
\end{equation*}
and
\[
\bp\bigl(([\ch_{\bar d+1}], [\td\ch_{\bar d+1}]) \in \cf_{\bar d+1}\bigr)
    \ge \max\left\{ 1 - \frac{10}{\lambda s} - \frac{50}{s^2} \bq\bigl(([\ch_{\bar d}], [\td\ch_{\bar d}]) \in \cf_{\bar d}\bigr), 0.4 \right\},
\]
which completes the proof for the base case.

Now suppose the induction hypothesis holds for the set $\cf_k$.
We can again apply Lemma~\ref{prop:iteration} to obtain
\begin{align*}
    \bq\bigl(([\ch_{k+1}], [\td\ch_{k+1}]) \in \cf_{k+1}\bigr)
    \le e^{\bar d-k-1}C_1^{-1},
\end{align*}
and 
\[
\bp\bigl(([\ch_{k+1}], [\td\ch_{k+1}]) \in \cf_{k+1}\bigr)
    \ge
    \max\left\{
        1 - \frac{10}{\lambda s} - \frac{50}{s^2} \bq\bigl(([\ch_k], [\td\ch_k]) \in \cf_k\bigr), 0.4\right\},
\]
which completes the proof of the induction step.

By choosing $C_3:=C_1^{-1}e^{\bar d}$, we have
\[
\bq\bigl(([\ch_{d-l}], [\td\ch_{d-l}]) \in \cf_{d-l}\bigr)
    \le
    C_1^{-1} e^{-(d-l-\bar d)}
    =
    C_3 e^{-(d-l)},
\]
which completes the proof of~\eqref{eq:L2_Type1}.

On the other hand, we have
\begin{align*}
\bp\bigl(([\ch_{d-l}], [\td\ch_{d-l}]) \in \cf_{d-l}\bigr)
    &\ge
    1 - \frac{10}{\lambda s}
    - \frac{50}{s^2} \bq\bigl(([\ch_{d-l-1}], [\td\ch_{d-l-1}]) \in \cf_{d-l-1}\bigr)\\
    &\ge     1 - \frac{10}{\lambda s}
    - \frac{50}{s^2}e^{-(d-1-\bar d-l)}C_1^{-1}.
\end{align*}
Suppose $n$ is large enough so that $e^{-(d-l)}\le \lambda^{-1}$. By choosing $C_4=\frac{10}{s}+\frac{50C_1^{-1}}{s^2}e^{\bar d+1}$, we have
\[
\bp\bigl(([\ch_{d-l}], [\td\ch_{d-l}]) \in \cf_{d-l}\bigr)\ge 1-C_4\lambda^{-1},
\]
which completes the proof of~\eqref{eq:L2_Power}.

\subsection{Guarantee for the initialization}
\label{appd:init}
In this and the following subsections, we establish the two key lemmas used to prove Proposition \ref{lem:matching}. We use the notation $\mu$ instead of $\lambda$ to denote the mean of the offspring in the independent and correlated Galton--Watson tree distributions $\bq$ and $\bp$. This is to emphasize that the results in these two subsections hold for an arbitrary offspring mean $\mu$, and are not tied to the specific choice $\lambda=(\log n)^{\alpha+o(1)}$ used in the graph model considered in this paper.

\begin{lem}\label{prop:initialization}
Assume $s\in(\sqrt{C_{\mathrm{Otter}}},1]$. Let $C_1>1$ be a constant.
Let $\ch$ and $\td\ch$ denote two \GW trees with $\mathrm{Poi}(\mu)$ offsprings. 
Then there exist constants $\bar d\in\mathbb{N}$ and $\bar\lambda>1$, depending only on $s$ and $C_1$, such that for all $\mu\ge \bar\lambda$,
\begin{align}
    \bq\bigl(L_{\bar d}([\ch_{\bar d}],[\td\ch_{\bar d}])\ge C_1\bigr)
    &\le C_1^{-1},\label{eq:init-typeI}\\
    \bp\bigl(L_{\bar d}([\ch_{\bar d}],[\td\ch_{\bar d}])\ge C_1\bigr)
    &\ge 0.4 .\label{eq:init-power}
\end{align}
\end{lem}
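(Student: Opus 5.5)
We prove Lemma~\ref{prop:initialization} by reducing it to a second-moment statement about the likelihood ratio at a fixed depth, uniformly for large mean $\mu$. The type-I bound \eqref{eq:init-typeI} is immediate from Markov's inequality. Since $\E_{\bq}[L_{\bar d}]=1$, we get $\bq(L_{\bar d}\ge C_1)\le C_1^{-1}$ for every $\bar d$ and every $\mu$. All the work is therefore in the power bound \eqref{eq:init-power}.

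For the power, the plan is to write $\E_{\bp}[L_{\bar d}]=\E_{\bq}[L_{\bar d}^2]=1+\chi^2(\bp_{\bar d}\|\bq_{\bar d})$ and to show that this quantity can be made arbitrarily large by choosing $\bar d$ large, uniformly in $\mu\ge\bar\lambda$. We will use the known result from \cite{Luca24LRT,Luca24Stat} (the Otter threshold). For $s>\sqrt{C_{\mathrm{Otter}}}$ the unlabeled-tree likelihood ratio satisfies $\E_{\bq}[L_k^2]\to\infty$ as $k\to\infty$. The mechanism is that $\E_{\bq}[L_k^2]$ is bounded below by the expected number of automorphism-weighted common subtrees, which grows like $(s^2/C_{\mathrm{Otter}})^{\text{size}}$. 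What remains is to pass from large mean to strong test power, and large second moment alone does not give $\bp(L\ge C_1)\ge 0.4$.

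So the key step is a truncation argument. I would use
\[
\bp(L<C_1)=\E_{\bq}[L\mathbbm{1}\{L<C_1\}]\le C_1
\]
only as a trivial bound, and instead control the total variation directly. Concretely, $\bp(L\ge C_1)\ge \dtv(\bp_{\bar d},\bq_{\bar d})-C_1\,\bq(L\ge C_1)$ is too weak, so I would use
\[
\bp(L< C_1)\le \bq(L<C_1)\cdot C_1\wedge\ldots
\]
More cleanly, the plan is to show $\dtv(\bp_{\bar d},\bq_{\bar d})\to 1$ as $\bar d\to\infty$, with the convergence uniform in large $\mu$. Given this, $\bp(L\ge C_1)\ge \bp(A)-C_1\bq(A^c\cap\{L\ge C_1\})$, where $A$ is a near-separating event. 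That yields $\bp(L\ge C_1)\ge 1-\epsilon-C_1\epsilon'$ for a suitable $\epsilon,\epsilon'$, which is at least $0.4$ once $\bar d$ is large.

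To get $\dtv\to1$ uniformly in large $\mu$, I would exhibit an explicit test. One option is to count pairs of matched subtrees at some depth, as in the chandelier/tree-counting statistics. Its power and type-I error would come from a second-moment calculation in which $\mu$ enters only through lower-order corrections, because the signal-to-noise ratio per generation is governed by $s^2/C_{\mathrm{Otter}}>1$ independently of $\mu$ once $\mu$ is large. Taking $\bar\lambda$ large then suppresses the $O(1/\mu)$ error terms coming from Poisson fluctuations.

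The main obstacle is this uniformity in $\mu$. The existing results are stated for fixed $\mu$ and depth going to infinity, so I would need to track the $\mu$-dependence of the variance of the counting statistic and show that it stays bounded as $\mu\to\infty$. If this proves too delicate, the fallback is to pick $\bar d$ first, depending on $s$ and $C_1$, using the limiting $\mu\to\infty$ behavior (Poisson-process/large-degree scaling of subtree counts). Continuity would then fix $\bar\lambda$.
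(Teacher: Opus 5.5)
The type-I bound via Markov is fine. The reduction from a near-separating event to power is also salvageable once you write it correctly: if $\bp(A)\ge 1-\epsilon$ and $\bq(A)\le\epsilon'$, then $\bp(L<C_1)\le \bp(A^c)+\be_\bq[L\indi\{L<C_1\}\indi_A]\le \epsilon+C_1\epsilon'$. Your displayed versions involving $\bq(A^c\cap\{L\ge C_1\})$ are not valid inequalities.

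\textbf{The gap.} The step everything rests on, that $\dtv(\bp_{\bar d},\bq_{\bar d})$ is close to $1$, is never proved, and as literally stated it is false. Fix any $\mu$. With probability $e^{-\mu s}$ the root of the intersection tree $\ch^*$ has no children. On that event the pair is independent, with root degrees $\poi(\mu(1-s))$ and i.i.d.\ $\mathrm{GW}(\mu)$ subtrees below. Its density with respect to $\bq$ is $e^{2\mu s}(1-s)^{c+\td c}\le e^{2\mu s}$, where $c,\td c$ are the root degrees. Hence $\dtv(\bp_d,\bq_d)\le 1-e^{-2\mu s}$ for every depth $d$, so convergence to $1$ as $\bar d\to\infty$ cannot hold for any fixed $\mu$.

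Only a double limit can work: $\mu\to\infty$ at fixed $\bar d$, then $\bar d$ large. For that you offer only ``a counting statistic with a second-moment calculation,'' and this is exactly where the difficulty lies. The identity $\be_\bq[L_k^2]=\sum_{t\in\cz_k}s^{2|\cv(t)|-2}$ controls the signal and the $\bq$-variance of the natural low-degree statistic. It does not control the variance under $\bp$, which is the same obstacle you correctly flagged for $L$ itself.

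Moreover, total variation close to $1$ at a fixed depth is a strong-detection statement. In the paper it only emerges after the boosting stage (Lemma~\ref{prop:iteration}, Proposition~\ref{lem:matching}), whose input is precisely the lemma you are proving. So you cannot import it without circularity. Your fallback also does not supply it: ``continuity in $\mu$'' cannot give a bound valid for all $\mu\ge\bar\lambda$; you need a limit statement as $\mu\to\infty$.

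\textbf{The missing idea.} You do not need total variation near $1$, only constant power, and the right functional is the KL divergence rather than the chi-square. Bound $\be_\bp[\log L]\le\int_1^\infty u^{-1}\bp(L\ge u)\,du$ and split the integral at $C_1$ and $C_1\be_\bp[L]$. This gives
\[
\bp(L_k\ge C_1)\ge \frac{\be_\bp[\log L_k]-C_1^{-1}-\log C_1}{\log\be_\bp[L_k]}.
\]
The paper then combines two facts from \cite{Luca24Stat}:
\begin{itemize}
\item $\be_\bp[L_k]=e^{2\chi_k}$ exactly, for every $\mu$;
\item $\liminf_{\mu\to\infty}\be_\bp[\log L_k]\ge\chi_k$.
\end{itemize}
So the right-hand side tends to about $1/2$ once $\chi_{\bar d}$ is large compared with $\log C_1$. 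One chooses $\bar d$ via Otter's asymptotics, which is where $s^2>C_{\mathrm{Otter}}$ enters, and then $\bar\lambda$ from the $\liminf$.

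The case $s=1$ must be handled separately, since $\chi_k=\infty$ there. It is trivial: $\bp$ lives on the diagonal, and $L(t,t)=1/\bp([\ch_{\bar d}]=t)$ is at least of order $\sqrt{\mu}$, which exceeds $C_1$ for large $\mu$.
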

\begin{proof}
We first prove the bound~\eqref{eq:init-typeI}.
Let $k\ge 1$ be an integer. For two trees $(t,\td t)\in\cz_k\times\cz_k$, the likelihood ratio $L_k(t,\td t)$ is defined as
\[
L_k(t,\td t)=\frac{\bp([\ch_k]=t,[\td \ch_k]=\td t)}{\bq ([\ch_k]=t,[\td\ch_k]=\td t)},
\]
and we have $$\E_\bq\left[L_k([\ch_{k}],[\td\ch_{k}])\right]=1.$$
By Markov's inequality, for any $C_1>0$,
\begin{align}
    \bq\bigl(L_k([\ch_k],[\td\ch_k])\ge C_1\bigr) \le C_1^{-1}\E_\bq\left[L_k([\ch_k],[\td\ch_k])\right]
    =C_1^{-1}.
\end{align}
Therefore, the desired bound~\eqref{eq:init-typeI} holds for any $\bar d\ge 1$ and $\mu>1$.

Next, we prove \eqref{eq:init-power} by considering two separate cases: $s\in(\sqrt{C_{\mathrm{Otter}}},1)$ and $s=1$.

\textbf{\underline{Case $1$: $s\in\left(\sqrt{C_{\mathrm{Otter}}},1\right)$}.}
To find constants $\bar d$ and $\bar \lambda$ such that~\eqref{eq:init-power} holds, we first establish the following lower bound for general $k\ge 1$ and $\mu>1$:
\begin{align}
    \bp(L_k([\ch_{k}],[\td\ch_{k}])\ge C_1)\ge
    \frac{\be_\bp[\log L_k([\ch_k],[\td\ch_k])]-C_1^{-1}-\log C_1}{\log\be_\bp[ L_k([\ch_k],[\td\ch_k])]}.
    \label{eq:power-basic-bound}
\end{align}

Since $L_k(t,\td t)>0$ for any $(t,\td t)\in \cz_k\times\cz_k$, we have
\begin{align}
    \log L_k(t,\td t)\le \log L_k(t,\td t)\indi\{L_k(t,\td t)\ge 1\}= \int_1^\infty\frac{1}{u}\indi\left\{
        L_k(t,\td t)\ge u\right\} du,\label{eq:kl-trunc}
\end{align}
where the inequality in \eqref{eq:kl-trunc} holds since $\log L_k(t,\td t)$ is negative when $0<L_k(t,\td t)<1$, and the equality in \eqref{eq:kl-trunc} holds since if $L_k(t,\td t)<1$, both sides are zero; and if $L_k(t,\td t)\ge 1$, the right-hand side equals to $\int_1^{L_k(t,\td t)} u^{-1}du=\log L_k(t,\td t)$. Taking expectation over $([\ch_k],[\td\ch_k])$ under $\bp$, we have
\begin{align}
    \be_\bp[\log L_k([\ch_k],[\td\ch_k])]\le& 
    \be_\bp\left[\int_1^\infty\frac{1}{u}\indi(L_k([\ch_k],[\td\ch_k])\ge u)\,du\right]\\
    =& \int_1^\infty\frac{1}{u}\bp(L_k([\ch_k],[\td\ch_k])\ge u)\,du.
    \label{eq:kl-integral}
\end{align}
where the last equality holds using Tonelli's theorem, which applies because the integrand is nonnegative.
By Lemma~\ref{lem:init-first-moment}, we know that $\be_\bp \left[L_k([\ch_k],[\td\ch_k]) \right]> 1$. We now split the integral on the right-hand side of \eqref{eq:kl-integral} into intervals
\[
[1, C_1),\quad 
[C_1, C_1\be_\bp [L_k([\ch_k],[\td\ch_k])])\quad\hbox{and}\quad 
[C_1\be_\bp [L_k([\ch_k],[\td\ch_k]) ],\infty),
\]
and have
\begin{align}
    &\int_1^\infty\frac{1}{u}\bp(L_k([\ch_k],[\td\ch_k])\ge u)\,du\nonumber\\
    ={}&
    \int_1^{C_1}\frac{1}{u}\bp(L_k([\ch_k],[\td\ch_k])\ge u)\,du+\int_{C_1}^{C_1\be_\bp [L_k([\ch_k],[\td\ch_k]) ]}\frac{1}{u}\bp(L_k([\ch_k],[\td\ch_k])\ge u)\,du
    \nonumber\\
    &+\int_{C_1\be_\bp [L_k([\ch_k],[\td\ch_k]) ]}^\infty\frac{1}{u}\bp(L_k([\ch_k],[\td\ch_k])\ge u)\,du,.
    \label{eq:split}
\end{align}
Next, we bound these three terms separately. For the first term, since $\bp(L_k([\ch_k],[\td\ch_k])\geq u)\leq 1$, 
\begin{equation}
    \int_{1}^{C_1}\frac{1}{u}\bp\left(L_k([\ch_k],[\td\ch_k])\ge u\right) du\le \int_{1}^{C_1}\frac{1}{u}du=\log C_1.\label{eq:first-range-bound}
\end{equation}
For the second term, since $\bp(L_k([\ch_k],[\td\ch_k])\geq u)$ is non-increasing with $u$,
\begin{align}
    &\int_{C_1}^{C_1\be_\bp [L_k([\ch_k],[\td\ch_k]) ]}\frac{1}{u}\bp\left(L_k([\ch_k],[\td\ch_k])\ge u\right) du\nonumber\\
    \le{}& \bp \left(L_k([\ch_k],[\td\ch_k])\ge C_1\right)
    \int_{C_1}^{C_1\be_\bp [L_k([\ch_k],[\td\ch_k]) ]}\frac{1}{u}\,du \nonumber\\
    ={}&\bp\left(L_k([\ch_k],[\td\ch_k])\ge C_1\right)\log\be_\bp [L_k([\ch_k],[\td\ch_k]) ].
    \label{eq:second-range-bound}
\end{align}
For the third term, using $u^{-1}\le (C_1\be_\bp[L_k([\ch_k],[\td\ch_k]) ])^{-1}$, we have
\begin{align}
    &\int_{C_1\be_\bp[L_k([\ch_k],[\td\ch_k]) ]}^\infty\frac{1}{u}\bp(L_k([\ch_k],[\td\ch_k])\ge u)\,du\nonumber\\
    \le{}&
    (C_1\be_\bp[L_k([\ch_k],[\td\ch_k]) ])^{-1}\int_{C_1\be_\bp[L_k([\ch_k],[\td\ch_k]) ]}^\infty\bp(L_k([\ch_k],[\td\ch_k])\ge u)\,du\nonumber\\
    \stackrel{(a)}{=}{}&(C_1\be_\bp[L_k([\ch_k],[\td\ch_k]) ])^{-1}\be_\bp\left[\max\{(L_k([\ch_k],[\td\ch_k])-C_1\be_\bp[L_k([\ch_k],[\td\ch_k]) ]),0\}\right]\nonumber\\
    \le{}&
    (C_1\be_\bp[L_k([\ch_k],[\td\ch_k]) ])^{-1}\be_\bp[L_k([\ch_k],[\td\ch_k]) ]=C_1^{-1},
    \label{eq:third-range-bound}
\end{align}
where (a) follows because for any real valued random variable $X$ and real number $a$, $\be[\max\{X-a,0\}]=\int_a^\infty\P(X\ge u)du$.

Substituting~\eqref{eq:first-range-bound},~\eqref{eq:second-range-bound} and~\eqref{eq:third-range-bound} into~\eqref{eq:split} yields 
\begin{align}
    &\int_1^\infty\frac{1}{u}\bp(L_k([\ch_k],[\td\ch_k])\ge u)\,du\nonumber\\
    \le{}& \log C_1+\bp\left(L_k([\ch_k],[\td\ch_k])\ge C_1\right)\log\be_\bp [L_k([\ch_k],[\td\ch_k]) ]+C_1^{-1},\label{eq:integral}
\end{align}
and the bound~\eqref{eq:power-basic-bound} follows by~\eqref{eq:kl-integral} and~\eqref{eq:integral}.

With~\eqref{eq:power-basic-bound}, we move on to choose $\bar d$ and $\bar \lambda$ so that~\eqref{eq:init-power} holds. Towards this goal, define
\begin{equation}
    \chi_k:=\frac{1}{2}\log\sum_{t\in\cz_k}s^{2|\cv(t)|-2},
    \label{eq:chisq-def}
\end{equation}
for each $k\ge 1$. It follows by Lemmas~\ref{lem:init-first-moment} and~\ref{lem:init-kl-lb} in Appendix~\ref{appd:lr_moments} that
\begin{equation}
    \be_\bp\left[L_k([\ch_k],[\td\ch_k])
    \right] =\exp(2\chi_k)
    \label{eq:EP}
\end{equation}
and
\begin{equation}
    \liminf_{\mu\to\infty}
    \be_\bp\left[\log L_k([\ch_k],[\td\ch_k])\right]\ge\chi_k.
\end{equation}

Now, we choose $\bar d$ so that it satisfies
\begin{align}
\frac{0.99\chi_{\bar d}-C_1^{-1}-\log C_1}{2\chi_{\bar d}}\ge 0.4. \label{eq:bard-choice}
\end{align}    
The existence of such $\bar d$ is guaranteed by Lemma~\ref{lem:init-chi-prop} and the fact that~\eqref{eq:bard-choice} holds when $\chi_{\bar d}\ge 0.19(C_1^{-1}+\log C_1)$. It also follows from Lemma~\ref{lem:init-chi-prop} that $\chi_{\bar d}$ is finite.

With this choice of $\bar d$ in hand, we further choose $\bar \lambda$ such that for any $\mu\ge \bar\lambda$, 
\begin{equation}
    \label{eq:barlambda}
    \be_\bp\left[\log L_{\bar d}([\ch_{\bar d}],[\td\ch_{\bar d}])\right]\ge0.99\chi_{\bar d}.
\end{equation}

Plugging~\eqref{eq:EP} and~\eqref{eq:barlambda} into~\eqref{eq:power-basic-bound} yields
\begin{align*}
    \bp(L_{\bar d}([\ch_{{\bar d}}],[\td\ch_{{\bar d}}])\ge C_1)&\ge \frac{0.99\chi_{\bar d}-C_1^{-1}-\log C_1}{2\chi_{\bar d}}\ge 0.4,
\end{align*}
which completes the proof in this case.

\textbf{\underline{Case $2$: $s=1$}.} Let $\bar d$ be the same as the one chosen in the previous case. Since $s=1$, the two trees satisfy $[\ch_{\bar d}]=[\td\ch_{\bar d}]$
almost surely under $\bp$, and for each rooted unlabeled tree pairs $( t ,\td t )\in\cz_{\bar d}$, 
\begin{align}
    L_{\bar d}( t ,\td t )
    =\begin{cases}
        \frac{\bp\bigl([\ch_{\bar d}]= t ,[\td\ch_{\bar d}]= t \bigr)}
         {\bq\bigl([\ch_{\bar d}]= t ,[\td\ch_{\bar d}]= t \bigr)}=\frac{1}{\bp([\ch_{\bar d}]= t)},\quad& \,\text{ if }\,  t =\td t ,\\
         0, & \,\text{otherwise}\,.
    \end{cases}\\
    \label{eq:L1-star}
\end{align}
Moreover, for the case of  $ t =\td t $, let $c$ denote the root degree of $t$. Then we have 
\begin{align}
    L_{\bar d}( t , t )=\frac{1}{\bp([\ch_{\bar d}]= t)}\ge\frac{1}{\P(\poi(\mu)=c)}\geq \frac{1}{\max_{c'\ge0}\P(\poi(\mu)=c')}\geq \sqrt{\pi}\mu^{1/2}.
\end{align}
where the last inequality holds for any $\mu>1$ by \cite[Eq.~(4.46)]{johnson1992univariate}. Let $\bar \lambda=C_1^2>1$, we have $L_{\bar d}( t , t )\ge C_1$ for any $ t \in\cz_{\bar d}$
given that $\mu\ge\bar \lambda$.  Consequently,
\[
    \bp\bigl(L([\ch_{\bar d}],[\td\ch_{\bar d}])\ge C_1\bigr)=1,
\]
which proves~\eqref{eq:init-power} at $s=1$.

\end{proof}

\subsection{Guarantee for the iterative boosting}
\label{appd:boost}
We next establish the guarantee for the iterative boosting, which propagates the type-I error and power bounds from depth $k$ to depth $k+1$.

\begin{lem}[Iterative boosting]\label{prop:iteration}
Let $C_1=2\times 10^6$, and $\bar d$ and $\bar \lambda$ be chosen as in Lemma~\ref{prop:initialization}. Assume  $s \in (\sqrt{C_{\mathrm{Otter}}}, 1]$, $k\ge \bar d$ and $\mu\ge\max\left\{\bar\lambda,\frac{15^4}{C_{\mathrm{Otter}}^2}\right\}$. Suppose $\cf_{k}$ is a subset of $\cz_k\times \cz_k$ such that
\begin{align}
    \bq\bigl(([\ch_{k}], [\td\ch_{k}]) \in \cf_{k}\bigr)
    &\le e^{\bar d-k} C_1^{-1},\label{eq:typei-k}\\
    \bp\bigl(([\ch_{k}], [\td\ch_{k}]) \in \cf_k\bigr)
    &\ge 0.4.\label{eq:power-k}
\end{align}
Define the set
\begin{equation}
\cf_{k+1}:=\left\{(t, \td t) \in \cz_{k+1} \times \cz_{k+1} :\sum_{(h, \td h) \in \cf_k}\bigl(N_h(t) - \mu p_h\bigr)\bigl(N_{\td h}(\td t) - \mu p_{\td h}\bigr)\geq \sigma_k\right\},\label{eq:def_iter_2}
\end{equation}
where $\sigma_k=\max\left\{\frac1{10}\mu e^{\frac{1}{4}(\bar{d}-k)}, 3\mu^{3/4}\right\}$, $p_{h}=\mathrm{GW}([\ch_k]=h)$, and $N_h(t)$ is the number of children of the root of $t$ whose descendant subtree is isomorphic to $h$; $p_{\td h}$ and $N_{\td h}(\td t)$ are defined analogously. Then we have
\begin{align}
    \bq\bigl(([\ch_{k+1}], [\td\ch_{k+1}]) \in \cf_{k+1}\bigr)
    &\le e^{\bar d-k-1}C_1^{-1},\label{eq:iter-typeI}\\
    \bp\bigl(([\ch_{k+1}], [\td\ch_{k+1}]) \in \cf_{k+1}\bigr)
    &\ge\max\left\{  1 - \frac{10}{\mu s} - \frac{50}{s^2} \bq\bigl(([\ch_k], [\td\ch_k]) \in \cf_k\bigr),\,0.4
    \right\}.\label{eq:iter-power}
\end{align}
\end{lem}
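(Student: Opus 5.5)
The plan is to study the statistic
\[
S(t,\td t):=\sum_{(h,\td h)\in\cf_k}\bigl(N_h(t)-\mu p_h\bigr)\bigl(N_{\td h}(\td t)-\mu p_{\td h}\bigr)
\]
through its moments under $\bq$ and $\bp$, using Poisson thinning by subtree type. Under $\bq$, the root of $\ch_{k+1}$ has $\poi(\mu)$ children with i.i.d.\ depth-$k$ GW subtrees. Hence $\{N_h([\ch_{k+1}])\}_{h\in\cz_k}$ are independent with $N_h\sim\poi(\mu p_h)$, and the family for $\td\ch$ is independent of it. Under $\bp$, the root's children split into $\poi(\mu s)$ intersection children, whose subtree pairs are i.i.d.\ $\bp$-distributed at depth $k$, plus $\poi(\mu(1-s))$ extra children on each side with independent GW$(\mu)$ subtrees. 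Thinning again gives:
\begin{itemize}
\item the counts $M_{h,\td h}\sim\poi\bigl(\mu s\,\bp([\ch_k]=h,[\td\ch_k]=\td h)\bigr)$ of intersection children of type $(h,\td h)$, independent over $(h,\td h)$;
\item independent extra counts $\poi(\mu(1-s)p_h)$ on each side.
\end{itemize}
We then have $N_h=\sum_{\td h}M_{h,\td h}+E_h$, and the marginal of each $N_h$ is still $\poi(\mu p_h)$.

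\textbf{Type-I bound.} Write $q:=\bq(([\ch_k],[\td\ch_k])\in\cf_k)$. Since $E_\bq S=0$ and $E_\bq S^2=\sum_{\cf_k}\mu p_h\,\mu p_{\td h}=\mu^2 q$, Chebyshev only gives a bound of order $e^{(\bar d-k)/2}$. That is too weak, because the target $e^{\bar d-k-1}C_1^{-1}$ decays at rate $e^{-k}$. I will therefore use the fourth moment and bound $\bq(S\ge\sigma_k)\le E_\bq S^4/\sigma_k^4$. Expanding $S^4$ with independent centered Poisson variables, only index patterns in which every $h$ and every $\td h$ index appears at least twice survive.
\begin{itemize}
\item The leading (pairing) terms give at most $3(\mu^2q)^2$. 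Using $\sigma_k^4\ge 10^{-4}\mu^4e^{\bar d-k}$ and $q\le e^{\bar d-k}C_1^{-1}$, these contribute at most $3\cdot10^4 e^{\bar d-k}C_1^{-2}$. This is below $e^{\bar d-k-1}C_1^{-1}$ because $C_1=2\times10^6>3e\cdot10^4$.
\item The remaining terms involve higher Poisson cumulants (all cumulants of $\poi(\lambda)$ equal $\lambda$). Examples are $\sum_{\cf_k}\mu p_h\mu p_{\td h}=\mu^2 q$, and mixed terms such as $\sum_h\mu p_h\bigl(\sum_{\td h:(h,\td h)\in\cf_k}\mu p_{\td h}\bigr)^2\le \mu^3 q$. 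Each is at most $\mu^3q$. Dividing by $\sigma_k^4\ge81\mu^3$, which is where the floor $3\mu^{3/4}$ in $\sigma_k$ is used, and using that $\mu$ is large gives a contribution at most $q/81\le$ a small fraction of $e^{\bar d-k-1}C_1^{-1}$.
\end{itemize}
Combining the two pieces yields \eqref{eq:iter-typeI}.

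\textbf{Power bound.} Since the extra counts and the intersection counts of different types are independent, the mean under $\bp$ is
\[
E_\bp S=\sum_{\cf_k}\mathrm{Cov}(N_h,N_{\td h})=\sum_{\cf_k}\mathrm{Var}(M_{h,\td h})=\mu s\,\bp\bigl(([\ch_k],[\td\ch_k])\in\cf_k\bigr)\ge0.4\mu s .
\]
I will then compute $\mathrm{Var}_\bp S$ exactly as a Poisson variance in terms of the $M$'s and $E$'s. I expect it to be at most of order
\[
\mu s\,\bp(\cf_k)+\mu^2 q\,(\text{const}/s^{2})+\text{terms}\le\mu^{2}\cdot\frac{q}{s^{2}}+O(\mu s).
\]
Since $s>\sqrt{C_{\mathrm{Otter}}}>0.58$, we have $0.4\mu s>0.23\mu$, so for $k\ge\bar d$ the threshold satisfies $\sigma_k\le\mu/10$, and $3\mu^{3/4}\ll\mu$ as $\mu\ge 15^4/C_{\mathrm{Otter}}^2$. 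Hence $E_\bp S-\sigma_k\ge c\,\mu s\,\bp(\cf_k)$ for a constant $c$ bounded away from $0$. Chebyshev then gives
\[
\bp(S<\sigma_k)\le\frac{\mathrm{Var}_\bp S}{(E_\bp S-\sigma_k)^2}\le\frac{10}{\mu s}+\frac{50}{s^2}\,q
\]
after tracking constants. Together with $\bp(\cf_k)\ge0.4$ and $q\le C_1^{-1}$, this quantity is at most $0.6$, which gives the $0.4$ floor in \eqref{eq:iter-power}.

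The main obstacle is the exact second-moment computation under $\bp$, together with the lower-order fourth-moment terms under $\bq$. The cross terms in which an $h$ index is shared between two $\cf_k$ pairs with different $\td h$ must be controlled by $\mu^3 q$ (or, under $\bp$, by $\mu s\,\bp(\cf_k)$) rather than by something of order $\mu^4q$. I would handle this by first writing $S$ as a polynomial in the independent Poisson variables $\{M_{h,\td h}\}$ and $\{E_h\},\{\td E_{\td h}\}$, and then using the fact that all cumulants of a Poisson variable equal its mean.
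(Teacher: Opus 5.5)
Your route is the paper's: Markov's inequality on $S^4$ under $\bq$, with the $e^{(\bar d-k)/4}$ decay and the $3\mu^{3/4}$ floor in $\sigma_k$ used exactly as you say, and Chebyshev under $\bp$ after checking $\sigma_k\le\frac12\be_\bp[S]$. The only difference is that the paper takes the three moment bounds from Lemma~\ref{lem:p1mom} (imported from \cite{Luca24Stat}), whereas you derive them by Poisson thinning and cumulants. Your thinning set-up is correct, and that derivation does go through, but two pieces of your bookkeeping need fixing.

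\textbf{Pairing terms in $\be_\bq[S^4]$.} The second-cumulant part is not at most $3(\mu^2q)^2$. Besides the three matched pairings there are six mismatched ones. An example is $h_1=h_2$, $h_3=h_4$ on one side with $\td h_1=\td h_3$, $\td h_2=\td h_4$ on the other, which forces $(a,x),(a,y),(b,x),(b,y)\in\cf_k$. Each mismatched term is bounded by $\mu^4q^2$ after dropping two membership constraints, and it attains this bound when $\cf_k$ is a product set. Adding the terms that involve fourth cumulants gives $\be_\bq[S^4]\le 9\mu^4q^2+7\mu^3q$. Your type-I argument still closes, since $9\cdot 10^4C_1^{-1}+7/81<e^{-1}$.

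\textbf{Cross term in $\mathrm{Var}_\bp(S)$.} Write $P_{h\td h}:=\bp([\ch_k]=h,[\td\ch_k]=\td h)$. The cross term is $\mu^2s^2\sum_{(h,\td h),(h',\td h')\in\cf_k}P_{h\td h'}P_{h'\td h}$. It is of order $\mu^2 q$, not controlled by $\mu s\,\bp(\cf_k)$ as your closing paragraph suggests. Bound it by $\mu^2s^2q$: drop the constraint $(h',\td h')\in\cf_k$ and use $\sum_{\td h'}P_{h\td h'}=p_h$ and $\sum_{h'}P_{h'\td h}=p_{\td h}$. This gives $\mathrm{Var}_\bp(S)\le \mu s\,\bp(\cf_k)+\mu^2(1+s^2)q$.

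With that variance bound, the constants $10$ and $50$ come out exactly when you take $c=\frac12$. The needed inequality $3\mu^{3/4}\le 0.2\sqrt{C_{\mathrm{Otter}}}\,\mu$ is precisely the hypothesis $\mu\ge 15^4/C_{\mathrm{Otter}}^2$.
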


\begin{proof}
    Throughout the proof, for each $(t, \td t) \in \cz_{k+1} \times \cz_{k+1}$, we define $X_{k+1}(t,\td t)$ by
    \begin{align}
        X_{k+1}(t,\td t):=\sum_{(h, \td h) \in \cf_k}\bigl(N_h(t) - \mu p_{h}\bigr)\bigl(N_{\td h}(\td t) - \mu p_{\td h}\bigr),\label{eq:X_def}
    \end{align}
    then by \eqref{eq:def_iter_2}, $\cf_{k+1}=\{(t,\td t)\in\cz_{k+1}\times\cz_{k+1}: X_{k+1}(t,\td t)\geq\sigma_k\}$.
    We now prove \eqref{eq:iter-typeI} and \eqref{eq:iter-power} separately.
    
    \textbf{\underline{Proof of \eqref{eq:iter-typeI}}.} Suppose $(\ch,\td\ch)\sim \bq$. By Markov's inequality on the non-negative random variable $X_{k+1}^4([\ch_{k+1}],[\td\ch_{k+1}])$, we have
\begin{align}
\bq \bigl(X_{k+1}([\ch_{k+1}],[\td\ch_{k+1}])\ge \sigma_k\bigr)&\le\bq \bigl(X^4_{k+1}([\ch_{k+1}],[\td\ch_{k+1}])\ge \sigma_k^4\bigr)\nonumber\\
&\le \frac{1}{\sigma_k^{4}}
\be_{\bq} \bigl[ X_{k+1}([\ch_{k+1}],[\td\ch_{k+1}])^{4} \bigr]\nonumber\\
&\le \frac{1}{\sigma_k^{4}}
\Bigl(36\mu^4\bq(([\ch_k], [\td\ch_k]) \in \cf_k)^{2}
+13 \mu^{3} \bq(([\ch_k], [\td\ch_k]) \in \cf_k)\Bigr), \label{eq:s2qb1}
\end{align}
where the last inequality holds by~\eqref{eq:q5mom} in Lemma~\ref{lem:p1mom}.
By the choice of $\sigma_k$, we have
\begin{align}
    \sigma^4_k=\max\left\{\frac{1}{10^4}\mu^4e^{\bar d-k},81\mu^{3}\right\}\geq \max\{72e\mu^4\bq\bigl(([\ch_{k}],[\td\ch_{k}])\in \cf_{k}\bigr),26e\mu^3\},\label{eq:sigma4_def}
\end{align}
where the last inequality follows because $26e<81$ and the fact that 
\[
72e\bq\bigl(([\ch_{k}],[\td\ch_{k}])\in\cf_{k}\bigr)\le 72eC_1^{-1}e^{\bar d-k}<\frac{1}{10^4}e^{\bar d-k}
\]
under assumption~\eqref{eq:typei-k}.
Substituting~\eqref{eq:sigma4_def} into~\eqref{eq:s2qb1} yields
\begin{align*}
    &\bq \bigl(X_{k+1}([\ch_{k+1}],[\td\ch_{k+1}])\ge \sigma_k\bigr)\nonumber\\
    \le{}& \frac{36\mu^4\bq\bigl(([\ch_k], [\td\ch_k]) \in \cf_k\bigr)^{2}
+13 \mu^{3} \bq\bigl(([\ch_k], [\td\ch_k]) \in \cf_k\bigr)}{\max\{72e\mu^4\bq\bigl(([\ch_{k}],[\td\ch_{k}])\in \cf_{k}\bigr),26e\mu^3\}}\\
\le{}& \frac{2\max\{36\mu^4\bq\bigl(([\ch_k], [\td\ch_k]) \in \cf_k\bigr)^{2}
,13 \mu^{3} \bq\bigl(([\ch_k], [\td\ch_k]) \in \cf_k\bigr)\}}{\max\{72e\mu^4\bq\bigl(([\ch_{k}],[\td\ch_{k}])\in \cf_{k}\bigr),26e\mu^3\}}\\
\le{}& \frac1e \bq\bigl(([\ch_{k}],[\td\ch_{k}])\in \cf_{k}\bigr)\\
\le{}& e^{\bar d-k-1}C_1^{-1}
\end{align*}
which completes the proof of~\eqref{eq:iter-typeI}.

\textbf{\underline{Proof of \eqref{eq:iter-power}}.}
By assumption~\eqref{eq:power-k}, we have
\begin{equation}
\frac12 s\bp\bigl(([\ch_k],[\td\ch_k])\in \cf_k\bigr)> 0.2\sqrt{C_\mathrm{Otter}}>0.1.
\label{eq:sigmak1}
\end{equation}
Because $\mu\ge\frac{15^4}{C_\mathrm{Otter}^2}$, we have
\begin{equation}
\frac{\frac12\mu s\bp\bigl(([\ch_k],[\td\ch_k])\in \cf_k\bigr)}{3\mu^{3/4}}=\frac{\mu^{1/4}s\bp\bigl(([\ch_k],[\td\ch_k])\in \cf_k\bigr)}{6}\ge 1.\label{eq:sigmak2}
\end{equation}
Recall that $\sigma_k=\max\left\{\frac1{10}\mu e^{-\frac{1}{4}(k-\bar{d})}, 3\mu^{3/4}\right\}$. By~\eqref{eq:sigmak1} and~\eqref{eq:sigmak2}, we have
\begin{align}
\sigma_{k}\leq\max\left\{\frac1{10}\mu,3\mu^{3/4}\right\} \le \frac12\mu s\bp\bigl(([\ch_k],[\td\ch_k])\in \cf_k\bigr)=\frac12 \be_{\bp}\bigl[X_{k+1}([\ch_{k+1}],[\td\ch_{k+1}])\bigr],\label{eq:s2sigma}
\end{align}
where the last equality holds by \eqref{eq:p1mom} in Lemma \ref{lem:p1mom}.

By Chebyshev's inequality,
\begin{align}
&\bp\bigl(X_{k+1}([\ch_{k+1}],[\td\ch_{k+1}])< \sigma_k \bigr)\nonumber\\
\le{}& \bp\Bigl(
X_{k+1}([\ch_{k+1}],[\td\ch_{k+1}]) \le \tfrac12 \be_{\bp}\bigl[ X_{k+1}([\ch_{k+1}],[\td\ch_{k+1}]) \bigr] \Bigr) \nonumber\\
\le{}& \bp\Bigl( \bigl| X_{k+1}([\ch_{k+1}],[\td\ch_{k+1}]) - \be_{\bp}\bigl[ X_{k+1}([\ch_{k+1}],[\td\ch_{k+1}]) \bigr] \bigr| \ge\tfrac12\be_{\bp}\bigl[X_{k+1}([\ch_{k+1}],[\td\ch_{k+1}])\bigr]\Bigr)\nonumber\\
\le{}& \frac{ 4\operatorname{Var}_{\bp}\bigl( X_{k+1}([\ch_{k+1}],[\td\ch_{k+1}])\bigr)}{\bigl(\be_{\bp}\bigl[X_{k+1}([\ch_{k+1}],[\td\ch_{k+1}])\bigr]\bigr)^2}.
\label{eq:s2pb1}
\end{align}

By equations~\eqref{eq:p1mom} and~\eqref{eq:p2mom} in Lemma~\ref{lem:p1mom},
we obtain
\begin{align}
&\bp\bigl(X_{k+1}([\ch_{k+1}],[\td\ch_{k+1}])< \sigma_k \bigr)\nonumber\\
\le{}& 4\frac{\mu s \bp\bigl(([\ch_k],[\td\ch_k])\in \cf_k\bigr)+\mu^2(1+s^2)\bq\bigl(([\ch_k],[\td\ch_k])\in \cf_k\bigr)}{\mu^2s^2\bigl(\bp\bigl(([\ch_k],[\td\ch_k])\in \cf_k\bigr)\bigr)^2}\nonumber\\
\le{}&\frac{4}{\mu s \bp\bigl(([\ch_k],[\td\ch_k])\in \cf_k\bigr)}+\frac{8\bq\bigl(([\ch_k],[\td\ch_k])\in \cf_k\bigr)}{s^2\bigl(\bp\bigl(([\ch_k],[\td\ch_k])\in \cf_k\bigr)\bigr)^2}\nonumber\\
\le{}&\frac{10}{\mu s}+\frac{50}{s^2}\bq\bigl(([\ch_k],[\td\ch_k])\in \cf_k\bigr),\label{eq::pf2power}
\end{align}
where the last inequality follows by $\bp\bigl(([\ch_k],[\td\ch_k])\in \cf_k\bigr)\ge 0.4$.

Since $\mu\geq \frac{15^4}{C_{\mathrm{Otter}}^2}$ and $C_1=2\times 10^6> \frac{50}{0.6s^2-\frac{10sC_\mathrm{Otter}^2}{15^4}}$, we have 
\begin{align}
    \frac{10}{\mu s}+\frac{50}{s^2}\bq\bigl(([\ch_k],[\td\ch_k])\in \cf_k\bigr)\leq \frac{10}{\mu s}+\frac{50}{s^2}C_1^{-1}< \frac{10C_\mathrm{Otter}^2}{15^4s}+0.6-\frac{10C_\mathrm{Otter}^2}{15^4s}= 0.6\label{eq:pf2power2}
\end{align}
where the first inequality holds by \eqref{eq:typei-k}. Combining \eqref{eq::pf2power} and \eqref{eq:pf2power2}, we have
\begin{align}
\bp\bigl(([\ch_{k+1}], [\td\ch_{k+1}]) \in \cf_{k+1}\bigr)
    &=\bp\bigl(X_{k+1}([\ch_{k+1}],[\td\ch_{k+1}])\ge \sigma_k
\bigr)\nonumber\\
    &=1-\bp\bigl(X_{k+1}([\ch_{k+1}],[\td\ch_{k+1}])<\sigma_k
\bigr)\nonumber\\
    &\ge\max\left\{ 1 - \frac{10}{\mu s} - \frac{50}{s^2}  \bq\bigl(([\ch_k], [\td\ch_k]) \in \cf_k\bigr),\,0.4 \right\},\nonumber
\end{align}
which completes the proof of \eqref{eq:iter-power}.
\end{proof}

\subsection{Proof of Theorem~\ref{prop:LRT}: Neyman--Pearson Lemma}
\label{appd:pf-lrt}

We now complete the proof of Theorem~\ref{prop:LRT} based on Proposition~\ref{prop:CorTest}. By the Neyman--Pearson lemma \cite[Theorem~3.2.1]{lehmann2005testing}, Proposition \ref{prop:CorTest} implies that there exists a likelihood-ratio test with likelihood-ratio threshold $\theta^*,$ tie-breaking threshold $\kappa^*$ and {randomization} probability $p$ such that it is a most powerful test with  type-I error being equal to ${\bq}\bigl(\Phi([\ch_d],[\td\ch_d])=1\bigr)$, and only makes a randomized decision on a single tree pair $(t',\td t')$, which satisfies $L(t',\td t')=\theta^*$ and $f(t',\td t')=\kappa^*$. Here $p$ is the probability that the test outputs $1$ when the input is $(t',\td t')$.

Notice that the only difference between the test $\Psi$ in~\eqref{eq:rank-LRT} and this most powerful test is that $\Psi$ deterministically outputs $0$ at the input $(t',\td t')$ instead of giving a randomized output.
Therefore, we have 
\begin{align}
    {\bq}\bigl(\Psi([\ch_d],[\td\ch_d])=1\bigr)
    =&\ {\bq}\bigl(\Phi([\ch_d],[\td\ch_d])=1\bigr)-p{\bq}\bigl([\ch_d]=t',[\td\ch_d]=\td t'\bigr)\nonumber\\
    \le &\ {\bq}\bigl(\Phi([\ch_d],[\td\ch_d])=1\bigr)\nonumber\\
    =&\ O(n^{-\log\log n}),\label{eq:rank_LRT_Type1_Proof}
\end{align}
and
\begin{align}
    {\bp}\bigl(\Psi([\ch_d],[\td\ch_d])=1\bigr)
    \ge
    {\bp}\bigl(\Phi([\ch_d],[\td\ch_d])=1\bigr)-p{\bp}\bigl([\ch_d]=t',[\td\ch_d]=\td t'\bigr).
    \nonumber
\end{align}
Given that $(\ch,\td\ch)\sim \bp$, $\ch$ follows a \GW process with $\poi(\lambda)$ offspring distribution. Then for any fixed $(t,\td t)\in \cz_d\times\cz_d$,

\[
{\bp}\bigl([\ch_d]=t,[\td\ch_d]=\td t)\bigr)\le {\bp}\bigl([\ch_d]=t\bigr)\le \mathsf{P}(\poi(\lambda)=c_r),
\]
where $c_r$ denotes the root degree of $t$. Therefore, under the assumption that $\lambda=(\log n)^{\alpha+o(1)}$, we have 
\begin{align}
    \sup_{(t,\td t)\in\cz\times\cz}{\bp}\bigl([\ch_d]=t,[\td\ch_d]=\td t\bigr)\le \sup_{c\ge 0} \mathsf{P}(\poi(\lambda)=c)=O(\lambda^{-1/2}) = O\bigl((\log n)^{-(\alpha+o(1))/2}\bigr),
    \nonumber
\end{align}
where the $O(\lambda^{-1/2})$ bound for the maximum point probability of the $\mathrm{Poi}(\lambda)$ distribution is given by \cite[Eq.~(4.46)]{johnson1992univariate}. Combining the bounds above with \eqref{eq:CorTest-power} in Proposition \ref{prop:CorTest}, we obtain
\begin{align}
    {\bp}\bigl(\Psi([\ch_d],[\td\ch_d])=1\bigr)
    &\ge
    {\bp}\bigl(\Phi([\ch_d],[\td\ch_d])=1\bigr)
    -O\bigl((\log n)^{-\alpha/2}\bigr)
    \nonumber\\
    &=
    1-O\bigl((\log n)^{-\alpha/2}\bigr)-O\bigl((\log n)^{-(\alpha+o(1))/2}\bigr)\nonumber\\
    &=
    1-O\bigl((\log n)^{-\alpha/4}\bigr),
    \nonumber
\end{align}
which completes the proof of Theorem \ref{prop:LRT}.

\section{Technical lemmas for Poisson Galton--Watson trees}\label{appd:gw-lemmas}
\subsection{Counting of unlabeled trees}
\label{appd:counting}
\begin{lem}[Asymptotics of the number of unlabeled trees~\cite{Richard48TreeCount}]
\label{lem:otter}
One has 
\[
\lim_{k\rightarrow\infty}\frac{\bigl|\bigl\{ t \in\cz:| \cv(t)|=k\bigr\}\bigr|}{\frac{C_5}{k^{3/2}}C_\mathrm{Otter}^{-k}}=1,
\]
for some constant $C_5>0$, where $C_\mathrm{Otter}\approx0.338$ is Otter's tree counting constant. In particular, there exists a constant $k_0>0$ such that for all $k\ge k_0$,
\[
\bigl|\bigl\{ t \in\cz:| \cv(t)|=k\bigr\}\bigr|\ge \frac{C_5}{2k^{3/2}}C_{\mathrm{Otter}}^{-k}.
\]
\end{lem}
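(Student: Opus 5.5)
The statement has two parts. The asymptotic formula is Otter's classical theorem~\cite{Richard48TreeCount}, and the ``in particular'' clause follows from it by an elementary limit argument. I would cite Otter for the first part. For completeness I would sketch its standard analytic-combinatorics proof, then derive the explicit lower bound.

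\emph{Step 1: generating function.} Let $T(x)=\sum_{k\ge1} a_k x^k$, where $a_k=|\{t\in\cz:|\cv(t)|=k\}|$ is the number of unlabeled rooted trees with $k$ vertices. A rooted unlabeled tree is a root together with a multiset of rooted unlabeled subtrees. The Pólya multiset construction therefore gives the functional equation
\[
T(x)=x\exp\Bigl(\sum_{i\ge1}\frac{T(x^i)}{i}\Bigr).
\]
Let $\rho$ be the radius of convergence. Since $a_k\ge1$ and $a_k$ grows at most exponentially, $0<\rho<1$. For $i\ge2$ the terms $T(x^i)$ are analytic in the larger disk $|x|<\rho^{1/2}$. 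Hence the equation can be written as $T=F(x,T)$ with
\[
F(x,y)=x\,e^{y}\,G(x),\qquad G(x)=\exp\Bigl(\sum_{i\ge2}T(x^i)/i\Bigr),
\]
where $G$ is analytic in a neighbourhood of $|x|\le\rho$.

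\emph{Step 2: square-root singularity.} This is the main technical step. I would show that at $x=\rho$ the characteristic system
\[
y=F(x,y),\qquad 1=\partial_yF(x,y)=y
\]
holds, so that $T(\rho)=1$. I would then apply the implicit-function / smooth-implicit-schema theorem (Meir--Moon, Flajolet--Sedgewick). Its nondegeneracy conditions are $\partial_x F\ne0$ and $\partial_{yy}F\ne0$ at $(\rho,1)$, together with aperiodicity, which holds since $a_k>0$ for all $k$. The theorem yields the expansion
\[
T(x)=1-c\sqrt{1-x/\rho}+O(1-x/\rho)
\]
near $x=\rho$, with $T$ analytically continuable to a $\Delta$-domain.

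\emph{Step 3: transfer.} Singularity analysis then gives
\[
a_k\sim \frac{c}{2\sqrt\pi}\,\rho^{-k}k^{-3/2}.
\]
Identifying $C_\mathrm{Otter}:=\rho\approx0.338$ and $C_5:=c/(2\sqrt\pi)$ yields the limit statement.

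\emph{Step 4: explicit lower bound.} Since the ratio $a_k\big/\bigl(C_5k^{-3/2}C_\mathrm{Otter}^{-k}\bigr)$ tends to $1$, there is $k_0$ such that the ratio is at least $1/2$ for all $k\ge k_0$. This is exactly the claimed inequality.

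I expect the main obstacle to be verifying in Step~2 that the singularity is genuinely of square-root type and is the unique dominant one on $|x|=\rho$. This requires controlling the Pólya terms $T(x^i)$, $i\ge2$, uniformly on the circle. I would handle it using $\rho<1$ and the fact that these terms are analytic on the larger disk, as in Otter's original argument. In the paper, simply citing~\cite{Richard48TreeCount} suffices.
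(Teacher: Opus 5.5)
Your proposal is correct and matches the paper, which likewise simply cites Otter~\cite{Richard48TreeCount} for the asymptotic and obtains the explicit bound from the limit exactly as in your Step~4. One small slip in your optional sketch: $a_k\ge 1$ only gives $\rho\le 1$, whereas your argument needs the strict inequality $\rho<1$ so that the P\'olya terms $T(x^i)$, $i\ge 2$, are analytic beyond $|x|=\rho$. That requires an exponential lower bound on $a_k$; for example, coefficientwise $T(x)$ dominates $x e^{T(x)}$ and hence the Cayley tree function $\sum_k k^{k-1}x^k/k!$, which gives $\rho\le 1/e$.
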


\subsection{Tree generation size bounds}
\label{appd:generation_size}
\begin{lem}\label{lem:s3bos}
Let $\mu>1$, and let $\ch$ be a Poisson Galton--Watson tree with offspring mean $\mu$. 
Let $\cs_l(\ch)$ be the set of nodes in the $l$-th generation of $\ch$ for $l \leq \frac{\mu}{2}$ and $l\in\mathbb{N}$.
Then for any constant $C_2 > 1$,
\begin{align}
\P\bigl(\lvert \cs_{l}(\ch) \rvert \ge (C_2 \mu)^l \bigr)
&\le \exp\left(-\tfrac{1}{3}\mu C_2^l + \mu (e^{2/3}-1)\right),\label{eq:gen_LB} \\
\P\bigl(\lvert \cs_{l}(\ch) \rvert\le C_2^{-1}\mu^l \bigr)
&\le \frac{C_2^2}{(C_2-1)^2}\cdot\frac{1}{\mu-1}.\label{eq:gen_UB}
\end{align}
\end{lem}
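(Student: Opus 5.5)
The two bounds are of a different nature, so I would prove them separately. Write $Z_l:=|\cs_l(\ch)|$. For a Poisson Galton--Watson tree, $Z_0=1$ and, conditionally on $Z_{j-1}$, the next generation is $Z_j\sim\poi(\mu Z_{j-1})$. Two standard facts follow: $\E[Z_l]=\mu^l$, and $\mathrm{Var}(Z_l)=\mu^l(\mu^l-1)/(\mu-1)$. The upper tail \eqref{eq:gen_LB} will come from an exponential moment, the lower tail \eqref{eq:gen_UB} from Chebyshev.

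\textbf{Lower tail \eqref{eq:gen_UB}.} The normalized martingale $W_l=Z_l/\mu^l$ has mean one and variance
\[
\mathrm{Var}(W_l)=\frac{1-\mu^{-l}}{\mu-1}\le \frac{1}{\mu-1}.
\]
The event $Z_l\le C_2^{-1}\mu^l$ is the event $W_l\le C_2^{-1}$, which forces $|W_l-1|\ge 1-C_2^{-1}$. Chebyshev then gives
\[
\P\bigl(Z_l\le C_2^{-1}\mu^l\bigr)\le \frac{1}{(\mu-1)(1-C_2^{-1})^2}=\frac{C_2^2}{(C_2-1)^2}\cdot\frac{1}{\mu-1},
\]
which is exactly \eqref{eq:gen_UB}. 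No condition on $l$ is needed for this part.

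\textbf{Upper tail \eqref{eq:gen_LB}.} I would use the exponential Markov inequality with a small parameter chosen at level $l$ and propagated down the generations. Let $\phi(\theta)=\mu(e^\theta-1)$. Conditioning on $Z_{j-1}$ gives $\E[e^{\theta Z_j}]=\E[e^{\phi(\theta)Z_{j-1}}]$. Starting from $\theta_l=\theta$ and iterating $\theta_{j-1}=\phi(\theta_j)$, we get
\[
\E[e^{\theta Z_l}]=e^{\theta_0}=\exp\bigl(\mu(e^{\theta_1}-1)\bigr).
\]
Take $\theta=\frac{1}{3\mu^{l}}$. Then Markov gives
\[
\P\bigl(Z_l\ge (C_2\mu)^l\bigr)\le \exp\bigl(-\tfrac13 C_2^l+\mu(e^{\theta_1}-1)\bigr).
\]

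The main obstacle is controlling this recursion: I need $\theta_1\le 2/3$, which makes the second term at most $\mu(e^{2/3}-1)$. I would prove by downward induction that $\theta_j\le \frac{1}{3\mu^{j}}(1+\frac{1}{\mu})^{l-j}$, or a similar estimate, using $e^x-1\le x+x^2$ for small $x$. The accumulated factor $(1+1/\mu)^{l}\le e^{l/\mu}\le e^{1/2}$ is where the hypothesis $l\le\mu/2$ enters. This yields $\theta_1=O(1/\mu)\le 2/3$.

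That bound has $-\tfrac13 C_2^l$ in the exponent, whereas \eqref{eq:gen_LB} claims $-\tfrac13\mu C_2^l$, which is stronger. To get the extra factor of $\mu$, I would instead take $\theta=\frac{\mu}{3\mu^{l}}=\frac{1}{3\mu^{l-1}}$. Then $\theta(C_2\mu)^l=\tfrac13\mu C_2^l$, and the same downward propagation gives $\theta_1\lesssim\tfrac13 e^{l/\mu}\le \tfrac13 e^{1/2}<2/3$. The delicate point is the second-order term $x^2$ in $e^x-1$ at the top levels, where $\theta_j$ is close to a constant. I would handle it with a slightly sharper inequality such as $e^x-1\le x(1+x)$, absorbing the resulting product $\prod_j(1+\theta_j)$ into $e^{1/2}$ via $\sum_j\theta_j=O(1)$. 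The conclusion is $\mu(e^{\theta_1}-1)\le\mu(e^{2/3}-1)$, which proves \eqref{eq:gen_LB}.
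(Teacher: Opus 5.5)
Your proposal is correct and follows essentially the same route as the paper. For \eqref{eq:gen_UB} both use Chebyshev with $\mathrm{Var}(|\cs_l(\ch)|)=\mu^l(\mu^l-1)/(\mu-1)$. For \eqref{eq:gen_LB} both use a Chernoff bound at $t=1/(3\mu^{l-1})$, iterate the log-MGF recursion $\varphi_l(t)=\mu(e^{\varphi_{l-1}(t)}-1)$, and show $\varphi_{l-1}(t)\le 2/3$ by induction with $e^x-1\le x+x^2$. The paper carries the additive bound $\frac{1}{3\mu^{l-k-1}}+\frac{k}{3\mu^{l-k}}$ through the induction, whereas you carry a multiplicative factor $\prod_j(1+\theta_j)$; this difference is cosmetic. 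Your worry about the ``delicate'' near-constant level is unnecessary: the quadratic bound is only applied to $\theta_j$ with $j\ge 2$, which are $O(1/\mu)$, and $\theta_1$ enters only through the exact expression $\mu(e^{\theta_1}-1)$.
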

\begin{proof}
First, for any fixed $k\in[l]$, each node in $\cs_{k-1}(\ch)$ has an independent $\mathrm{Poi}(\mu)$ number of children. Hence, conditioned on $\lvert \cs_{k-1}(\ch) \rvert$, by the Poisson summation property, 
\begin{align}
    \lvert \cs_{k}(\ch) \rvert\sim \mathrm{Poi}(\lvert \cs_{k-1}(\ch) \rvert\mu).\label{eq:gen_size_diistribution}
\end{align}
Define the log-moment generating function $
\varphi_k(t) := \log \E\left[e^{t \lvert \cs_{k}(\ch) \rvert}\right]$ for all $k\in[l]$.
Since $|\cs_{0}(\ch)|=1$ and $|\cs_{1}(\ch)|\sim \mathrm{Poi}(\mu)$, we have
\begin{align}
    \varphi_0(t) =& \log \E[e^{t |\cs_{0}(\ch)|}] = t, \nonumber\\ \varphi_1(t) =& \log \E[e^{t |\cs_{1}(\ch)|}]
= \mu(e^t - 1)
= \mu\bigl(e^{\varphi_0(t)} - 1\bigr),\nonumber
\end{align}
and for general $1 \le k\le l$, the tower property and \eqref{eq:gen_size_diistribution} give
\begin{align}
\varphi_k(t)
&= \log \E\left[e^{t \lvert \cs_{k}(\ch) \rvert}\right]
= \log \E \left[\E \left[e^{t \lvert \cs_{k}(\ch) \rvert}\,\middle|\, \lvert \cs_{k-1}(\ch) \rvert\right]\right] \nonumber\\
&= \log \E \left[\exp \bigl(\mu\cdot \lvert \cs_{k-1}(\ch) \rvert\cdot(e^t-1)\bigr)\right] \nonumber\\
&= \varphi_{k-1} \bigl(\mu(e^t-1)\bigr).\label{eq:s3iterpsi}
\end{align}
Applying \eqref{eq:s3iterpsi} for all $k=l,\cdots,2$, we have
\begin{align}
    \varphi_l(t)=\varphi_{l-1} \bigl(\mu(e^t-1)\bigr)=\varphi_{l-1}\circ\varphi_1(t)=\cdots=\underbrace{\varphi_1\circ\varphi_1\circ\cdots\circ\varphi_1}_{l\ \mathrm{times}}(t).
    \label{eq:s3iterpsi3}
\end{align}
where $\circ$ denotes the composition of functions, i.e., $\varphi_{l-1}\circ\varphi_1(t)=\varphi_{l-1}(\varphi_1(t))$. Thus, the last expression means that $\varphi_1$ is
applied to $t$ repeatedly $l$ times. Applying the same
recursive argument to $\varphi_{l-1}$ yields $\varphi_{l-1}=\underbrace{\varphi_1\circ\varphi_1\circ\cdots\circ\varphi_1}_{(l-1)\ \mathrm{times}}.$ Hence, separating the leftmost $\varphi_1$ in \eqref{eq:s3iterpsi3}, we obtain
\begin{align}
    \varphi_l(t)=\varphi_1\bigl(\varphi_{l-1}(t)\bigr)
    = \mu\bigl(e^{\varphi_{l-1}(t)}-1\bigr).
    \label{eq:s3iterpsi2}
\end{align}

By the Chernoff bound, for any $t>0$,
\begin{align}
    \P\left(\lvert \cs_{l}(\ch) \rvert \ge (C_2 \mu)^l\right)
\le \exp\left(-t (C_2 \mu)^l + \varphi_l(t)\right)= \exp\left(-t (C_2 \mu)^l + \mu\bigl(e^{\varphi_{l-1}(t)} - 1\bigr)\right).\label{eq:child_size_Chernoff}
\end{align}
By choosing  $t = \frac{1}{3\mu^{l-1}}$, we have 
\begin{align}
    \P\left(\lvert \cs_{l}(\ch) \rvert \ge (C_2 \mu)^l\right)
\le & \exp \left(-\frac{1}{3\mu^{l-1}} (C_2 \mu)^l + \mu\bigl(e^{\varphi_{l-1}\left(\frac{1}{3\mu^{l-1}}\right)} - 1\bigr)\right)\nonumber\\
\le & \exp \left(-\frac{1}{3}\mu C_2^{l}
+ \mu(e^{\frac23}-1)\right),\label{mgb}
\end{align} 
which proves \eqref{eq:gen_LB}.
We note that \eqref{mgb} holds because  we have 
\begin{equation}
\varphi_{l-1}\left(\frac{1}{3\mu^{l-1}}\right)
\le \frac{1}{3} + \frac{l-1}{3\mu}\leq \frac{2}{3}\label{eq:logMGF_Iteration}
\end{equation} by following the proof below. 

\underline{\textbf{Proof of~\eqref{eq:logMGF_Iteration}:}} We prove \eqref{eq:logMGF_Iteration} by induction on $k =0 ,\ldots, l-1$. For the base case $k=0$, we have $|\cs_0(\ch)|=1$, so $\varphi_0\Bigl(\frac{1}{3\mu^{l-1}}\Bigr) = \frac{1}{3\mu^{l-1}}$, i.e., \eqref{eq:logMGF_Iteration} holds. Next, assume the hypothesis holds for $k-1$, such that
    \begin{align}
        \varphi_{k-1}\left(\frac{1}{3\mu^{l-1}}\right)
\le \frac{1}{3\mu^{l-k}} + \frac{k-1}{3\mu^{l-k+1}},\label{eq:s3indpsi}
    \end{align}
First, the right-hand side of \eqref{eq:s3indpsi} is smaller than $1$ for all $k \le l-1$, since $\frac{1}{3\mu^{l-k}} \le \frac13$ and $\frac{k-1}{3\mu^{l-k+1}} \le \frac{l-2}{3\mu^2} <\frac23$. Then by \eqref{eq:s3iterpsi}, we have
    \begin{align}
        \varphi_k\Bigl(\frac{1}{3\mu^{l-1}}\Bigr)=\mu\left(e^{\varphi_{k-1}\left(\frac{1}{3\mu^{l-1}}\right)}-1\right)
        \leq  \mu \varphi_{k-1}\left(\frac{1}{3\mu^{l-1}}\right)+\mu \left(\varphi_{k-1}\left(\frac{1}{3\mu^{l-1}}\right)\right)^2\nonumber
    \end{align}
    where the inequality holds by noticing that $e^x-1\leq x+x^2$ for any $x\in(0,1)$ and by letting $x:=\varphi_{k-1}\left(\frac{1}{3\mu^{l-1}}\right)<1$. By \eqref{eq:s3indpsi} and noticing $\mu\geq 2l\geq 2(k+1)$, we have
    \begin{align}
        \varphi_k\Bigl(\frac{1}{3\mu^{l-1}}\Bigr)
        \leq & \frac{1}{3\mu^{l-k-1}}+\frac{k-1}{3\mu^{l-k}}+\frac{1}{9\mu^{2l-2k-1}}+\frac{(k-1)^2}{9\mu^{2l-2k+1}}+\frac{2(k-1)}{9\mu^{2l-2k}}\nonumber\\
        \leq & \frac{1}{3\mu^{l-k-1}}+\frac{k-1}{3\mu^{l-k}}+\frac{1}{9\mu^{2l-2k-1}}+\frac{1}{36\mu^{2l-2k-1}}+\frac{1}{9\mu^{2l-2k-1}}\nonumber\\
        \leq & \frac{1}{3\mu^{l-k-1}}+\frac{k}{3\mu^{l-k}},\nonumber
    \end{align}
    where the last inequality holds since $2l-2k-1\geq l-k$ for all $k=0,\cdots,l-1$. By the induction, \eqref{eq:logMGF_Iteration} is proved.

Next, since $\ch$ is a Poisson Galton--Watson tree with offspring mean $\mu$, we have $\E[\lvert \cs_{l}(\ch) \rvert] = \mu^l$ and $\mathrm{Var}(|\cs_0(\ch)|)=0$. Moreover, since for each $k\in[l]$,
\begin{align*}
\mathrm{Var}(\lvert \cs_{k}(\ch) \rvert)
= &\E\left[\mathrm{Var}(|\cs_k(\ch)| \,\bigm|\,  \lvert \cs_{k-1}(\ch) \rvert)\right]
   + \mathrm{Var}(\E\left[\lvert \cs_{k}(\ch) \rvert\,\bigm|\, \lvert \cs_{k-1}(\ch) \rvert\right]) \\
=& \mu \E\left[\lvert \cs_{k-1}(\ch) \rvert\right] + \mu^2 \mathrm{Var}(\lvert \cs_{k-1}(\ch) \rvert),
\end{align*}
induction on $k\in[l]$ gives $\mathrm{Var}(\lvert \cs_{l}(\ch) \rvert)
= \mu^l \cdot \frac{\mu^l-1}{\mu-1}.$ Hence, by Chebyshev's inequality,
\begin{align}
\P(\lvert \cs_{l}(\ch) \rvert \le C_2^{-1}\mu^l)
&= \P\left(\lvert \cs_{l}(\ch) \rvert \le C_2^{-1}\E[\lvert \cs_{l}(\ch) \rvert]\right)\nonumber \\
&\le \P \left(|\lvert \cs_{l}(\ch) \rvert-\E[\lvert \cs_{l}(\ch) \rvert]|
\ge (1-C_2^{-1})\E[\lvert \cs_{l}(\ch) \rvert]\right)\nonumber \\
&\le \frac{\mathrm{Var}(\lvert \cs_{l}(\ch) \rvert)}
{(1-C_2^{-1})^2(\E[\lvert \cs_{l}(\ch) \rvert])^2}\nonumber \\
&\leq \frac{1}{(1-C_2^{-1})^2}
\cdot \frac{1}{\mu-1}.\nonumber
\end{align}
This completes the proof of \eqref{eq:gen_UB}.
\end{proof}

\subsection{Moments of the likelihood ratio}
\label{appd:lr_moments}
The proof of Lemma~\ref{prop:initialization} uses the following lemmas.
Recall that for each $k\ge 1$, we defined
\begin{equation}
\chi_k:=\frac{1}{2}\log\sum_{t\in\cz_k}s^{2|\cv(t)|-2}.\label{eq:defn-chik}
\end{equation}

\begin{lem}[Properties of $\chi_k$]
\label{lem:init-chi-prop}
Assume $s\in(0,1)$. For any $k\ge 1$, $\chi_k$ satisfies the following three properties:
\begin{enumerate}
    \item $\chi_k$ is strictly positive;
    \item $\chi_k$ is finite;
    \item If we further assume that $s^2>C_\mathrm{Otter}$, then for any constant $C_6>0$, there exists an integer $\bar d$ such that $\chi_{\bar d}>C_6$. In particular, $\bar d$ is given by
    \[
    \bar d=\left\lceil\max\left\{k_0-1,\left(\frac{3}{\log\frac{s^2}{C_\mathrm{Otter}}}\right)^2,\frac{2}{\log\frac{s^2}{C_\mathrm{Otter}}}\left(2C_6 + \log\frac{2s^2}{C_5}\right)\right\}\right\rceil,
    \]
    where $k_0$ and $C_5$ are constants indicated by Lemma~\ref{lem:otter}.
\end{enumerate}
\end{lem}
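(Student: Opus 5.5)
Each property reduces to elementary facts about the finite sum $S_k:=\sum_{t\in\cz_k}s^{2|\cv(t)|-2}$, with $\chi_k=\frac12\log S_k$. The three parts are handled in order.

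\emph{Positivity.} The single-vertex tree lies in $\cz_k$ and contributes $s^0=1$. For $k\ge1$, the tree consisting of a root with one child also lies in $\cz_k$ and contributes $s^2>0$. Hence $S_k\ge 1+s^2>1$, and $\chi_k>0$.

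\emph{Finiteness.} The set $\cz_k$ is infinite, because root degrees are unbounded, so finiteness must come from the geometric weights. I would group trees by their size $m=|\cv(t)|$, which gives $S_k\le\sum_{m\ge1}a_m s^{2m-2}$, where $a_m$ is the number of rooted unlabeled trees with $m$ vertices. This crude bound only converges when $s^2<C_{\mathrm{Otter}}$, so it is not enough for general $s\in(0,1)$. Instead, I would argue by induction on $k$ that $S_k<\infty$, using the multiset structure of the children. A tree in $\cz_{k+1}$ is a root together with a finite multiset of subtrees from $\cz_k$. Its weight $s^{2|\cv|-2}$ factorizes as $\prod_{\text{children }c}s^{2|\cv(t_c)|}$. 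Therefore
\[
S_{k+1}\le\sum_{j\ge0}\frac{1}{j!}\Bigl(s^2S_k\Bigr)^j=\exp\bigl(s^2S_k\bigr),
\]
where multisets are bounded by ordered sequences divided by $j!$. More precisely, $S_{k+1}=\prod_{h\in\cz_k}(1-s^{2|\cv(h)|})^{-1}$, which is finite once $\sum_h s^{2|\cv(h)|}=s^2S_k<\infty$ and $s<1$. The base case is $S_0=1$. This gives finiteness for every $k$.

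\emph{Growth when $s^2>C_{\mathrm{Otter}}$.} Here I would use the lower bound in Lemma~\ref{lem:otter}. Every rooted tree with $m\le k+1$ vertices has depth at most $k$, so it belongs to $\cz_k$. Keeping only the terms with $m=k+1\ge k_0$ gives
\[
S_k\ge \frac{C_5}{2(k+1)^{3/2}}C_{\mathrm{Otter}}^{-(k+1)}s^{2k}=\frac{C_5}{2s^2(k+1)^{3/2}}\Bigl(\frac{s^2}{C_{\mathrm{Otter}}}\Bigr)^{k+1}.
\]
Write $q=\log\frac{s^2}{C_{\mathrm{Otter}}}>0$. Then $2\chi_k\ge (k+1)q-\frac32\log(k+1)-\log\frac{2s^2}{C_5}$. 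The remaining work, and the only fiddly step, is to check that the stated $\bar d$ makes this exceed $2C_6$. The condition $\bar d\ge(3/q)^2$ should give $\frac32\log(\bar d+1)\le\frac q2(\bar d+1)$, using $\log x\le\sqrt x$ and $\sqrt{\bar d+1}\ge 3/q$. Then $2\chi_{\bar d}\ge\frac q2(\bar d+1)-\log\frac{2s^2}{C_5}$. The third term in the maximum is exactly what is needed to make this exceed $2C_6$, and then $\chi_{\bar d}>C_6$.
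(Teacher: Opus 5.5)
Your proposal is correct and follows the paper's proof essentially step for step. Positivity uses the same two small trees. Finiteness is proved by induction via the same product identity $S_{k+1}=\prod_{h\in\cz_k}(1-s^{2|\cv(h)|})^{-1}$, which comes from the multiset-of-root-subtrees bijection. Growth uses the same restriction to trees with exactly $\bar d+1$ vertices, Otter's lower bound, and $\log x\le\sqrt{x}$ to verify the stated $\bar d$.

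One slip should be deleted. The intermediate bound $S_{k+1}\le\exp(s^2S_k)$ is false; already $S_1=(1-s^2)^{-1}>e^{s^2}$. The reason is that a multiset with multiplicities $m_h$ corresponds to only $j!/\prod_h m_h!$ sequences, so multisets are bounded \emph{below}, not above, by sequences divided by $j!$. You do not need that bound: the product identity you state next, together with $\log\frac{1}{1-x}\le\frac{x}{1-x}$, gives the correct bound $\log S_{k+1}\le\frac{s^2}{1-s^2}S_k$, which is exactly the paper's.
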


\begin{proof} 

\textbf{\underline{Proof of Property 1:}}
For every $k\geq 1$, the set $\cz_k$ at least contains both the single-vertex tree and the tree consisting of a root with one single child. Consequently, the definition of $\chi_k$ gives
\begin{align}
    \chi_k
    =\frac12\log\sum_{t\in\cz_k}s^{2|\cv(t)|-2}
    \geq \frac12\log(1+s^2)>0.\nonumber
\end{align}

\textbf{\underline{Proof of Property 2:}}
We prove by induction that $\chi_k$ is finite for every finite $k$. For the base case $k=1$, every tree in $\cz_1$ is determined by the number of children of its root. If the root has $r$ children, then the tree has
$r+1$ vertices. Hence
\begin{align}
    \exp(2\chi_1)
    =\sum_{r=0}^{\infty}s^{2r}
    =\frac{1}{1-s^2},\nonumber
\end{align}
which is finite for every $s\in(0,1)$.

Now suppose by induction hypothesis that $\chi_{k-1}$ is finite for some $k\geq 2$, and we prove that $\chi_k$ is also finite. This induction step utilizes the following variational characterization of $\chi_k$, which is first proven in~\cite{Richard48TreeCount}:
\begin{align}
    \chi_k=\frac12\sum_{h \in \cz_{k-1}}
\log\Bigl(\frac{1}{1-s^{2|\cv(h)|}}\Bigr).\label{eq:chi-alt}
\end{align}
With \eqref{eq:chi-alt}, we have
\begin{align}
    \chi_k&\leq
    \frac12\sum_{h\in\cz_{k-1}}\frac{s^{2|\cv(h)|}}{1-s^{2|\cv(h)|}}
    \le  \frac12\sum_{h\in\cz_{k-1}}\frac{s^{2|\cv(h)|}}{1-s^{2}}\nonumber\\
    &\leq \frac{s^2}{2(1-s^2)}\sum_{h\in\cz_{k-1}}s^{2|\cv(h)|-2}
    =\frac{s^2}{2(1-s^2)}\exp(2\chi_{k-1}),\label{eq:chi-rec-bound}
\end{align}
where the first inequality holds because for any real number $x\in(0,1)$, $\log\left(\frac{1}{1-x}\right)\leq \frac{x}{1-x}$, the second inequality holds because $|\cv(h)|\geq 1$ for every $h\in\cz_{k-1}$, and the equality follows by the definition~\eqref{eq:defn-chik}. Thus, given that $\chi_{k-1}$ is finite, the right-hand side of \eqref{eq:chi-rec-bound} is finite, and hence $\chi_k$ is finite. 

We now prove~\eqref{eq:chi-alt}. First, using the geometric series expansion, for each ($h\in\cz_{k-1}$) we have
\[
(1-s^{2|\cv(h)|})^{-1}=\sum_{l=0}^{\infty}s^{2l|\cv(h)|}.
\]
Therefore, the right-hand side of~\eqref{eq:chi-alt} can be rewritten as
\begin{align}
    \frac12\sum_{h \in \cz_{k-1}}
\log\Bigl(\frac{1}{1-s^{2|\cv(h)|}}\Bigr)
=\frac12\sum_{h \in \cz_{k-1}}\log\left(\sum_{l=0}^{\infty} s^{2l|\cv(h)|}\right)
=\frac12\log\left(\prod_{h \in \cz_{k-1}}\sum_{l=0}^{\infty} s^{2l|\cv(h)|}\right).\label{eq:l9eq1}
\end{align}
We next expand the product inside the logarithm. For each factor indexed by $h\in\cz_{k-1}$, one needs to select one term from the sum $\sum_{l=0}^{\infty} s^{2l|\cv(h)|}.$ Such a selection is equivalently described by a function $\zeta:\cz_{k-1}\to \mathbb {Z}_{\ge 0},$ where $\mathbb{Z}_{\ge 0}:=\{0,1,2,\ldots\}$ and $\zeta(h)$ specifies that the term $s^{2\zeta(h)|\cv(h)|}$ is selected from the factor corresponding to $h$. Since all terms are nonnegative, we may expand the product as
\[
\prod_{h \in \cz_{k-1}}
\left(\sum_{l=0}^{\infty} s^{2l|\cv(h)|}\right)
=\sum_{\zeta:\cz_{k-1} \to\mathbb {Z}_{\ge 0}}s^{2\sum_{h \in \cz_{k-1}} \zeta(h)|\cv(h)|}.
\]

We now show that the functions $\zeta:\cz_{k-1}\to \mathbb Z_{\ge 0}$ are in one-to-one correspondence with the trees in $\cz_k$. Given such a function $\zeta$, construct a rooted tree $t_\zeta$ as follows. Start with a root vertex. For each $h\in\cz_{k-1}$, attach the $\zeta(h)$ children to this root and let the rooted descendant subtree below each of these children be a copy of $h$. Since $h$ has depth at most $k-1$, the resulting tree has depth at most $k$, and hence $t_\zeta\in\cz_k$.

Conversely, given any tree $t\in\cz_k$, each child of the root of $t$ has a rooted descendant subtree of depth at most $k-1$, and hence its isomorphism class belongs to $\cz_{k-1}$. Define $\zeta_t(h)$ to be the number of children of the root of $t$ whose rooted descendant subtree is isomorphic to $h$. Then $\zeta_t$ is a mapping from $\cz_{k-1}$ to $\mathbb{Z}_{\ge 0}$.

These two constructions are inverse to each other. Indeed, starting from a function $\zeta$, constructing the tree $t_\zeta$, and then counting the rooted descendant subtree isomorphism classes below the root recovers the same function $\zeta$. Conversely, starting from a tree $t$, forming the function $\zeta_t$, and then reconstructing a tree from $\zeta_t$ gives back the same rooted tree up to isomorphism. Therefore, there is a one-to-one correspondence between functions $\zeta:\cz_{k-1}\rightarrow\mathbb{Z}_{\ge 0}$ and trees $t\in\cz_{k}$.

Under this correspondence, for each $\zeta$, the number of vertices of $t_\zeta$ satisfies
\[
|\cv(t_\zeta)|=1 + \sum_{h \in \cz_{k-1}} \zeta(h)|\cv(h)|.
\]
Here the extra $1$ accounts for the root, while the summation counts all vertices in the subtrees attached below the root. Therefore, we have
\[
s^{2\sum_{h \in \cz_{k-1}} \zeta(h)|\cv(h)|}=s^{2|\cv(t_\zeta)|-2},
\]
which further implies that 
\[
\prod_{h \in \cz_{k-1}}
\left(\sum_{l=0}^{\infty} s^{2l|\cv(h)|}\right)
=\sum_{\zeta:\cz_{k-1} \to\mathbb {Z}_{\ge 0}}s^{2\sum_{h \in \cz_{k-1}} \zeta(h)|\cv(h)|}=\sum_{t\in \cz_k}s^{2|\cv(t)|-2}.
\]
Substituting this into the right-hand side of~\eqref{eq:l9eq1} yields 
\begin{align*}
\frac12\sum_{h \in \cz_{k-1}}
\log\Bigl(\frac{1}{1-s^{2|\cv(h)|}}\Bigr)=\frac12\log\left(\sum_{t\in \cz_k}s^{2|\cv(t)|-2}\right)=\chi_k,
\end{align*}
which completes the proof of~\eqref{eq:chi-alt}.

\textbf{\underline{Proof of Property 3:}}
Note that every rooted tree with $k$ vertices has depth at most $k-1$, and hence
\[
\{t\in\cz:|\cv(t)|=k\}=
\bigl\{ t \in\cz_{k-1}:| \cv(t)|=k\bigr\}.
\]
By Lemma~\ref{lem:otter}, there exist constants $C_5>0$ and $k_0\in\mathbb{N}$ such that for all $k\ge k_0$,
\begin{align}
    \bigl|\bigl\{ t \in\cz:| \cv(t)|=k\bigr\}\bigr|\;\ge\;
    \frac{C_5}{2k^{3/2}}C_{\mathrm{Otter}}^{-k}.\label{eq:otter_ineq}
\end{align}
Since $s^2>C_{\mathrm{Otter}}$, we have $\log\frac{s^2}{C_\mathrm{Otter}}>0$. We choose
\begin{align}
    \bar d= \left\lceil\max\left\{k_0-1,\left(\frac{3}{\log\frac{s^2}{C_\mathrm{Otter}}}\right)^2,\frac{2}{\log\frac{s^2}{C_\mathrm{Otter}}}\left(2C_6 + \log\frac{2s^2}{C_5}\right)\right\}\right\rceil.\nonumber
\end{align}
Using the definition~\eqref{eq:defn-chik} and restricting the summation to trees with exactly $\bar d+1$ vertices, we obtain
\begin{align}
\chi_{\bar d}=\frac12\log\sum_{ t \in\cz_{\bar d}}s^{2| \cv(t) |-2}\geq \frac12\log\sum_{\substack{ t \in\cz_{\bar d}, \\
     | \cv(t) |=\bar d+1 }}s^{2\bar d}
     =\frac12\log\left(s^{2\bar d}\left|
\left\{t\in\cz_{\bar d}: |\cv(t)|=\bar d+1\right\}\right|\right)\nonumber
\end{align}
Using~\eqref{eq:otter_ineq} with $k=\bar d+1\ge k_0$, we have
\begin{align}
    \chi_{\bar d}&\geq\frac12
    \log
    \left(
        s^{2\bar d}
        \frac{C_5}{2(\bar d+1)^{3/2}}
        C_{\mathrm{Otter}}^{-(\bar d+1)}
    \right)=\frac12 \left((\bar d+1)\log\frac{s^2}{C_\mathrm{Otter}}+\log\frac{C_5}{2s^2}-\frac32\log(\bar d+1) \right).\label{eq:chid_bound}
\end{align}
Note that $\sqrt{x+1}>\log(x+1)$ for any $x>-1$. We have
\[
\log(\bar d+1)< (\bar d+1)^{1/2}< \frac{\log\frac{s^2}{C_\mathrm{Otter}}}{3}(\bar d+1),
\]
where the last inequality follows because $\bar d+1>(\frac{3}{\log\frac{s^2}{C_\mathrm{Otter}}})^2$.
By \eqref{eq:chid_bound}, we have
\begin{align}
\chi_{\bar d}>\frac12 \left(\left(1-\frac32\times\frac13\right)(\bar d+1)\log\frac{s^2}{C_\mathrm{Otter}}+\log\frac{C_5}{2s^2}\right)= \frac14(\bar d+1)\log\frac{s^2}{C_\mathrm{Otter}}+\frac12\log\frac{C_5}{2s^2}>C_6,\nonumber
\end{align}
where the last inequality holds because $\bar d+1>\frac{2}{\log\frac{s^2}{C_\mathrm{Otter}}}\left(2C_6 + \log\frac{2s^2}{C_5}\right)$. This completes the proof.

\end{proof}

\begin{lem}[First moment identity~\texorpdfstring{\cite[Corollary~1]{Luca24Stat}}{}]
\label{lem:init-first-moment}
Fix $s\in(0,1)$ and $k\ge1$.  Let $\ch$ and $\td\ch$ be Poisson
Galton--Watson trees with offspring mean $\mu$, and let $\bp$ denote the
correlated tree-pair law.  Then
\begin{equation}
    \label{eq:init-first-moment-sample}
    \be_\bp\left[L_k([\ch_k],[\td\ch_k])
    \right] =\exp(2\chi_k).
\end{equation}
Since $\chi_k>0$ for any $k\ge 1$, this also implies that $\be_\bp\left[L_k([\ch_k],[\td\ch_k])
    \right]> 1$ for any $k\ge 1$.
\end{lem}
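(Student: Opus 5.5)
We need to prove $\be_\bp[L_k]=\exp(2\chi_k)$, where $\chi_k$ is defined through $\sum_{t\in\cz_k}s^{2|\cv(t)|-2}$. The plan is to start from the second-moment form $\be_\bp[L_k]=\sum_{(t,\td t)}\bp(t,\td t)^2/\bq(t,\td t)=\be_\bq[L_k^2]$ and evaluate it by recursion on the depth, matching the recursion for $\exp(2\chi_k)$ given by the variational identity~\eqref{eq:chi-alt}: $\exp(2\chi_k)=\prod_{h\in\cz_{k-1}}(1-s^{2|\cv(h)|})^{-1}$.

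\emph{Step 1: a Poisson representation at the root.} Under $\bq$, the root of $[\ch_k]$ has, for each $h\in\cz_{k-1}$, $N_h\sim\poi(\mu p_h)$ children of type $h$, independently over $h$ (Poisson thinning). Under $\bp$, split the root's children into three groups: intersection children ($\poi(\mu s)$), $\ch$-only children ($\poi(\mu(1-s))$), and $\td\ch$-only children. Intersection children produce pairs $(h,\td h)$ with law $\bp_{k-1}$; the others produce independent $\gw_{k-1}$ subtrees. Hence the counts $M_{h,\td h}\sim\poi(\mu s\,\bp_{k-1}(h,\td h))$, $A_h\sim\poi(\mu(1-s)p_h)$ and $\td A_{\td h}\sim\poi(\mu(1-s)p_{\td h})$ are all independent, with $N_h=A_h+\sum_{\td h}M_{h,\td h}$ and $\td N_{\td h}=\td A_{\td h}+\sum_h M_{h,\td h}$. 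The root-level likelihood ratio of the count vectors $(N,\td N)$ then becomes a product over pairs.

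\emph{Step 2: second moment.} Take $\be_\bq[L^2]$ over these count vectors. Factorize $L$ as a sum over couplings $M$, and use the generating function of independent Poissons; one should get
\[
\be_\bq[L_k^2]=\exp\Bigl(\mu s^2\sum_{h,\td h}\tfrac{\bp_{k-1}(h,\td h)^2}{\bq_{k-1}(h,\td h)}\cdots\Bigr).
\]
This expression is awkward. A cleaner route is to expand the exact formula via the Otter-type generating function. Because $L$ is a likelihood ratio of unlabeled trees, symmetric automorphism factors appear. The main obstacle is handling these automorphism/multiplicity factors so that the sum collapses precisely to $\sum_{t}s^{2|\cv(t)|-2}$.

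\emph{Step 3: the target expansion.} I would instead compute $\be_\bq[L^2]=\sum_{t,\td t}\bp(t,\td t)^2/\bq(t,\td t)$ by writing $\bp(t,\td t)=\sum_{t^*}(\text{ways } t^*\text{ embeds in both})$. Expanding the square yields pairs of intersection trees. The identity to aim for is that the two intersection trees must coincide, contributing $\sum_{t^*}s^{2(|\cv(t^*)|-1)}$, i.e.\ each non-root intersection vertex is counted with weight $s^2$. This is the structure in \cite[Corollary~1]{Luca24Stat}. Concretely, I would prove it by induction on $k$ via the root recursion. Given independence over types $h$, the contribution factorizes over $h\in\cz_{k-1}$, and the factor for type $h$ should equal $\sum_{m\ge0}(s^{2|\cv(h)|})^m=(1-s^{2|\cv(h)|})^{-1}$. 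Here $m$ is the number of shared children of type $h$, each carrying weight $s^2$ times the inductive weight $s^{2(|\cv(h)|-1)}$ of its subtree. Verifying this single-type Poisson computation (that the $\mu$- and $p_h$-dependence cancels exactly in the Poisson chi-square sum) is the crux. Multiplying over $h$ and invoking \eqref{eq:chi-alt} then gives $\exp(2\chi_k)$. The final claim follows from Property~1 of Lemma~\ref{lem:init-chi-prop}.
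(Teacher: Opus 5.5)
There is a genuine gap, and it sits at the step you flag as the crux. The root counts are not independent across child types. Under $\bp$, for $h\neq\td h$ the counts $N_h$ and $\td N_{\td h}$ share the component $M_{h,\td h}$, which has positive mean when $s<1$. So $\be_\bq[L_k^2]$ does not split into single-type factors.

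The natural single-type computation also does not give $(1-s^{2|\cv(h)|})^{-1}$. Take $k=2$ and $h$ the one-vertex tree. Then $p_h=e^{-\mu}$ and $M_{h,h}\sim\poi(\mu s e^{-\mu(2-s)})$. Hence $(N_h,\td N_h)$ is a common-shock bivariate Poisson pair with correlation $se^{-\mu(1-s)}$. Its chi-square term is $(1-s^2e^{-2\mu(1-s)})^{-1}$, which depends on $\mu$ and is not $(1-s^2)^{-1}$.

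Your Step~3 heuristic that the two intersection trees ``must coincide'' fails for the same reason. If you expand $\bp(t,\td t)$ as a mixture over intersection trees, every cross term of $\sum\bp^2/\bq$ is strictly positive. The diagonal weights involve the law of the intersection tree and so depend on $\mu$. Nothing cancels until you pass to a centered orthonormal basis. Also, the root-level likelihood ratio in Step~1 is a sum over couplings, as in \eqref{eq:rec}, not a product.

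The missing idea is that the inductive hypothesis $\be_\bq[L_{k-1}^2]=\exp(2\chi_{k-1})$ is too weak to determine $\be_\bq[L_k^2]$. If you complete Step~2 with the Charlier generating function, your Poisson representation gives
\[
\be_\bp\Bigl[\prod_{h}e^{-w_h}\Bigl(1+\frac{w_h}{\mu p_h}\Bigr)^{N_h}\prod_{\td h}e^{-v_{\td h}}\Bigl(1+\frac{v_{\td h}}{\mu p_{\td h}}\Bigr)^{\td N_{\td h}}\Bigr]=\exp\Bigl(\frac{s}{\mu}\sum_{h,\td h}L_{k-1}(h,\td h)\,w_h v_{\td h}\Bigr).
\]
Expand $L_k$ in the orthonormal Charlier basis. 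The coefficients are permanents of the kernel $L_{k-1}$. Summing their squares over the cycle decomposition of permutations yields $\be_\bq[L_k^2]=\det(I-s^2\mathcal{K}^2)^{-1}$. Here $\mathcal{K}$ is the self-adjoint operator $(\mathcal{K}g)(h)=\sum_{\td h}p_{\td h}L_{k-1}(h,\td h)g(\td h)$ on $L^2(\cz_{k-1},p)$. This quantity depends on the full spectrum of $\mathcal{K}$, not only on its Hilbert--Schmidt norm $\be_\bq[L_{k-1}^2]$.

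What closes the induction is a spectral statement: $L_{k-1}(h,\td h)=\sum_{\tau\in\cz_{k-1}}s^{|\cv(\tau)|-1}f_\tau(h)f_\tau(\td h)$ with $(f_\tau)$ orthonormal in $L^2(\gw)$. With this, the factor $(1-s^{2|\cv(\tau)|})^{-1}$ is indexed by the eigenmode $\tau$, and $m$ is that mode's occupation number in the Poisson chaos. It is not the number of shared children of shape $\tau$.

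The depth-$k$ operator is the second quantization of $s\mathcal{K}$. Its eigenvalues are therefore $s^{|\cv(t)|-1}$ for $t\in\cz_k$. This propagates the hypothesis and gives $\sum_{t\in\cz_k}s^{2|\cv(t)|-2}$ directly, without needing \eqref{eq:chi-alt}.

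The paper gives no proof of this lemma; it imports it from \cite[Corollary~1]{Luca24Stat}. There it is, in essence, derived from an orthogonal decomposition of $L_k$ of this kind. Your final sentence, deducing $\be_\bp[L_k]>1$ from Property~1, is fine.
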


\begin{lem}[Log moment lower bound\texorpdfstring{\cite[Proposition~3.1]{Luca24Stat}}{}]
\label{lem:init-kl-lb}
Fix $s\in(0,1)$ and $k\ge1$.  Let $\ch$ and $\td\ch$ be Poisson
Galton--Watson trees with offspring mean $\mu$, and let $\bp$ denote the
correlated tree-pair law.  Then
\begin{equation}
    \label{eq:init-kl-lb-sample}
    \liminf_{\mu\to\infty}
    \be_\bp\left[\log L_k([\ch_k],[\td\ch_k])\right]\ge\chi_k.
\end{equation}
\end{lem}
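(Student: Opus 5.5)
The plan is to prove the bound by reducing $\be_\bp[\log L_k]$, which equals $D(\bp_k\|\bq_k)$ on depth-$k$ unlabeled trees, to the KL divergence between count fields, and then to pass to a Gaussian limit as $\mu\to\infty$.

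\textbf{Step 1: reduction to Poisson count fields.} An unlabeled tree $t\in\cz_k$ is determined by its root-children counts $(N_h(t))_{h\in\cz_{k-1}}$; this is the bijection used in the proof of \eqref{eq:chi-alt}. So $D(\bp_k\|\bq_k)$ equals the KL divergence between the laws of the pair of count vectors $\bigl((N_h)_h,(\td N_{\td h})_{\td h}\bigr)$.

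Under $\bq$, Poisson thinning makes all the $N_h,\td N_{\td h}$ independent, with $N_h\sim\poi(\mu p_h)$ and $\td N_{\td h}\sim\poi(\mu p_{\td h})$.

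Under $\bp$, the root has $\poi(\mu s)$ shared children and $\poi(\mu(1-s))$ private children on each side. Each shared child carries an independent correlated pair distributed as $\bp([\ch_{k-1}]=h,[\td\ch_{k-1}]=\td h)$. This gives the representation
\[
N_h=\sum_{\td h}M_{h,\td h}+E_h,\qquad \td N_{\td h}=\sum_{h}M_{h,\td h}+\td E_{\td h},
\]
where all terms are independent Poisson variables, $M_{h,\td h}\sim\poi(\mu s\,\bp_{k-1}(h,\td h))$, and $E_h\sim\poi(\mu(1-s)p_h)$. The marginals agree with $\bq$, and $\mathrm{Cov}_\bp(N_h,\td N_{\td h})=\mu s\,\bp_{k-1}(h,\td h)$.

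\textbf{Step 2: finite truncation and Gaussian limit.} Fix a finite set $S\subset\cz_{k-1}$. By data processing, $D(\bp_k\|\bq_k)$ is at least the KL divergence between the laws of the restricted vector $\bigl((N_h-\mu p_h)/\sqrt{\mu p_h}\bigr)_{h\in S}$ together with its tilde analogue.

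As $\mu\to\infty$, the multivariate CLT for Poisson sums shows these restricted vectors converge in distribution. Under $\bp$ the limit is a Gaussian with identity blocks and cross-covariance $R_S$, where $R_S(h,\td h)=s\,\bp_{k-1}(h,\td h)/\sqrt{p_hp_{\td h}}$. Under $\bq$ the limit is the standard Gaussian.

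Lower semicontinuity of KL under weak convergence (Donsker--Varadhan) then gives
\[
\liminf_{\mu\to\infty}\be_\bp[\log L_k]\ \ge\ -\tfrac12\log\det(I-R_SR_S^\top).
\]
One subtlety: $\bp_{k-1}$ and $p_h$ depend on $\mu$. I would check that $R_S$ has a limit, or else work with the $\mu$-uniform spectral description in Step 3 directly.

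\textbf{Step 3: spectral identification (main obstacle).} It remains to show that, as $S\uparrow\cz_{k-1}$, the right-hand side tends to $\frac12\sum_{h}\log\frac{1}{1-s^{2|\cv(h)|}}$. By \eqref{eq:chi-alt}, this quantity equals $\chi_k$.

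Equivalently, the operator $R$ on $\ell^2$ (weighted by $p$) should have singular values $\{s^{|\cv(h)|}\}_{h\in\cz_{k-1}}$. Here $R$ is $s$ times the Markov kernel $h\mapsto \td h$ that passes through the intersection tree.

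I would prove this by induction on depth. The kernel acts as ``keep each child edge independently with probability $s$, then regrow the deleted parts independently.'' In the large-$\mu$ regime, its natural eigenfunctions are (normalized, Charlier-type) polynomials in the counts that are indexed by subtree patterns. A pattern with $m$ edges picks up the factor $s^{m}$, and the extra root--child edge contributes one more factor $s$, giving eigenvalue $s^{|\cv(h)|}$.

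Getting this eigenstructure right, including orthogonality as $\mu\to\infty$ and the matching between unlabeled patterns and $\cz_{k-1}$, is the step I expect to be hardest. If it proves too delicate, a fallback is to obtain only the lower bound. One would exhibit, for each $h$, a test function built from $N_h$ with correlation at least $s^{|\cv(h)|}(1-o(1))$, make these asymptotically orthogonal across distinct $h$, and use monotonicity of $-\log\det(I-\cdot)$ to get $\ge\chi_k$ after the truncation limit.
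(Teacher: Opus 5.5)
The paper does not prove this lemma; it imports it from \cite[Proposition~3.1]{Luca24Stat}, so your argument has to stand on its own. Step~1 is correct. The target in Step~3 is also the right one: the normalized kernel $R$ should have singular values $\{s^{|\cv(h)|}\}_{h\in\cz_{k-1}}$. This is consistent with the $\mu$-independent identity $\be_\bq[L_{k-1}^2]=\be_\bp[L_{k-1}]=\sum_{t\in\cz_{k-1}}s^{2(|\cv(t)|-1)}$ from Lemma~\ref{lem:init-first-moment}, which is exactly the squared Hilbert--Schmidt norm of $R/s$. And $-\frac12\sum_h\log(1-s^{2|\cv(h)|})=\chi_k$ by \eqref{eq:chi-alt}.

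\textbf{The gap is Step~2, which fails for every $k\ge 2$.}

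Fix $h\in\cz_{k-1}$ with root degree $c_0$. The event $[\ch_{k-1}]=h$ forces the root to have exactly $c_0$ children, so $\mu p_h\le \mu\,\P(\poi(\mu)=c_0)= e^{-\mu}\mu^{c_0+1}/c_0!\to 0$. Consequences:
\begin{itemize}
\item $N_h=0$ with probability tending to one under both $\bp$ and $\bq$.
\item $(N_h-\mu p_h)/\sqrt{\mu p_h}\to 0$ in probability, not to $N(0,1)$; its unit variance is carried by the rare event $\{N_h\ge1\}$.
\item The entries of $R_S$ vanish as well: $\mathrm{Corr}_\bp(N_h,\td N_h)=s\,\bp_{k-1}(h,h)/p_h\le s\,\P(\poi(\mu(1-s))\le c_0)\to0$. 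This holds because the private root children of $\td\ch$ are independent of $\ch$, and there must be at most $c_0$ of them.
\end{itemize}
So for any fixed finite $S$ both limit laws are $\delta_0$, and lower semicontinuity yields only $\liminf\ge 0$. As written, the argument works only for $k=1$, where $\cz_0=\{\bullet\}$ and $N_\bullet\sim\poi(\mu)$. Your fallback has the same defect: no function of $N_h$ alone, for fixed $h$, can retain correlation $s^{|\cv(h)|}$.

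\textbf{The missing idea.} The Gaussian structure does not live in isomorphism-class coordinates. It lives in $\mu$-dependent linear combinations spread over many classes, namely the singular functions $\phi_h$ of the depth-$(k-1)$ kernel. The relevant statistics are $Y_h=\mu^{-1/2}\sum_v\phi_h(t_v)$, summed over root children $v$, where $t_v$ is the subtree below $v$ (centered when $h=\bullet$).

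For example, when $k=2$ these are $\mu^{-1/2}\sum_v\hat C_j(\deg v)$, with $\hat C_j$ the normalized Charlier polynomials of $\poi(\mu)$. They are orthonormal under $\bq$ and have cross-covariance $s\cdot s^j=s^{j+1}$ under $\bp$, matching the star with $j$ leaves.

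To complete your route you must:
\begin{itemize}
\item[(a)] identify these functions for general $k$, and prove their orthonormality and the diagonal cross-covariance $s^{|\cv(h)|}$;
\item[(b)] prove a joint CLT for these compound-Poisson sums under both $\bp$ and $\bq$. This requires moment bounds such as $\be|\phi_h|^3=o(\sqrt{\mu})$ uniformly in $\mu$, and hence an explicit Charlier/pattern-count description of $\phi_h$, not just the abstract existence of singular vectors.
\end{itemize}
Given (a) and (b), lower semicontinuity yields $\liminf_{\mu\to\infty}\be_\bp[\log L_k]\ge-\frac12\sum_{h\in J}\log(1-s^{2|\cv(h)|})$ for every finite $J\subset\cz_{k-1}$. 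Letting $J\uparrow\cz_{k-1}$ then gives $\chi_k$.

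Your proposal defers exactly (a) and (b) as the ``main obstacle,'' and these are essentially the whole content of the cited result.
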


\subsection{Moment bounds of the boosting statistics}

\begin{lem}[Moment bounds~\texorpdfstring{\cite[Lemma~B.1]{Luca24Stat}}{}]\label{lem:p1mom}
For any $k\geq 1$ and $\mu>1$, let $\ch$ and $\td\ch$ be two \GW trees with $\mathrm{Poi}(\mu)$ offsprings, and $X_{k+1}(t,\td t)$ is defined in \eqref{eq:X_def}. Then
\begin{align}
    \E_{\bp} \left[ X_{k+1}([\ch_{k+1}],[\td\ch_{k+1}]) \right] &=\mu s \bp(([\ch_k], [\td\ch_k]) \in \cf_k),\label{eq:p1mom} \\
\operatorname{Var}_{\bp} \left(X_{k+1}([\ch_{k+1}],[\td\ch_{k+1}])\right)
&\leq \mu s\bp(([\ch_k], [\td\ch_k]) \in \cf_k)
+ \mu^{2}(1+s^{2})\bq(([\ch_k], [\td\ch_k]) \in \cf_k).\label{eq:p2mom}\\
        \E_{\bq} \left[ X^4_{k+1}([\ch_{k+1}],[\td\ch_{k+1}]) \right]
&\leq 36\mu^4\bigl(\bq(([\ch_k], [\td\ch_k]) \in \cf_k)\bigr)^{2}
+13 \mu^{3} \bq(([\ch_k], [\td\ch_k]) \in \cf_k). \label{eq:q5mom}
    \end{align}
\end{lem}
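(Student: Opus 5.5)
The plan is to reduce everything to explicit independent Poisson variables indexed by subtree types, and then compute moments through joint cumulants.

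\textbf{Poisson representation.} Under $\bq$, the root of $\ch_{k+1}$ has $\poi(\mu)$ children, each with an i.i.d.\ depth-$k$ subtree of type $h$ with probability $p_h$. By Poisson thinning, $\{N_h(t)\}_{h\in\cz_k}$ are independent with $N_h\sim\poi(\mu p_h)$, and the $\td\ch$ side is an independent copy. Under $\bp$, I decompose the root's children into two groups.
\begin{itemize}
\item The $\poi(\mu s)$ children inherited from $\ch^*$. Each carries a depth-$k$ subtree pair with law $\bp$, so thinning gives independent $M_{h,\td h}\sim\poi(\mu s\,\bp([\ch_k]=h,[\td\ch_k]=\td h))$.
\item The extra children on each side, giving independent $A_h\sim\poi(\mu(1-s)p_h)$ and $\td A_{\td h}$.
\end{itemize}
Then $N_h=A_h+\sum_{\td h}M_{h,\td h}$ and $N_{\td h}=\td A_{\td h}+\sum_h M_{h,\td h}$. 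The marginals are still $\poi(\mu p_h)$, and every $N$ is a linear combination of independent Poissons.

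Write $Y_h=N_h-\mu p_h$ and $\td Y_{\td h}=N_{\td h}-\mu p_{\td h}$. All joint cumulants are then explicit:
\begin{itemize}
\item $\kappa(Y_a,Y_c)=\mu p_a\indi\{a=c\}$;
\item $\kappa(Y_a,\td Y_b)=\mu s\,\bp(a,b)$ under $\bp$, and $0$ under $\bq$;
\item the fourth cumulant is $\kappa(Y_a,\td Y_b,Y_c,\td Y_d)=\mu s\,\bp(a,b)\indi\{a=c,b=d\}$, since only the $M$ variables feed both sides.
\end{itemize}

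\textbf{Mean and variance under $\bp$.} The identity \eqref{eq:p1mom} follows immediately: $\E_\bp X_{k+1}=\sum_{(h,\td h)\in\cf_k}\mu s\,\bp(h,\td h)$.

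For the variance, I expand $\E[Y_a\td Y_bY_c\td Y_d]$ as $\kappa_4$ plus the three pair products and subtract the mean squared, which removes the $\kappa(a,b)\kappa(c,d)$ term. Three contributions remain.
\begin{itemize}
\item The $\kappa_4$ term gives exactly $\mu s\,\bp(\cf_k)$.
\item The $\kappa(Y_a,Y_c)\kappa(\td Y_b,\td Y_d)$ term gives $\mu^2\sum_{\cf_k}p_ap_b=\mu^2\bq(\cf_k)$.
\item The crossed term is $\mu^2s^2\sum_{(a,b),(c,d)\in\cf_k}\bp(a,d)\bp(c,b)$, and it must be bounded by $\mu^2 s^2\bq(\cf_k)$.
\end{itemize}

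\textbf{Main obstacle: the crossed term.} Writing $\bp(a,d)=p_ap_dL_k(a,d)$, the sum becomes $\mathrm{tr}(\mathbf F\mathbf L\mathbf F^\top\mathbf L)$ in $L^2(p)$. Here $\mathbf F$ is the indicator kernel of $\cf_k$ and $\mathbf L$ is the integral operator with kernel $L_k$ against the GW measure.

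I would invoke the known spectral structure of the correlated GW kernel from \cite{Luca24Stat}. $L_k$ is a symmetric positive semidefinite kernel whose eigenvalues, indexed by tree types, are powers of $s$ and hence lie in $[0,1]$. So $\|\mathbf L\|_{\mathrm{op}}\le 1$, and therefore
\[
|\mathrm{tr}(\mathbf F\mathbf L\mathbf F^\top\mathbf L)|\le\|\mathbf F\|_{HS}^2=\sum_{\cf_k}p_ap_b=\bq(\cf_k).
\]
If that spectral fact is unavailable in usable form, I would fall back on Cauchy--Schwarz: $\bp(a,d)\bp(c,b)\le\frac12\bigl(\bp(a,d)^2\frac{p_cp_b}{p_ap_d}+\bp(c,b)^2\frac{p_ap_d}{p_cp_b}\bigr)$. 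This is messier, and I would expect the spectral route to be the one that works. Combining the three contributions gives \eqref{eq:p2mom}.

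\textbf{Fourth moment under $\bq$.} Here all $Y_h$ and $\td Y_{\td h}$ are independent and centered, with
\[
\E Y^2=\mu p,\qquad \E Y^3=\mu p,\qquad \E Y^4=\mu p+3\mu^2p^2.
\]
Then $\E X^4=\sum\E[Y_{a_1}\cdots Y_{a_4}]\,\E[\td Y_{b_1}\cdots\td Y_{b_4}]$ over quadruples of pairs in $\cf_k$. The untilded moment is nonzero only when the indices $a_i$ form a perfect pairing or are all equal, and likewise for the $b_i$. I would enumerate the patterns.
\begin{itemize}
\item Two disjoint pairings on both sides give at most $3\cdot3\,\mu^4(\sum_{\cf_k}p_ap_b)^2$, plus cross-pairing terms bounded the same way by Cauchy--Schwarz. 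Together these produce the $36\mu^4\bq(\cf_k)^2$ term.
\item Patterns where all indices coincide on at least one side give terms of order $\mu^3\bq(\cf_k)$, $\mu^2\bq(\cf_k)$ and similar. Since $\mu>1$, these are absorbed into $13\mu^3\bq(\cf_k)$.
\end{itemize}
Tracking the constants 36 and 13 is routine bookkeeping, which yields \eqref{eq:q5mom}.
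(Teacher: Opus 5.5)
Your proof is correct. The paper does not prove this lemma itself; it imports it from \cite[Lemma~B.1]{Luca24Stat}. So your argument is a self-contained replacement for that citation rather than a rival to an in-paper proof. The following parts all check out:
\begin{itemize}
\item the Poisson-thinning representation;
\item the cumulant table;
\item the mean identity;
\item the exact variance decomposition into $\kappa_4$, matched, and crossed contributions.
\end{itemize}

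\textbf{The crossed term.} The step you call the main obstacle is elementary. It needs neither the spectral decomposition of $L_k$ from \cite{Luca24Stat} nor your Cauchy--Schwarz fallback. Write $\bp(a,b):=\bp([\ch_k]=a,[\td\ch_k]=b)$. All terms are nonnegative, so you can drop the constraint $(c,d)\in\cf_k$ and use that both marginals of $\bp$ equal $p$:
\[
\sum_{(a,b)\in\cf_k}\sum_{(c,d)\in\cf_k}\bp(a,d)\,\bp(c,b)\le\sum_{(a,b)\in\cf_k}\Bigl(\sum_{d}\bp(a,d)\Bigr)\Bigl(\sum_{c}\bp(c,b)\Bigr)=\sum_{(a,b)\in\cf_k}p_ap_b .
\]
This is exactly the $\mu^2s^2\,\bq(([\ch_k],[\td\ch_k])\in\cf_k)$ part of \eqref{eq:p2mom}.

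If you prefer your operator formulation, $\|\mathbf L\|_{\mathrm{op}}\le 1$ on $L^2(p)$ also holds for an elementary reason. Set $(T,\td T)=([\ch_k],[\td\ch_k])\sim\bp$. Then $\mathbf Lf(x)=\E_\bp[f(\td T)\mid T=x]$ is a conditional expectation, hence a contraction by Jensen. No eigenvalue information is needed.

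One small slip: with the convention $\mathbf Kf(x)=\sum_yK(x,y)f(y)p_y$, the crossed sum is $\mathrm{tr}(\mathbf F\mathbf L\mathbf F\mathbf L)$, not $\mathrm{tr}(\mathbf F\mathbf L\mathbf F^\top\mathbf L)$, unless $\cf_k$ is symmetric. The Hilbert--Schmidt bound is unaffected.

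\textbf{The fourth moment.} Your enumeration goes through. Write $q:=\bq(([\ch_k],[\td\ch_k])\in\cf_k)$, and split each side's fourth moment as (fourth cumulant) plus (sum over the three pairings). Then:
\begin{itemize}
\item the cumulant--cumulant term equals $\mu^2q$;
\item each of the six cumulant--pairing terms is at most $\mu^3q$, since $\sum_b\indi\{(a,b)\in\cf_k\}p_b\le 1$;
\item the three matched pairing--pairing terms each equal $\mu^4q^2$;
\item the six unmatched ones are each at most $\mu^4q^2$ by your Cauchy--Schwarz step.
\end{itemize}
Hence $\E_\bq[X_{k+1}^4]\le 9\mu^4q^2+6\mu^3q+\mu^2q\le 9\mu^4q^2+7\mu^3q$ for $\mu\ge1$, comfortably within the constants of \eqref{eq:q5mom}.
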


\section{Additional Lemmas for Proving Proposition \ref{th:TBAlign}}
\label{appd:rare}
\subsection[Event A3u]{Event $\ca_{3,u}$}
\begin{restatable}{lem}{TBAlignIdealEone}
\label{lem:TBAlignIdeal-E1}
Under the assumptions in Proposition~\ref{th:TBAlign}, we have
\begin{align}
    \P(\ca^c_{3,u})
    =
    n^{-1+o(1)}.
    \nonumber
\end{align}
\end{restatable}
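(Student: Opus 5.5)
We bound $\P(\ca_{3,u}^c)$ by a first-moment count of short cycles near $\bar u$ in the union graph $\bar G$. Since every edge of $\bar G$ is an edge of the base graph $G^*\sim\mathrm{ER}(n,\frac{\lambda}{ns})$, it suffices to work in $G^*$ with edge probability $p:=\frac{\lambda}{ns}$. Set $R:=2(d+1)$ and note that $np=\lambda/s=(\log n)^{\alpha+o(1)}$.

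\textbf{Reduction to a cycle plus a path.} If the $R$-hop neighborhood of $\bar u$ is not a tree, it contains a cycle. Take a shortest path from $\bar u$ to that cycle. Then there is a cycle of length $m\le 2R+1$ and a path of length $j\le R$ from $\bar u$ to a vertex of the cycle, with the path otherwise disjoint from the cycle ($j=0$ is allowed when $\bar u$ lies on the cycle). The union of path and cycle is a connected subgraph with $j+m$ vertices other than $\bar u$ and exactly $j+m$ edges, i.e.\ one more edge than a tree on the same vertex set.

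\textbf{Union bound.} The number of such labeled configurations is at most $n^{j+m-1}\cdot m$, where the factor $m$ accounts for the attachment point on the cycle. Each configuration is present with probability $p^{j+m}$. Hence
\[
\P(\ca_{3,u}^c)\le \sum_{j=0}^{R}\sum_{m=3}^{2R+1} m\, n^{j+m-1}p^{j+m}=\frac1n\sum_{j,m} m\,(\lambda/s)^{j+m}.
\]
This sum is at most $\frac{1}{n}\,O(R^3)\,(\lambda/s)^{3R+1}$.

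\textbf{Showing the remaining factor is $n^{o(1)}$.} This is the only step needing care. We have $(\lambda/s)^{3R+1}=\exp\bigl(O(d\log\lambda)\bigr)$. Since $d\log\lambda=(\log n)^{\gamma}\cdot O(\log\log n)$ and $\gamma<1$, this is $o(\log n)$. Likewise $R^3=(\log n)^{O(1)}=n^{o(1)}$. Together these give $\P(\ca_{3,u}^c)\le n^{-1+o(1)}$.

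The statement is an equality $n^{-1+o(1)}$. If the matching lower bound is intended, it follows from the probability that some triangle contains $\bar u$, which is about $\binom{n}{2}p^3\asymp \lambda^3/n=n^{-1+o(1)}$. Only the upper bound is needed for the application in the proof of Proposition~\ref{th:TBAlign}, and no step here is a genuine obstacle.
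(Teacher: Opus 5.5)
Your proof is correct, and it takes a genuinely different route from the paper's. The paper explores $\bar G$ breadth-first from $\bar u$. Conditioned on the earlier layers being tree-like and not too large, it union-bounds at each depth $k$ three events: $\cb_k$ (a layer-$k$ vertex with two neighbors in layer $k-1$), $\cc_k$ (an edge inside layer $k$) and $\cd_k$ (layer $k$ too large, handled by a Chernoff bound). Summing over $k\le 2(d+1)$ gives $O\bigl(d\lambda(\lambda(4-2s))^{4d+4}\log^2 n/n\bigr)$. You instead do one unconditional first-moment count of ``cycle plus tail'' subgraphs through $\bar u$ in the dominating graph $G^*$. This needs no layer-size control, no conditioning and no concentration inequality, and it gives the cleaner bound $O\bigl(R^3(\lambda/s)^{3R+1}/n\bigr)$. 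The paper's exploration is heavier, but it mirrors the events and bookkeeping it needs anyway for the Galton--Watson coupling in Lemma~\ref{lem:tree_tv}.

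One step needs to be made explicit. The claim ``it contains a cycle \ldots\ then there is a cycle of length $m\le 2R+1$'' fails for an arbitrary cycle in the ball (e.g.\ $\bar u$ adjacent to every vertex of a long cycle), so you must choose the cycle. Take a BFS tree of the $R$-ball and a non-tree edge $(x,y)$ inside the ball. Let $z$ be the last common vertex of the tree paths from $\bar u$ to $x$ and to $y$. The cycle through $z,x,y$ has length at most $2R+1$. The tree path from $\bar u$ to $z$ is a tail of length at most $R$ meeting the cycle only at $z$, and in fact $j+m\le 2R+1$.

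Two smaller points. The configuration has $j+m-1$ vertices besides $\bar u$, not $j+m$; your count $n^{j+m-1}$ already uses the right number. Your lower-bound aside would need triangles in $\bar G$ (edge probability $\lambda(2-s)/n$) rather than in $G^*$, plus a Bonferroni or second-moment step. Like the paper, though, you only need the upper bound.
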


\begin{proof}
Recall that we defined 
\[
\ca_{3,u}:=\{\text{the }2(d+1)\text{-hop neighborhood of vertex $u$ in  }\bar{G} \text{ is a tree}\}.
\]
For each $0\le k\le 2(d+1)$, we define $\bar\cs(u,k)$ as the set of vertices at distance $k$ from $u$ in $\bar G$. 
We also define the following three events for each $1\le k\le 2(d+1)$:
\begin{itemize}
    \item $\cb_k$ is the event that there exists a vertex in $\bar\cs(u,k)$ that is adjacent to two distinct vertices in $\bar\cs(u,k-1)$.
    \item $\cc_k$ is the event that there exist two vertices in $\bar\cs(u,k)$ that are adjacent in $\bar G$.
    \item $\cd_k$ is the event that $|\bar\cs(u,k)|\ge (\lambda(4-2s))^k\log n$.
\end{itemize}

We first show that
    \[
        \bigcap_{k=1}^{2(d+1)}(\cb_k^c\cap\cc^c_k)
        \subseteq \ca_{3,u}.
    \]
Indeed, consider the subgraph of $\bar G$ induced by the vertices within distance $2(d+1)$ from $u$. For every $k\ge 1$, each vertex in $\bar \cs(u,k)$ has at least one neighbor in $\bar \cs(u,k-1)$ by the definition of graph distance. On the event $\cb_k^c$, no vertex in $\bar \cs(u,k)$ is adjacent to two distinct vertices in $\bar \cs(u,k-1)$. Hence every vertex in $\bar \cs(u,k)$ has a unique parent in the previous layer. Moreover, on the event $\cc_k^c$, there are no edges between two vertices in the same layer $\bar \cs(u,k)$. Together, these imply that the $2(d+1)$-hop neighborhood of $u$ is a tree. We note that event $\cd_k$ is not necessary for establishing $\ca_{3,u}$, but it controls the size of each layer of the neighborhood. These size bounds will later help bounding the error events.

From the result above, we have
\begin{align}
    \P(\ca_{3,u}^c)&\le \P\left(\bigcup_{k=1}^{2(d+1)}\cb_k\cup\cc_k\cup\cd_k\right)\nonumber\\
    &\le \sum_{k=1}^{2(d+1)}\P\left(\cb_k\cup\cc_k\cup\cd_k\,\bigg|\,\bigcap_{i=1}^{k-1}\cb_i^c\cap\cc^c_i\cap\cd^c_i\right)\label{eq:tree_chain}.
\end{align}
Next, we will focus on bounding each term in this summation. 

Conditioned on event $\cd_{k-1}$, there are at most $(\lambda(4-2s))^{k-1}\log n$ nodes in the set $\bar\cs(u,k-1)$.
This bound is also satisfied at $k-1=0$ as $|\bar\cs(u,0)|=1\le \log n$.
Therefore, there are at most $n(\lambda(4-2s))^{2k-2}\log^2 n$ ways to pick two distinct vertices from $\bar\cs(u,k-1)$ and a vertex that is outside the $(k-1)$-hop neighborhood of $u$. In the union graph $\bar G$, the probability that the two picked vertices both have an edge with the vertex in the unexplored part is $\frac{\lambda^2(2-s)^2}{n^2}$. We can apply a union bound to obtain
\begin{equation}
    \label{eq:bkcond}
    \P\left(\cb_k\,\bigg|\, \bigcap_{i=1}^{k-1}\cb_i^c\cap\cc^c_i\cap\cd^c_i\right)=O\left(\frac{(\lambda(4-2s))^{2k}\log^2 n}{n}\right).
\end{equation}

Given that $|\bar\cs(u,k-1)|\le (\lambda(4-2s))^{k-1}\log n$, we have
\[
|\bar\cs(u,k)|\sto \bin\left(n,\frac{\lambda(2-s)}{n}(\lambda(4-2s))^{k-1}\log n\right).
\]
It follows by the Chernoff bound (equation~\eqref{eq:CB_upper2} in Lemma~\ref{lem:CB}) that
\begin{align}
    \P\left(\cd_k\,\bigg|\, \bigcap_{i=1}^{k-1}\cb_i^c\cap\cc^c_i\cap\cd^c_i\right)&\le \P\left(\bin\left(n,\frac{\lambda(2-s)}{n}(\lambda(4-2s))^{k-1}\log n\right)\ge (\lambda(4-2s))^{k}\log n\right)\nonumber\\
    &=O\left(\exp\left(-\frac{\lambda \log n}{3}\right)\right)=n^{-\omega(1)}.\label{eq:dkcond}
\end{align}

Now further condition on event $\cd_k^c$, we know that there are at most $(\lambda(4-2s))^{2k}\log ^2 n$ ways to choose two distinct vertices from $\bar\cs(u,k)$. By the union bound, we get
\begin{align}
    \P\left(\cc_k\,\bigg|\, \cd_k^c\cap\bigcap_{i=1}^{k-1}\cb_i^c\cap\cc^c_i\cap\cd^c_i\right)=O\left(\frac{\lambda (\lambda(4-2s))^{2k}\log ^2 n}{n}\right).\label{eq:ckcond}
\end{align}
By equations~\eqref{eq:bkcond},~\eqref{eq:dkcond} and~\eqref{eq:ckcond}, we have
\begin{align}
    &\P\left(\cb_k\cup\cc_k\cup\cd_k\,\bigg|\,\bigcap_{i=1}^{k-1}\cb_i^c\cap\cc^c_i\cap\cd^c_i\right)\nonumber\\
    &\le \P\left(\cb_k\,\bigg|\, \bigcap_{i=1}^{k-1}\cb_i^c\cap\cc^c_i\cap\cd^c_i\right)+\P\left(\cd_k\,\bigg|\, \bigcap_{i=1}^{k-1}\cb_i^c\cap\cc^c_i\cap\cd^c_i\right)+\P\left(\cc_k\,\bigg|\, \cd_k^c\cap\bigcap_{i=1}^{k-1}\cb_i^c\cap\cc^c_i\cap\cd^c_i\right)\nonumber\\
    &=O\left(\frac{\lambda (\lambda(4-2s))^{2k}\log ^2 n}{n}\right)\label{eq:cond_chain}.
\end{align}
Substituting~\eqref{eq:cond_chain} into~\eqref{eq:tree_chain} yields
\begin{align*}
    \P(\ca_{3,u}^c)=O\left(\frac{(2d+2)\lambda (\lambda(4-2s))^{4d+4}\log ^2 n}{n}\right)=n^{-1+o(1)},
\end{align*}
where the last equality follows because $(2d+2)\lambda (\lambda(4-2s))^{4d+4}\log ^2 n=n^{o(1)}$ by our assumption that $\lambda=(\log n)^{\alpha+o(1)}$ and $d=(\log n)^\gamma$, where constants $\alpha$ and $\gamma$ are strictly less than $1$. This completes the proof.
\end{proof}

\subsection{Sufficient condition for correct matching}
\begin{lem}
\label{lem:sufficient}
    The event $\ca_{1,u}\cap\ca_{2,u}\cap\ca_{3,u}$ implies that ${\bf M}^\infty_\mathrm{TB}(u,\td u) =1 \text{ and } {{\bf M}^\infty_\mathrm{TB}(u,\td v)}=0$, for all $\td v\neq \td u$.
\end{lem}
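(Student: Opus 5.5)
The two conclusions are proved separately. The event $\ca_{2,u}$ gives ${\bf M}^\infty_\mathrm{TB}(u,\td u)=1$ directly, since ${\bf M}^\infty_\mathrm{TB}(u,\td u)$ is by definition $\texttt{GreedyMatching}(\cw^\infty(u,\td u),\theta^*,3)$. The real work is to show that ${\bf M}^\infty_\mathrm{TB}(u,\td v)=0$ for every $\td v\neq\td u$. The route is a counting argument: any matching of size three in $\cw^\infty(u,\td v)$ that uses only edges accepted at level $\theta^*$ (with the tie-break $f,\kappa^*$, i.e.\ edges on which $\Psi=1$) must contain an edge whose two local trees are vertex-disjoint in $\bar G$. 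That contradicts $\ca_{1,u}$.

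Suppose for contradiction that $\texttt{GreedyMatching}(\cw^\infty(u,\td v),\theta^*,3)=1$ for some $\td v\neq\td u$. Then there are three edges $(w_i,\td x_i)$, $i=1,2,3$, forming a matching, with $w_i\in\cn(u)$ distinct, $\td x_i\in\td\cn(\td v)$ distinct, and $\Psi([\ct^\infty_{w_i\setminus u}],[\td\ct^\infty_{\td x_i\setminus\td v}])=1$. The next step uses $\ca_{3,u}$: the $2(d+1)$-hop neighborhood of $\bar u$ in $\bar G$ is a tree. Since $\ct^\infty_{w_i\setminus u}$ lies within distance $d+1$ of $u$, each such tree is contained in the branch of this tree-neighborhood hanging off $\bar w_i$. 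Hence the three trees $\bar\cv(\ct^\infty_{w_i\setminus u})$ lie in three distinct branches at $\bar u$ and are pairwise disjoint, and none of them contains $\bar u$.

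The crux is to show that $\td\ct^\infty_{\td x_i\setminus\td v}$ can meet $\ct^\infty_{w_j\setminus u}$ for at most one index $j$, and can meet it only when $\bar v$ (or $\bar x_i$) lies in the $w_j$-branch. I split into two cases according to whether $\bar v$ is within distance $d+1$ of $\bar u$ in $\bar G$.

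\textbf{Case 1: $\bar v$ is far from $\bar u$.} If any $\td\ct^\infty_{\td x_i\setminus\td v}$ intersects some $\ct^\infty_{w_j\setminus u}$, then $\bar v$ lies within distance $2(d+1)$ of $\bar u$. So either all pairs are disjoint, in which case $\ca_{1,u}$ is contradicted immediately, or $\bar v$ lies inside the tree neighborhood.

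\textbf{Case 2: $\bar v$ lies in the tree neighborhood of $\bar u$.} Then $\bar v\neq\bar u$, so $\bar v$ lies in a unique branch, say the one through $w_{j_0}$, or in a branch through a neighbor of $u$ other than the $w_i$. Removing $\bar v$ and taking the depth-$d$ ball around $\bar x_i$ keeps $\td\ct^\infty_{\td x_i\setminus\td v}$ inside the subtree of the tree neighborhood cut off at $\bar v$. This requires checking that paths from $\bar x_i$ toward $\bar u$ must pass through $\bar v$, which holds in the tree because $\bar x_i$ is a child of $\bar v$ away from $\bar u$, or else lies on the $\bar u$ side. In the first sub-case, $\td\ct^\infty_{\td x_i\setminus\td v}$ avoids every branch at $\bar u$ except possibly $j_0$'s, and that branch contains at most one of the $w_i$.

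In the second sub-case, $\bar x_i$ is $\bar v$'s parent toward $\bar u$, and this can hold for at most one $i$ since the parent is unique. So at most two of the three edges are "bad", each bad edge meeting at most one $w$-tree, and I expect to conclude that some $i$ has $\bar\cv(\ct^\infty_{w_i\setminus u})\cap\bar\cv(\td\ct^\infty_{\td x_i\setminus\td v})=\emptyset$, contradicting $\ca_{1,u}$. This case analysis on the position of $\bar v$ relative to $\bar u$, and in particular bounding how many of the three pairs can overlap, is the main obstacle, and it is precisely why the threshold of three dangling trees is needed. I would handle it by following the argument of \cite{Luca24LRT} closely, using only tree geometry in $\bar G$ together with the fact that distinct children of $\bar u$ head disjoint branches.
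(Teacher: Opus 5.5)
Your argument is correct and rests on the same fact as the paper's proof, but the paper organizes it differently. The paper uses $\ca_{1,u}$ to take all three matched pairs as intersecting. For each $i$ it splices the $G$-path $\bar u,\bar w_i,\ldots,\bar y_i$ with the $\td G$-path $\bar v,\bar x_i,\ldots,\bar y_i$, through a common vertex $\bar y_i$, into a simple $\bar u$--$\bar v$ path of length at most $2(d+1)$. That path has first step $\bar w_i$ or last step $\bar x_i$. By pigeonhole two of the three paths differ, so the $2(d+1)$-ball contains a cycle, contradicting $\ca_{3,u}$.

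You instead assume the ball is a tree and count directly. The diagonal pair $(w_i,\td x_i)$ can intersect only if $\bar w_i$ heads the branch at $\bar u$ containing $\bar v$, or $\bar x_i$ is the parent of $\bar v$. Each condition holds for at most one $i$. These two conditions are exactly the first and last edges of the unique $\bar u$--$\bar v$ path, so the two proofs run the same implication in opposite directions. Yours makes the role of the number three more transparent. The paper's version is more economical: each spliced path starts at $\bar u$ with length at most $2(d+1)$, so it lies in the tree neighborhood automatically. It therefore needs no case split on $\dist(\bar u,\bar v)$ and no parent/child bookkeeping.

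Two of your intermediate statements need tightening.

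First, $\td\ct^\infty_{\td x_i\setminus\td v}$ need not lie inside the $2(d+1)$-neighborhood of $\bar u$; its vertices can be at distance up to $3(d+1)$. So ``keeps it inside the subtree cut off at $\bar v$'' is not literally true. What you need is this: suppose $\bar y\in\bar\cv(\ct^\infty_{w_j\setminus u})$ is reached from $\bar x_i$ by a path of length at most $d$ avoiding $\bar v$. Then every vertex of that path, and $\bar v$ itself, is within distance $2(d+1)$ of $\bar u$. Hence the path $\bar v,\bar x_i,\ldots,\bar y$ is the unique tree path, and your branch argument applies. This also shows $\bar x_i$ lies in the ball, so it is a parent or a child of $\bar v$, which closes your dichotomy.

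Second, your announced crux fails in the parent sub-case: there $\td\ct^\infty_{\td x_i\setminus\td v}$ may meet all three $w$-trees, not at most one. The same applies to the remark ``each bad edge meeting at most one $w$-tree.'' Your conclusion needs, and your sub-cases actually prove, only the statement about the diagonal pairs $(w_i,\td x_i)$. That diagonal count gives at most two intersecting pairs out of three, which is all you need.
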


\begin{proof}
    In this proof, we separately show that $\ca_{1,u}\cap \ca_{3,u}$ implies ${\bf M}_\mathrm{TB}^\infty(u,\td v)=0$ for all $\td v\neq \td u$ and $\ca_{2,u}$ implies ${\bf M}^\infty_\mathrm{TB}(u,\td u) =1$.
    
    \underline{\bf Conditioned on $\ca_{1,u}\cap \ca_{3,u}$:} We first prove using contradiction that when event $\ca_{1,u}\cap \ca_{3,u}$ occurs, ${\bf M}_\mathrm{TB}^\infty(u,\td v)=0$ for all $\td v\neq \td u$. Suppose that there exists some $\td v\neq \td u$ such that
\begin{align}
    {\bf M}_\mathrm{TB}^\infty(u,\td v)=\texttt{GreedyMatching}({\cw}^{\infty}(u,\td v),\theta^*,3)=1.
    \nonumber
\end{align}
Then there exist {\em at least} three distinct neighbors $u_1,u_2,u_3$ of $u$ and three distinct neighbors $\td v_1,\td v_2,\td v_3$ of $\td v$ such that
\begin{align}
    \Psi([{\ct}^{\infty}_{u_i\setminus u}],[{\td\ct}^{\infty}_{\td v_i\setminus \td v}])
     = 1,
    \qquad
    i=1,2,3.
    \label{eq:wrong-three-witnesses}
\end{align}
Conditioned on $\ca_{1,u}$,  the local tree pair $({\ct}^{\infty}_{u_i\setminus u},{\td\ct}^{\infty}_{\td v_i\setminus \td v})$ has nonempty intersection in $\bar G$. In other words, there exists
\begin{align}
    \bar{x}_i\in
    \bar{\cv}({\ct}^{\infty}_{u_i\setminus u})
    \cap
    \bar{\cv}({\td\ct}^{\infty}_{\td v_i\setminus \td v}).
    \nonumber
\end{align}
By definition of the local trees, we have the following statements:
\begin{enumerate}
    \item there is a path $(y_i^{(0)},y_i^{(1)},\ldots,y_i^{(t)})$ in $G$ from $u$ to $x_i$ with $y_i^{(0)}=u$, $y_i^{(1)}=u_i$, $y_i^{(t)}=x_i$, and $t\le d+1$,
    \item there is a path $(\td z_i^{(0)},\td z_i^{(1)},\ldots,\td z_i^{(\td t)})$ in $\td G$ from $\td v$ to $x_i$ with $\td z_i^{(0)}=\td v$, $\td z_i^{(1)}=\td v_i$, $\td z_i^{(\td t)}=x_i$, and $\td t\le d+1$.
\end{enumerate}
 Passing to the union graph, we obtain two paths
\begin{align}
    (\bar y_i^{(0)},\bar y_i^{(1)},\ldots,\bar y_i^{(t)})
    \qquad\text{and}\qquad
    (\bar z_i^{(0)},\bar z_i^{(1)},\ldots,\bar z_i^{(\td t)}),
    \nonumber
\end{align}
where $\bar y_i^{(0)}=\bar u$, $\bar y_i^{(1)}=\bar u_i$, $\bar z_i^{(0)}=\bar v$, $\bar z_i^{(1)}=\bar v_i$, and $\bar y_i^{(t)}=\bar z_i^{(\td t)}=\bar x_i$.

We now claim that there exists a path $\bar{\cp}_i=(\bar p_i^{(0)},\bar p_i^{(1)},\ldots,\bar p_i^{(\bar t)})$
from $\bar u$ to $\bar v$ in $\bar G$, with $\bar t\le 2(d+1)$, such that either $\bar p_i^{(1)}=\bar u_i$, or $\bar p_i^{(\bar t-1)}=\bar v_i$. We consider three cases. 
\begin{itemize}
\item {\bf{Case (i):}} $\bar v=\bar y_i^{(\iota)}$ for some $\iota\in\{0,1,\ldots,t\}$. Since $\bar v\neq \bar u$, we have $\iota\neq 0$. We can construct the desired path $\bar{\cp}_i :=(\bar y_i^{(0)},\bar y_i^{(1)},\ldots,\bar y_i^{(\iota)})$. Since $\bar y_i^{(0)}=\bar u$, $\bar y_i^{(1)}=\bar u_i$ and $\bar y_i^{(\iota)}=\bar v$, $\bar{\cp}_i$
is a path from $\bar u$ to $\bar v$ of length $\bar t=\iota\le d+1$, and $\bar p_i^{(1)}=\bar y_i^{(1)}=\bar u_i$.

\item {\bf{Case (ii):}} $\bar u=\bar z_i^{(\iota)}$ for some $\iota\in\{0,1,\ldots,\td t\}$. Since $\bar v\neq \bar u$, we have $\iota\neq 0$. We can construct the desired path $\bar{\cp}_i:=(\bar z_i^{(\iota)},\bar z_i^{(\iota-1)},\ldots,\bar z_i^{(1)},\bar z_i^{(0)})$. Since $\bar z_i^{(\iota)}=\bar u$, $\td z_i^{(1)}=\td v_i$ and $\td z_i^{(0)}=\td v$, $\bar{\cp}_i$
is a path from $\bar u$ to $\bar v$ of length $\bar t=\iota\le d+1$, and $\bar p_i^{(\bar t-1)}=\bar z_i^{(1)}=\bar v_i$.

\item {\bf{Case (iii):}} $\bar v\neq \bar y_i^{(\iota)}$ for all $\iota\in\{0,1,\ldots,t\}$ and $\bar u\neq \bar z_i^{(\iota)}$ for all $\iota\in\{0,1,\ldots,\td t\}$. Since $\bar y_i^{(t)}=\bar z_i^{(\td t)}=\bar x_i$, the two paths intersect in $\bar G$. Let $\bar w_i$ be the first vertex on the path $(\bar y_i^{(0)},\bar y_i^{(1)},\ldots,\bar y_i^{(t)})$ that also lies on $(\bar z_i^{(0)},\bar z_i^{(1)},\ldots,\bar z_i^{(\td t)})$. Then the initial segment from $\bar u$ to $\bar w_i$ along the $\bar y$-path and the initial segment from $\bar v$ to $\bar w_i$ along the $\bar z$-path are disjoint except at $\bar w_i$. Hence we construct the desired path
\begin{align}
    \bar{\cp}_i:=(\bar y_i^{(0)},\bar y_i^{(1)},\ldots,\bar w_i,\ldots,\bar z_i^{(1)},\bar z_i^{(0)}).
    \nonumber
\end{align}
Since $\bar y_i^{(0)}=\bar u$ and $\td z_i^{(0)}=\td v$, $\bar{\cp}_i$ is a path from $\bar u$ to $\bar v$ in $\bar G$. Its length satisfies $\bar t\le t+\td t\le 2(d+1)$, and it has both $\bar p_i^{(1)}=\bar y_i^{(1)}=\bar u_i$ and $\bar p_i^{(\bar t-1)}=\bar z_i^{(1)}=\bar v_i$.
\end{itemize}
Some illustrating examples for all the $3$ cases of constructing path $\bar{\mathcal{P}}_i$ are shown in Figure \ref{fig:3cases}.

\begin{figure}
    \centering 
        \includegraphics[width=0.6\textwidth]{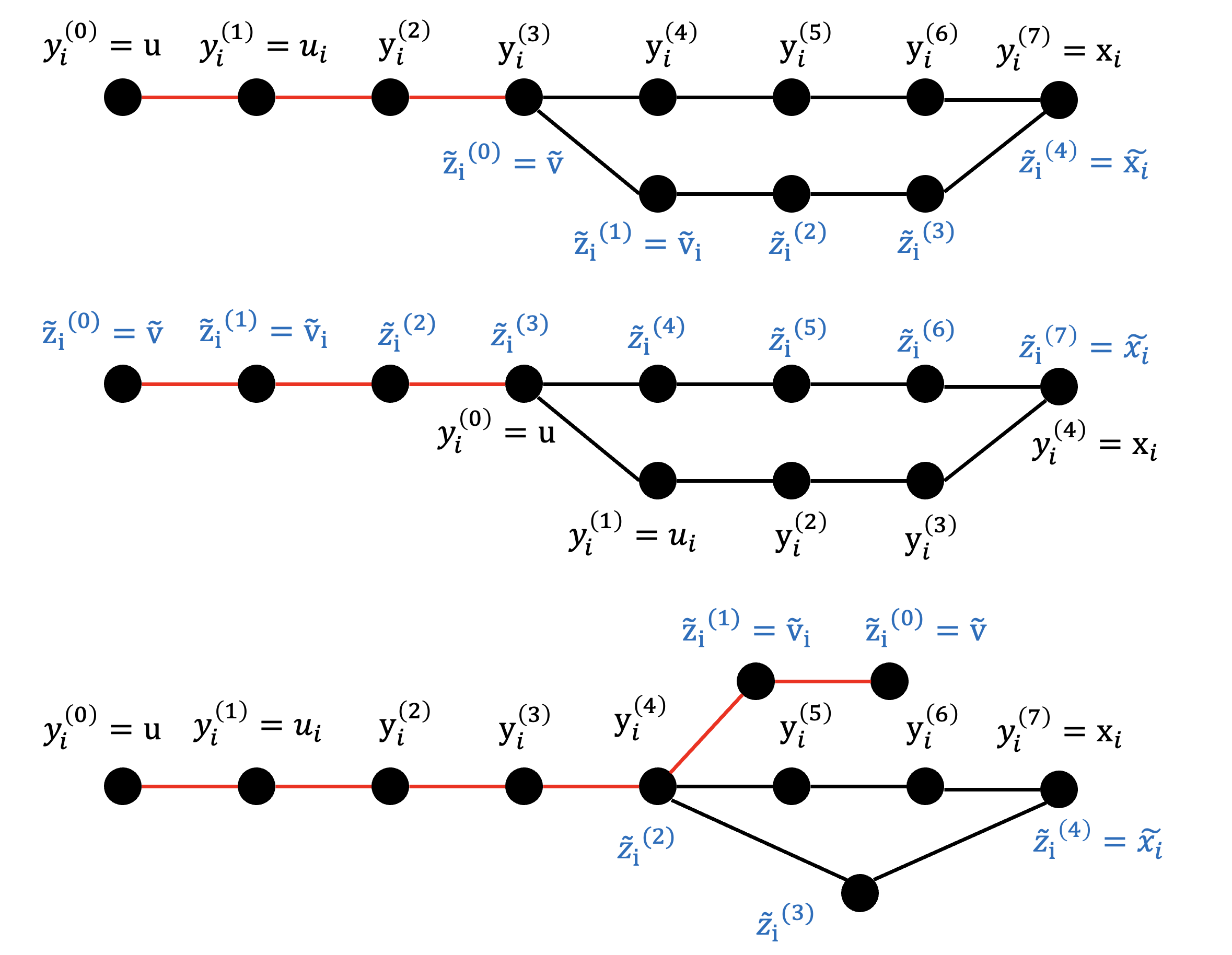}
        \caption{Examples of constructing path $\bar{\mathcal{P}}_i$ in Case (i) (top), Case (ii) (middle) and Case (iii) (bottom). In each case, the red edges denotes path $\bar{\mathcal{P}}_i$. The black notations next to each node represents the nodes of path $(y_i^{(0)},y_i^{(1)},\ldots,y_i^{(t)})$ in $G$, while the blue notations next to each node represents the nodes of path $(\td z_i^{(0)},\td z_i^{(1)},\ldots,\td z_i^{(\td t)})$ in $\td G$.} 
        \label{fig:3cases}
\end{figure}

Applying this construction for $i=1,2,3$, we obtain three paths $\bar{\cp}_1,\bar{\cp}_2,\bar{\cp}_3$ from $\bar u$ to $\bar v$ in $\bar G$, each of length at most $2(d+1)$. Since $\bar u_1,\bar u_2,\bar u_3$ are distinct and $\bar v_1,\bar v_2,\bar v_3$ are distinct, at least two of these three paths are non-identical paths from $\bar u$ to $\bar v$ in $\bar G$. Therefore, the union of two distinct paths from $\bar u$ to $\bar v$ contains a cycle. Since each path has length at most $2(d+1)$, every vertex on each path lies in the $2(d+1)$-hop neighborhood of $\bar u$ in $\bar G$. Therefore, the $2(d+1)$-hop neighborhood of $\bar u$ in $\bar G$ contains a cycle, contradicting the condition that event $\ca_{3,u}$ occurs. This contradiction shows that ${\bf M}_\mathrm{TB}^\infty(u,\td v)=0$ for every $\td v\neq \td u$.

\underline{\bf Conditioned on $\ca_{2,u}$:}  When event $\ca_{2,u}$ occurs, under \texttt{TBAlign}, 
\begin{align}
    {\bf M}_\mathrm{TB}^\infty(u,\td u)=\texttt{GreedyMatching}({\cw}^{\infty}(u,\td u),\theta^*,3)=1.
    \nonumber
\end{align}
\end{proof}

\subsection[Bounding the sizes of B and tilde B]{Bounding the sizes of $\cb$ and $\td\cb$}
\label{appd:bad-nodes}
\begin{lem}\label{lem:TBAlignBadNodes}
Suppose the same assumptions as in Proposition \ref{th:TBAlign}. For the sets $\cb$ and $\td\cb$ defined in \eqref{eq:Bset} and \eqref{eq:tBset}, we have
\begin{align}
    |\cb|+|\td\cb|
    \le 2n\exp\left(-\frac12(\log n)^{\alpha/2}\right)
    \label{eq:bad-node-bound}
\end{align}
with probability $1-O(\exp(-\frac12(\log n)^{\alpha/2}))$.
\end{lem}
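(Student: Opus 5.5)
We bound $\E|\cb|$ and then apply Markov's inequality. By symmetry and linearity of expectation,
\[
\E|\cb| = n\,\P(u\in\cb)
\]
for a fixed vertex $u$, and $\E|\td\cb|$ is the same because $G$ and $\td G$ are both $\mathrm{ER}(n,\lambda/n)$. Suppose we show that
\[
\P(u\in\cb)\le \exp\left(-(\log n)^{\alpha/2}\right).
\]
Markov's inequality then gives
\[
\P\left(|\cb|+|\td\cb|>2n\exp\left(-\tfrac12(\log n)^{\alpha/2}\right)\right)\le \exp\left(-\tfrac12(\log n)^{\alpha/2}\right),
\]
which is exactly the statement. So everything reduces to a per-vertex bound.

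To bound $\P(u\in\cb)$, we combine the neighborhood size bound with a degree tail bound. Let $\cn(u,d+1)$ be the $(d+1)$-hop ball of $u$ in $G$. The event $u\in\cb$ requires some $v\in\cn(u,d+1)$ with $\deg(v)>(\log n)^t$.

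First, we intersect with the event $|\cn(u,d+1)|\le 8e^2\lambda^{d+1}\log n$. Its complement has probability $\td O(n^{-1/3})$, exactly as in the event $\cd$ via Lemma~\ref{lem:neighbor_size}, with $d$ replaced by $d+1$.

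Second, on this event we take a union bound over the at most $8e^2\lambda^{d+1}\log n=n^{o(1)}$ vertices in the ball. The degree of a vertex reached by exploration is not exactly binomial, because the exploration conditions on it. To avoid this, we use a cruder event: some vertex in the whole graph within the ball has degree exceeding $(\log n)^t$. Each $\deg(v)\sim\bin(n-1,\lambda/n)$ with mean $\lambda=(\log n)^{\alpha+o(1)}$. Since $t>\alpha$, the threshold $(\log n)^t$ is a factor $(\log n)^{t-\alpha-o(1)}\to\infty$ above the mean. The Chernoff bound $\P(\bin\ge x)\le (e\lambda/x)^x$ therefore gives
\[
\P\bigl(\deg(v)>(\log n)^t\bigr)\le \exp\left(-\Omega\bigl((\log n)^t\log\log n\bigr)\right).
\]
To handle the dependence cleanly, we bound the exploration degree of each explored vertex by its number of unexplored neighbors plus one, where the unexplored neighbors are stochastically dominated by $\bin(n,\lambda/n)$, independent of the past. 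This is the same domination used in the proof of Lemma~\ref{lem:TBAlignIdeal-E1}. A union bound over the $n^{o(1)}$ explored vertices then costs only a factor $n^{o(1)}=\exp\bigl(o(\log n)\bigr)$.

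The main obstacle is that a factor $n^{o(1)}$ is actually as large as $\lambda^{d}=\exp\bigl((\log n)^{\gamma}\cdot\alpha\log\log n\bigr)$. The degree tail is $\exp\bigl(-(\log n)^{t}\log\log n\cdot c\bigr)$, and since $t>\gamma$ the tail dominates, so the product is still $\exp\bigl(-\Omega((\log n)^t\log\log n)\bigr)$. Adding the $\td O(n^{-1/3})$ term, we get
\[
\P(u\in\cb)\le \exp\left(-(\log n)^{\alpha/2}\right)
\]
for large $n$, because $t>\alpha/2$. This careful use of $t>\max\{\alpha,\gamma\}$ in comparing exponents is where the argument is tight. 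If the exploration coupling proves awkward, the fallback is to bound $\P(u\in\cb)$ by the expected number of paths of length at most $d+1$ from $u$ that end at a high-degree vertex. Each such path has probability at most $(\lambda/n)^{k}n^{k}$ times a conditional binomial tail, and summing gives the same estimate.
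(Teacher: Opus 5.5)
Your proposal is correct. Its outer layer matches the paper: a per-vertex bound, $\E|\td\cb|=\E|\cb|$ by equality of marginals, then Markov. The difference is how you bound $\P(u\in\cb)$.

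The paper's route is exactly what you list as a fallback. It union-bounds over paths $(u,v_1,\ldots,v_l)$ with $1\le l\le d+1$. For a \emph{fixed} path the endpoint degree is exactly $1+\bin(n-2,\lambda/n)$, and the expected path count $n^l(\lambda/n)^l=\lambda^l$ plays the role of your ball-size bound. This gives $\P(u\in\cb)=O(\lambda^{d+1}e^{-\frac12(\log n)^t})=O(e^{-\frac14(\log n)^t})$.

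Your main route also works. It truncates on $|\cn(u,d+1)|\le 8e^2\lambda^{d+1}\log n$ by rerunning Lemma~\ref{lem:neighbor_size} at depth $d+1$, then dominates degrees along a breadth-first exploration. The cost is an extra lemma and a stopping-time argument that path counting avoids. Your sharper $(e\lambda/x)^x$ tail is also not needed: the cruder $e^{-\frac12(\log n)^t}$ already beats $\lambda^{d+1}$ because $t>\gamma$.

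If you keep your route, tighten three points:
\begin{enumerate}
\item Write the step as the inclusion $\{u\in\cb\}\subseteq\{|\cn(u,d+1)|>K\}\cup\bigcup_{j\le K}\{\text{the $j$-th discovered vertex has degree}>(\log n)^t\}$, rather than conditioning on the ball-size event, which would bias the degrees.
\item State the domination precisely. Given the history when a vertex is discovered, every edge at it other than the parent edge is either already revealed absent or still fresh, so its degree is at most $1+\bin(n,\lambda/n)$. ``Unexplored neighbors plus one'' is not literally a bound, because edges to siblings and other discovered vertices are fresh too.
\item Drop the sentence about a cruder event over the whole graph. A union over all $n$ vertices only gives $n\,e^{-\Theta((\log n)^t\log\log n)}\to\infty$; indeed the maximum degree is w.h.p.\ of order $\log n/\log\log n\gg(\log n)^t$. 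Restricting to the explored ball is essential.
\end{enumerate}
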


\begin{proof}
    In this proof, we will focus on showing that 
    \begin{equation}
        \label{eq:eb}
        \E[|\cb|]=O\left(n\exp\left(-\frac14 (\log n)^t\right)\right).
    \end{equation}
    With~\eqref{eq:eb} in hand, we can obtain by symmetry that 
    \begin{equation}
        \label{eq:etdb}
        \E[|\td\cb|]=O\left(n\exp\left(-\frac14 (\log n)^t\right)\right).
    \end{equation}
    It follows by~\eqref{eq:eb} and~\eqref{eq:etdb} that
    \begin{align*}
        &\P\left(|\cb|+|\td\cb|\ge 2n\exp\left(-\frac12(\log n)^{\alpha/2}\right)\right)\\
        \le{} & \P\left(|\cb|\ge n\exp\left(-\frac12(\log n)^{\alpha/2}\right)\right)+\P\left(|\td\cb|\ge n\exp\left(-\frac12(\log n)^{\alpha/2}\right)\right)\\
        \stackrel{(a)}{\le}{}&
        2\exp\left(\frac12(\log n)^{\alpha/2}-\frac14(\log n)^t\right)\\
        \stackrel{(b)}{=} {}&O\left(\exp\left(-\frac12(\log n)^{\alpha/2}\right)\right),
    \end{align*}
    where the (a) follows by Markov's inequality and (b) follows by the assumption that $\alpha<t$.

    \textbf{\underline{Proof of~\eqref{eq:eb}:}}
    Fix a vertex $u\in \cv$. Define event
    \[
    \mathsf{BAD}_u:=\{\exists v\in \cv\text{ such that }
    \dist(u,v)\le d+1
    \text{ and }
    \deg(v)> (\log n)^t\}.
    \]
    Notice that for event $\mathsf{BAD}_u$ to happen, we either have $\deg(u)>(\log n)^t$ or there exists a path $(u,v_1,v_2,\ldots,v_l)$ of length $1\le l\le d+1$ such that $\deg(v_l)>(\log n)^t$. 
    Because $\deg(u)\sim\bin(n-1,\lambda/n)$, we can apply the Chernoff bound to get
    \begin{equation}
        \label{eq:degu}
        \P(\deg(u)>(\log n)^t)=O\left(\exp\left(-\frac12(\log n)^t\right)\right).
    \end{equation}
    Fix distinct vertices $v_1,\ldots,v_l$ in $\cv$. Because each edge appears in the graph $G$ with probability $\lambda/n$, the path $(u,v_1,v_2,\ldots,v_l)$ appears in $G$ with probability $\frac{\lambda^l}{n^l}$. Moreover, conditioned on the path $(u,v_1,v_2,\ldots,v_l)$, we have
    \[
    \deg(v_l)\sim 1+\bin(n-2,\lambda/n).
    \]
    This is because the path implies that $v_{l-1}$ is a neighbor of $v_l$, while the edge between $v_l$ and any other vertex in the graph appears independently with probability $\lambda/n$. From this argument, we know that
    \begin{align}
        &\P(\text{the path }(u,v_1,\ldots,v_l) \text{ appears in $G$ and }\deg(v_l)>(\log n)^t)\nonumber\nonumber\\
        ={}&\frac{\lambda^l}{n^l}\P\left(\bin\left(n-2,\frac{\lambda}{n}\right)>(\log n)^t-1\right)\nonumber\\
        ={}&O\left(\frac{\lambda^l}{n^l}\exp\left(-\frac12(\log n)^t\right)\right).\label{eq:degv}
    \end{align}
    For each fixed $l$, there are at most $n^l$ ways to choose the vertices $v_1,\ldots,v_l$ from the graph. Therefore, we can apply the union bound to obtain
    \begin{align}
        \P(\mathsf{BAD}_u)&\le
        O\left(\exp\left(-\frac12(\log n)^t\right)\right)+
        \sum_{l=1}^{d+1}n^l O\left(\frac{\lambda^l}{n^l}\exp\left(-\frac12(\log n)^t\right)\right)\nonumber\\
        &=O\left(\lambda^{d+1}\exp\left(-\frac12(\log n)^t\right)\right)\nonumber\\
        &=O\left(\exp\left(((\log n)^\gamma+1)\log \lambda-\frac12(\log n)^t\right)\right)\nonumber\\
        &=O\left(\exp\left(-\frac14(\log n)^t\right)\right),\label{eq:badu}
    \end{align}
    where the last equality follows because $\gamma<t$ and $\log \lambda=\Theta(\log\log n)$. Finally~\eqref{eq:badu} implies~\eqref{eq:eb} because $\E[|\cb|]=\sum_{u\in\cv}\P(\mathsf{BAD}_u)$.
\end{proof}

\section{General properties of correlated \erdos--\renyi graph pairs}\label{appd:er-properties}
\subsection{Neighborhood size}
\label{appd:neighbor-size}
\begin{lem}
\label{lem:neighbor_size}
    Consider an \erdos--\renyi random graph $G(\cv,\ce)\sim\mathrm{ER}(n,\frac{\lambda}{n})$, where $\lambda=(\log n)^{\alpha+o(1)}$ for some constant $\alpha\in (0,1)$. Let $d=(\log n)^\gamma$ for some constant $\gamma\in (0,1)$, and for each vertex $i$ in $G$, let $\cn(i,d)$ denote the set of vertices within distance $d$ from $i$ in $G$. Then, we have
    \begin{equation}
        \label{eq:neighbor_size}
        \P\left(\exists i\in \cv:|\cn(i,d)|\ge 8e^2\lambda^d\log n\right)= \td{O}(n^{-1/3}).
    \end{equation}
\end{lem}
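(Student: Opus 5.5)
We prove Lemma~\ref{lem:neighbor_size} by fixing a vertex $i$, exploring its neighborhood layer by layer, and then taking a union bound over the $n$ vertices. For $0\le k\le d$, let $\cs(i,k)$ denote the set of vertices at distance exactly $k$ from $i$. Conditioned on the exploration up to layer $k-1$, each unexplored vertex joins $\cs(i,k)$ independently with probability at most $|\cs(i,k-1)|\lambda/n$. Hence
\[
|\cs(i,k)|\sto \bin\bigl(n,|\cs(i,k-1)|\lambda/n\bigr)\sto \poi\bigl(\lambda'|\cs(i,k-1)|\bigr),
\]
where $\lambda'=-n\log(1-\lambda/n)=\lambda(1+O(\lambda/n))$. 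The layer sizes are therefore stochastically dominated by the generation sizes of a Poisson Galton--Watson tree with offspring mean $\lambda'$. This domination can be realized by a coupling, so that $|\cn(i,d)|\sto \sum_{k=0}^d |\cs_k(\ch)|$, where $\ch$ is a GW tree with $\poi(\lambda')$ offspring.

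Rather than bounding each layer separately, we control the whole chain by an exponential moment, reusing the log-mgf recursion $\varphi_k(t)=\varphi_{k-1}(\lambda'(e^t-1))$ from the proof of Lemma~\ref{lem:s3bos}. Applying the Chernoff bound directly to the last generation $|\cs_d(\ch)|$ with $t=c/\lambda'^{d}$ and iterating $e^x-1\le x+x^2$ keeps the argument within distance $O(1)$ of zero. The difficulty is that the condition $l\le\mu/2$ from Lemma~\ref{lem:s3bos} fails here, since $d=(\log n)^\gamma$ may exceed $\lambda$. The multiplicative error $(1+x)$ per step compounds over $d$ steps.

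To avoid this compounding, we use a layerwise ratio argument instead. Set $a_k:=e^{2k/d}\lambda^k\log n$, so that $a_d=e^2\lambda^d\log n$ and $a_0=\log n\ge 1$. Given $|\cs(i,k-1)|\le a_{k-1}$, a Chernoff bound (Lemma~\ref{lem:CB}) gives
\[
\P\bigl(|\cs(i,k)|>a_k\bigr)\le \P\bigl(\bin(n,\lambda a_{k-1}/n)\ge e^{2/d}\lambda a_{k-1}\bigr)\le \exp\bigl(-\lambda a_{k-1}(e^{2/d}-1)^2/3\bigr)\le \exp\bigl(-\tfrac{4\lambda^k\log n}{3d^2}\bigr).
\]
For $k\ge 2$, this is tiny because $\lambda^2/d^2$ may be small, but $\lambda^k$ grows geometrically. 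The \textbf{main obstacle} is the first few layers with $k$ small, where $\lambda^k/d^2$ may be $o(1)$. There we drop the $e^{2/d}$ refinement and use the crude threshold $|\cs(i,k)|\le 2e\lambda a_{k-1}$ (Chernoff at deviation ratio $2e$, which gives probability $\le 2^{-2e\lambda a_{k-1}}\le n^{-2e\log 2}$). This is spent only for the first $k_1=O(1)$ layers with $\lambda^{k_1}\ge d^2$; such a $k_1$ exists because $\gamma<1$ and $\lambda=(\log n)^{\alpha+o(1)}$, so $k_1\le \lceil 2\gamma/\alpha\rceil+1$. That constant overshoot factor would wreck the constant $8e^2$, so the first-layer crude factor must be absorbed. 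If $k_1$ is larger than one, this does not fit, and we instead allow per-layer probability $n^{-4/3}$. Whenever the mean $\lambda a_{k-1}\ge \log n$, deviation ratio $(1+\delta_k)$ with $\delta_k=\sqrt{4\log n/(\lambda a_{k-1})}$ suffices. Since $\lambda a_{k-1}\ge\lambda^k\log n$, we get $\delta_k\le 2\lambda^{-k/2}$ and $\prod_k(1+\delta_k)\le \exp(2\sum_k\lambda^{-k/2})\le e^{2}$ for large $n$.

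Summing over the layers, $|\cn(i,d)|\le\sum_k a_k\le e^2\lambda^d\log n\sum_{j\ge0}\lambda^{-j}\le 8e^2\lambda^d\log n$. Each layer fails with probability at most $n^{-4/3}$, so a union bound over $d+1$ layers and $n$ vertices gives a total failure probability of $\td O(n^{-1/3})$.
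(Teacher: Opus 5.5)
Your final argument is correct and is essentially the paper's proof. You use thresholds $b_k=\lambda^k\log n\prod_{h=1}^{k}(1+\delta_h)$ with $\delta_h=\sqrt{4\log n/(\lambda b_{h-1})}\le 2\lambda^{-h/2}$, bound each layer's tail by $n^{-4/3}$ via stochastic domination by $\mathrm{Bin}(n,\lambda b_{k-1}/n)$, and finish with a union bound over the $d$ layers and $n$ vertices. The paper does the same with thresholds $4(\log n)\lambda^k\prod_{h=0}^{k}(1+\lambda^{-h/2})$, using Bennett's inequality where you use the multiplicative Chernoff bound.

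Delete the earlier detours (the mgf recursion, the $e^{2/d}$ slack, the crude $2e$ threshold) and define $b_k$ explicitly from the start. Also, the condition $\delta_k\le 1$ requires mean $\ge 4\log n$, not $\ge \log n$; this holds automatically because the mean is at least $\lambda^k\log n$.
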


\begin{proof}
    Fix a vertex $i\in\cv$.
    We will show that
    \begin{equation*}
        \P\left(|\cn(i,d)|\ge 8e^2\lambda^d\log n\right)= \td{O}(n^{-4/3}).
    \end{equation*} 
    Then~\eqref{eq:neighbor_size} then follows by applying a union bound on the $n$ vertices in $\cv$.
    
    For each $k\in \{0,\ldots,d\}$, let $\cs(i,k)$ denote the set of vertices exactly  $k$-hop away from vertex $i$. It follows from this definition that 
    \[
    |\cn (i,d)|=\sum_{k=0}^d|\cs (i,k)|. 
    \]
    Define event 
    \[
    \cb_k:=\left\{|\cs (i,k)|<4(\log n)\lambda^k\prod_{h=0}^k(1+\lambda^{-h/2})\right\}.
    \]
    Notice that 
    \begin{align*}
        \sum_{k=0}^d4(\log n)\lambda^k\prod_{h=0}^k(1+\lambda^{-h/2})&\le 4(\log n)\prod_{h=0}^\infty(1+\lambda^{-h/2})\sum_{k=0}^d\lambda^k\\
        &\stackrel{(a)}{\le}4(\log n)e^2\frac{(\lambda^{d+1}-1)}{\lambda-1}\\
        &\le 8e^2\lambda^d\log n,
    \end{align*}
    where (a) follows because 
    \begin{align*}
        \prod_{h=0}^\infty(1+\lambda^{-h/2})\le \exp\left(\sum_{h=0}^\infty\lambda^{-h/2}\right)\le e^2
    \end{align*}
    for all large enough $n$.
    Therefore, it suffices to show that $\P(\cap_{k=0}^d\cb_k)= 1-\td{O}(n^{-4/3})$. We rewrite the probability of this intersection event as
    \begin{align}
       \P(\cap_{k=0}^d\cb_k)&=\P(\cb_0)-\sum_{k=1}^{d}\P(\cb_k^c\cap(\cap_{h=0}^{k-1}\cb_{h}))\nonumber\\
       &\ge \P(\cb_0)-\sum_{k=1}^{d}\P(\cb_k^c|\cap_{h=0}^{k-1}\cb_{h}).\label{eq:chain}
    \end{align}
    Notice that $\cb_0$ holds almost surely, and it suffices to bound each term in the summation in~\eqref{eq:chain}. Fix some $k\in\{0,\ldots,d\}$. Conditioned on events $\sum_{l=0}^{k-1}|\cs (i,l)|=m$ and $|\cs (i,k-1)|=m'$, we have
    \[
    |\cs (i,k)|\sim \bin(n-m,1-(1-\lambda/n)^{m'}).
    \]
    Notice that $1-(1-\lambda/n)^{m'}\le \lambda m'/n$. Then conditioned on event $\cap_{l=0}^{k-1}\cb_{l}$, we have
    \[
    |\cs (i,k)|\stackrel{\mathrm{sto.}}{\le}\bin\left(n,\frac{4(\log n)\lambda^{k}\prod_{h=0}^{k-1}(1+\lambda^{-h/2})}{n}\right),
    \]
    where $\sto$ is the stochastic dominance operator. By the Bennett's inequality stated in Lemma~\ref{lem:bennett}, we have
    \begin{align}
        &\P(\cb_k^c|\cap_{h=0}^{k-1}\cb_{h})\nonumber\\
        \le{}& \P\left(\bin\left(n,\frac{4(\log n)\lambda^{k}\prod_{h=0}^{k-1}(1+\lambda^{-h/2})}{n}\right)\ge 4(\log n)\lambda^{k}\prod_{h=0}^{k}(1+\lambda^{-h/2})\right)\nonumber\\
        \le{}& \exp\left(-4(\log n)\lambda^{k}\left(\prod_{h=0}^{k-1}(1+\lambda^{-h/2})\right)\left(1-\frac{4(\log n)\lambda^{k}\prod_{h=0}^{k-1}(1+\lambda^{-h/2})}{n}\right) \phi(\lambda^{-k/2})\right),\label{eq:size_bennett}
    \end{align}
    where $\phi(x):=(1+x)\log(1+x)-x$.
    We now bound each term in this exponent. 
    Notice that $\phi(x)\ge \frac{x^2}{3}$ for all $x\in [0,1]$, and it follows that $$\phi(\lambda^{-k/2})\ge \frac{\lambda^{-k}}{3}.$$ 
    We also have 
    $$\prod_{h=0}^{k-1}(1+\lambda^{-h/2})\ge 1+\lambda^{0}=2.$$
    Under the assumptions $\lambda=(\log n)^{\alpha+o(1)}$ and $d=(\log n)^\gamma$, we have $\lambda^d=n^{o(1)}$. Therefore, 
    \begin{align*}
        1-\frac{4(\log n)\lambda^{k}\prod_{h=0}^{k-1}(1+\lambda^{-h/2})}{n}\ge 1-\frac{4e^2(\log n)\lambda^k}{n}\ge\frac12
    \end{align*}
    for all large enough $n$. Substituting these three inequalities into~\eqref{eq:size_bennett} yields
    \begin{align*}
        P(\cb_k^c|\cap_{h=0}^{k-1}\cb_{h})&\le\exp\left(-4(\log n)\lambda^k\times 2\times\frac12\times \frac{\lambda^{-k}}{3}\right)\\
        &\le \exp\left(-\frac43\log n\right)=n^{-4/3}.
    \end{align*}

    Notice that the above inequality holds for every $k\in [d]$. Therefore, we have 
    \[
    \P(\cap_{k=0}^d\cb_k)\ge 1-dn^{-4/3}=1-\td{O}(n^{-4/3}),
    \]
    which completes the proof.
\end{proof}
\subsection{Local tree PMF ratio}
\label{appd:tree-pmf}
For a pair of vertices $(u,w)\in \cv\times\cv$, $\ct$ denotes the local tree rooted at $w$ in $G\setminus u$ with depth up to $d$. For a pair of vertices $(\td v,\td x)\in \td\cv\times\td\cv$, $\td\ct$ denotes the local tree rooted at $\td x$ in $\td G\setminus \td v$ with depth up to $d$.
\begin{lem}
    \label{lem:tree_pmf}
    Suppose $(G,\td G)\sim \mathrm{CER}(n,\lambda,s)$.
    For any pair of unlabeled rooted tree $(t,\td t)\in \cz\times\cz$, both with depth within $d$, we have
    \begin{align}
        &\frac{\P\big(\{[\ct]=t\}\cap\{[{\td\ct}]=\td t\}\cap \{\bar\cv(\ct)\cap \bar\cv(\td\ct)=\emptyset\}\big)}{\bq([\ch_d]=t,[\td\ch_d]=\td t)}\nonumber\\
        &\le \exp\left(\frac{\lambda}{n}\left((|\cv(t)|+1)|\cv(t)|+|\cv(\td t)|(1+|\cv(t)|+|\cv(\td t)|)\right)\right)\label{eq:tree_pmf}
    \end{align}
    
\end{lem}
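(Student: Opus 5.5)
The plan is to bound the numerator by summing over all labeled embeddings of $t$ and $\td t$ into the graphs. Then compare each embedding probability with the Poisson Galton--Watson probabilities that define $\bq$.

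Fix $(u,w)$ and $(\td v,\td x)$. The event $\{[\ct]=t\}$ means the $d$-hop neighborhood of $w$ in $G\setminus u$ is a tree isomorphic to $t$. Write $a:=|\cv(t)|$ and $b:=|\cv(\td t)|$. First I enumerate the concrete vertex sets realizing $t$: an ordered labeling of the non-root vertices of $t$ by distinct vertices of $[n]\setminus\{u,w\}$, i.e.\ an injective map from $\cv(t)$ into $[n]$ with root mapped to $w$, modulo automorphisms of $t$. Do the same for $\td t$ in $\td G$, with the additional constraint that the images are disjoint in $\bar G$. For a fixed pair of disjoint embeddings $(\phi,\td\phi)$, the event requires three things:
\begin{itemize}
\item the $a-1$ tree edges of $\phi(t)$ are present in $G$;
\item the $b-1$ tree edges of $\td\phi(\td t)$ are present in $\td G$;
\item no other edges exist within the depth-$(\le d-1)$ vertices of each tree, or from them to the rest of the graph (excluding $u$, resp.\ $\td v$).
\end{itemize}
Because the vertex sets are disjoint, the edge pairs involved in $G$ and in $\td G$ are distinct vertex pairs. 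The only dependence comes from the correlation on a common vertex pair $\{i,j\}$, where $i,j$ may lie in one tree in $G$ and both in the other's complement. I handle this by upper bounding: drop some of the ``absent'' constraints in $\td G$, namely those pairs already constrained in $G$. The probability then factorizes into $(\lambda/n)^{a-1}(1-\lambda/n)^{N_1}\cdot(\lambda/n)^{b-1}(1-\lambda/n)^{N_2}$. Here $N_1$ counts the pairs from internal vertices of $\phi(t)$ to vertices outside $\phi(t)\cup\{u\}$, together with non-edges inside, and $N_2$ counts the pairs for $\td t$ minus those overlapping with the vertex set of $\phi(t)$.

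Next I count embeddings. The number of labelings is $(n-2)_{a-1}/|\mathrm{Aut}(t)|$ times the number of orderings that yield the unlabeled tree. Combining with $(\lambda/n)^{a-1}$ gives $\prod \lambda^{c_v}/c_v!$-type factors up to a ratio $(n-2)_{a-1}/n^{a-1}\le 1$. On the other side, $\gw([\ch_d]=t)=\frac{(\text{orderings})}{|\mathrm{Aut}|}\prod_{v\text{ internal}} e^{-\lambda}\lambda^{c_v}/c_v!$. So the combinatorial/$\lambda$ factors cancel with at most a factor $\le1$. What remains is to compare $(1-\lambda/n)^{N_1}$ with $\prod_{\text{internal}}e^{-\lambda}=e^{-\lambda m}$, where $m\le a$ is the number of internal vertices. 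Using $1-x\ge e^{-x-x^2}$, or more crudely $(1-\lambda/n)^{N}\le e^{-\lambda N/n}$, it suffices to lower bound $N_1\ge mn - m(a+1)$. This holds because each internal vertex is forbidden from connecting to all $n$ vertices except at most $a+1$ of them (its tree neighbors, the tree vertices, and $u$). This yields the factor $e^{\lambda m(a+1)/n}\le e^{\lambda a(a+1)/n}$, matching the first term $(|\cv(t)|+1)|\cv(t)|$. For $\td t$, the excluded pairs number at most $b(1+a+b)$ (the vertex $\td v$, the $a$ vertices of $\phi(t)$ whose pairs were dropped, and $\td t$'s own vertices), giving $e^{\lambda b(1+a+b)/n}$. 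Multiplying gives \eqref{eq:tree_pmf}.

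The main obstacle is the bookkeeping of the dependence between $G$ and $\td G$ through shared vertex pairs, and making the embedding count exactly match the Galton--Watson unlabeled-tree probability (automorphism factors). I would first try to handle both at once by working with labeled ordered trees in $\cx_d$ and summing; the unlabeled statement follows since both sides then sum over the same class.
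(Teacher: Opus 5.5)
Your proposal is correct and gives exactly the constants in the statement, but it argues differently from the paper.

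\textbf{The paper's route.} The paper never counts automorphisms. It relabels $\ct$ into an order-labeled tree $\ell(\ct)\in\cx_d$ by sorting siblings by vertex index. Then $\{\ell(\ct)=h\}$ is a sequence of child-count events in a breadth-first exploration, and after dropping the no-cycle constraints each count is $\bin(n-1-x_{k,r},\lambda/n)$. The Binomial--Poisson point-mass ratio bound $\exp(\frac{\lambda}{n}(x+y))$ is applied step by step. This gives the inequality for every labeled pair $(h,\td h)$, which is then summed over $[h]=t$, $[\td h]=\td t$. For $\td\ct$, the paper conditions on $\ell(\ct)=h$ and on the vertex set of $\ct$, and explores $\td G$ only among the remaining vertices. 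That plays the same role as your dropping of the $\td G$ non-edge constraints on cross pairs.

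\textbf{Your route.} You replace the sequential exploration and the point-mass lemma by a single product over vertex pairs for each embedded copy. The key point is that the two tree-edge sets live inside disjoint vertex sets. Hence the only pairs constrained in both graphs carry non-edge constraints, and discarding them on the $\td G$ side leaves constraints on disjoint pairs, which are independent. This is more direct, but it costs automorphism bookkeeping, which the paper's index ordering avoids.

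\textbf{Corrections to your bookkeeping.} Your sentence about ``times the number of orderings'' double counts. There are exactly $(n-2)_{a-1}/|\mathrm{Aut}(t)|$ rooted copies of $t$ at $w$ in $[n]\setminus\{u\}$. On the other side,
$\gw([\ch_d]=t)=\frac{\prod_v c_v!}{|\mathrm{Aut}(t)|}\prod_{v}e^{-\lambda}\frac{\lambda^{c_v}}{c_v!}=\lambda^{a-1}e^{-\lambda m}/|\mathrm{Aut}(t)|$,
where the second product runs over vertices of depth at most $d-1$. So the combinatorial factors differ exactly by $(n-2)_{a-1}/n^{a-1}\le 1$. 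Also, $m$ must count every vertex of depth at most $d-1$, including leaves at depth below $d$. Each such vertex contributes $e^{-\lambda}$ to $\gw$ and must have no outside neighbours in $G\setminus u$. If ``internal'' is read as ``non-leaf,'' the exponents would not cancel. With this reading, $N_1\ge m(n-a-1)$ and $N_2\ge \td m(n-a-b-1)$, together with $m\le a$ and $\td m\le b$, give precisely the two exponential factors.
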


\begin{proof}
    To begin with, we define a mapping $\ell$ that relabels a local tree in an \erdos--\renyi graph into a Galton--Watson tree that is labeled by the children orders. Consider a non-empty local tree $\ct$. The image $\ell(\ct)$ has the same underlying rooted tree structure as $\ct$, but its vertex labels are replaced by order labels as follows. First, the root of $\ell(\ct)$ is assigned the label $(1)$. For each vertex $v\in\cv(\ct)$, we order the children of $v$ according to their indices in $\ct$: if the children of $v$ have indices
    \[
    i_1<i_2<\cdots<i_{c_v},
    \]
    then the child with index $i_j$ is assigned order $j$ among the children of $v$. Given these sibling orders, the label of each vertex in $\ell(\ct)$ is obtained by concatenating the label of its parent and its order.

    With the definition of $\ell$, we claim that for each $(h,\td h)\in \cx\times\cx$, both with depth up to $d$,
    \begin{align}
        &\frac{\P\big(\{\ell(\ct)=h\}\cap\{\ell(\td\ct)=\td h\}\cap \{\bar\cv(\ct)\cap \bar\cv(\td\ct)=\emptyset\}\big)}{\bq(\ch_d=h,\td\ch_d=\td h)}\nonumber\\
        \le{}& \exp\left(\frac{\lambda}{n}\left((|\cv(h)|+1)|\cv(h)|+|\cv(\td h)|(1+|\cv(h)|+|\cv(\td h)|)\right)\right).\label{eq:cxd}
    \end{align}

    Since the mapping $\ell$ preserves the tree structure, we know that $[\ct]=t$ only if $\ell(\ct)=h$ for some $h\in\cx$ that satisfies $[h]=t$. Therefore, we have for each $(t,\td t)\in\cz\times\cz$,
    \begin{align*}
        &\frac{\P\big(\{[{\ct}]=t\}\cap\{[{\td\ct}]=\td t\}\cap \{\bar\cv(\ct)\cap \bar\cv(\td\ct)=\emptyset\}\big)}{\bq([\ch_d]=t,[\td\ch_d]=\td t)}\\
        ={}& \frac{\sum_{\substack{(h,\td h)\in \cx\times\cx:\\
        [h]=t,[\td h]=\td t
        }}\P\big(\{\ell(\ct)=h\}\cap\{\ell(\td\ct)=\td h\}\cap \{\bar\cv(\ct)\cap \bar\cv(\td\ct)=\emptyset\}\big)}{\sum_{\substack{(h,\td h)\in \cx\times\cx:\\
        [h]=t,[\td h]=\td t
        }}\bq(\ch_d=h,\td\ch_d=\td h)}\\
        \le{}& \exp\left(\frac{\lambda}{n}\left((|\cv(t)|+1)|\cv(t)|+|\cv(\td t)|(1+|\cv(t)|+|\cv(\td t)|)\right)\right),
    \end{align*}
    where the last inequality follows by~\eqref{eq:cxd} and the fact that $|\cv(t)|=|\cv(h)|$ given that $[h]=t$.
    This completes the proof of~\eqref{eq:tree_pmf}. In the rest of the proof, we focus on proving the claimed inequality~\eqref{eq:cxd}.

        By the definition of the independent tree distribution $\bq(\ch,\td \ch)$, we have
        \[
        \bq(\ch_d=h,\td\ch_d=\td h)=\gw(\ch_d=h)\gw(\td\ch_d=\td h).
        \]
        
        In the following, we will focus on proving
        \begin{equation}
            \label{eq:t}
            \frac{\P(\ell(\ct)=h)}{\gw(\ch_d=h)}\le \exp\left(\frac{\lambda}{n}(|\cv(h)|+1)|\cv(h)|\right),
        \end{equation}
        and 
        \begin{equation}
            \label{eq:tdt}
            \frac{\P(\{\ell(\td\ct)=\td h\}\cap \{\bar\cv(\ct)\cap \bar\cv(\td\ct)=\emptyset\}\,|\,\ell(\ct)=h)}{\gw(\td\ch_d=\td h)}\le \exp\left(\frac{\lambda}{n}|\cv(\td h)|(1+|\cv(h)|+|\cv(\td h)|)\right).
        \end{equation}
        The desired inequality~\eqref{eq:cxd} then follows by multiplying~\eqref{eq:t} and~\eqref{eq:tdt}.

        \underline{\textbf{Proof of~\eqref{eq:t}:}} 
        For each $k\in\{0,\ldots,d-1\}$, we define a total ordering for the vertices at depth $k$ in the order-labeled rooted tree $h$: a vertex is ranked higher if its parent is ranked higher among vertices at depth $k-1$; among vertices sharing the same parent, the order is determined by the ordering of siblings. Recall that $\cv(h,k)$ denotes the set of vertices at depth $k$ in graph $h$. For each $k$ and $r\in [|\cv(h,k)|]$, let $c_{k,r}$ denote the number of children of the $r$-th ordered vertex at depth $k$. Then by the definition of the $\gw$ distribution, we have

        \begin{equation}
            \label{eq:gw_t}
            \gw(\ch_d=h)=\prod_{k=0}^{d-1}\prod_{r=1}^{|\cv(h,k)|}\P(\poi(\lambda)=c_{k,r}),
        \end{equation}
        where we define the product $\prod_{r=1}^{|\cv(h,k)|}\P(\poi(\lambda)=c_{k,r})$ to be $1$ when $|\cv(h,k)|=0$.

        To bound the numerator $\P(\ell(\ct)=h)$, we describe a breadth-first exploration procedure for constructing $\ct$. The exploration is performed in the graph $G\setminus u$ and is rooted at $w$. We reveal the vertices of $\ct$ level by level, following the ordering of vertices in the target tree $h$ introduced above.  More precisely, among vertices at the same depth, a vertex is explored earlier if its parent is explored earlier, and among siblings, the order is determined by their vertex labels in $G$. 

        At the beginning of the exploration, the root $w$ is the only discovered vertex. Since the vertex $u$ is removed, the number of neighbors of $w$ in $G\setminus u$ has distribution
        \[
        \bin\left(n-2,\frac{\lambda}{n}\right).
        \]
        To obtain a local tree isomorphic to $h$, this number must be equal to $c_{0,1}$.

        More generally, suppose that the exploration is about to reveal the children of the $r$-th ordered vertex at depth $k$ in the breadth-first order, where $k\in\{0,\ldots,d-1\}$ and $r\in[|\cv(h,k)|]$.
        These children must satisfy:
        \begin{itemize}
            \item The total number of these children is exactly $c_{k,r}$. We also note that the number of these children is distributed as 
            \[
            \bin\left(n-1-x_{k,r},\frac{\lambda}{n}\right),
            \]
            where $x_{k,r}$ denotes the number of vertices that have already been discovered before this step in the breadth-first exploration.
            \item In addition to matching the prescribed offspring numbers, the exploration must not create any cycle. Equivalently, every newly discovered vertex must be connected to the previously explored part only through its parent, and there must be no edges among vertices discovered at the same step. If any newly discovered vertex has an additional edge to an already discovered vertex, or if there is an edge among two newly discovered vertices, then the explored $d$-neighborhood is not a tree and $\ct$ becomes empty.
        \end{itemize}

        Through this breadth-first exploration, we have
        \begin{align}
            &\P(\ell(\ct)=h)\nonumber\\
            ={}&\prod_{k=0}^{d-1}\prod_{r=1}^{|\cv(h,k)|}\P(\text{the number of children is $c_{k,r}$ and the new children do not create any cycle})\nonumber\\
            \le{}& \prod_{k=0}^{d-1}\prod_{r=1}^{|\cv(h,k)|}\P\left(\bin\left(n-1-x_{k,r},\frac{\lambda}{n}\right)=c_{k,r}\right)\nonumber\\
            \stackrel{(a)}{\le}{}&\prod_{k=0}^{d-1}\prod_{r=1}^{|\cv(h,k)|}\exp\left(\frac{\lambda }{n}(1+x_{k,r}+c_{k,r})\right)\P(\poi(\lambda)=c_{k,r})\nonumber\\
            \stackrel{(b)}{\le}{}&\exp\left(\frac{\lambda}{n}|\cv(h)|(1+|\cv(h)|)\right)\prod_{k=0}^{d-1}\prod_{r=1}^{|\cv(h,k)|}\P(\poi(\lambda)=c_{k,r}).\label{eq:t_bfs}
        \end{align}
        Here (a) follows by Lemma~\ref{lem:binom-poi-gap}, and (b) follows because $c_{k,r}+x_{k,r}\le |\cv(h)|$ throughout the exploration process and there are at most $|\cv(h)|$ terms in the double product. This completes the proof of~\eqref{eq:t}.

        \underline{\textbf{Proof of~\eqref{eq:tdt}:}} For the unlabeled rooted tree $\td h$, we similarly define the notations $\td c_{k,r}$ and $\cv(\td h,k)$. Then we have
        \begin{equation}
            \label{eq:gw_tdt}
            \gw(\td\ch_d=\td h)=\prod_{k=0}^{d-1}\prod_{r=1}^{|\cv(\td h,k)|}\P(\poi(\lambda)=\td c_{k,r}).
        \end{equation}

        Now we move on to bound the numerator. Notice that
        \begin{align}
            &\P(\{\ell(\td\ct)=\td h\}\cap \{\bar\cv(\ct)\cap \bar\cv(\td\ct)=\emptyset\}\,|\,\ell(\ct)=h)\nonumber\\
            \le{}& \max_{\substack{\bar\cs\subseteq \bar\cv:\\
            \bar w\in \bar\cs,\\
            \bar x\notin\bar\cs,\\
            |\bar\cs|=|\cv(h)|}}\P(\{\ell(\td\ct)=\td h\}\cap \{\bar\cv(\ct)\cap \bar\cv(\td\ct)=\emptyset\}\,|\,\{\ell(\ct)=h\}\cap \{\bar{\cv}(\ct)=\bar\cs\})\label{eq:max_cond}
        \end{align}
        Fix an arbitrary subset $\bar\cs\subseteq \bar\cv$ that satisfies the criteria in~\eqref{eq:max_cond}. We focus on bounding the conditional probability 
        \[
        \P(\{\ell(\td\ct)=\td h\}\cap \{\bar\cv(\ct)\cap \bar\cv(\td\ct)=\emptyset\}\,|\,\{\ell(\ct)=h\}\cap \{\bar{\cv}(\ct)=\bar\cs\}).
        \]

        We again consider the breadth-first exploration procedure to construct $\td\ct$. To guarantee the desired event $\{\ell(\td\ct)=\td h\}\cap \{\bar\cv(\ct)\cap \bar\cv(\td\ct)=\emptyset\}$, when we explore the children of the $r$-th vertex at depth $d$, we have to ensure
        \begin{itemize}
            \item None of the children are corresponding to the vertices in $\bar\cs$. Otherwise, $\td\ct$ has a non-empty overlap with the tree $\ct$;
            \item The number of children is exactly $c_{k,r}$. Given the event that the vertex has no children corresponding to $\bar\cs$, the distribution of the number of children at this step is given by 
            \[
            \bin\left(n-1-|\cv(\td h)|-\td x_{k,r},\frac{\lambda}{n}\right),
            \]
            where $\td x_{k,r}$ is the number of vertices discovered before this step. This is because conditioning on the event $\{\ell(\ct)=h\}\cap \{\bar{\cv}(\ct)=\bar\cs\}$ does not reveal any information about any edge without an endpoint in $\bar{\cs}$.
            \item The newly discovered children are connected to the previously explored part only through its parent, and there are no edges among them.
        \end{itemize}
        By this exploration process, we have
        \begin{align*}
            &\P(\{\ell(\td\ct)=\td h\}\cap \{\bar\cv(\ct)\cap \bar\cv(\td\ct)=\emptyset\}\,|\,\{\ell(\ct)=h\}\cap \{\bar{\cv}(\ct)=\bar\cs\})\\
            ={}&\prod_{k=0}^{d-1}\prod_{r=1}^{|\cv(\td h,k)|}\P(\text{no children corresponding to any vertex in }\bar\cs\text{ and}\\
            &\quad\quad\quad\quad\quad\quad\text{the number of new children is $\td c_{k,r}$ and}\\
            &\quad\quad\quad\quad\quad\quad\text{the new children do not create any cycle})\\
            \le{}& \prod_{k=0}^{d-1}\prod_{r=1}^{|\cv(\td h,k)|} \P\left(\bin\left(n-1-|\cv(\td h)|-\td x_{k,r},\frac{\lambda}{n}\right)=\td c_{k,r}\right)\\
            \stackrel{(c)}{\le}{}&\prod_{k=0}^{d-1}\prod_{r=1}^{|\cv(\td h,k)|}\exp\left(\frac{\lambda}{n}(1+\td x_{k,r}+\td c_{k,r}+|\cv(\td h)|)\right)\P(\poi(\lambda)=\td c_{k,r})\\
            \stackrel{(d)}{\le}{}&\exp\left(\frac{\lambda}{n}|\cv(\td h)|(1+|\cv(h)|+|\cv(\td h)|)\right)\prod_{k=0}^{d-1}\prod_{r=1}^{|\cv(\td h,k)|}\P(\poi(\lambda)=\td c_{k,r}).
        \end{align*}
        Here (c) and (d) follows by the similar reason as (a) and (b) in~\eqref{eq:t_bfs}. This completes the proof of~\eqref{eq:tdt}.
    \end{proof}

\subsection{Coupling correlated \erdos--\renyi neighborhoods with correlated Galton--Watson trees}
\label{appd:tree-couple}
Let $d\ge 0$ be an integer, and $u$ and $\td u$ be two corresponding vertices in $G$ and $\td G$. Define $\ct$ to be the local tree rooted at $u$ in $G$ such that $\ct$ is the $d$-hop-neighborhood of $u$ in $G$ if it is a tree; otherwise, $\ct$ is an empty tree. Similarly, define $\td\ct$ as the local tree rooted at $\td u$ in $\td G$.
\begin{lem}
\label{lem:tree_tv}
    Suppose that $\lambda=(\log n)^{\alpha+o(1)}$ for some constant $\alpha\in (0,1)$ and $s\in (\sqrt{C_\mathrm{Otter}},1]$ be a constant. Assume $d=(\log n)^{\gamma}$ for some constant $\gamma\in (0,1)$. Then there exists a coupling  between $(\ct,\td \ct)$ and $(\ch,\td\ch)\sim \bp$ such that
    \begin{equation}
        \label{eq:dtv}
        \P([\ct]=[\ch_d],[\td\ct]=[\td\ch_d])= 1-O(n^{-1/2}).
    \end{equation}
    
\end{lem}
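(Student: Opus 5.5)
The plan is to build the coupling through a simultaneous breadth-first exploration of the union graph $\bar G$ from $\bar u$ up to depth $d$. Each edge of $\bar G$ is labeled with one of three types: present in both $G$ and $G'$ (probability $\lambda s/n$), only in $G$ (probability $\lambda(1-s)/n$), or only in $G'$ (probability $\lambda(1-s)/n$). These are the multinomial outcomes of the base-edge-plus-subsampling construction, and different vertex pairs are independent. For each already revealed vertex $v$ at depth $k<d$, the three counts of new neighbors of each type among the unexplored vertices are, conditionally on the past, jointly multinomial with parameters $(n-x_v;\ \lambda s/n,\lambda(1-s)/n,\lambda(1-s)/n)$, where $x_v$ is the number of vertices revealed so far.

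The exploration has to be organized so that it reproduces the law $\bp$ described in Section~\ref{sec:tree-test}. Children of a vertex lying on the common (intersection) part get all three types: the common children continue the intersection tree $\ch^*$, with offspring $\poi(\lambda s)$, while the $G$-only and $G'$-only children begin independent side trees, with $\poi(\lambda(1-s))$ extra children each. A vertex in a $G$-only side tree is only explored in $G$, so its children have law $\bin(n-x,\lambda/n)$. Vertices in $G'$-only side trees are handled symmetrically in $\td G$. This is the exact analogue of the augmentation construction of $\bp$.

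We then couple step by step. At each step the multinomial/binomial offspring law is coupled with its Poisson counterpart using a maximal coupling. The per-step total variation cost is $O(\lambda^2/n+\lambda x/n)$, by the standard Binomial--Poisson bound (Lemma~\ref{lem:binom-poi-gap} style estimates, or Le Cam). The union-graph neighborhood within depth $d$ must also be a tree, meaning no revealed vertex is hit twice. This is controlled exactly as in Lemma~\ref{lem:TBAlignIdeal-E1}: with probability $1-n^{-1+o(1)}$ the $2(d+1)$-neighborhood of $\bar u$ in $\bar G$ is a tree. On that event, both $\ct$ and $\td\ct$ are nonempty, and they equal the $G$- and $\td G$-projections of the explored tree. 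Consequently $[\ct]=[\ch_d]$ and $[\td\ct]=[\td\ch_d]$ whenever every step's coupling succeeds.

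To sum the errors, we first intersect with the event that the explored union neighborhood has size at most $N:=8e^2(2\lambda)^d\log n=n^{o(1)}$. This holds with probability $1-\td O(n^{-4/3})$ by the argument of Lemma~\ref{lem:neighbor_size} applied to $\bar G\sim\mathrm{ER}(n,\lambda(2-s)/n)$. We need the matching bound on the Galton--Watson side too, so that the Poisson tree does not outgrow the exploration; this follows from Lemma~\ref{lem:s3bos} or the same Chernoff chain. There are at most $N$ steps, each costing $O(\lambda N/n)$, so the total failure probability is $O(\lambda N^2/n)+n^{-1+o(1)}=n^{-1+o(1)}=O(n^{-1/2})$.

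The main obstacle I expect is bookkeeping rather than estimates. The coupling is sequential and adaptive, so we must argue that conditioning on the past only removes already-explored vertex pairs. This is what keeps the next step's law exactly multinomial on the remaining vertices, and it relies on the independence of distinct vertex pairs. We must also handle the truncation cleanly: once the size or cycle event fails, we simply declare failure, which costs at most the probabilities bounded above.
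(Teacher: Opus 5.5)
Your proposal is correct and follows essentially the same route as the paper. Both arguments run a breadth-first exploration, maximally coupling the multinomial offspring counts (common vertices) and binomial offspring counts (exclusive vertices) step by step with the Poisson counts of the correlated Galton--Watson construction, and sum per-step errors of order $\lambda N/n$ over $n^{o(1)}$ steps. The only difference is packaging. The paper uses per-depth events $\cb_k,\cc_k,\cd_k,\ci_k,\td\ci_k$ to control layer sizes, cycles and cross-overlaps of the exclusive parts. You absorb all of these into the single union-graph tree event of Lemma~\ref{lem:TBAlignIdeal-E1} and the size bound of Lemma~\ref{lem:neighbor_size} applied to $\bar G\sim\mathrm{ER}(n,\lambda(2-s)/n)$. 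The separate Galton--Watson-side size bound you mention is unnecessary, since on success the coupled tree coincides with the explored one.
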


\begin{proof}
    We first define a mapping $\ell$ that relabels a pair of local trees to the format of a pair of Galton--Watson trees. We first describe the relabeling of $\ct$. If $\ct$ is an empty tree, then the output $\ell(\ct)$ is also empty. Now suppose $\ct$ is non-empty. The root of $\ell(\ct)$ is assigned the label $(1)$. For a vertex $v\in\cv(\ct)$, if the corresponding vertex $\td v$ also belongs to $\cv(\td\ct)$, then the children of $v$ are ordered in two blocks. First, we list the common children, namely those children $y$ of $v$ such that the corresponding vertex $\td y$ is a child of $\td v$ in $\td\ct$. These common children are ordered increasingly according to their indices. If there are $c^*_v$ such children, they receive orders $1,\ldots,c^*_v$. Next, we list the remaining children of $v$. These children are again ordered increasingly according to their indices and receive orders $c^*_v+1,\ldots,c^*_v+c_v$, where $c_v$ is the number of such remaining children.
    If, on the other hand, the corresponding vertex $\td v$ does not belong to $\cv(\td\ct)$, then all children of $v$ are ordered increasingly according to their indices and receive orders $1,\ldots,c_v$, where $c_v$ is the number of children of $v$. After this ordering of the children has been specified for every vertex in $\ell(\ct)$, the order labels are assigned recursively: if a vertex has label $(a_0,\ldots,a_k)$ and its $i$-th ordered child is considered, then this child is assigned the label $(a_0,\ldots,a_k,i)$. The relabeling of $\td\ct$ is defined analogously. An example of this relabeling is presented in Figure~\ref{fig:relabel}.

    \begin{figure}[htbp]
    \centering
    \includegraphics[width=0.7\linewidth]{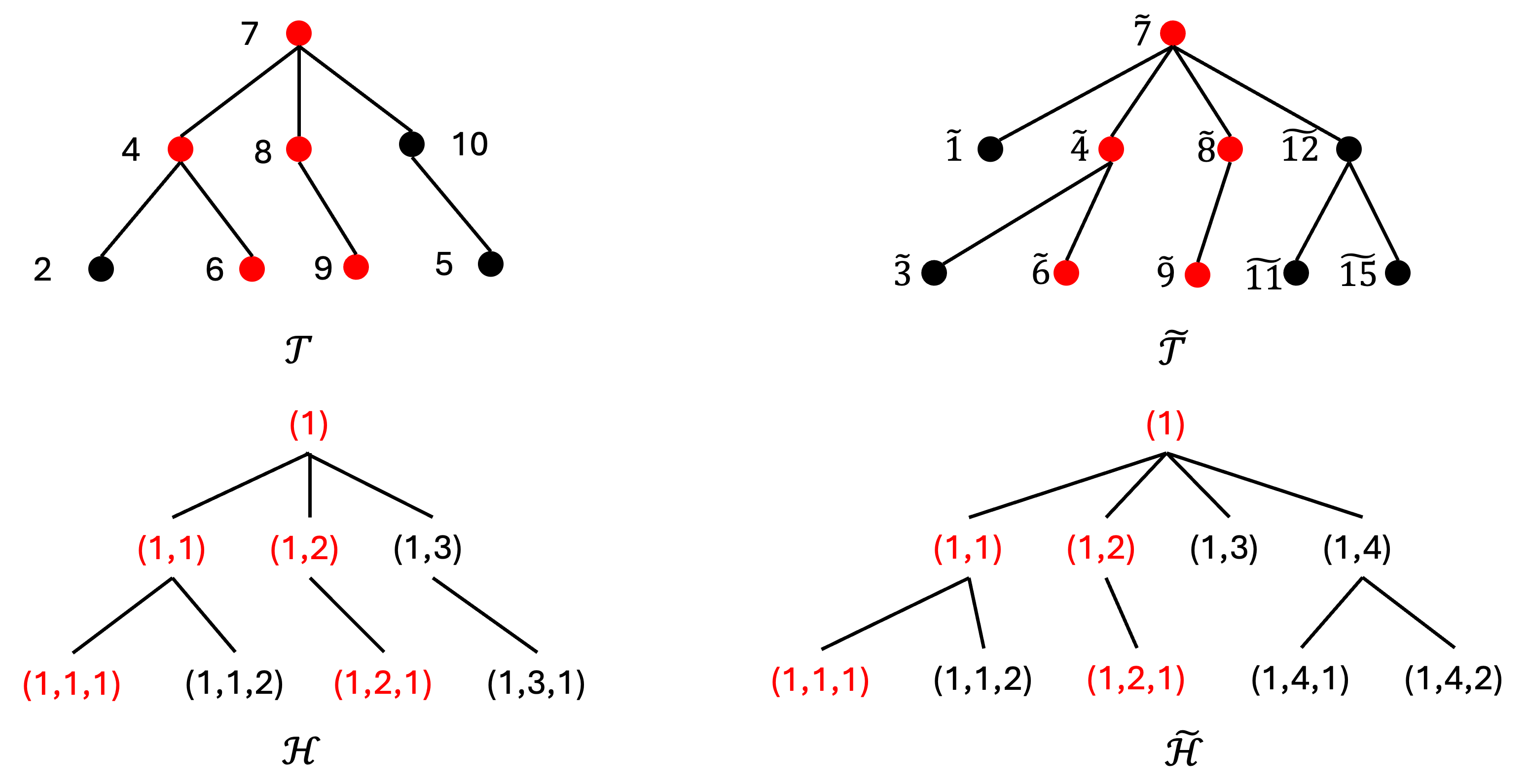} 
    \caption{An illustration of the relabeling mapping $\ell$.
    The common nodes in the trees are marked in red.
    We have $\ell(\ct,\td\ct)=(\ch,\td\ch)$ in this example.}
    \label{fig:relabel}
\end{figure}

    In the rest of this proof, we will focus on constructing a coupling such that
    \begin{equation}
        \label{eq:couple}
        \P(\ell(\ct,\td \ct)=(\ch_d,\td\ch_d))=1-O(n^{-1/2}).
    \end{equation}
    This implies the lemma statement~\eqref{eq:dtv} because $\ell(\ct,\td\ct)=(\ch_d,\td\ch_d)$ implies $[\ct]=[\ch_d]$ and $[\td\ct]=[\td\ch_d]$. Therefore, we are left to construct a coupling so that~\eqref{eq:couple} holds.

    The coupling is constructed recursively over depths. For each $k\in [d]$, let $\ct_k$ denote the rooted tree obtained by truncating $\ct$ at depth $k$. More specifically, $\ct_k$ is the subgraph of $\ct$ induced by the set of vertices within depth $k$ from the root.
    We note that we have $\ct=\ct_d$ under this definition.
    The truncated trees $\td\ct_k$, $\ch_k$ and $\td\ch_k$ are defined similarly.  By these definitions, we have
    \[
    \{\ell(\ct,\td\ct)=(\ch_d,\td\ch_d)\}=\cap_{k=1}^d \{\ell(\ct_k,\td\ct_k)=(\ch_k,\td\ch_k)\}.
    \]

    \textbf{\underline{Step 1. Coupling $(\ct_1,\td\ct_1)$ with $(\ch_1,\td\ch_1)$:}} Recall that $u$ and $\td u$ are the roots of $\ct$ and $\td\ct$, respectively. Under the mapping $\ell$, both roots are relabeled by $(1)$, which agrees with the root labels of $\ch$ and $\td\ch$. 
    Let $\sfch_u^*$ denote the set of common children of $u$ and $\td u$ in $\ct$ and $\td\ct$. More precisely, $\sfch_u^*$ 
    consists of all indices $i\in[n]$ such that $i$ is a child of $u$ in $\ct$ and $\td i$ is a child of $\td u$ in $\td\ct$. Let $\sfch_u$ denote the set of children of $u$ in $\ct$ whose corresponding vertices are not children of $\td u$ in $\td\ct$, and let $\widetilde\sfch_{\td u}$ denote the analogous set of children of $\td u$ in $\td\ct$ that do not correspond to children of $u$ in $\ct$. We refer to these as the exclusive children of $u$ and $\td u$, respectively.
    We define the following three events:
    \begin{itemize}
        \item $\ca_1^*$ is the event that under the maximal coupling~\cite{levin2017markov} between random variables $|\sfch_u^*|$, $|\sfch_u|$, and  $|\wdtd\sfch_{\td u}|$, and independent Poisson random varables, $X_u^*\sim\poi(\lambda s)$, $X_u\sim\poi(\lambda(1-s))$, and $\td X_{\td u}\sim\poi(\lambda(1-s))$, we have $|\sfch_u^*|=X_u^*$, $|\sfch_u|=X_u$ and $|\wdtd\sfch_{\td u}|=\td X_{\td u}$.
        
        \item $\cb_1$ is the event that $|\sfch_u^*\cup\sfch_u\cup\wdtd\sfch_{\td u}|\le \lambda(2-s)\log n$.
        \item $\cc_1$ is the event that there are no edges in the union graph $\bar G$ between any pair of vertices in $\sfch_u^*\cup\sfch_u\cup\wdtd\sfch_{\td u}$.

    \end{itemize}
    We claim that $\ca_1^*\cap \cc_1$ implies $\ell(\ct_1,\td\ct_1)=(\ch_1,\td\ch_1)$. First, under event $\cc_1$, no cycles exist in the $1$-hop-neighborhood of $u$ and $\td u$. Therefore, both $\ct_1$ and $\td \ct_1$ are non-empty.
    Under the correlated tree-pair distribution $\bp$, the number of children of the root in the intersection tree has distribution $\poi(\lambda s)$, while the numbers of exclusive children added to the two trees have distribution $\poi(\lambda(1-s))$ independently. Hence, on $\ca_1^*$, the numbers of common and exclusive root children in $(\ct_1,\td\ct_1)$ agree with those in $(\ch_1,\td\ch_1)$. Moreover, the definition of $\ell$ orders common children first and exclusive children afterwards, with each block ordered increasingly by index. 
    Therefore, after applying $\ell$, the labels of all vertices up to depth one agree with those in $(\ch_1,\td\ch_1)$.
    
    The event $\cb_1$ is not needed for this depth-one agreement. It is used to control the size of the explored neighborhood when bounding $\P(\cc_1^c)$ and the other error events at deeper depths.
    
    We claim that
    \begin{equation}
    \label{eq:root_multi}
    (|\sfch_u^*|,|\sfch_u|,|\wdtd\sfch_{\td u}|,|\mathsf{NC}_u|)\sim \mathrm{Multi}\left(n-1,\frac{\lambda s}{n},\frac{\lambda (1-s)}{n},\frac{\lambda (1-s)}{n},1-\frac{\lambda(2-s)}{n}\right),
    \end{equation}
    where $|\mathsf{NC}_u|$ is the number of vertices that are not connected to $u$ in $G$ and not connected to $\td u$ in $\td G$. 
    This follows by the definition of the correlated \erdos--\renyi graph pair model, in which an edge appears in both $G$ and $\td G$ with probability $\frac{\lambda s}{n}$, only in $G$ with probability $\frac{\lambda(1-s)}{n}$ and only in $\td G$ with probability $\frac{\lambda(1-s)}{n}$. Also, these edges are mutually independent.
    
    By Lemma~\ref{lem:multi-poi}, we can couple these random variables with $X_u^*$, $X_u$ and $\td X_{\td u}$ such that
    \begin{equation}
        \label{eq:a1}
        \P(\ca_1^{*c})=O\left(\frac{\log^2 n}{n}\right).
    \end{equation}
    By the Chernoff bound (equation~\eqref{eq:CB_Upper} in Lemma~\ref{lem:CB}), we have
    \begin{equation}
        \label{eq:b1}
        \P(\cb_1^c)\le \P\left(\bin\left(n-1,\frac{\lambda(2-s)}{n}\right)\ge \lambda(2-s)\log n\right)= n^{-\omega(1)}.
    \end{equation}
    By the union bound over all vertex pairs in $\sfch_u^*\cup\sfch_u\cup\wdtd\sfch_{\td u}$, we have 
    \begin{equation}
        \label{eq:c1}
        \P(\cc_1^c\,|\,\cb_1)=O\left(\frac{\lambda^3\log^2 n}{n}\right).
    \end{equation}
    Equations~\eqref{eq:a1},~\eqref{eq:b1} and~\eqref{eq:c1} together imply that 
    \begin{equation}
        \label{eq:depth1}
        \P(\ca^{*c}_1\cup\cb_1^c\cup\cc_1^c)=O\left(\frac{\lambda^3\log^2 n}{n}\right).
    \end{equation}

    \textbf{\underline{Step 2. Coupling $(\ct_k,\td\ct_k)$ with $(\ch_k,\td\ch_k)$ given $\ell(\ct_{k-1},\td\ct_{k-1})=(\ch_{k-1},\td\ch_{k-1})$:}} 
    For each depth $k\ge 1$, we recursively define three sets of vertices: $\com_k$, $\excl_k$, and $\wdtd{\excl}_k$. The set $\com_k$ consists of the vertices at depth $k$ that appear in both local trees under the correspondence, while $\excl_k$ and $\wdtd{\excl}_k$ consist of the vertices appearing exclusively in $\ct$ and $\td\ct$, respectively. At depth one, we initialize
    \[
        \com_1:=\sfch_u^*,\qquad
        \excl_1:=\sfch_u,\qquad
        \wdtd{\excl}_1:=\wdtd{\sfch}_{\td u}.
    \]

    Now we move on to consider a general $k\ge 2$, and recursively define the sets $\com_k$, $\excl_k$, and $\wdtd{\excl}_k$.
    For each $v\in\com_{k-1}$, let $\sfch_v^*$ denote the set of common children of $v$ and $\td v$ in $\ct$ and $\td\ct$; that is, $\sfch_v^*$ contains all children $y$ of $v$ in $\ct$ such that $\td y$ is a child of $\td v$ in $\td\ct$. Let $\sfch_v$ denote the set of children of $v$ in $\ct$ whose counterparts are not children of $\td v$ in $\td\ct$, and let $\wdtd{\sfch}_{\td v}$ denote the set of children of $\td v$ in $\td\ct$ whose counterparts are not children of $v$ in $\ct$. For each $v\in\excl_{k-1}$, let $\sfch_v$ denote the set of children of $v$ in $\ct$. Similarly, for each $\td v\in\wdtd{\excl}_{k-1}$, let $\wdtd{\sfch}_{\td v}$ denote the set of children of $\td v$ in $\td\ct$.

    We then define
    \[
    \com_{k}
    :=
    \bigcup_{v\in\com_{k-1}}\sfch_v^*,
    \]
    \[
    \excl_{k}
    :=
    \left(\bigcup_{v\in\com_{k-1}}\sfch_v\right)
    \cup
    \left(\bigcup_{v\in\excl_{k-1}}\sfch_v\right),
    \]
    and
    \[
    \wdtd{\excl}_{k}
    :=
    \left(\bigcup_{v\in\com_{k-1}}\wdtd{\sfch}_{\td v}\right)
    \cup
    \left(\bigcup_{\td v\in\wdtd{\excl}_{k-1}}\wdtd{\sfch}_{\td v}\right).
    \]

    For each depth $k\ge 2$, we define the following events that can be shown to imply the desired property $\ell(\ct_k,\td\ct_k)=(\ch_k,\td\ch_k)$.
    \begin{itemize}
        \item $\ca_k^*$ is the event that for each $v\in \com_{k-1}$, 
        under the maximal coupling between random variables $|\sfch_v^*|$, $|\sfch_v|$, and  $|\wdtd\sfch_{\td v}|$, and independent Poisson random varables, $X_v^*\sim\poi(\lambda s)$, $X_v\sim\poi(\lambda(1-s))$, and $\td X_{\td v}\sim\poi(\lambda(1-s))$, we have $|\sfch_v^*|=X_v^*$, $|\sfch_v|=X_v$ and $|\wdtd\sfch_{\td v}|=\td X_{\td v}$.

        \item $\ca_k$ is the event that for each $v\in \excl_{k-1}$, 
        under the maximal coupling between the random variable $|\sfch_v|$ and the Poisson random variable $X_v\sim\poi(\lambda)$, we have $|\sfch_v|=X_v$.

        \item $\td\ca_k$ is the event that for each $\td v\in \wdtd \excl_{k-1}$,  under the maximal coupling between the random variable $|\wdtd\sfch_{\td v}|$ and the Poisson random variable $\td X_{\td v}\sim\poi(\lambda)$, we have $|\wdtd\sfch_{\td v}|=\td X_{\td v}$.

        \item $\cb_k$ is the event that $|\com_k\cup\excl_k\cup\wdtd\excl_{k}|\le 2^{k-1}(\lambda(2-s))^k\log n$.
        
        \item $\cc_k$ is the event that there are no edges in the union graph $\bar G$ between any pair of vertices in $\com_k\cup\excl_k\cup\wdtd\excl_k$.

        \item $\cd_k$ is the event that in the union graph $\bar G$, there exists no vertex outside the two trees $\ct_{k-1}$ and $\td\ct_{k-1}$ that is adjacent to more than one vertex in $\com_{k-1}\cup\excl_{k-1}\cup\wdtd\excl_{k-1}$.

        \item $\ci_k$ is the event that there are no edges in $G$ that has an endpoint in $\cup_{i=1}^{k}(\excl_i\cup\com_i)$ and the other endpoint corresponding to any vertex in $\cup_{i=1}^{k}\wdtd\excl_i$.

        \item $\td\ci_k$ is the event that there are no edges in $\td G$ that has an endpoint in $\cup_{i=1}^{k}(\wdtd\excl_i\cup\com_i)$ and the other endpoint corresponding to any vertex in $\cup_{i=1}^{k}\excl_i$.
    \end{itemize}

    We claim that, for every $k\ge 1$,
    \[
        (\ca^*_1\cap\cb_1\cap\cc_1)
        \cap
        \bigcap_{i=2}^k
        (\ca_i^*\cap\ca_i\cap\wdtd{\ca}_i
        \cap \cb_i\cap\cc_i\cap\cd_i\cap\ci_i\cap\td\ci_i)
        \subseteq
        \{\ell(\ct_k,\td\ct_k)=(\ch_k,\td\ch_k)\}.
    \]
    We prove this claim by induction on $k$. The case $k=1$ has already been
    established in Step 1.

    Now suppose that the claim holds up to depth $k-1$, where $k\ge 2$. That is, assume
    \[
        (\ca^*_1\cap\cb_1\cap\cc_1)
        \cap
        \bigcap_{i=2}^{k-1}
        (\ca_i^*\cap\ca_i\cap\wdtd{\ca}_i
        \cap \cb_i\cap\cc_i\cap\cd_i\cap\ci_i\cap\td\ci_i)
        \subseteq
        \{\ell(\ct_{k-1},\td\ct_{k-1})=(\ch_{k-1},\td\ch_{k-1})\}.
    \]
    We show that, on the additional event
    \[
        \ca_k^*\cap\ca_k\cap\wdtd{\ca}_k
        \cap\cb_k\cap\cc_k\cap\cd_k\cap\ci_k\cap\td\ci_k,
    \]
    the equality extends to depth $k$.

    By the induction hypothesis, the order-labeled structures of $(\ct,\td\ct)$ and $(\ch,\td\ch)$ agree up to depth $k-1$. It remains to compare the children generated by vertices at depth $k-1$. For each common vertex $v\in\com_{k-1}$, the event $\ca_k^*$ couples the children numbers $|\sfch_v^*|$, $|\sfch_v|$ and $|\wdtd{\sfch}_{\td v}|$
    with independent Poisson random variables of means $\lambda s$, $\lambda(1-s)$, and $\lambda(1-s)$, respectively. These are exactly the offspring distributions used in the correlated tree-pair construction $\bp$ for vertices in the intersection tree and for the exclusive children added to the two trees. Similarly, for each exclusive vertex $v\in\excl_{k-1}$, the event $\ca_k$ couples $|\sfch_v|$ with a $\poi(\lambda)$ random variable, which is the offspring distribution of an exclusive Galton--Watson subtree attached to $\ch$. The event $\wdtd{\ca}_k$ gives the analogous coupling for every exclusive vertex $\td v\in\wdtd{\excl}_{k-1}$ in $\td\ct$. Therefore, the numbers of children generated from all vertices at depth $k-1$ agree with those in the coupled Galton--Watson tree pair. Moreover, by the definition of the relabeling map $\ell$, common children are ordered first, followed by exclusive children, and each block is ordered increasingly by index. This is precisely the deterministic ordering convention used in the construction of $(\ch,\td\ch)$. Hence, after applying $\ell$, the labels of all newly generated vertices at depth $k$ agree with those in $(\ch_k,\td\ch_k)$.

    Moreover, event $\cc_k$ rules out edges among vertices in $\com_k\cup\excl_k\cup\wdtd{\excl}_k,$ and event $\cd_k$ rules out any vertex outside the previously explored trees that is connected to more than one vertex in $\com_{k-1}\cup\excl_{k-1}\cup\wdtd{\excl}_{k-1}.$ Together, these events ensure that the newly revealed vertices do not create any cycles. Therefore, we have $\ell(\ct_k,\td\ct_k)=(\ch_k,\td\ch_k),$ which completes the induction.

    We note that the events $\cb_k$, $\ci_k$ and $\td\ci_k$ are not needed for the immediate implication $\ell(\ct_k,\td\ct_k)=(\ch_k,\td\ch_k)$. The event $\cb_k$ provides a size bound on the newly explored vertices, which is used to control the error events. Events $\ci_k$ and $\td\ci_k$ prevent overlaps of the exclusive part of the trees at later depths, thus ruling out undesired correlations that could affect the coupling.

    Next, we separately bound the probability of these events at depth $k$ conditioned on the event
    \[
        \cg_{k-1}:=(\ca^*_1\cap\cb_1\cap\cc_1)
        \cap
        \bigcap_{i=2}^{k-1}
        (\ca_i^*\cap\ca_i\cap\wdtd{\ca}_i
        \cap \cb_i\cap\cc_i\cap\cd_i\cap\ci_i\cap\td\ci_i).
    \]
    Notice that given $\cap_{i=1}^{k-1}\cb_i$ the total number of vertices within $k-1$ depth of the two trees satisfies
    \begin{align*}
            |\bar\cv(\ct_{k-1})\cup\bar\cv(\td\ct_{k-1})|=& 1+\sum_{i=1}^{k-1}|\com_i\cup\excl_i\cup\wdtd\excl_{i}|\\
            \le & 1+\sum_{i=1}^{k-1}2^{i-1}(\lambda(2-s))^i\log n\\
            \le & (2\lambda(2-s))^{k-1}\log n,
    \end{align*}
    where $\bar\cv(\ct_{k-1})$ (resp. $\bar\cv(\td\ct_{k-1})$) denotes the image of the nodes in $\ct_{k-1}$ (resp. $\td\ct_{k-1}$) in the union graph $\bar G$. 

    Given event $\cg_{k-1}$, for each $v\in \com_{k-1}$, the number of children has distribution  
    \begin{align}
    &(|\sfch_v^*|,|\sfch_v|,|\wdtd\sfch_{\td v}|,|\mathsf{NC}_v|)\nonumber\\& \quad \quad \quad\quad\quad\quad\sim \mathrm{Multi}\left(n-|\bar\cv(\ct_{k-1})\cup\bar\cv(\td\ct_{k-1})|,\frac{\lambda s}{n},\frac{\lambda (1-s)}{n},\frac{\lambda (1-s)}{n},1-\frac{\lambda(2-s)}{n}\right),\label{eq:general_multi}
    \end{align}
    where $|\mathsf{NC}_v|$ is the total number of vertices outside $|\bar\cv(\ct_{k-1})\cup\bar\cv(\td\ct_{k-1})|$ that are not connected to $v$ in $G$ and not connected to $\td v$ in $\td G$.
    The reasoning for~\eqref{eq:general_multi} is similar to that for~\eqref{eq:root_multi}. 
    The only difference is that we need to exclude all the already explored vertices in $|\bar\cv(\ct_{k-1})\cup\bar\cv(\td\ct_{k-1})|$, and hence there are $n-|\bar\cv(\ct_{k-1})\cup\bar\cv(\td\ct_{k-1})|$ vertices taken into consideration. 
    
    By Lemma~\ref{lem:multi-poi}, we can couple $(|\sfch_v^*|,|\sfch_v|,|\wdtd\sfch_{\td v}|)$ with $(X^*_v,X_v,\td X_{\td v})$ such that
    \[
    \P(|\sfch_v^*|=X_v^*,|\sfch_v|=X_v,|\wdtd\sfch_{\td v}|=\td X_{\td v})=1-O\left(\frac{(2\lambda(2-s))^k\log n+\log^2 n}{n}\right).
    \]
    Taking a union bound over the set $\com_{k-1}$, whose size is bounded by $(2\lambda(2-s))^{k-1}\log n$ conditioned on event $\cap_{i=1}^{k-1}\cb_i$, yields 
    \begin{equation}
        \label{eq:a*k}
        \P(\ca^{*c}_k\,|\,\cg_{k-1})=O\left(\frac{(2\lambda(2-s))^{2k}\log^2 n+(2\lambda(2-s))^{k}\log^3 n}{n}\right).
    \end{equation}

    For each $v\in \excl_{k-1}$, we have
    \[
    |\sfch_v|\sim \bin\left(n-|\bar\cv(\ct_{k-1})\cup\bar\cv(\td\ct_{k-1})|,\frac{\lambda}{n}\right).
    \]
    By Lemma~\ref{lem:binom-poi-tv}, we can couple $|\sfch_v|$ with $X_v$ such that
    \[
    \P(|\sfch_v|=X_v)\ge 1-\frac{\lambda^2+(2\lambda(2-s))^{k-1}\lambda\log n}{n}=1-O\left(\frac{(2\lambda(2-s))^k\log n+\log^2 n}{n}\right),
    \]
    and a union bound over $\excl_{k-1}$,  whose size is bounded by $(2\lambda(2-s))^{k-1}\log n$ conditioned on event $\cap_{i=1}^{k-1}\cb_i$ yields 
    \begin{equation}
        \label{eq:ak}
        \P(\ca^{c}_k\,|\,\cg_{k-1})=O\left(\frac{(2\lambda(2-s))^{2k}\log^2 n+(2\lambda(2-s))^{k}\log^3 n}{n}\right).
    \end{equation}
    By an analogous argument, we can also get
    \begin{equation}
        \label{eq:tdak}
        \P(\td\ca^{c}_k\,|\,\cg_{k-1})=O\left(\frac{(2\lambda(2-s))^{2k}\log^2 n+(2\lambda(2-s))^{k}\log^3 n}{n}\right).
    \end{equation}

    Because we have $|\com_{k-1}\cup\excl_{k-1}\cup\wdtd\excl_{k-1}|\le 2^{k-2}(\lambda(2-s))^{k-1}\log n$ on the event $\cap_{i=1}^{k-1}\cb_i$,
    there are at most $n2^{2k-4}(\lambda(2-s))^{2k-2}\log^2 n$ ways to pick two distinct vertices from $\com_{k-1}\cup\excl_{k-1}\cup\wdtd\excl_{k-1}$ and one vertex from the unexplored part of the graph. In the union graph $\bar G$, the probability that the two picked vertices both have an edge with the vertex in the unexplored part is $\frac{\lambda(2-s)}{n}$. 
    Then, we can apply the union bound to obtain
    \begin{equation}
        \label{eq:dk}
        \P(\cd_k^c\,|\,\cg_{k-1})=O\left(\frac{\lambda^2n2^{2k-4}(\lambda(2-s))^{2k-2}\log^2 n}{n^2}\right)=O\left(\frac{(2\lambda(2-s))^{2k}\log^2 n}{n}\right).
    \end{equation}

    Notice that 
    \[
    |\com_{k}\cup\excl_{k}\cup\wdtd\excl_{k}|\sto\bin\left(n,\frac{\lambda(2-s)}{n}|\com_{k-1}\cup\excl_{k-1}\cup\wdtd\excl_{k-1}|\right).
    \]
    We can apply the Chernoff bound (equation~\eqref{eq:CB_upper2} in Lemma~\ref{lem:CB}) to obtain
    \begin{align}
        \P(\cb^c_k\,|\, \cg_{k-1})&\le \P\left(\bin\left(n,\frac{\lambda(2-s)}{n}2^{k-2}(\lambda(2-s))^{k-1}\log n\right)\ge 2^{k-1}(\lambda(2-s))^k\log n\right)\nonumber\\
        &\le \exp\left(-\frac{2^{k-2}(\lambda(2-s))^k\log n}{3}\right)=n^{-\omega(1)}.
        \label{eq:bk}
    \end{align}

   Now further condition on event $\cb_k$, which implies there are at most $2^{2k-2}(\lambda(2-s))^{2k}\log^2 n$ vertex pairs in $\com_{k}\cup\excl_{k}\cup\wdtd\excl_{k}$. we can apply a union bound over these vertex pairs to obtain
    \begin{equation}
        \label{eq:ck}
        \P(\cc_k^c\,|\,\cg_{k-1}\cap\cb_k)\le 2^{2k-2}(\lambda(2-s))^{2k}\frac{\lambda(2-s)}{n}\log^2 n=O\left(\frac{(2\lambda(2-s))^{2k+1}\log^2 n}{n}\right).
    \end{equation}

    By a union bound over the vertex pairs in $(\cup_{i=1}^{k}(\excl_i\cup\com_i))\times (\cup_{i=1}^{k}\wdtd\excl_i)$,  whose size is bounded by $$\left(\sum_{i=1}^{k} 2^{i-1}(\lambda(2-s))^i\log n\right)^2$$ conditioned on event $\cap_{i=1}^k\cb_i$, we get 
    \begin{align}
        \P(\ci_k^c\,|\,\cg_{k-1}\cap\cb_k)&\le \left(\sum_{i=1}^{k} 2^{i-1}(\lambda(2-s))^i\log n\right)^2\frac{\lambda}{n}\nonumber\\
        &\le (2\lambda(2-s))^{2k}\frac{\lambda}{n}\log^2 n \nonumber\\
        &=O\left(\frac{(2\lambda(2-s))^{2k+1}\log^2 n}{n}\right).\label{eq:fk}
    \end{align}
    Similarly, we can bound 
    \begin{equation}
        \label{eq:tdfk}
        \P(\td\ci_k^c\,|\,\cg_{k-1}\cap\cb_k)=O\left(\frac{(2\lambda(2-s))^{2k+1}\log^2 n}{n}\right).
    \end{equation}
    By equations~\eqref{eq:ak} -~\eqref{eq:tdfk}, we have
    \begin{align}
        \label{eq:depthk}
        &\P(\ca^{*c}_k\cup\ca^c_k\cup\td\ca_k^c\cup\cb_k^c\cup\cc_k^c\cup\cd^c_k\cup\ci^c_k\cup\td\ci^c_k\,|\,\cg_{k-1})\nonumber\\
        ={}&O\left(\frac{(2\lambda(2-s))^{2k+1}\log^2 n+(2\lambda(2-s))^{k}\log^3 n}{n}\right).
    \end{align}

    Finally, by equations~\eqref{eq:depth1} and~\eqref{eq:depthk},
    under the coupling we constructed
    \begin{align*}
        &\P(\ell(\ct_d,\td\ct_d)=(\ch_d,\td\ch_d))\\
        \ge{}& 1-\P(\ca^{*c}_1\cup\cb_1^c\cup\cc_1^c)-\sum_{k=2}^d \P(\ca^{*c}_k\cup\ca^c_k\cup\td\ca_k^c\cup\cb_k^c\cup\cc_k^c\cup\cd^c_k\cup\ci^c_k\cup\td\ci^c_k\,|\,\cg_{k-1})\\
        ={}&1-O\left(\frac{\lambda^3\log^2 n}{n}\right)-\sum_{k=2}^dO\left(\frac{(2\lambda(2-s))^{2k+1}\log^2 n+(2\lambda(2-s))^{k}\log^3 n}{n}\right)\\
        ={}&1-O\left(\frac{d(2\lambda(2-s))^{2d+1}\log^2 n + d(2\lambda(2-s))^{d}\log^3 n}{n}\right).
    \end{align*}
    Recall that $\lambda=(\log n)^{\alpha+o(1)}$ and $d=(\log n)^{\gamma}$ for constants $\alpha,\gamma\in(0,1)$. We then have 
    \[
    d(2\lambda(2-s))^{2d+1}\le \exp\left(\log d +(2(\log n)^{\gamma}+1)(\log 4+(\alpha+o(1))\log\log n)\right)=n^{o(1)}.
    \]
    It then follows that 
    \[
    \P(\ell(\ct_k,\td\ct_k)=(\ch_k,\td\ch_k))=1-O(n^{-1/2}),
    \]
    which completes the proof.
\end{proof}

\section{The Poisson approximation and concentration inequalities}\label{appd:concentration}
\begin{lem}[Lemma 22 in~\cite{wang26d}]
    \label{lem:binom-poi-gap}
Let $n\in \mathbb{N}^+$, $\mu\in\mathbb{R}^+$ and $x,y\in\mathbb{N}$ be such that $x<n$ and $y\le n-x$.
    Then we have
    \[
    \frac{\P(\bin(n-x,\mu/n)=y)}{\P(\poi(\mu)=y)}\le \exp\left(\frac{\mu}{n}(x+y)\right).
    \]
\end{lem}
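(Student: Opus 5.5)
We work directly with the two probability mass functions. Write $m=n-x$ and $p=\mu/n$, and bound the ratio
\[
R:=\frac{\P(\bin(m,p)=y)}{\P(\poi(\mu)=y)}=\frac{\binom{m}{y}p^y(1-p)^{m-y}}{e^{-\mu}\mu^y/y!}.
\]
First, substitute $p^y=\mu^y/n^y$ and use $\binom{m}{y}\,y!=m(m-1)\cdots(m-y+1)\le m^y\le n^y$. This removes both the combinatorial factor and the powers of $\mu$, leaving
\[
R\le e^{\mu}(1-\mu/n)^{m-y}.
\]
(If $\mu\ge n$, then $p\ge 1$ and the binomial is degenerate. The statement is implicitly for $\mu<n$, and we assume this, or treat the degenerate case separately.)

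Second, apply $\log(1-z)\le -z$ for $z\in[0,1)$, which gives $(1-\mu/n)^{m-y}\le \exp\bigl(-\tfrac{\mu}{n}(n-x-y)\bigr)$. This uses $m-y=n-x-y\ge 0$, which the hypothesis $y\le n-x$ guarantees. Combining,
\[
R\le \exp\Bigl(\mu-\mu+\tfrac{\mu}{n}(x+y)\Bigr)=\exp\Bigl(\tfrac{\mu}{n}(x+y)\Bigr),
\]
which is exactly the claimed bound.

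The only point requiring any care is the direction of the inequalities. The falling factorial bound must be an upper bound, and the estimate $1-z\le e^{-z}$ goes in the right direction only because the exponent $m-y$ is nonnegative. No step is a real obstacle; the argument is a two-line computation once the falling factorial is bounded by $n^y$.
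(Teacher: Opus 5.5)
Your proof is correct and complete. The falling-factorial bound $m(m-1)\cdots(m-y+1)\le n^y$ cancels the $y!$ and $\mu^y$ factors. Then $0\le 1-\mu/n\le e^{-\mu/n}$, raised to the nonnegative power $n-x-y$, gives exactly $\exp\bigl(\tfrac{\mu}{n}(x+y)\bigr)$.

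The paper gives no proof of this lemma. It quotes it from an earlier work, so there is no in-paper argument to compare with; your direct pmf computation is the natural one.

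Your remark that $\mu\le n$ is implicitly required for $\mathrm{Binom}(n-x,\mu/n)$ to be a distribution is apt. The boundary case $\mu=n$ also goes through with $0^0=1$. The restriction is harmless here, since the lemma is only applied with $\mu=\lambda=n^{o(1)}$.
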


\begin{restatable}{lem}{lembinompoi}
    \label{lem:binom-poi-tv}
        If $m$ and $n$ are positive integers such that $m\le n$ and $\mu$ is a positive real number, then
        \[
        \dtv\left(\bin\left(m,\frac{\mu}{n}\right),\poi(\mu)\right)\le \frac{\mu^2+(n-m)\mu}{n}.
        \]
    \end{restatable}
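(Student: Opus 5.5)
The plan is to construct an explicit coupling of $X\sim\bin(m,\mu/n)$ and $Y\sim\poi(\mu)$ through an intermediate variable $Z\sim\poi(m\mu/n)$. The triangle inequality then gives
\[
\dtv\bigl(\bin(m,\tfrac{\mu}{n}),\poi(\mu)\bigr)\le \dtv\bigl(\bin(m,\tfrac{\mu}{n}),\poi(\tfrac{m\mu}{n})\bigr)+\dtv\bigl(\poi(\tfrac{m\mu}{n}),\poi(\mu)\bigr),
\]
and it will suffice to show that the first term is at most $\mu^2/n$ and the second is at most $(n-m)\mu/n$. (If $\mu>n$ the claimed bound exceeds $1$ and there is nothing to prove, so I will assume $\mu/n\le 1$, which makes the binomial well defined.)

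For the first term I will use the classical Le Cam coupling. Write $X=\sum_{i=1}^m B_i$ with independent $B_i\sim\mathrm{Bernoulli}(p)$ and $p=\mu/n$. Couple each $B_i$ with some $P_i\sim\poi(p)$ so that $\P(B_i\ne P_i)=\dtv(\mathrm{Bernoulli}(p),\poi(p))$, which equals $p(1-e^{-p})\le p^2$. Summing over $i$ and using that $\sum_i P_i\sim\poi(mp)$ gives
\[
\dtv\le mp^2=\frac{m\mu^2}{n^2}\le \frac{\mu^2}{n},
\]
using $m\le n$.

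For the second term I will use Poisson superposition. Write $Y=Z+W$ with $W\sim\poi(\mu(n-m)/n)$ independent of $Z$. Then $Y\ne Z$ only when $W\ge 1$, so
\[
\dtv\bigl(\poi(\tfrac{m\mu}{n}),\poi(\mu)\bigr)\le \P(W\ge 1)=1-e^{-\mu(n-m)/n}\le \frac{(n-m)\mu}{n}.
\]

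Adding the two bounds gives the claim. There is no real obstacle; the only care needed is getting the Bernoulli–Poisson total variation right. Computing directly,
\[
\dtv\bigl(\mathrm{Bernoulli}(p),\poi(p)\bigr)=\tfrac12\Bigl(|1-p-e^{-p}|+|p-pe^{-p}|+\P(\poi(p)\ge 2)\Bigr)=p(1-e^{-p}),
\]
which is at most $p^2$.
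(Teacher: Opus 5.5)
Your proposal is correct and follows the paper's proof essentially step for step. Both go through the triangle inequality via $\poi(m\mu/n)$. The first term is bounded by $m\mu^2/n^2\le\mu^2/n$; the paper cites the classical Hodges--Le Cam bound here, whereas you derive it yourself from the Bernoulli--Poisson coupling, and your calculation $p(1-e^{-p})\le p^2$ is right. The second term uses the superposition $\poi(\mu)=\poi(m\mu/n)+\poi(\mu(n-m)/n)$ together with $1-e^{-\mu(n-m)/n}\le (n-m)\mu/n$.
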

\begin{proof}
    By the triangle inequality, we have
    \begin{align*}
        &\dtv\left(\bin\left(m,\frac{\mu}{n}\right),\poi\left(\mu\right)\right)\\
        \le{}&  \dtv\left(\bin\left(m,\frac{\mu}{n}\right),\poi\left(\frac{m\mu}{n}\right)\right)+\dtv\left(\poi(\mu),\poi\left(\frac{m\mu}{n}\right)\right),
    \end{align*}
    and it suffices to bound these two total variation distances separately. 
    
    It follows by the classic Poisson approximation of binomial random variables (see for example~\cite{hodges1960}) that
    \begin{equation}
        \label{eq:poi-approx}
        \dtv\left(\bin\left(m,\frac{\mu}{n}\right),\poi\left(\frac{m\mu}{n}\right)\right)\le \frac{m\mu^2}{n^2}\le \frac{\mu^2}{n}.
    \end{equation}

    On the other hand, for a random variable $X\sim \poi(\mu)$, we can write $X=Y+Z$, where $Y\sim \poi(m\mu/n)$ and $Z\sim\poi(\mu-m\mu/n)$ are two independent Poisson random variables. We then have
    \begin{equation}
        \dtv\left(\poi(\mu),\poi\left(\frac{m\mu}{n}\right)\right)\le \P(X\neq Y)=\P(Z\neq 0)= 1-\exp\left(-\mu+\frac{m\mu}{n}\right)\le \frac{(n-m)\mu}{n}.\label{eq:tv_poi}
    \end{equation}
    Equations~\eqref{eq:poi-approx} and~\eqref{eq:tv_poi} together imply the desired inequality.
\end{proof}

\begin{restatable}{lem}{lemmultipoi}
        \label{lem:multi-poi}
        Let $m$ and $n$ be two positive integers with $n/2\le m\le n$ and $s\in (0,1]$ be a constant independent of $n$.
        Suppose $\lambda$ is a positive real number that satisfies $\lambda=(\log n)^{\alpha+o(1)}$ for some constant $\alpha\in (0,1)$. 
        Consider random variables $$(X_1,X_2,X_3,X_4)\sim\mathrm{Multi}(m,\lambda s/n,\lambda(1-s)/n,\lambda(1-s)/n,1-(2-s)\lambda/n),$$ and three independent Poisson random variables $X_1'\sim \poi(\lambda s)$, $X_2'\sim \poi(\lambda(1-s))$ and $X_3'\sim\poi(\lambda(1-s))$. Then we have 
        \begin{equation}
        \label{eq:multi_dtv}
        \dtv(\cl(X_1,X_2,X_3),\cl(X_1',X_2',X_3'))=O\left(\frac{(n-m)\lambda+\log^2 n}{n}\right).
        \end{equation}
        
    \end{restatable}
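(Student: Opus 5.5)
The plan is to reduce the statement to a comparison between a multinomial vector with $m$ trials and one with $n$ trials, and then to a Poisson approximation of the latter. Let $(Y_1,Y_2,Y_3,Y_4)\sim\mathrm{Multi}(n,\lambda s/n,\lambda(1-s)/n,\lambda(1-s)/n,1-(2-s)\lambda/n)$. By the triangle inequality,
\[
\dtv(\cl(X_1,X_2,X_3),\cl(X_1',X_2',X_3'))\le \dtv(\cl(X_1,X_2,X_3),\cl(Y_1,Y_2,Y_3))+\dtv(\cl(Y_1,Y_2,Y_3),\cl(X_1',X_2',X_3')).
\]

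The first term is handled by a trial-level coupling. Realize both vectors from a common sequence of $n$ i.i.d.\ categorical trials. Let $X$ count the outcomes of the first $m$ trials and $Y$ count all $n$ trials. The two vectors differ only if one of the last $n-m$ trials lands in categories $1$, $2$ or $3$. This happens with probability at most $(n-m)(2-s)\lambda/n=O((n-m)\lambda/n)$, which gives the $(n-m)\lambda/n$ part of~\eqref{eq:multi_dtv}.

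For the second term, let $p=(2-s)\lambda/n$, let $N:=Y_1+Y_2+Y_3\sim\bin(n,p)$, and let $N'\sim\poi((2-s)\lambda)$. Conditioned on $N=k$, the vector $(Y_1,Y_2,Y_3)$ is $\mathrm{Multi}(k;\,s/(2-s),(1-s)/(2-s),(1-s)/(2-s))$. By Poisson splitting, $(X_1',X_2',X_3')$ has exactly the same conditional law given $N'=k$. Since both vectors are mixtures of the same kernel, a maximal coupling of $N$ and $N'$ extends to the vectors. Hence
\[
\dtv(\cl(Y_1,Y_2,Y_3),\cl(X_1',X_2',X_3'))\le \dtv(\bin(n,p),\poi(np)).
\]
Applying Lemma~\ref{lem:binom-poi-tv} with $m=n$ and $\mu=(2-s)\lambda$, this is at most $(2-s)^2\lambda^2/n$. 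Since $\lambda=(\log n)^{\alpha+o(1)}$ with $\alpha<1$, we have $\lambda^2=O(\log^2 n)$, which gives the $\log^2 n/n$ part of~\eqref{eq:multi_dtv}.

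The main obstacle is the reduction of the second term to one dimension. It requires checking carefully that the two conditional laws match, which holds because the category probabilities $\lambda s/n$, $\lambda(1-s)/n$ and $\lambda(1-s)/n$ are proportional to $s$, $1-s$ and $1-s$. It also requires that the total variation distance of mixtures over a common kernel is bounded by the total variation distance of the mixing laws. The remaining steps are routine; the hypothesis $m\ge n/2$ is not even needed, because the coupling bound holds for any $m\le n$.
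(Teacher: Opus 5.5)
Your proof is correct, and it takes a different route from the paper's.

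\textbf{The paper's route.} The paper treats $s=1$ separately via Lemma~\ref{lem:binom-poi-tv}. For $s<1$ it samples the multinomial sequentially:
\begin{itemize}
\item $X_1\sim\bin(m,\lambda s/n)$;
\item $X_2\mid X_1\sim\bin\bigl(m-X_1,\lambda(1-s)/(n-\lambda s)\bigr)$;
\item $X_3\mid X_1,X_2\sim\bin\bigl(m-X_1-X_2,\lambda(1-s)/(n-\lambda)\bigr)$.
\end{itemize}
It then couples each conditional binomial with its Poisson target. Chernoff bounds are used to restrict to $\{X_1\le\log n,\ X_2\le\log n\}$. On that event the mean mismatch caused by the already-drawn counts is of order $\lambda\log n/n$, and this is where the $\log^2 n$ in the statement comes from.

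\textbf{Your route.} You first isolate the $m$-versus-$n$ discrepancy with a trial-level coupling. You then collapse the three-dimensional comparison to the single bound $\dtv(\bin(n,p),\poi(np))$. This works because the category probabilities are exactly proportional to the Poisson means, so both laws are mixtures of the same thinning kernel.

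\textbf{What your route buys.} The argument is shorter: it needs no truncation, no Chernoff step and no case split, since the degenerate case $s=1$ is covered automatically. It also gives an explicit bound $\bigl((n-m)(2-s)\lambda+(2-s)^2\lambda^2\bigr)/n$, slightly sharper than the statement with $\lambda^2$ in place of $\log^2 n$. Finally, it makes clear that the hypothesis $m\ge n/2$ is superfluous; the paper's argument does not appear to use it either.

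\textbf{What the paper's route buys.} The sequential chain of couplings is more generic, because it does not rely on the matched-kernel structure. In this lemma that structure is exactly present, so your reduction loses nothing.
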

    \begin{proof}
        First assume $s=1$. In this case,~\eqref{eq:multi_dtv} simply follows by Lemma~\ref{lem:binom-poi-tv}. We assume $s<1$ in the rest of the proof.

        By the definition of the multinomial distribution, random variables $X_1,X_2,X_3$ can be sampled in a sequential manner:
        \begin{enumerate}
            \item Sample $X_1\sim\bin(m,\lambda s/n)$.
            \item Given $X_1$, sample $X_2\sim\bin\left(m-X_1,\frac{\lambda(1-s)}{n-\lambda s}\right)$.
            \item Given $X_1$ and $X_2$, sample $X_3\sim\bin\left(m-X_1-X_2,\frac{\lambda(1-s)}{n-\lambda}\right)$.
        \end{enumerate}
    By Lemma~\ref{lem:binom-poi-tv}, we can couple $X_1$ and $X_1'$ such that under the coupling, 
        \begin{equation}
        \label{eq:X1}
        \P(X_1\neq X_1')\le \frac{\lambda^2 s^2+(n-m)\lambda s}{n}=O\left(\frac{(n-m)\lambda+\log^2 n}{n}\right).
        \end{equation}

    By the Chernoff bound (equation~\eqref{eq:CB_Upper} in Lemma~\ref{lem:CB}), we know that 
        \begin{equation}
        \label{eq:X1_bound}
        \P(X_1> \log n)=n^{-\omega(1)}.
        \end{equation}

    Conditioned on the event $\{X_1=X_1'=x\}$, 
    we have 
    \[
    X_2\sim \bin\left(m-x,\frac{\lambda(1-s)}{n-\lambda s}\right),
    \]
    and 
    \[
    X_2'\sim\poi(\lambda (1-s)).
    \]
    We can couple these two random variables such that
    \begin{align}
        \P(X_2\neq X_2'\cond X_1=X_1'=x)
        &\le \dtv\left(\bin\left(m-x,\frac{\lambda(1-s)}{n-\lambda s}\right),\poi(\lambda (1-s))\right)\nonumber\\
        &\le \dtv\left(\bin\left(m-x,\frac{\lambda(1-s)}{n-\lambda s}\right),\poi\left(\frac{\lambda(1-s)(m-x)}{n-\lambda s}\right)\right)\nonumber\\
        &\;\;+\dtv\left(\poi(\lambda (1-s)),\poi\left(\frac{\lambda(1-s)(m-x)}{n-\lambda s}\right)\right)\nonumber\\
        &\le \frac{(m-x)\lambda^2(1-s)^2}{(n-\lambda s)^2}+\Bigg|\lambda(1-s)-\frac{\lambda(1-s)(m-x)}{n-\lambda s}\Bigg|,\label{eq:littlex}
    \end{align}
    where the last inequality follows by~\eqref{eq:poi-approx} and~\eqref{eq:tv_poi} in the proof of Lemma~\ref{lem:binom-poi-tv}. 
    
    Under the assumption that $x\le \log n$, we further have
    \[
    \frac{(m-x)\lambda^2(1-s)^2}{(n-\lambda s)^2}=O\left(\frac{\log ^2 n}{n}\right)
    \]
    and
    \[
    \Bigg|\lambda(1-s)-\frac{\lambda(1-s)(m-x)}{n-\lambda s}\Bigg|=\Bigg|\frac{\lambda(1-s)(n-m-\lambda s+x)}{n-\lambda s}\Bigg|=O\left(\frac{(n-m)\lambda+\log^2 n}{n}\right).
    \]
    So the right-hand side of~\eqref{eq:littlex} is equal to
    \[
     O\left(\frac{(n-m)\lambda+\log^2 n}{n}\right).
    \]
    This implies that conditioned on events $\{X_1=X_1'\}$ and $\{X_1\le \log n\}$, we can couple $X_2$ and $X_2'$ so that
    \begin{equation}
        \P(X_2\neq X_2'\cond X_1=X_1',X_1\le \log n)=O\left(\frac{(n-m)\lambda+\log^2 n}{n}\right)\label{eq:x2}.
    \end{equation}

    We can apply the Chernoff bound (equation~\eqref{eq:CB_Upper} in Lemma~\ref{lem:CB}) to get
        \begin{equation}
        \label{eq:X2_bound}
        \P(X_2> \log n\cond X_1=X_1',X_1\le \log n)= n^{-\omega(1)}.
        \end{equation}
     Conditioned on events $\{X_1=X_1'=x\}$ and $\{X_2=X_2'=y\}$, we have
     \[
    X_3\sim \bin\left(m-x-y,\frac{\lambda(1-s)}{n-\lambda }\right),
    \]
    and 
    \[
    X_3'\sim\poi(\lambda (1-s)).
    \]
    Similarly to~\eqref{eq:littlex}, we can couple these two random variables so that
    \begin{equation}
        \label{eq:littlexy}
        \P(X_3\neq X_3'\cond X_1=X_1'=x,X_2=X_2'=y)\le\frac{(m-x-y)\lambda^2(1-s)^2}{(n-\lambda )^2}+\Bigg|\lambda(1-s)-\frac{\lambda(1-s)(m-x-y)}{n-\lambda }\Bigg|.
    \end{equation}
    Under the assumptions $x\le \log n$ and $y\le \log n$, the right-hand side of~\eqref{eq:littlexy} is
    \[
    O\left(\frac{(n-m)\lambda+\log^2 n}{n}\right).
    \]
    This implies that conditioned on events $\{X_1=X_1'\}$, $\{X_1\le \log n\}$, $\{X_2=X_2'\}$ and $\{X_2\le \log n\}$, we can further couple $X_3$ with $X_3'$ such that
    \begin{equation}
        \label{eq:x3}
        \P(X_3\neq X_3'|X_1=X_1',X_2=X_2',X_1\le \log n,X_2\le \log n)=O\left(\frac{(n-m)\lambda+\log^2 n}{n}\right).
    \end{equation}

    Finally, putting~\eqref{eq:X1},~\eqref{eq:X1_bound},~\eqref{eq:x2},~\eqref{eq:X2_bound} and~\eqref{eq:x3} yields that we can couple $(X_1,X_2,X_3)$ and $(X_1',X_2',X_3')$ such that
    \begin{align*}
    &\P(X_1=X_1',X_2=X_2',X_3=X_3')\\
    \ge{}& \P(X_1=X_1',X_2=X_2',X_3=X_3',X_1\le\log n,X_2\le\log n)\\
    \ge{}& 1-\P(X_1\neq X_1')-\P(X_1> \log n)-\P(X_2\neq X_2'\cond X_1=X_1',X_1\le \log n)\\
    &\;\;\;\;-\P(X_2> \log n\cond X_1=X_1',X_1\le \log n)-\P(X_3\neq X_3'|X_1=X_1',X_2=X_2',X_1\le \log n,X_2\le \log n)\\
    ={}&1-O\left(\frac{(n-m)\lambda+\log^2 n}{n}\right),
    \end{align*}
    which completes the proof.
    
    \end{proof}
    
\begin{lem}[Bennett's inequality~\cite{bennett1962}]
\label{lem:bennett}
    
    Suppose $X_1,\ldots,X_n$ are independent random variables with finite second moments. Assume $X_i\le B$ a.s. for each $i\in [n]$. Let $V=\sum_{i=1}^n\E[(X_i-\E[X_i])^2]$. Then, for every $x\ge 0$, 
    \[
    \P\left(\sum_{i=1}^n(X_i-\E[X_i])\ge x\right)\le \exp\left(-\frac{V}{B^2}\cdot \phi(xB/V)\right),
    \]
    where $\phi(t):=(1+t)\log (1+t)-t$ for $t\ge 0$.
\end{lem}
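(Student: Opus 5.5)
Bennett's inequality (Lemma~\ref{lem:bennett}) is the classical statement, and I would prove it through the exponential moment (Cramér--Chernoff) method. Write $Y_i:=X_i-\E[X_i]$ and $\sigma_i^2:=\E[Y_i^2]$, so that $V=\sum_i\sigma_i^2$. I read the hypothesis as $Y_i\le B$ almost surely, which is the setting in which the lemma is used in Lemma~\ref{lem:neighbor_size}: there the summands are centered Bernoulli variables and $B=1$. Fix $\theta>0$. Independence and Markov's inequality give
\[
\P\Bigl(\sum_i Y_i\ge x\Bigr)\le e^{-\theta x}\prod_{i=1}^n\E\bigl[e^{\theta Y_i}\bigr].
\]
The argument then has three steps: bound each factor, take logarithms and sum, and optimize over $\theta$.

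\textbf{Step 1: the per-summand bound.} I would use that $g(u):=(e^u-1-u)/u^2$, extended by $g(0)=1/2$, is nondecreasing on $\mathbb R$. For $Y_i\le B$ this gives
\[
e^{\theta Y_i}=1+\theta Y_i+\theta^2Y_i^2\,g(\theta Y_i)\le 1+\theta Y_i+\theta^2Y_i^2\,g(\theta B).
\]
Taking expectations and using $\E[Y_i]=0$ yields
\[
\E\bigl[e^{\theta Y_i}\bigr]\le 1+\frac{\sigma_i^2}{B^2}\bigl(e^{\theta B}-1-\theta B\bigr)\le \exp\Bigl(\frac{\sigma_i^2}{B^2}\bigl(e^{\theta B}-1-\theta B\bigr)\Bigr).
\]
I expect the monotonicity of $g$ to be the only genuinely non-routine point. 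To check it, I would note that $g(u)=\int_0^1(1-t)e^{tu}\,dt$. This follows from Taylor's formula with integral remainder, and the integrand is increasing in $u$ for every $t\in[0,1]$, so $g$ is nondecreasing.

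\textbf{Step 2: combine the factors.} Summing the logarithms over $i$ gives, for every $\theta>0$,
\[
\log\P\Bigl(\sum_iY_i\ge x\Bigr)\le -\theta x+\frac{V}{B^2}\bigl(e^{\theta B}-1-\theta B\bigr).
\]

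\textbf{Step 3: optimize over $\theta$.} Setting the derivative in $\theta$ to zero gives $e^{\theta B}=1+xB/V$, that is, $\theta=B^{-1}\log(1+xB/V)$. Write $t=xB/V$. Substituting this $\theta$, the exponent becomes
\[
\frac{V}{B^2}\Bigl(-(1+t)\log(1+t)+t\Bigr)=-\frac{V}{B^2}\,\phi(t),
\]
which is exactly the claimed bound. For the degenerate cases: when $x=0$ the right-hand side equals $1$, so the bound is trivial; when $V=0$ every $Y_i$ is zero almost surely, so the left-hand side is $0$ for $x>0$.
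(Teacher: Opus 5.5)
The paper does not prove this lemma; it only cites Bennett (1962). Your Cram\'er--Chernoff argument is the standard proof, and every step checks out, with $B>0$ implicitly assumed:
\begin{itemize}
\item the representation $g(u)=\int_0^1(1-t)e^{tu}\,dt$ shows $g$ is nondecreasing on all of $\mathbb{R}$;
\item this gives $\E[e^{\theta Y_i}]\le \exp\bigl(\sigma_i^2B^{-2}(e^{\theta B}-1-\theta B)\bigr)$;
\item the choice $e^{\theta B}=1+xB/V$ produces exactly $-\frac{V}{B^2}\phi(xB/V)$.
\end{itemize}

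You should state the hypothesis change more forcefully. Replacing ``$X_i\le B$'' by ``$X_i-\E[X_i]\le B$'' is not a matter of reading. With $V$ defined as the sum of \emph{centered} variances, the lemma as literally written is false.

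Take $n=1$, $B=1$, and let $X_1=1$ with probability $0.1$ and $X_1=-99$ with probability $0.9$. Then $\E[X_1]=-89$, $X_1-\E[X_1]\in\{90,-10\}$, and $V=900$. At $x=90$ the left-hand side is $0.1$, while the right-hand side is $\exp(-900\,\phi(0.1))\approx e^{-4.36}\approx 0.013$. This is exactly where your Step~1 inequality $\theta Y_i\le\theta B$ fails, since $Y_1=90>B$.

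A correct statement therefore needs one of two fixes:
\begin{itemize}
\item your centered hypothesis $X_i-\E[X_i]\le B$; or
\item the uncentered hypothesis $X_i\le B$ with $V$ replaced by $\sum_i\E[X_i^2]$. This is the Boucheron--Lugosi--Massart form, and it follows from the same argument after writing $\E[e^{\theta Y_i}]=e^{-\theta\E[X_i]}\E[e^{\theta X_i}]$ and applying $1+y\le e^y$.
\end{itemize}

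The paper's only use of the lemma, in Lemma~\ref{lem:neighbor_size}, involves Bernoulli summands with $\E[X_i]\ge 0$. There $X_i-\E[X_i]\le X_i\le 1$, so your version covers the application.
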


\begin{lem}[Chernoff bound for Binomial distribution~\texorpdfstring{\cite[Theorems~20~and~21]{mitzenmacher2017probability}}{}]\label{lem:CB}
    Let $X\sim \mathrm{Bin}(n,p)$ be a Binomial distribution with mean $\mu:=np$. Then the following bounds hold.
    \begin{enumerate}[label=\textbf{(\roman*)}]
        \item \textbf{(Upper Tails).} For any $\tau> \mu$, 
        \begin{align}
            \P(X\ge \tau)< \Bigl(\frac{e\mu}{\tau}\Bigr)^{ \tau}=\exp \Bigl(\tau\bigl(1+\log\mu-\log\tau\bigr)\Bigr).\label{eq:CB_Upper}
        \end{align}
        For any $\delta\in(0,1]$,
        \begin{equation}
            \P(X\ge (1+\delta)\mu)<\exp\Bigl(-\frac{\mu\delta^2}{3}\Bigr).
            \label{eq:CB_upper2}
        \end{equation}
        \item \textbf{(Lower Tail).} For any $\delta\in(0,1)$, 
        \begin{align}
            \P(X\le (1-\delta)\mu)\le \exp\Bigl(-\frac{\mu\delta^2}{2}\Bigr).\label{eq:CB_Lower}
        \end{align}
    \end{enumerate}
\end{lem}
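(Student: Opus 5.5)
The plan is the standard exponential-moment (Chernoff) argument. Write $X=\sum_{i=1}^n Y_i$ with $Y_i$ i.i.d.\ $\mathrm{Bernoulli}(p)$, and take $\mu=np>0$; the case $p=0$ is trivial. For every real $t$, independence and $1+x\le e^x$ give
\[
\E[e^{tX}]=\bigl(1+p(e^t-1)\bigr)^n\le \exp\bigl(\mu(e^t-1)\bigr).
\]
Every bound will then follow from Markov's inequality applied to $e^{tX}$ (for $t>0$) or to $e^{-tX}$ (for the lower tail), with a specific choice of $t$.

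\textbf{Upper tail \eqref{eq:CB_Upper}.} For $\tau>\mu$ I take $t=\log(\tau/\mu)>0$. Markov gives
\[
\P(X\ge\tau)\le \exp\bigl(\mu(e^t-1)-t\tau\bigr)=\exp\bigl(\tau-\mu-\tau\log(\tau/\mu)\bigr)=e^{-\mu}\Bigl(\frac{e\mu}{\tau}\Bigr)^{\tau}.
\]
Because $e^{-\mu}<1$, this yields the strict inequality. Rewriting $(e\mu/\tau)^\tau$ as $\exp\bigl(\tau(1+\log\mu-\log\tau)\bigr)$ is just algebra.

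\textbf{Relative upper tail \eqref{eq:CB_upper2}.} For $\delta\in(0,1]$ I take $t=\log(1+\delta)$, which gives
\[
\P(X\ge(1+\delta)\mu)\le\Bigl(\frac{e^\delta}{(1+\delta)^{1+\delta}}\Bigr)^\mu .
\]
It then remains to prove the one-variable inequality $g(\delta):=\delta-(1+\delta)\log(1+\delta)+\delta^2/3<0$ on $(0,1]$. My plan is to differentiate twice. We have $g(0)=0$ and $g'(\delta)=-\log(1+\delta)+2\delta/3$, so $g'(0)=0$. Next, $g''(\delta)=-1/(1+\delta)+2/3$, which is negative on $[0,1/2)$ and positive on $(1/2,1]$. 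Hence $g'$ first decreases below $0$ and then increases, and $g'(1)=2/3-\log 2<0$. So $g'<0$ on $(0,1]$, and therefore $g<0$ there. This strictness is what gives the strict inequality in \eqref{eq:CB_upper2}.

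\textbf{Lower tail \eqref{eq:CB_Lower}.} For $\delta\in(0,1)$ I apply Markov to $e^{-tX}$ with $t=-\log(1-\delta)>0$. Using the same MGF bound with $-t$ in place of $t$,
\[
\P(X\le(1-\delta)\mu)\le\Bigl(\frac{e^{-\delta}}{(1-\delta)^{1-\delta}}\Bigr)^\mu .
\]
This reduces the claim to $(1-\delta)\log(1-\delta)\ge-\delta+\delta^2/2$. I would prove it via the Taylor series $(1-\delta)\log(1-\delta)=-\delta+\sum_{k\ge2}\frac{\delta^k}{k(k-1)}$, in which every term beyond $-\delta$ is nonnegative and the $k=2$ term is exactly $\delta^2/2$.

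The probabilistic part is routine. The only step requiring care is the elementary calculus inequality in \eqref{eq:CB_upper2}: the sign analysis of $g'$ on $(0,1]$, together with checking the endpoint value $g'(1)=2/3-\log 2<0$. That is where I would concentrate the verification.
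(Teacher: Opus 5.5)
Your proposal is correct and is the standard moment-generating-function proof behind the cited Mitzenmacher--Upfal theorems; the paper gives no proof of its own and only cites them, and your derivative check for the $\delta^2/3$ bound and Taylor-series check for the $\delta^2/2$ bound match that textbook argument. One small correction: the case $p=0$ is not ``trivial'' but must be excluded, since there $\log\mu$ is undefined and the strict inequalities \eqref{eq:CB_Upper} and \eqref{eq:CB_upper2} would read $0<0$ and $1<1$, so the lemma tacitly assumes $\mu>0$, which is exactly what your argument uses.
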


\section{An explicit choice of the tie-breaking map $f$}
\label{appd:treecode} 
The rank-based test \texttt{RBAlign} breaks ties in the likelihood-ratio ranking through an arbitrary injective map $f$ that assigns to each local tree pair in Definition~\ref{de:local_tree} a unique real number in $[0,1]$. In this section, we construct an explicit example of such a map $f$ that can be calculated in time $n^{o(1)}$. Throughout this section, $(t,\tilde t)$ denotes a local tree pair in the sense of Definition~\ref{de:local_tree}.

To define $f$, we first introduce a code for a local tree $t\in\cz_d$ as follows.
\begin{itemize}
  \item {\bf Step 1:} Start from the leaves of $t$, and encode each of them as $0$.
  \item {\bf Step 2:} For a node $v$ such that all of its $c_v$ children  have already been assigned codes, order these child codes in non-increasing lexicographic order as $w_1\ge w_2\ge\cdots\ge w_{c_v}$. We then assign to $v$ the label
  $c_v\circ w_1\circ\cdots\circ w_{c_v}$, where $\circ$ here denotes the concatenation of finite words.
  \item {\bf Step 3:} Iterate Step~2 until the root of $t$ has a label; then $t$ is
  encoded as the label of the root.
\end{itemize}

Recall that every local tree has degree at most $\tau$. Since every vertex contributes exactly one digit, the above procedure gives every rooted unlabeled tree $t\in\cz_d$ a code $s_1\ \ldots\ s_{|\cv(t)|}$ of length
$|\cv(t)|$ with digits in $\{0,1,\ldots,\tau\}$. The code $\td s_1 \ \ldots\ \td s_{|\cv(\td t)|}$ of $\td t$
is constructed in the same way. We then define
\begin{equation}
  f(t,\td t)=\sum_{i=1}^{|\cv(t)|}\frac{s_i}{(\tau+2)^{i}}
  +\frac{\tau+1}{(\tau+2)^{|\cv(t)|+1}}
  +\sum_{j=1}^{|\cv(\td t)|}\frac{\td s_{j}}{(\tau+2)^{|\cv(t)|+j+1}}\,\in[0,1).
  \label{eq:treecode}
\end{equation}
Equivalently, $f(t,\td t)$ is the base-$(\tau+2)$ fraction obtained by concatenating the code of $t$, the separator $\tau+1$ and the code of $\td t$. An example of the map $f$ is defined in Figure~\ref{fig:treecode}.

\begin{figure}[t]
\centering
\begin{tikzpicture}[every node/.style={circle,draw,inner sep=1pt,minimum size=6mm,font=\small}]
  \node (v1) at (0,2.1) {$v_1$};
  \node (v3) at (-1.1,1.05) {$v_3$};
  \node (v2) at (1.1,1.05) {$v_2$};
  \node (v4) at (-1.85,0) {$v_4$};
  \node (v5) at (-0.35,0) {$v_5$};
  \draw (v1)--(v3); \draw (v1)--(v2); \draw (v3)--(v4); \draw (v3)--(v5);
  \node[draw=none,font=\scriptsize] at (0,2.6) {$2\,2\,0\,0\,0$};
  \node[draw=none,font=\scriptsize] at (-2.05,1.05) {$2\,0\,0$};
  \node[draw=none,font=\scriptsize] at (1.75,1.05) {$0$};
  \node[draw=none,font=\scriptsize] at (-1.85,-0.5) {$0$};
  \node[draw=none,font=\scriptsize] at (-0.35,-0.5) {$0$};
  \node[draw=none,font=\normalsize] at (1.7,2.5) {$t$};
  \node (w1) at (4.6,2.1) {$\td v_1$};
  \node (w2) at (4.6,1.05) {$\td v_2$};
  \node (w3) at (4.6,0) {$\td v_3$};
  \draw (w1)--(w2); \draw (w2)--(w3);
  \node[draw=none,font=\scriptsize] at (5.5,2.1) {$1\,1\,0$};
  \node[draw=none,font=\scriptsize] at (5.4,1.05) {$1\,0$};
  \node[draw=none,font=\scriptsize] at (5.1,0) {$0$};
  \node[draw=none,font=\normalsize] at (3.9,2.5) {$\td t$};
\end{tikzpicture}
\[
  f(t,\td t)=\Bigl(0.\,\underbrace{2\,2\,0\,0\,0}_{t}\,
  \underbrace{(\tau{+}1)}_{\text{separator}}\,
  \underbrace{1\,1\,0}_{\td t}\Bigr)_{\tau+2}
\]
\caption{An example of calculating $f(t,\td t)$. Each vertex in $t$ is attached with a code. The three leaves get $0$ in Step~1. The child carrying two leaves gets $2\,0\,0$ in Step~2. At the root the two children are labelled $2\,0\,0$ and $0$, so in decreasing order the root's label is $2\,2\,0\,0\,0$, by Step~2. Then $t$ is labeled as $2\,2\,0\,0\,0$, by Step~3. By the same construction, the path $\td t$ gets $1\,1\,0$. Then $f(t,\td t)$ joins the two
codes by the digit $\tau+1$ and reads them in base $\tau+2$.}
\label{fig:treecode}
\end{figure}

\begin{lem}\label{lem:treecode}
The map $f:\cz_d\times\cz_d\to[0,1]$ in \eqref{eq:treecode} is injective and takes
values in $[0,1)$. Moreover, for the local trees defined in Definition~\ref{de:local_tree}, $f(t,\td t)$ can be computed in $n^{o(1)}$ time.
\end{lem}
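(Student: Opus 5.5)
Three claims must be established: $f$ takes values in $[0,1)$, $f$ is injective on $\cz_d\times\cz_d$, and $f$ can be evaluated in $n^{o(1)}$ time. The plan is to show that the code is a prefix-free, canonical encoding of rooted unlabeled trees. The separator digit $\tau+1$ together with the base-$(\tau+2)$ expansion then makes the concatenated word uniquely decodable.

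\emph{Range.} Every digit lies in $\{0,\ldots,\tau+1\}$, so each digit is at most $(\tau+2)-1$. The sum in \eqref{eq:treecode} is a finite base-$(\tau+2)$ expansion with $|\cv(t)|+|\cv(\td t)|+1$ digits. It is therefore at most $1-(\tau+2)^{-(|\cv(t)|+|\cv(\td t)|+1)}<1$, and it is clearly nonnegative. Because the expansion is finite and no digit equals $\tau+2$, distinct finite digit words with no trailing-zero ambiguity give distinct values. I will handle the trailing-zero issue by noting that a tree code always ends in $0$, since the last digit written is a leaf. I will therefore compare the full digit strings, and two different words of possibly different lengths could coincide only if one is the other padded by zeros. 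That case is excluded by the prefix-freeness argument below.

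\emph{Injectivity.} The main step is to show that the tree code is a complete invariant and is prefix-free. I will argue by induction on the depth that the word $c_v\circ w_1\circ\cdots\circ w_{c_v}$ can be parsed uniquely from its first digit. The first digit gives the number of children. Each $w_i$ is itself a valid code, and by induction a valid code is self-delimiting: reading digits determines its end, because the count of "open slots" (start at $1$, add $c-1$ for each digit $c$) first hits $0$ exactly at its last digit. Hence no valid code is a proper prefix of another, and each code decodes to a unique multiset of child codes. By the induction hypothesis, each child code decodes to a unique unlabeled subtree, which recovers $t$ up to isomorphism. Conversely, isomorphic trees receive the same code because children are sorted canonically. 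Thus $t\mapsto\mathrm{code}(t)$ is injective.

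For a pair, the digit $\tau+1$ never occurs inside a code, so the first occurrence of $\tau+1$ in the expansion of $f(t,\td t)$ determines where $\mathrm{code}(t)$ ends. The remaining digits are $\mathrm{code}(\td t)$, and its length is again determined by self-delimitation. So $f(t,\td t)=f(t',\td t')$ forces equal digit strings, and hence $(t,\td t)=(t',\td t')$.

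\emph{Complexity.} Local trees have at most $\sum_{l\le d}\tau^l=O(\tau^{d})$ vertices, which is $\exp(O((\log n)^\gamma\log\log n))=n^{o(1)}$. Computing a code bottom-up requires sorting the child codes at each vertex. Each comparison costs at most the code length $O(\tau^d)$, and each node needs $O(\tau\log\tau)$ comparisons, so the total cost is $O(\tau^{2d+2})=n^{o(1)}$. Evaluating the base-$(\tau+2)$ fraction, or simply storing the digit string, which suffices for comparisons, takes time linear in its length.

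I expect the main obstacle to be rigorously stating the prefix-free/self-delimiting property and using it to justify unique decoding, since both injectivity and the trailing-zero subtlety hinge on it.
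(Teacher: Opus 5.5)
Your proposal is correct and follows essentially the same route as the paper: the separator digit $\tau+1$ marks where the code of $t$ ends, and your open-slot count (equivalently, the paper's first index $k$ with $\sum_{i=2}^k s_i=k-2$) makes each code self-delimiting, so the tree is recovered recursively from its code. The differences are cosmetic. You dispose of the trailing-zero issue via prefix-freeness, whereas the paper re-pads zeros until the length equals the digit sum plus one; and your running-time bound charges each comparison of child codes its length, which is slightly more careful than the paper's $O(|\cv(t)|\tau\log\tau)$ count but likewise gives $n^{o(1)}$.
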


\begin{proof}
We separately prove the injectivity of $f$ and bound the running time of computing it.

\underline{\textbf{Injectivity.}} Suppose the codes of $t$ and $\td t$ are $s_1s_2\cdots s_{|\cv(t)|}$ and $\td s_1 \td s_2\cdots \td s_{|\cv(\td t)|}$, respectively. We now show how to uniquely decode $(t,\td t)$ from $f(t,\td t)$.

First, expand $f(t,\tilde t)\in[0,1)$ in base $\tau+2$. Since the digit $\tau+1$ appears only as the separator, it uniquely determines the boundary between the code of $t$ and the code of $\tilde t$. Therefore, it suffices to reconstruct a rooted tree from its code.

We now reconstruct $t$ from the code  $s_1s_2\cdots s_{|\cv(t)|}$, starting from the first digit $s_1$. If $s_1=0$, then $t$ consists of a single vertex. Suppose now that $s_1>0$. Then the remaining code $s_2\cdots s_{|\cv(t)|}$ is the concatenation of the codes of the root's $s_1$ children, written in decreasing order. To identify the first child's code, scan the word from left to right until the first $s_k$ where $\sum_{i=2}^k s_i=k-2$. 

We now show that the first child's code is exactly  $s_2\cdots s_{k}$. Let $s_2\cdots s_{m}$ be the code of the first child for some $m\geq 2$, and suppose $t_1$ be the subtree rooted at the first child. We have the following two observations.
\begin{itemize}
    \item By our construction, the number of vertices of $t_1$ is $m-1$, and the total number of children among all the vertices of $t_1$ is $\sum_{i=2}^{m}s_i$, which equals to the number of edges of $t_1$. Since $t_1$ is a tree, we have $\sum_{i=2}^m s_i=m-2$.
    \item Fix any $r\in\mathbb{N}$ such that $2<r<m$. Then $s_2\cdots s_r$ records the numbers of children of exactly $r-1$ nodes of $t_1$. Among these $r-1$ nodes, one is the root of $t_1$, while each of the remaining $r-2$ nodes has a parent, which by our construction is still one of those $r-1$ nodes. Therefore, among the $\sum_{i=2}^r s_i$ children of these $r-1$ nodes, there are still $\sum_{i=2}^r s_i-(r-2)$ children that do not belong to these $r-1$ nodes. Such children must exist, as $s_2\cdots s_r$ does not contain all vertices of $t_1$. Therefore, we have $\sum_{i=2}^r s_i>(r-2)$.
\end{itemize}
  This shows that $m$ is the first index such that $\sum_{i=2}^m s_i=m-2$, so $k=m$, and the first child's code is exactly  $s_2\cdots s_{k}$.

Next, by removing the prefix $s_2\cdots s_k$ and repeating the above procedure to the remaining sequence, one can recover the codes assigned to all children of the root. Applying the same reconstruction recursively to each child code uniquely determines the tree $t$.

To recover $\td t$, we notice that the base-$(\tau+2)$ expansion of $f(t,\tilde t)$ may omit trailing zeros from the code of $\tilde t$. Therefore, we first append zeros to the code of $\td t$ until its length is equal to the sum of its digits plus one. The resulting word is exactly the code of $\tilde t$, and hence $\tilde t$ is reconstructed by the same procedure. Therefore, $f$ is injective, as both $t$ and $\tilde t$ are uniquely determined by $f(t,\tilde t)$.

\underline{\textbf{Running time.}} Recall that in our construction, the code of $t$ is computed in a bottom-up pass from the leaves, and each node is processed once. For each node $v$, we first sort the codes of its at most $\tau$ children and then concatenate them with one additional digit recording the number of children of $v$. Therefore, constructing the code of $t$ takes $O(|\cv(t)|\tau\log\tau)$ time, and the same bound holds for $\td t$. Once the two codes are constructed, the value of $f(t,\tilde t)$ is calculated directly from \eqref{eq:treecode}.

Finally, every local tree has depth at most $d$ and maximum degree at most $\tau$, so $|\cv(t)|+|\cv(\td t)|\le 2\tau^{d+1}$. Since $\gamma<1$ and $\log\tau=O(\log\log n)$,
\[
  O((|\cv(t)|+|\cv(\td t)|)\tau\log\tau)=O(\tau^{d+2}\log\tau)
  =\exp\bigl(O((\log n)^{\gamma}\log\log n)\bigr)=n^{o(1)}.
\]
Therefore, $f(t,\td t)$ can be computed in $n^{o(1)}$ time.
\end{proof}

\end{document}